\def\dOi{9(3:17)2013}
\newfont{\bbb}{bbm10 scaled 1100}        
\newfont{\bbbs}{bbm10 scaled 800}        
\def\text#1{\textrm{#1}}
\def\precond#1{{\vphantom{#1}}^\bullet #1}
\def\postcond#1{{#1}^\bullet}
\def\Production#1{\stackrel{#1}{\Longrightarrow}}
\def\production#1{\stackrel{#1}{\longrightarrow}}
\def\equivalent{\Leftrightarrow}
\newfont{\fsc}{eusm10 scaled 1100}      
\def\powermultiset#1{\nat^{#1}}
\def\implies{\Rightarrow}
\def\equivalent{\Leftrightarrow}
\def\mathrlap{\mathpalette\mathrlapinternal}
\def\mathrlapinternal#1#2{%
  \rlap{$\mathsurround=0pt#1{#2}$}}
\def\mathllap{\mathpalette\mathllapinternal}
\def\mathllapinternal#1#2{%
  \llap{$\mathsurround=0pt#1{#2}$}}
\def\into{\rightarrow}
\def\defitem#1{\emph{#1}}
\def\rpair#1{\langle#1\rangle}
\def\AA{\text{\it AA}}
\def\onespace#1{\let\argument=#1\ifx\onespace#1\else~\fi\argument}
\def\failureset{\mathcal{F}}
\def\structuralM{\mbox{\sf M}}
\let\origmin\min
\def\min{\mathord{\origmin}}
\let\origmax\max
\def\max{\mathord{\origmax}}
\def\quireunderscore{_}
\def\quire#1{%
  \def\tmp{#1}%
  \ifx\tmp\quireunderscore%
    \def\tmp{\quireindexed_}
  \else%
    \def\tmp{\mathcal{Q}#1}
  \fi\tmp}
\def\quireindexed_#1{\mathcal{Q}_{\text{#1}}}
\def\goesto{\@transition\rightarrowfill}
\def\Goesto{\@transition\Rightarrowfill}
\def\ngoesto{\@transition\nrightarrowfill}
\def\nGoesto{\@transition\nRightarrowfill}
\def\@transition#1{\@ifnextchar[{\@@transition{#1}}{\@@transition{#1}[]}}
\newbox\@transbox
\newbox\@arrowbox
\def\rightarrowfill{$\m@th\mathord-\mkern-6mu%
  \cleaders\hbox{$\mkern-2mu\mathord-\mkern-2mu$}\hfill
  \mkern-6mu\mathord\rightarrow$}
\def\Rightarrowfill{$\m@th\mathord=\mkern-6mu%
  \cleaders\hbox{$\mkern-2mu\mathord=\mkern-2mu$}\hfill
  \mkern-6mu\mathord\Rightarrow$}
\def\@@transition#1[#2]%
\wd\@transbox{#1}
\@transbox\hbox{$\mathop{\box\@arrowbox}\limits^{\box\@transbox}$}
\def\alignedcaption[#1&#2]{\mbox{\scriptsize $\mathllap{#1{}}\mathrlap{#2}$}}
\def\ie{i.e.\ }
\def\varnothing{\emptyset}
\def\Act{{\rm Act}}
\def\Loc{\textrm{\upshape Loc}}
\def\concurrent{\smile}
\def\restrictedto{\mathop\upharpoonright}
\newcommand{\dcup}{\stackrel{\mbox{\huge .}}{\cup}}   
\newcommand{\plat}[1]{\raisebox{0pt}[0pt][0pt]{#1}}   
\newcommand{\inp}{\mathbin\in}                        
\def\idx#1#2#3#4#5{
  \def\argone{#1}
  \def\argtwo{#2}
  \def\argthree{#3}
  \def\argfour{#4}
  \def\argfive{#5}
  \def\testprime{'}
  \def\testdprime{''}
  \def\testtprime{'''}
  {\vphantom{\argthree}}_%
    {\vphantom{\argfour}\argone}^%
    {\vphantom{\argfive}{\argtwo}}%
  \argthree_%
    {\vphantom{\argone}\argfour}%
    \ifx\argfive\testprime\argfive\else%
    \ifx\argfive\testdprime\argfive\else%
    \ifx\argfive\testtprime\argfive\else%
    ^{\vphantom{\argtwo}\argfive}\fi\fi\fi%
}
\def\indexset{\mathfrak{K}}
\def\impl#1{\mathcal{I}(#1)}
\newcommand{\visible}{}
\newenvironment{itemise}{\begin{list}{$\bullet$}{\leftmargin 12pt \labelwidth\leftmargin\advance\labelwidth-\labelsep \topsep 4pt \itemsep 2pt \parsep 2pt}}{\end{list}}
\def\justempty{}
\DeclareFontFamily{T1}{la}{}
\DeclareFontShape{T1}{la}{m}{n}{<->s*[0.8571428571]la14}{}
\newcommand{\Rel}{\mathcal{B}\,}
\newenvironment{proofNobox}{\begin{trivlist} \item[\hspace{\labelsep}\bf Proof:]}{\end{trivlist}}
\newcommand{\filledbox}{\rule{1.2ex}{1.2ex}}
\newenvironment{proofclaim}{\begin{trivlist} \item[\hspace{\labelsep}\it Proof:]}{\hfill\filledbox\end{trivlist}}
\newenvironment{proofclaimNobox}{\begin{trivlist} \item[\hspace{\labelsep}\it Proof:]}{\end{trivlist}}
\newcommand{\refdf}[1]{Definition~\ref{df-#1}}
\newcommand{\refthm}[1]{Theorem~\ref{thm-#1}}
\newcommand{\refpr}[1]{Proposition~\ref{pr-#1}}
\newcommand{\reflem}[1]{Lemma~\ref{lem-#1}}
\newcommand{\reffig}[1]{Figure~\ref{fig-#1}}
\newcommand{\refobs}[1]{Observation~\ref{obs-#1}}
\newcommand{\refcl}[1]{Claim~\ref{cl-#1}}
\newcommand{\refsec}[1]{Section~\ref{sec-#1}}
\newcommand{\reftab}[1]{Table~\ref{tab-#1}}
\newcommand{\UI}{\Omega}             
\newcommand{\UIij}{\UI_{i}}          
\newcommand{\ui}{\mbox{\it \i}}      
\newcommand{\txf}[1]{\mbox{\small\sf #1}}
\newcommand{\dist}[1][p]{\txf{distribute}_#1}
\newcommand{\ini}[1][j]{\txf{initialise}_#1}
\newcommand{\trans}[2][h]{\txf{transfer}^#1_#2}
\newcommand{\exec}[2][i]{\txf{execute}^#1_#2}
\newcommand{\fetch}[1][p,c]{\txf{fetch}_{i,j}^{#1}}
\newcommand{\fetched}[2][i]{\txf{fetched}^#1_#2}
\newcommand{\comp}[2][i]{\txf{finalise}^#1}
\newcommand{\fire}{\txf{fire}}
\newcommand{\undo}[1][i]{\txf{undo}_{#1}}
\newcommand{\und}[1][f]{\txf{undo}(#1)}
\newcommand{\undone}{\txf{undone}}
\newcommand{\reset}[1][i]{\txf{reset}_{#1}}
\newcommand{\elide}[1][i]{\txf{elide}_{#1}}
\newcommand{\ack}[1][i]{\txf{ack}_{#1}}
\newcommand{\keep}[1][i]{\rho_{#1}}
\newcommand{\Fired}{\txf{fired}}
\newcommand{\take}{\txf{take}}
\newcommand{\took}{\txf{took}}
\newcommand{\Pre}{\txf{pre}}
\newcommand{\transin}[2][h]{\txf{trans}^#1_#2\txf{-in}}
\newcommand{\transout}[2][h]{\txf{trans}^#1_#2\txf{-out}}
\newcommand{\fetchin}[1][p,c]{\txf{fetch}_{i,j}^{#1}\txf{-in}}
\newcommand{\fetchout}[1][p,c]{\txf{fetch}_{i,j}^{#1}\txf{-out}}
\newcommand{\weight}[1]{\hfill\mbox{\scriptsize $F'(#1)$}}
\newcommand{\leqc}{\leq^\#\!}
\newcommand{\confeq}{\mathbin{\plat{$\stackrel{\#}{=}$}}}
\newcommand{\confeqscript}{\stackrel{\#}{=}}
\newcommand{\nat}{\mbox{\bbb N}}
\newcommand{\nats}{\mbox{\bbbs N}}
\newcommand{\Int}{\mbox{\bbb Z}}
\newcommand{\fin}{\in_{\scriptscriptstyle F}}
\newcommand{\marking}[1]{\llbracket#1\rrbracket}  
\newcommand{\NF}{\mbox{\it NF}}                          
\newcommand{\opt}[1]{\mbox{\tiny\rm(}#1\mbox{\tiny\rm)}} 
\newcounter{netimage}
\def\p#1:#2;{\cnode #1{0.3}{n\thenetimage-#2}}
\def\P#1:#2;{\p #1:#2;\pscircle*#1{0.1}}
\def\q#1:#2:#3;{\p #1:#2;\rput#1{\rput[l](0.45,0){\large #3}}}
\def\Q#1:#2:#3;{\P #1:#2;\rput#1{\rput[l](0.45,0){\large #3}}}
\def\qq#1:#2:#3;{\p #1:#2;\rput#1{\rput[t](0,-0.5){\large #3}}}
\def\ql#1:#2:#3;{\p #1:#2;\rput#1{\rput[r](-0.45,0){\large #3}}}
\def\qr#1:#2:#3;{\p #1:#2;\rput#1{\rput[l](0.45,0){\large #3}}}
\def\qt#1:#2:#3;{\p #1:#2;\rput#1{\rput[b](0,0.45){\large #3}}}
\def\qts#1:#2:#3;{\p #1:#2;\rput#1{\rput[b](0,0.35){\large #3}}}
\def\qbs#1:#2:#3;{\p #1:#2;\rput#1{\rput[t](0,-0.35){\large #3}}}
\def\qb#1:#2:#3;{\p #1:#2;\rput#1{\rput[t](0,-0.45){\large #3}}}
\def\Ql#1:#2:#3;{\P #1:#2;\rput#1{\rput[r](-0.45,0){\large #3}}}
\def\Qr#1:#2:#3;{\P #1:#2;\rput#1{\rput[l](0.45,0){\large #3}}}
\def\Qt#1:#2:#3;{\P #1:#2;\rput#1{\rput[b](0,0.45){\large #3}}}
\def\Qb#1:#2:#3;{\P #1:#2;\rput#1{\rput[t](0,-0.45){\large #3}}}
\def\qx#1:#2:#3:#4;{\p #1:#2;\rput#1{\rput#4{\large #3}}}
\def\QXX#1:#2:#3:#4:#5;{\p #1:#2;\rput#1{\rput#4{\large #3}}\pscircle*#5{0.1}}
\def\s#1:#2:#3;{\p #1:#2;\rput#1{\rput(-0.03,0){\large #3}}}
\def\t#1:#2:#3;{\rput#1{\rnode{n\thenetimage-#2}{\psframebox{%
  \vbox to 0.6cm{\vfil\hbox to 0.6cm{\hfil\Large #3\hfil}\vfil}}}}}
\def\u#1:#2:#3:#4;{\rput#1{\rnode{n\thenetimage-#2}{\psframebox{%
  \vbox to 0.6cm{\vfil\hbox to 0.6cm{\hfil\Large #3\hfil}\vfil}}}}%
  \rput#1{\rput[l](0.6,0){\large #4}}}
\def\uts#1:#2:#3:#4;{\rput#1{\rnode{n\thenetimage-#2}{\psframebox{%
  \vbox to 0.6cm{\vfil\hbox to 0.6cm{\hfil\Large #3\hfil}\vfil}}}}%
  \rput#1{\rput[b](0,0.4){\large #4}}}
\def\ut#1:#2:#3:#4;{\rput#1{\rnode{n\thenetimage-#2}{\psframebox{%
  \vbox to 0.6cm{\vfil\hbox to 0.6cm{\hfil\Large #3\hfil}\vfil}}}}%
  \rput#1{\rput[b](0,0.6){\large #4}}}
\def\uT#1:#2:#3:#4;{\rput#1{\rnode{n\thenetimage-#2}{\psframebox{%
  \vbox to 0.6cm{\vfil\hbox to 0.6cm{\hfil\Large #3\hfil}\vfil}}}}%
  \rput#1{\rput[b](0,0.8){\large #4}}}
\def\ubs#1:#2:#3:#4;{\rput#1{\rnode{n\thenetimage-#2}{\psframebox{%
  \vbox to 0.6cm{\vfil\hbox to 0.6cm{\hfil\Large #3\hfil}\vfil}}}}%
  \rput#1{\rput[t](0,-0.45){\large #4}}}
\def\ub#1:#2:#3:#4;{\rput#1{\rnode{n\thenetimage-#2}{\psframebox{%
  \vbox to 0.6cm{\vfil\hbox to 0.6cm{\hfil\Large #3\hfil}\vfil}}}}%
  \rput#1{\rput[t](0,-0.6){\large #4}}}
\def\ul#1:#2:#3:#4;{\rput#1{\rnode{n\thenetimage-#2}{\psframebox{%
  \vbox to 0.6cm{\vfil\hbox to 0.6cm{\hfil\Large #3\hfil}\vfil}}}}%
  \rput#1{\rput[r](-0.6,0){\large #4}}}
\def\ur#1:#2:#3:#4;{\rput#1{\rnode{n\thenetimage-#2}{\psframebox{%
  \vbox to 0.6cm{\vfil\hbox to 0.6cm{\hfil\Large #3\hfil}\vfil}}}}%
  \rput#1{\rput[l](0.6,0){\large #4}}}
\def\a#1->#2;{\ncline{->}{n\thenetimage-#1}{n\thenetimage-#2}}
\def\aBack#1->#2;{\ncline{-}{n\thenetimage-#1}{n\thenetimage-#2}\ncline[nodesep=0.2]{-<}{n\thenetimage-#1}{n\thenetimage-#2}}
\def\aEarly#1->#2;{\ncline{->}{n\thenetimage-#1}{n\thenetimage-#2}\ncline[nodesep=0.25]{-<}{n\thenetimage-#1}{n\thenetimage-#2}}
\def\aLate#1->#2;{\ncline{->}{n\thenetimage-#1}{n\thenetimage-#2}\ncline[nodesep=0.6]{-<}{n\thenetimage-#1}{n\thenetimage-#2}}
\def\aUndo#1->#2;{\ncline[arrowinset=0]{->}{n\thenetimage-#1}{n\thenetimage-#2}}
\def\aReset#1->#2;{\ncline[arrowinset=0]{->}{n\thenetimage-#1}{n\thenetimage-#2}\ncline[linecolor=white,arrowinset=0,arrowscale=0.6,linestyle=none,nodesep=0.07]{->}{n\thenetimage-#1}{n\thenetimage-#2}}
\let\aFar\aLate
\let\aAck\aReset
\def\A#1->#2;{\ncarc[arcangle=22]{->}{n\thenetimage-#1}{n\thenetimage-#2}}
\def\AA#1->#2;{\ncarc[arcangle=7]{->}{n\thenetimage-#1}{n\thenetimage-#2}}
\def\B#1->#2;{\ncarc[arcangle=-18]{->}{n\thenetimage-#1}{n\thenetimage-#2}}
\def\BB#1->#2;{\ncarc[arcangle=-32]{->}{n\thenetimage-#1}{n\thenetimage-#2}}
\def\BBB#1->#2;{\ncarc[arcangle=-60]{->}{n\thenetimage-#1}{n\thenetimage-#2}}
\def\avlinearc{0.2}
\def\av#1[#2]-#3->[#4]#5;{
  \SpecialCoor
  \psline[linearc=\avlinearc]{->}([angle=#2]n\thenetimage-#1)#3([angle=#4]n\thenetimage-#5)
}
\def\interface#1:#2:#3:#4;{\rput#1{\rnode{n\thenetimage-#2}{\psframebox*[fillstyle=solid,fillcolor=black]{\vbox to 0.5cm{\vfill\hbox to 0.05cm{}}}}\rput[lt](0.05,-0.1){#4}}\ncline{-}{n\thenetimage-#2}{n\thenetimage-#3}}
\long\def\petrinet(#1)#2\end{\psscalebox{0.8}{\pspicture(#1)\stepcounter{netimage}#2\endpspicture}\end}
\begin{document}

\title[On Characterising Distributability]{On Characterising Distributability\rsuper*}

\author[R.~van Glabbeek]{Rob van Glabbeek\rsuper a}
\address{{\lsuper a}NICTA, Sydney, Australia\vspace{-8 pt}}
\address{School of Computer Science and Engineering\\ Univ.\ of New South Wales, Sydney, Australia}
\email{rvg@cs.stanford.edu}
\thanks{{\lsuper a}NICTA is funded by the Australian Government as represented
    by the Department of Broadband, Communications and the Digital Economy and the
    Australian Research Council through the ICT Centre of Excellence program.}

\author[U.~Goltz]{Ursula Goltz\rsuper b}
\address{{\lsuper{b,c}}Institute for Programming and Reactive Systems\\ TU Braunschweig, Germany}
\email{goltz@ips.cs.tu-bs.de, drahflow@gmx.de}

\author[J.-W.~Schicke-Uffmann]{Jens-Wolfhard Schicke-Uffmann\rsuper c}
\address{\vskip-6 pt}

\keywords{Concurrency, Petri nets, distributed systems, reactive systems,
  asynchronous interaction, semantic equivalences.}
\subjclass{F.1.2}

\thanks{This work was partially supported by the DFG
    (German Research Foundation).}

\ACMCCS{[{\bf Theory of computation}]: Models of
  computation---Concurrency---Distributed computing models}

\titlecomment{{\lsuper*}This paper is adapted from \cite{GGS12}; it characterises distributability for a
    slightly larger range of semantic equivalence relations, and incorporates various
    remarks stemming from \cite{glabbeek08syncasyncinteractionmfcs}.
    An extended abstract appeared in L.~Birkedal, ed.: Proc.\ 15th
    Int.\ Conf.\ on Foundations of Software Science and
    Computation Structures (FoSSaCS 2012), LNCS 7213,
    Springer, 2012, pp. 331--345, doi:\href{http://dx.doi.org/10.1007/978-3-642-28729-9\_22}
    {10.1007/978-3-642-2872-9\_22}.}

\begin{abstract}
We formalise a general concept of distributed systems as sequential
components interacting asynchronously.  We define a corresponding
class of Petri nets, called LSGA nets, and precisely characterise
those system specifications which can be implemented as LSGA nets up
to branching ST-bisimilarity with explicit divergence.\vspace{-7pt}
\end{abstract}

\maketitle

\section{Introduction}
The aim of this paper is to contribute to a fundamental understanding
of the concept of a distributed reactive system and the paradigms of
synchronous and asynchronous interaction. We start by giving an
intuitive characterisation of the basic features of distributed
systems. In particular we assume that distributed systems consist of
components that reside on different locations, and that any signal
from one component to another takes time to travel.  Hence the only
interaction mechanism between components is asynchronous
communication.

Our aim is to characterise which system specifications may be implemented as
distributed systems.  In many formalisms for system specification or design,
synchronous communication is provided as a basic notion; this happens for
example in process algebras.  Hence a particular challenge is that it may be
necessary to simulate synchronous communication by asynchronous communication.
 
Trivially, any system specification may be implemented distributedly by locating
the whole system on one single component. Hence we need to pose some additional
requirements. One option would be to specify locations for system activities and
then to ask for implementations satisfying this distribution and still
preserving the behaviour of the original specification. This is done in
\cite{BCD02}. Here we pursue a different approach. We add another requirement to
our notion of a distributed system, namely that its components only allow
sequential behaviour. We then ask whether an arbitrary system specification may
be implemented as a distributed system consisting of sequential components in an
optimal way, that is without restricting the concurrency of the original
specification. This is a particular challenge when synchronous communication
interacts with concurrency in the specification of the original system. We will
give a precise characterisation of the class of distributable systems, which
answers in particular under which conditions synchronous communication may be
implemented in a distributed setting.
  
For our investigations we need a model which is expressive enough to represent
concurrency. It is also useful to have an explicit representation of the
distributed state space of a distributed system, showing in particular the local
control states of components. We choose Petri nets, which offer these
possibilities and additionally allow finite representations of infinite
behaviours. We focus on the class of \emph{structural conflict nets}
\cite{glabbeek11ipl}---a proper generalisation of the class of
one-safe place/transition systems, where conflict and concurrency are clearly separated.

For comparing the behaviour of systems with their distributed implementation we
need a suitable equivalence notion. Since we think of open systems interacting
with an environment, and since we do not want to restrict concurrency in
applications, we need an equivalence that respects branching time and
concurrency to some degree. Our implementations use transitions which are
invisible to the environment, and this should be reflected in the equivalence
by abstracting from such transitions. However, we do not want implementations to
introduce divergence. In the light of these requirements we work with two
semantic equivalences. \emph{Step failures equivalence} is one of the
weakest equivalences that captures branching time, concurrency and divergence to some
degree; whereas \emph{branching ST-bisimilarity with explicit divergence}
fully captures branching time, divergence, and those aspects of concurrency that
can be represented by concurrent actions overlapping in time.
We obtain the same characterisation for both notions of equivalence, and thus
implicitly for all notions in between these extremes.

We model distributed systems consisting of sequential components as an
appropriate class of Petri nets, called \emph{LSGA nets}.  These are obtained by
composing nets with sequential behaviour by means of an asynchronous parallel
composition. We show that this class corresponds exactly to a more abstract
notion of distributed systems, formalised as \emph{distributed nets}
\cite{glabbeek08syncasyncinteractionmfcs}.

We then consider distributability of system specifications which are represented
as structural conflict nets. A net $N$ is \emph{distributable} if there exists a
distributed implementation of $N$, that is a distributed net which is
semantically equivalent to $N$.  In the implementation we allow unobservable
transitions, and labellings of transitions, so that single actions of the
original system may be implemented by multiple transitions. However, the system
specifications for which we search distributed implementations are \emph{plain}
nets without these features. This restriction is motivated in the conclusion.

We give a precise characterisation of distributable nets in terms of a
semi-structural property.  This characterisation provides a formal proof that
the interplay between choice and synchronous communication is a key issue for
distributability.

To establish the correctness of our characterisation we develop a new
method for rigorously proving the equivalence of two Petri nets, one of 
which known to be plain, up to branching ST-bisimilarity with explicit divergence.
 
\section{Basic Notions}
\label{sec-basic}
In this paper we employ \emph{signed multisets}, which generalise
multisets by allowing elements to occur in it with a negative multiplicity.

\begin{defi}\label{df-multiset}
Let $X$ be a set.
\begin{enumerate}[$-$]
\item A {\em signed multiset} over $X$ is a function $A\!:X \rightarrow \Int$,
  \ie $A\in \Int^{X}$.\\
  It is a \emph{multiset} iff $A\in \nat^X$, \ie iff $A(x)\geq 0$ for all $x\in X$.
\item $x \in X$ is an \defitem{element of} a signed multiset $A\in\Int^X$, notation $x \in
  A$, iff $A(x) \neq 0$. 
\item For signed multisets $A$ and $B$ over $X$ we write $A \leq B$ iff
 \mbox{$A(x) \leq B(x)$} for all $x \inp X$;
\\ $A\cup B$ denotes the signed multiset over $X$ with $(A\cup B)(x):=\text{max}(A(x), B(x))$,
\\ $A\cap B$ denotes the signed multiset over $X$ with $(A\cap B)(x):=\text{min}(A(x), B(x))$,
\\ $A + B$ denotes the signed multiset over $X$ with $(A + B)(x):=A(x)+B(x)$,
\\ $A - B$ denotes the signed multiset over $X$ with $(A - B)(x):=A(x)-B(x)$, and\\
for $k\inp\Int$ the signed multiset $k\cdot A$ is given by $(k \cdot A)(x):=k\cdot A(x)$.
\item The function $\emptyset\!:X\rightarrow\nat$, given by
  $\emptyset(x):=0$ for all $x \inp X$, is the \emph{empty} multiset over $X$.
\item If $A$ is a signed multiset over $X$ and $Y\subseteq X$ then
  $A\restrictedto Y$ denotes the signed multiset over $Y$ defined by
  $(A\restrictedto Y)(x) := A(x)$ for all $x \inp Y$.
\item The cardinality $|A|$ of a signed multiset $A$ over $X$ is given by
  $|A| := \sum_{x\in X}|A(x)|$.
\item A signed multiset $A$ over $X$ is \emph{finite}
  iff $|A|<\infty$, i.e.,
  iff the set $\{x \mid x \inp A\}$ is finite.\\
  We write $A\fin\Int^X$ or $A\fin\nat^X$ to indicate that $A$ is a finite (signed)
  multiset over $X$.
\item Any function $f:X\rightarrow\Int$ or $f:X\rightarrow\Int^Y$ from
  $X$ to either the integers or the signed multisets over some set $Y$
  extends to the finite signed multisets $A$ over $X$ by $f(A)=\sum_{x\in X}A(x)\cdot f(x)$.
\end{enumerate}
\end{defi}
\noindent
Two signed multisets $A\!:X \rightarrow \Int$ and $B\!:Y\rightarrow \Int$
are \emph{extensionally equivalent} iff
$A\restrictedto (X\cap Y) = B\restrictedto (X\cap Y)$,
$A\restrictedto (X\setminus Y) = \emptyset$, and
$B \restrictedto (Y\setminus X) = \emptyset$.
In this paper we often do not distinguish extensionally equivalent
signed multisets. This enables us, for instance, to use $A + B$ even
when $A$ and $B$ have different underlying domains.
A multiset $A$ with $A(x) \in\{0,1\}$ for all $x$ is
identified with the set $\{x \mid A(x)=1\}$.
A signed multiset with elements $x$ and $y$, having
multiplicities $-2$ and $3$, is denoted as $-2\cdot\{x\}+3\cdot\{y\}$.

We consider here general labelled place/transition systems with arc weights. Arc weights
are not necessary for the results of the paper, but are included for the sake of generality.

\begin{defi}\label{df-Petri net}
  Let \Act{} be a set of \emph{visible actions} and
  $\tau\mathbin{\not\in}\Act$ be an \emph{invisible action}. Let $\Act_\tau:=\Act \dcup \{\tau\}$.
  A (\emph{labelled}) \defitem{Petri net} (\emph{over $\Act_\tau$}) is a tuple
  $N = (S, T, F, M_0, \ell)$ where
  \begin{enumerate}[$-$]
    \item $S$ and $T$ are disjoint sets (of \defitem{places} and \defitem{transitions}, together
      called the \emph{elements} of $N$),
    \item $F: (S \times T \cup T \times S) \rightarrow \nat$
      (the \defitem{flow relation} including \defitem{arc weights}),
    \item $M_0 : S \rightarrow \nat$ (the \defitem{initial marking}), and
    \item \plat{$\ell: T \into \Act_\tau$} (the \defitem{labelling function}).
  \end{enumerate}
\end{defi}

\noindent
Petri nets are depicted by drawing the places as circles and the
transitions as boxes, containing their label.  Identities of places
and transitions are displayed next to the net element.  When
$F(x,y)>0$ for $x,y \inp S\cup T$ there is an arrow (\defitem{arc})
from $x$ to $y$, labelled with the \emph{arc weight} $F(x,y)$.
Weights 1 are elided.  When a Petri net represents a concurrent
system, a global state of this system is given as a \defitem{marking},
a multiset $M$ of places, depicted by placing $M(s)$ dots
(\defitem{tokens}) in each place $s$.  The initial state is $M_0$.

The behaviour of a Petri net is defined by the possible moves between
markings $M$ and $M'$, which take place when a finite multiset $G$ of
transitions \defitem{fires}.  In that case, each occurrence of a
transition $t$ in $G$ consumes $F(s,t)$ tokens from each 
place $s$.  Naturally, this can happen only if $M$ makes all these
tokens available in the first place.  Next, each $t$ produces $F(t,s)$ tokens
in each $s$.  \refdf{firing} formalises this notion of behaviour.

\begin{defi}\label{df-preset}
Let $N = (S, T, F, M_0, \ell)$ be a Petri net and $x\inp S\cup T$.\\
The multisets $\precond{x},~\postcond{x}: S\cup T \rightarrow
\nat$ are given by $\precond{x}(y)=F(y,x)$ and
$\postcond{x}(y)=F(x,y)$ for all $y \inp S \cup T$.
If $x\in T$, the elements of $\precond{x}$ and $\postcond{x}$ are
called \emph{pre-} and \emph{postplaces} of $x$, respectively, and if
$x\in S$ we speak of \emph{pre-} and \emph{posttransitions}.
The \emph{token replacement function} $\marking{\_\!\_}:T\rightarrow \Int^S$
is given by $\marking{t}=\postcond{t}-\precond{t}$ for all $t\in T$.
These functions extend to finite signed multisets
as usual (see \refdf{multiset}).
\end{defi}

\begin{defi}\label{df-firing}
Let $N \mathbin= (S, T, F, M_0, \ell)$ be a Petri net,
$G \inp \nat^T\!$, $G$ non-empty and finite, and $M, M' \in \nat^S$.\\
$G$ is a \defitem{step} from $M$ to $M'$,
written \plat{$M~[G\rangle_N~ M'$}, iff
\begin{enumerate}[$-$]
  \item $\precond{G} \leq M$ ($G$ is \defitem{enabled}) and
  \item $M' = (M - \precond{G}) + \postcond{G} = M + \marking{G}$.
\end{enumerate}
\end{defi}
\noindent
Note that steps are (finite) multisets, thus allowing self-concurrency,
\ie the same transition can occur multiple times in a single step.
We write $M~[t\rangle_N~ M'$ for $M\mathrel{[\{t\}\rangle_N} M'$, whereas
$M [G\rangle_N$ abbreviates $\exists M'.~ M \mathrel{[G\rangle_N} M'$.
We may omit the subscript $N$ if clear from context.

In our nets transitions are labelled with \emph{actions} drawn from a
set \plat{$\Act \dcup \{\tau\}$}. This makes it possible to see these
nets as models of \defitem{reactive systems} that interact with their
environment. A transition $t$ can be thought of as the occurrence of
the action $\ell(t)$. If $\ell(t)\inp\Act$, this occurrence can be
observed and influenced by the environment---we call such transitions
\defitem{external} or \defitem{visible}, but if $\ell(t)\mathbin=\tau$,
it cannot and $t$ is an \defitem{internal} or \defitem{silent} transition.
Transitions whose occurrences cannot be distinguished by the
environment carry the same label. In particular, since
the environment cannot observe the occurrence of internal
transitions at all, they are all labelled $\tau$.

The labelling function $\ell$ extends to finite signed multisets of transitions $G\in\Int^T$
by $\ell(G):=\sum_{t\in T}G(t)\cdot\{\ell(t)\}$. For $A,B\in\Int^{\Act_\tau}$
we write $A\equiv B$ iff $\ell(A)(a)=\ell(B)(a)$ for all $a\in\Act$, i.e.\ iff $A$ and $B$
contain the same (numbers of) visible actions, allowing $\ell(A)(\tau)\neq \ell(B)(\tau)$.
Hence $\ell(G)\equiv\emptyset$ indicates that $\ell(t)=\tau$ for all transitions $t\in T$ with $G(t)\neq 0$.

\begin{defi}\label{df-onesafe}
  Let $N = (S, T, F, M_0, \ell)$ be a Petri net.
\begin{enumerate}[$-$]
\item
  The set $[M_0\rangle_N$ of \defitem{reachable markings of $N$} is defined as the
  smallest set containing $M_0$ that is closed under $[G\rangle_N$, meaning that if
  $M \inp [M_0\rangle_N$ and $M \mathrel{[G\rangle_N} M'$ then $M' \inp [M_0\rangle_N$.
\item
  $N$ is \defitem{one-safe}
  iff $M \in [M_0\rangle_N \implies \forall s \in S.~ M(s) \leq 1$.
\item
  The \defitem{concurrency relation} $\mathord{\concurrent} \subseteq
  T^2$ is given by $t \concurrent u \equivalent \exists
  M \inp [M_0\rangle.~ M [\{t\}\mathord+\{u\}\rangle$.
\item
  $N$ is a \hypertarget{scn}{\defitem{structural conflict net}} iff
  for all $t,u\in T$ with $t\smile u$ we have $\precond{t} \cap \precond{u} = \emptyset$.
\end{enumerate}
\end{defi}
\noindent
We use the term \hypertarget{plain}{\defitem{plain nets}} for Petri nets where $\ell$ is
injective and no transition has the label $\tau$, \ie essentially unlabelled nets. 

This paper first of all aims at studying finite Petri nets: nets with finitely many places
and transitions. Additionally, our work also applies to infinite nets with the properties that
$\precond{t} \ne \varnothing$ for all transitions $t\in T$, and
any reachable marking (a) is finite, and (b) enables only finitely many transitions.
Henceforth, we call such nets \hypertarget{finitary}{\emph{finitary}}.
Finitariness can be ensured by requiring $|M_0| \mathbin< \infty \wedge \forall t \in T.\,
\precond{t} \ne \varnothing \wedge \forall x \in S\cup T.\, |\postcond{x}| < \infty$, \ie
that the initial marking is finite, no transition has an empty set of preplaces, and each
place and transition has only finitely many outgoing arcs. Our characterisation
of distributability pertains to finitary plain structural conflict nets, and our
distributed implementations are again structural conflict nets,
but they need not be finitary (nor plain). However,
our distributed implementations of finite nets are again finite.

\section{Semantic Equivalences}\label{sec-equivalences}\enlargethispage{\baselineskip}

In this section, we give an overview on some semantic equivalences for reactive systems. Most of these may be defined  formally for Petri nets in a uniform way, by first defining equivalences for transition systems and then associating different transition systems with a Petri net. This yields in particular different non-interleaving equivalences for Petri nets.

\newcommand{\lts}{\mathfrak{L}}
\newcommand{\st}{\mathfrak{S}}
\newcommand{\tr}{\mathfrak{T}}
\newcommand{\inist}{\mathfrak{M_0}}
\newcommand{\mm}{\mathfrak{M}}
\newcommand{\act}{\mathfrak{Act}}

\begin{defi}\label{df-LTS}
Let $\act$ be a set of \emph{visible actions} and
$\tau\mathbin{\not\in}\act$ be an \emph{invisible action}. Let $\act_\tau:=\act \mathbin{\dcup} \{\tau\}$.
A \emph{labelled transition system} (LTS) (\emph{over $\act_\tau$}) is a triple
$(\st,\tr,\inist)$ with
\begin{enumerate}[$-$]
\item $\st$ a set of \emph{states},
\item $\tr\subseteq \st\times \act_\tau \times \st$ a \emph{transition relation}
\item and $\inist\in\st$ the \emph{initial state}.
\end{enumerate}
\end{defi}
\noindent
Given an LTS $(\st,\tr,\inist)$ with $\mm,\mm'\in\st$ and $\alpha\in\act_\tau$,
we write $\mm \goesto[\alpha] \mm'$ for $(\mm,\alpha,\mm')\in \tr$.
We write $\mm \goesto[\alpha]$ for $\exists \mm'.~ \mm \goesto[\alpha] \mm'$ and
$\mm \arrownot\goesto[\alpha]$ for $\nexists \mm'.~ \mm \goesto[\alpha] \mm'$.
Furthermore, $\mm \goesto[\opt{\alpha}] \mm'$ denotes
$\mm \goesto[\alpha] \mm' \vee (\alpha\mathbin=\tau \wedge \mm\mathbin=\mm')$,
meaning that in case \mbox{$\alpha\mathbin=\tau$} performing a $\tau$-transition is optional.
      For $\,a_1 a_2 \cdots a_n \in \act^*$ we write
      $\mm \Goesto[\,a_1 a_2 \cdots a_n~] \mm'$ when
      \[
      \mm
      \Goesto \production{a_1}
      \Goesto \production{a_2}
      \Goesto \cdots
      \Goesto \production{a_n}
      \Goesto
      \mm'
      \]
      where $\Goesto$ denotes the reflexive and transitive closure of $\goesto[\tau]$.
  A state $\mm \in \st$ is said to be \defitem{reachable} iff there is a
  $\sigma \in \act^*$ such that $\inist \Production{\sigma} \mm$. The set of all
  reachable states is denoted by $[\inist\rangle$.
  In case there is an infinite sequence of states
  $(\mathfrak{M}^k)_{k\in\nats}$ such that $\mathfrak{M}^0\in [\inist\rangle$ and
  $\mathfrak{M}^k \goesto[\tau] \mathfrak{M}^{k+1}$ for all $k\in\nat$,
  the LTS is said to display \emph{divergence}.

Many semantic equivalences on LTSs that in some way abstract from internal transitions are
defined in the literature; an overview can be found in \cite{vanglabbeek93linear}.  On
divergence-free LTSs, the most discriminating semantics in the spectrum of equivalences of
\cite{vanglabbeek93linear}, and the only one that fully respects the branching structure
of related systems, is \emph{branching bisimilarity}, proposed in \cite{GW89}.

\begin{defi}\label{df-branching LTS}
Two LTSs $(\st_1,\tr_1,\inist_1)$ and $(\st_2,\tr_2,\inist_2)$ are
\emph{branching bisimilar} iff there exists a relation $\Rel
\subseteq \st_1 \times \st_2$---a \emph{branching bisimulation}---such
that, for all $\alpha\inp\act_\tau$:
\begin{enumerate}[1.]
\item $\mathfrak{M_0}_1\Rel \mathfrak{M_0}_2$;
\item if $\mathfrak{M}_1\Rel \mathfrak{M}_2$ and
  $\mathfrak{M}_1\!\goesto[\alpha]\mathfrak{M}'_1$
  then $\exists \mathfrak{M}^\dagger_2,\mathfrak{M}'_2$ such that
  $\mathfrak{M}_2\Goesto[] \mathfrak{M}^\dagger_2 \!\goesto[\opt{\alpha}] \mathfrak{M}'_2$,
  ~$\mathfrak{M}_1\Rel \mathfrak{M}^\dagger_2$ and $\mathfrak{M}'_1\Rel \mathfrak{M}'_2$;
\item if $\mathfrak{M}_1\Rel \mathfrak{M}_2$ and
  $\mathfrak{M}_2\!\goesto[\alpha]\mathfrak{M}'_2$
  then $\exists \mathfrak{M}^\dagger_1,\mathfrak{M}'_1$ such that
  $\mathfrak{M}_1\Goesto[] \mathfrak{M}^\dagger_1 \!\goesto[\opt{\alpha}] \mathfrak{M}'_1$,
  ~$\mathfrak{M}^\dagger_1\Rel \mathfrak{M}_2$ and $\mathfrak{M}'_1\Rel \mathfrak{M}'_2$.
\end{enumerate}
\end{defi}
\noindent
Branching bisimilarity with explicit divergence \cite{vanglabbeek93linear,GW96,GLT09} is a variant of
branching bisimilarity that fully respects the diverging behaviour of related systems.
It is the most discriminating semantics in the spectrum of equivalences of \cite{vanglabbeek93linear}.
\begin{defi}\label{df-explicit divergence}
Two LTSs $(\st_1,\tr_1,\inist_1)$ and $(\st_2,\tr_2,\inist_2)$ are
branching bisimilar \emph{with explicit divergence} iff there exists a branching bisimulation
$\Rel \subseteq \st_1 \mathbin\times \st_2$ such that furthermore
\begin{enumerate}[1.]
\item[4.] if $\mathfrak{M}_1\Rel \mathfrak{M}_2$ and there is an infinite sequence of states
  $(\mathfrak{M}_1^k)_{k\in\nats}$ such that $\mathfrak{M}_1=\mathfrak{M}_1^0$,
  $\mathfrak{M}_1^k \goesto[\tau] \mathfrak{M}_1^{k+1}$ and $\mathfrak{M}_1^k \Rel \mathfrak{M}_2$
  for all $k\in\nat$, then there exists an infinite sequence of states
  $(\mathfrak{M}_2^\ell)_{\ell\in\nats}$ such that $\mathfrak{M}_2=\mathfrak{M}_2^0$,
  $\mathfrak{M}_2^\ell \goesto[\tau] \mathfrak{M}_2^{\ell+1}$ for all $\ell\in\nat$,
  and $\mathfrak{M}_1^k \Rel \mathfrak{M}_2^\ell$ for all $k,\ell\in\nat$;
\item[5.] if $\mathfrak{M}_1\Rel \mathfrak{M}_2$ and there is an infinite sequence of states
  $(\mathfrak{M}_2^\ell)_{\ell\in\nats}$ such that $\mathfrak{M}_2=\mathfrak{M}_2^0$,
  $\mathfrak{M}_2^\ell \goesto[\tau] \mathfrak{M}_2^{\ell+1}$ and $\mathfrak{M}_1 \Rel \mathfrak{M}_2^\ell$
  for all $\ell\in\nat$, then there exists an infinite sequence of states
  $(\mathfrak{M}_1^k)_{k\in\nats}$ such that $\mathfrak{M}_1=\mathfrak{M}_1^0$,
  $\mathfrak{M}_1^k \goesto[\tau] \mathfrak{M}_1^{k+1}$ for all $k\in\nat$,
  and $\mathfrak{M}_1^k \Rel \mathfrak{M}_2^\ell$ for all $k,\ell\in\nat$.
\end{enumerate}
\end{defi}

\noindent
Since in this paper we mainly compare systems of which one admits no divergence at all, the
definition simplifies to the requirement that the other system may not diverge either.

\begin{prop}\label{pr-explicit divergence}
Let $\mathfrak{L}_1,\,\mathfrak{L}_2$ be two LTSs, of which $\mathfrak{L}_2$ does not display divergence.
Then $\mathfrak{L}_1$ and $\mathfrak{L}_2$ are branching bisimilar with explicit divergence
iff $\mathfrak{L}_1$ and $\mathfrak{L}_2$ are branching bisimilar and
$\mathfrak{L}_1$ does not display divergence either.
\end{prop}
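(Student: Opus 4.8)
The plan is to prove the two implications separately, in each case exploiting the fact that a branching bisimulation may be restricted to reachable states. Throughout I would use that divergence, as defined above, only concerns infinite $\tau$-runs emanating from \emph{reachable} states.

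For the forward implication, assume $\mathfrak{L}_1$ and $\mathfrak{L}_2$ are branching bisimilar with explicit divergence, witnessed by a relation $\Rel$ satisfying clauses 1--5. Then $\Rel$ is in particular a branching bisimulation, so only the non-divergence of $\mathfrak{L}_1$ remains. I would first record the standard auxiliary fact that every reachable state of $\mathfrak{L}_1$ has a reachable $\Rel$-partner: starting from $\inist_1\Rel\inist_2$ and following a witnessing run $\inist_1\Production{\sigma}\mathfrak{M}_1$, each transition is matched by clause 2 of \refdf{branching LTS} with a $\tau$-run ending in a reachable state related to the current state of $\mathfrak{L}_1$, so induction on the length of $\sigma$ delivers a reachable $\mathfrak{M}_2$ with $\mathfrak{M}_1\Rel\mathfrak{M}_2$.

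Now suppose, towards a contradiction, that $\mathfrak{L}_1$ diverges, say $\mathfrak{M}_1^0\goesto[\tau]\mathfrak{M}_1^1\goesto[\tau]\cdots$ with $\mathfrak{M}_1^0\in[\inist_1\rangle$, and fix a reachable $\mathfrak{M}_2^{(0)}$ with $\mathfrak{M}_1^0\Rel\mathfrak{M}_2^{(0)}$. Repeatedly applying clause 2, I would construct reachable states $\mathfrak{M}_2^{(k)}$ with $\mathfrak{M}_1^k\Rel\mathfrak{M}_2^{(k)}$ and with each step $\mathfrak{M}_1^k\goesto[\tau]\mathfrak{M}_1^{k+1}$ simulated by a finite, possibly empty, block of $\tau$-moves $\mathfrak{M}_2^{(k)}\Goesto\goesto[\opt{\tau}]\mathfrak{M}_2^{(k+1)}$. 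The heart of the argument is then a case distinction. If infinitely many of these blocks are non-empty, their concatenation is an infinite $\tau$-run out of the reachable state $\mathfrak{M}_2^{(0)}$, so $\mathfrak{L}_2$ diverges, contrary to hypothesis. Otherwise every block from some index $N$ onwards is empty, so $\mathfrak{M}_2^{(k)}=\mathfrak{M}_2^{(N)}$ for all $k\ge N$; then $(\mathfrak{M}_1^{N+j})_{j\in\nats}$ is an infinite $\tau$-run all of whose states are $\Rel$-related to the single reachable state $\mathfrak{M}_2^{(N)}$, so clause 4 of \refdf{explicit divergence} produces an infinite $\tau$-run out of $\mathfrak{M}_2^{(N)}$, again a divergence of $\mathfrak{L}_2$. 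Either way we reach a contradiction, so $\mathfrak{L}_1$ does not diverge. I expect this direction to be the main obstacle, precisely because a diverging run of $\mathfrak{L}_1$ need not be matched by genuine $\tau$-moves of $\mathfrak{L}_2$ but may be wholly absorbed by stuttering; clause 4 is therefore not applicable directly, but only after isolating a tail of the run that stutters indefinitely against a fixed partner state.

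For the converse, assume $\Rel$ is a branching bisimulation and that neither $\mathfrak{L}_1$ nor $\mathfrak{L}_2$ diverges. I would pass to the restriction $\Rel':=\Rel\cap([\inist_1\rangle\times[\inist_2\rangle)$ to pairs of reachable states. A routine check shows $\Rel'$ is again a branching bisimulation: clause 1 holds since initial states are reachable, and in clauses 2 and 3 every state supplied by the transfer property is reachable as soon as the source states are, being obtained from reachable states by $\tau$- and $\alpha$-moves. It then remains only to verify clauses 4 and 5 for $\Rel'$, and both hold vacuously: the premise of clause 4 would provide an infinite $\tau$-run starting from a state $\mathfrak{M}_1\in[\inist_1\rangle$, \ie a divergence of $\mathfrak{L}_1$, which is excluded by hypothesis, and clause 5 is excluded symmetrically using the non-divergence of $\mathfrak{L}_2$. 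Hence $\Rel'$ witnesses branching bisimilarity with explicit divergence.
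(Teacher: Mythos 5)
Your proof is correct and follows essentially the same route as the paper's: for the ``if'' direction, restricting the branching bisimulation to reachable states makes Clauses 4 and 5 hold vacuously, and for the ``only if'' direction, a divergence of $\mathfrak{L}_1$ is matched step-by-step via Clause 2, with the case split between infinitely many genuine $\tau$-moves of $\mathfrak{L}_2$ and eventual stuttering against a fixed partner, the latter handled by Clause 4. If anything, your phrasing of the dichotomy (infinitely many non-empty simulating blocks versus an eventually constant partner) is slightly more precise than the paper's wording, but the substance is identical.
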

\begin{proof}
``If'': In case neither $\mathfrak{L}_1$ nor $\mathfrak{L}_2$ display divergence,
any branching bisimulation $\Rel$ between $\mathfrak{L}_1$ and $\mathfrak{L}_2$,
when restricted to the reachable states of $\mathfrak{L}_1$ and $\mathfrak{L}_2$,
trivially satisfies Clauses 4 and 5 above.

``Only if'': Suppose that $\Rel$ is a branching bisimulation between
$\mathfrak{L}_1=(\st_1,\tr_1,\inist_1)$ and $\mathfrak{L}_2=(\st_2,\tr_2,\inist_2)$ that satisfies Clauses 4 and 5 above,
and suppose $\mathfrak{L}_1$ displays divergence, \ie there is an infinite sequence of states
$(\mathfrak{M}_1^k)_{k\in\nats}$ such that $\mathfrak{M_0}_1\Goesto[\sigma]\mathfrak{M}_1^0$
for some $\sigma\in\mathfrak{Act}^*$ and $\mathfrak{M}_1^k \goesto[\tau] \mathfrak{M}_1^{k+1}$
for all $k\in\nat$. By \refdf{branching LTS}, Clauses 1 and 2, there exists an infinite sequence of states
  $(\mathfrak{M}_2^k)_{k\in\nats}$ such that $\mathfrak{M_0}_2\Goesto[\sigma]\mathfrak{M}_2^0$,
  $\mathfrak{M}_2^k \Goesto \mathfrak{M}_2^{k+1}$
  and $\mathfrak{M}_1^k \Rel \mathfrak{M}_2^k$ for all $k\in\nat$.
  In case infinitely many of those $\mathfrak{M}_2^k$ are different,
  this sequence constitutes a divergence of $\mathfrak{L}_2$.
  Otherwise, there is an $k_0\geq 0$ such that all $\mathfrak{M}_2^k$ for $k\geq k_0$ are equal,
  and then $\mathfrak{L}_2$ has a divergence by Clause 4.
\end{proof}

One of the semantics reviewed in \cite{vanglabbeek93linear} that respects branching time
and divergence only to a minimal extent, is \emph{(stable) failures equivalence},
proposed in \cite{BKO87} and further elaborated in \cite{Roscoe98}.
It is a variant of the failures equivalence of \cite{BHR84}, only differing in the
treatment of divergence.\footnote{When comparing two systems without divergence, the
  stable failure equivalence coincides with the failures equivalence of \cite{BHR84}.
  When comparing systems of which one is known to be divergence-free---as we will do in
  this paper---the stable failures semantics is strictly less discriminating than the
  failures equivalence of \cite{BHR84}---only the latter guarantees that the other system
  is divergence-free as well. As a less discriminating equivalence will give rise to
  stronger results about the absence of distributed implementations of certain systems, we
  will use a version of the stable failures equivalence, rather than of the failures
  equivalence from \cite{BHR84}.}
\begin{defi}\label{df-failures}
  Let $\lts = (\st,\tr,\inist)$ be an LTS, $\sigma \in \act^*$ and $X \subseteq \act$,\
  $X$ finite.\footnote{Although the version without the restriction that $X$ be finite has
    arguably better properties, we here use the version with this restriction---the
    \emph{finite failures equivalence} of \cite{vanglabbeek93linear}---since it is less
    discriminating.}\\
  $\sigma$ is a \defitem{trace} of $\lts$ iff $\exists \mm.~ \inist \Production{\sigma} \mm$.\\
$\rpair{\sigma, X}$ is a \defitem{failure pair} of $\lts$ iff
  $\exists \mm.~ \inist \Production{\sigma} \mm \wedge \mm \arrownot\production{\tau}
  \wedge \, \forall a\inp X.~ \mm \arrownot\goesto[a].$\\
  We write $\mathfrak{T}(\lts)$ for the set of all traces, and $\mathfrak{F}(\lts)$ for
  the set of all failure pairs of $\lts$.\\
  Two LTSs $\lts_1$ and $\lts_2$ are \defitem{failures equivalent}
  iff $\mathfrak{T}(\lts_1) = \mathfrak{T}(\lts_2)$ and $\mathfrak{F}(\lts_1) = \mathfrak{F}(\lts_2)$.
\end{defi}

As indicated in \cite{vanglabbeek01refinement}, see in particular the
diagram on Page 317 (or 88), equivalences on LTSs have been ported to
Petri nets and other causality respecting models of concurrency
chiefly in five ways: we distinguish \emph{interleaving semantics},
\emph{step semantics}, \emph{split semantics}, \emph{ST-semantics} and
\emph{causal semantics}.  Causal semantics fully respect the causal
relationships between the actions of related systems, whereas
interleaving semantics fully abstract from this information.  Step
semantics differ from interleaving semantics by taking into account
the possi\-bility of multiple actions to occur simultaneously (in
\emph{one step}); this carries a minimal amount of causal information.
ST-semantics respect causality to the extent that it can be expressed
in terms of the possibility of durational actions to overlap in
time. They are formalised by executing a visible action $a$ in two
phases: its start $a^+$ and its termination $a^-$.  Moreover,
terminating actions are properly matched with their starts. Split
semantics are a simplification of ST-semantics in which the matching
of starts and terminations is dropped.

Interleaving semantics on Petri nets can be formalised by associating to each net
$N=(S,T,F,M_0,\ell)$ the LTS $(\st,\tr,M_0)$ with $\st$ the set of markings of $N$
and $\tr$ given by 
$$M_1 \production{\alpha} M_2 :\equivalent \exists\, t \mathbin\in T
      .~ \ell(t) \mathbin= \alpha \wedge M_1~[t\rangle~ M_2.$$
Here we take $\act := \Act$.
Now each equivalence on LTSs from \cite{vanglabbeek93linear} induces a corresponding
interleaving equivalence on nets by declaring two nets equivalent iff the associated LTSs are.
For example, \emph{interleaving branching bisimilarity} is the relation of \refdf{branching LTS}
with the $\mm$'s denoting markings, and the $\alpha$'s actions from $\Act_\tau$.

Step semantics on Petri nets can be formalised by associating another LTS to each net.
Again we take $\st$ to be the markings of the net, and $\inist$ the initial marking,
but this time $\act$ consists of the \emph{steps} over $\Act$,
the non-empty, finite multisets $A$ of visible actions from $\Act$,
and the transition relation $\tr$ is given by
      $$M_1 \production{A} M_2 :\equivalent \exists\, G \fin\nat^T
      .~ \ell(G) = A \wedge \tau \notin \ell(G) \wedge M_1~[G\rangle~ M_2$$
with $\tau$-transitions defined just as in the interleaving case:
$$
M_1 \production{\tau} M_2 :\equivalent \exists\, t \mathbin\in T
      .~ \ell(t) \mathbin= \tau \wedge M_1~[t\rangle~ M_2.
$$
In particular, the step version of failures equivalence would be the relation of \refdf{failures}
with the $\mm$'s denoting markings, the $a$'s steps over $\Act$, the $X$'s sets of
steps, and the $\sigma$'s sequences of steps.
This form of step failures semantics, but based on the failures semantics of \cite{BHR84}
rather than the stable failures semantics of \refdf{failures}, has been studied in \cite{TaubnerV89}.
However, variations in this type of definition are possible.
In this paper we employ a form of step failures semantics that is 
a bit closer to interleaving semantics, thereby coarsening the equivalence and strengthening the final result: $\sigma$ is
a sequence of single actions, whereas the set $X$ of impossible continuations after
$\sigma$ is a set of steps. Moreover, we drop the comparison of the sets of traces.
We define this notion directly on Petri nets, without using intermediate LTSs.

\begin{defi}\label{df-step failures}
  {Let $N = (S, T, F, M_0, \ell)$ be a Petri net, $\sigma \in \Act^*$ and
  $X \subseteq \powermultiset{\Act}$, $X$ finite.}\\
  $\rpair{\sigma, X}$ is a \defitem{step failure pair} of $N$ iff\vspace{-3pt}
  $$\exists M. M_0 \Production{\sigma} M \wedge M \arrownot\production{\tau}
  \wedge \, \forall A\inp X.~ M \arrownot\goesto[A].$$\\[-3ex]
  We write $\failureset(N)$ for the set of all step failure pairs of $N$.\\
  Two Petri nets $N_1$ and $N_2$ are \defitem{step failures equivalent},
  $N_1 \approx_\mathscr{F} N_2$, iff $\failureset(N_1) = \failureset(N_2)$.
\end{defi}

Next we propose a general definition on Petri nets of ST-versions of each of the semantics
of \cite{vanglabbeek93linear}. Again we do this through a mapping from nets to a suitable LTS\@.
An \emph{ST-marking} of a net $(S,T,F,M_0,\ell)$ is a pair $(M,U)\inp\nat^S \mathord\times T^*$
of a normal marking, together with a sequence of visible transitions \emph{currently firing}.
The \emph{initial} ST-marking is $\mathfrak{M_0}:=(M_0,\epsilon)$.
The elements of $\Act^\pm:=\{a^+,\, a^{-n} \mid a \inp \Act, ~n\mathbin> 0\}$ are called
\emph{visible action phases}, and \plat{$Act^\pm_\tau:=\Act^\pm\dcup\{\tau\}$}.
For $U\in T^*$, we write $t\in^{(n)}U$ if $t$ is the  \plat{$n^{\it th}$}
element of $U$. Furthermore $U^{-n}$ denotes $U$ after removal of the \plat{$n^{\it th}$}
transition.

\begin{defi}\label{df-ST-marking}
Let $N=(S,T,F,M_0,\ell)$ be a Petri net, labelled over \plat{$\Act_\tau$}.

The \emph{ST-transition relations} $\goesto[\eta]$ for $\eta\inp\Act^\pm_\tau$ between ST-markings are given by\vspace{1pt}

$(M,U)\goesto[a^+](M',U')$ iff $\exists t\inp T.~ \ell(t)=a \wedge M[t\rangle
\wedge M'=M-\precond{t} \wedge U'=U t$.

$(M,U)\goesto[a^{-n}](M',U')$ iff $\exists t\in^{(n)} U.~\ell(t)=a \wedge
 U'=U^{-n} \wedge M'=M+\postcond{t}$.

$(M,U)\goesto[\tau](M',U')$ iff $M\goesto[\tau]M' \wedge U'=U$.
\end{defi}
\noindent
Now the ST-LTS associated to a net $N$ is $(\st,\tr,\inist)$ with $\st$ the set of
ST-markings of $N$, $\act:=\Act^\pm$, $\tr$ as defined in \refdf{ST-marking}, and $\inist$ the initial ST-marking.
Again, each equivalence on LTSs from \cite{vanglabbeek93linear} induces a corresponding
ST-equivalence on nets by declaring two nets equivalent iff their associated LTSs are.
In particular, \emph{branching ST-bisimilarity} is the relation of \refdf{branching LTS}
with the $\mm$'s denoting ST-markings, and the $\alpha$'s action phases from $\Act^\pm_\tau$.
We write $N_1\approx^\Delta_{bSTb}N_2$ iff $N_1$ and $N_2$ are branching ST-bisimilar with explicit divergence.

\emph{ST-bisimilarity} was originally proposed in \cite{GV87}. It was extended to a setting
with internal actions in \cite{Vo93}, based on the notion of \emph{weak bisimilarity} of
\cite{Mi89}, which is a bit less discriminating than branching bisimilarity.
The above can be regarded as a reformulation of the same idea; the notion of weak
ST-bisimilarity defined according to the recipe above agrees with the ST-bisimilarity of \cite{Vo93}.

The next proposition says that branching ST-bisimilarity with explicit
divergence is more discriminating than (\ie \emph{stronger} than,
\emph{finer} than, or included in) step failures equivalence.

\begin{prop}\label{pr-step failures ST}
Let $N_1$ and $N_2$ be Petri nets. If $N_1\approx^\Delta_{bSTb}N_2$ then $N_1 \approx_\mathscr{F} N_2$.
\end{prop}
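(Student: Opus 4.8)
The plan is to exploit symmetry of the bisimulation and prove only $\failureset(N_1)\subseteq\failureset(N_2)$. So I fix a branching ST-bisimulation with explicit divergence $\Rel$ between the ST-LTSs of $N_1$ and $N_2$, together with a step failure pair $\rpair{\sigma,X}\in\failureset(N_1)$, witnessed by a marking $M$ with $M_0\Production{\sigma}M$, $M\arrownot\production{\tau}$ and $M\arrownot\goesto[A]$ for all $A\in X$, and I construct a matching witness in $N_2$.

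The bridge between the interleaving executions used in \refdf{step failures} and the ST-markings on which $\Rel$ operates is twofold. First, a complete firing $M'[t\rangle M''$ of a transition with $\ell(t)=a$ is simulated in the ST-LTS by the two phases $(M',\epsilon)\goesto[a^+](M'-\precond{t},t)\goesto[a^{-1}](M'',\epsilon)$; second, a step $A=\{a_1,\dots,a_k\}$ is enabled at a marking $M'$ exactly when the pure-start sequence $(M',\epsilon)\goesto[a_1^+]\cdots\goesto[a_k^+]$ is executable, since each start only removes its preset and so executability of the whole sequence amounts to $\precond{G}\le M'$ for the chosen $G$ with $\ell(G)=A$. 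Using the first fact I rewrite $M_0\Production{\sigma}M$ as an ST-run from $\inist_1$ to $(M,\epsilon)$ whose visible phase word is $w=a_1^+a_1^{-1}\cdots a_n^+a_n^{-1}$. The standard fact that a branching bisimulation transports weak traces (an easy induction on Clause~2 of \refdf{branching LTS}) then yields an ST-run of $N_2$ from $\inist_2$ to some $(M_2,U_2)$ on the same word $w$ with $(M,\epsilon)\Rel(M_2,U_2)$; since $w$ contains equally many start and termination phases, $U_2=\epsilon$.

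Next I must turn $(M_2,\epsilon)$ into a \emph{stable} witness. Because $(M,\epsilon)$ is stable, every $\tau$-move from a state related to $(M,\epsilon)$ must be \emph{inert} (matched by $(M,\epsilon)$ staying put), so every state $\tau$-reachable from $(M_2,\epsilon)$ is again related to $(M,\epsilon)$; an infinite such $\tau$-path would be a divergence of $N_2$ all of whose states are related to $(M,\epsilon)$, which Clause~5 of \refdf{explicit divergence} would transport into a divergence of the stable state $(M,\epsilon)$---impossible. Hence some maximal $\tau$-path from $(M_2,\epsilon)$ is finite and ends in a stable ST-marking $(M_2^*,\epsilon)$ with $(M,\epsilon)\Rel(M_2^*,\epsilon)$; re-threading the $\tau$'s around each start/termination pair turns the ST-run into an interleaving execution $M_0\Production{\sigma}M_2^*$ with $M_2^*\arrownot\production{\tau}$. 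This is the one place where explicit divergence is essential: without it $N_2$ could diverge inertly and never stabilise after $\sigma$.

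It remains to check $M_2^*\arrownot\goesto[A]$ for each $A\in X$. Suppose instead $M_2^*\goesto[A]$, say via $G_2$ with $\ell(G_2)=A$; execute its starts $(M_2^*,\epsilon)\goesto[a_1^+]\cdots\goesto[a_k^+]$ and match them back into $N_1$ from $(M,\epsilon)$ using Clause~3 of \refdf{branching LTS}. The decisive point is that starting transitions only consumes tokens, so every intermediate ST-marking of $N_1$ stays below $M$ and therefore remains stable (no $\tau$ is enabled at any marking $\le M$); consequently each matching step forces the intervening $\Goesto$ to be empty and produces a genuine start $\goesto[a_i^+]$ in $N_1$. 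Collecting the started transitions yields $G_1$ with $\ell(G_1)=A$ and $\precond{G_1}\le M$, \ie $M\goesto[A]$, contradicting that $\rpair{\sigma,X}$ is a failure pair of $N_1$. Thus $\rpair{\sigma,X}\in\failureset(N_2)$. I expect this last matching to be the main obstacle: a priori a branching bisimulation may insert $\tau$-transitions while simulating the starts, which would only show that $M$ reaches the step $A$ after some internal activity rather than as an instantaneous step; the monotonicity observation (any $\tau$ enabled after some starts is already enabled at the stable $M$) is exactly what rules such insertions out.
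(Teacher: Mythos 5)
Your proof is correct and takes essentially the same route as the paper's: you convert the run into start/termination phase pairs, transport it along the branching ST-bisimulation, stabilise the resulting state using the explicit-divergence clause, commute the $\tau$'s out of the phase pairs to recover an interleaving execution, and refute enabled steps by matching pure start sequences back into the stable marking $M$. Your monotonicity observation (starts only consume tokens, so every state below $M$ remains stable, forcing the bisimulation to match each start immediately) is precisely the point the paper's proof leaves implicit, so there is no gap.
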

\begin{proof}
Suppose $N_1\approx^\Delta_{bSTb}N_2$ and $\rpair{\sigma, X}\in\failureset(N_1)$.
By symmetry it suffices to show that $\rpair{\sigma, X}\in\failureset(N_2)$.

Since $N_1\approx^\Delta_{bSTb}N_2$,
there must be a branching bisimulation $\Rel$ between the ST-markings of
$N_1=(S_1,T_1,F_1,{M_0}_1,\ell_1)$ and $N_2=(S_2,T_2,F_2,{M_0}_2,\ell_2)$.
In particular, $({M_0}_1,\epsilon)\Rel({M_0}_2,\epsilon)$.
Let $\sigma =: a_1 a_2 \cdots a_n \in \Act^*$.\vspace{-2pt}
Then \plat{$
      {M_0}_1
      \Goesto \production{a_1}
      \Goesto \production{a_2}
      \Goesto \cdots
      \Goesto \production{a_n}
      \Goesto
      M'_1
      $}
for a marking $M'_1\inp\nat^{S_1}$ with
$M'_1 \arrownot\production{\tau}$ and $\forall A\inp X.~ M'_1 \arrownot\goesto[A]$.
So\vspace{-4pt} $
      ({M_0}_1,\epsilon)
      \Goesto \production{a_1^+}\production{a_1^{-1}}
      \Goesto \production{a_2^+}\production{a_2^{-1}}
      \Goesto \cdots
      \Goesto \production{a_n^+}\production{a_n^{-1}}
      \Goesto
      (M'_1,\epsilon)
      $.
Thus, using the properties of a branching bisimulation on the ST-LTSs associated to $N_1$
\vspace{-3pt}and $N_2$, there must be a marking \plat{$M'_2\inp\nat^{S_2}$} such that $
      ({M_0}_2,\epsilon)\!
      \Goesto \production{a_1^+} \Goesto \production{a_1^{-1}}
      \Goesto \production{a_2^+} \Goesto \production{a_2^{-1}}
      \Goesto \cdots
      \Goesto \production{a_n^+} \Goesto \production{a_n^{-1}}
      \Goesto\!
      (M'_2,\epsilon)
      $
and $(M'_1,\epsilon)\Rel(M'_2,\epsilon)$.
Since \plat{$(M'_1,\epsilon) \arrownot\production{\tau}$}, the ST-marking $(M'_1,\epsilon)$
admits no divergence. As \plat{$\approx^\Delta_{bSTb}$} respects this property (cf.\ the proof
of \refpr{explicit divergence}), also $(M'_2,\epsilon)$
admits no divergence, and there must be an $M''_2\inp\nat^{S_2}$ with \plat{$M''_2 \arrownot\production{\tau}$}
and $(M'_2,\epsilon)\Goesto(M''_2,\epsilon)$. Clause 3.\ of a branching bisimulation
gives $(M'_1,\epsilon)\Rel(M''_2,\epsilon)$, and
\refdf{ST-marking} yields ${M_0}_2 \Goesto[\sigma] M''_2$.\vspace{2pt}
Here we use that if \plat{$(M,U)\production{a^{-1}}\production{\tau}(M',U')$} then
\plat{$(M,U)\production{\tau}\production{a^{-1}}(M',U')$}.

Now let $B=\{b_1,\ldots,b_m\}\in X$. Then \plat{$M'_1\arrownot\production{B}$}.\vspace{-4pt}
Suppose, towards a contradiction, that $M''_2\production{B}$.
Then $(M''_2,\epsilon)\production{b_1^+}\production{b_2^+}\cdots \production{b_m^+}$.\vspace{-4pt}
Property 2.\ of a branching bisimulation implies
$(M'_1,\epsilon)\production{b_1^+}\production{b_2^+}\cdots \production{b_m^+}$
and hence $M'_1\production{B}$. This is a contradiction, so $M''_2 \arrownot\production{B}$.
It follows that $\rpair{\sigma, X}\in\failureset(N_2)$.
\end{proof}

In this paper we employ both step failures equivalence and 
branching ST-bisimilarity with explicit divergence.
Fortunately it will turn out that for our purposes the latter equivalence coincides
with its split version
(since always one of the compared nets is plain, see \refpr{split}).

A \emph{split marking} of a net $N=(S,T,F,M_0,\ell)$ is a pair $(M,U)\in\nat^S \times\nat^T$
of a normal marking $M$, together with  a multiset of visible transitions currently firing.
The \emph{initial} split marking is $\mathfrak{M_o}:=(M_0,\emptyset)$.
A split marking can be regarded as an abstraction from an ST-marking, in which the total
order on the (finite) multiset of transitions that are currently firing has been dropped.
Let $\Act^\pm_{\rm split}:=\{a^+,\, a^- \mid a \in \Act\}$.

\begin{defi}\label{df-split marking}
Let $N=(S,T,F,M_0,\ell)$ be a Petri net, labelled over $\Act_\tau$.
\\
The \emph{split transition relations} $\goesto[\zeta]$ for $\zeta\inp\Act^\pm_{\rm split}\dcup\{\tau\}$
between split markings are given by

$(M,U)\goesto[a^+](M',U')$ iff $\exists t\inp T.~ \ell(t)=a \wedge M[t\rangle
\wedge M'=M-\precond{t} \wedge U'=U + \{t\}$.

$(M,U)\goesto[a^{-}](M',U')$ iff $\exists t\inp U.~\ell(t)=a \wedge
 U'=U-\{t\} \wedge M'=M+\postcond{t}$.

$(M,U)\goesto[\tau](M',U')$ iff $M\goesto[\tau]M' \wedge U'=U$.
\end{defi}
\noindent
Note that $(M,U)\goesto[a^+]$ iff $M\goesto[a]$, whereas $(M,U)\goesto[a^-]$
iff $a\in\ell(U)$.
With induction on reachability of markings it is furthermore easy to check that
$(M,U) \in [\inist\rangle$ iff $\tau \notin \ell(U)$ and $M+\!\precond{U}\in [M_0\rangle$.

The split LTS associated to a net $N$ is $(\st,\tr,\inist)$ with $\st$ the set of split
markings of $N$, $\act:=\Act^\pm$, $\tr$ as defined in \refdf{split marking}, and $\inist$ the initial split marking.
Again, each equivalence on LTSs from \cite{vanglabbeek93linear} induces a corresponding
split equivalence on nets by declaring two nets equivalent iff their associated LTSs are.
In particular, \emph{branching split bisimilarity} is the relation of \refdf{branching LTS}
with the $\mm$'s denoting split markings, and the $\alpha$'s action phases from
\plat{$\Act^\pm_{\rm split}\dcup\{\tau\}$}.
\vspace{2pt}

For $\mathfrak{M}=(M,U)\in \nat^S\times T^*$ an ST-marking, let
$\overline{\mathfrak{M}}=(M,\overline{U})\in \nat^S \times \nat^T$ be the split marking obtained by
converting the sequence $U$ into the multiset $\overline{U}$, where $\overline{U}(t)$ is
the number of occurrences of the transition $t\in T$ in $U$.
Moreover, define $\ell(\mathfrak{M})$ by $\ell(M,U) := \ell(U)$ and
$\ell(t_1 t_2 \cdots t_k) := \ell(t_1) \ell(t_2) \cdots \ell(t_k)$.
Furthermore, for $\eta\in\Act^\pm_\tau$, let \plat{$\overline{\eta}\in \Act^\pm_{\rm split}\dcup\{\tau\}$}
be given by \plat{$\overline{a^+}:= a^+$}, \plat{$\overline{a^{-n}}:= a^-$} and $\overline{\tau}:= \tau$.

\begin{obs}\label{obs-match}
Let $\mathfrak{M},\mathfrak{M}'$ be ST-markings, $\mathfrak{M}^\dagger$ a split marking,
$\eta\inp\Act^\pm_\tau$ and {$\zeta\inp\Act^\pm_{\rm split}\linebreak[2]\cup\{\tau\}$}. Then
\begin{enumerate}[(1)]
\item $\mathfrak{M}\in \nat^S\times T^*$ is the initial ST-marking of $N$ iff
$\overline{\mathfrak{M}}\in \nat^S\times \nat^T$ is the initial split marking of $N$;
\item if $\mathfrak{M} \goesto[\eta] \mathfrak{M}'$ then
$\overline{\mathfrak{M}} \goesto[\overline{\eta}] \overline{\mathfrak{M}'}$;
\item if $\overline{\mathfrak{M}} \goesto[\zeta] \mathfrak{M}^\dagger$ then there
is a $\mathfrak{M}'\in \nat^S\times T^*$ and \plat{$\eta\in\Act^\pm_\tau$} such that $\mathfrak{M} \goesto[\eta]
\mathfrak{M}'$, $\overline{\eta}=\zeta$ and $\overline{\mathfrak{M}'}=\mathfrak{M}^\dagger$;
\item if $\mathfrak{M} \goesto[\opt{\eta}] \mathfrak{M}'$ then
$\overline{\mathfrak{M}} \goesto[\opt{\overline{\eta}}] \overline{\mathfrak{M}'}$;
\item if $\overline{\mathfrak{M}} \goesto[\opt{\zeta}] \mathfrak{M}^\dagger$ then there
is a $\mathfrak{M}'\in \nat^S\times T^*$ and \plat{$\eta\in\Act^\pm_\tau$} such that $\mathfrak{M} \goesto[\opt{\eta}]
\mathfrak{M}'$, $\overline{\eta}=\zeta$ and $\overline{\mathfrak{M}'}=\mathfrak{M}^\dagger$;
\item if $\mathfrak{M} \Goesto \mathfrak{M}'$ then
$\overline{\mathfrak{M}} \Goesto \overline{\mathfrak{M}'}$;
\item if $\overline{\mathfrak{M}} \Goesto \mathfrak{M}^\dagger$ then there
is a $\mathfrak{M}'\in \nat^S\times T^*$ such that $\mathfrak{M} \Goesto
\mathfrak{M}'$ and $\overline{\mathfrak{M}'}=\mathfrak{M}^\dagger$.\qed
\end{enumerate}
\end{obs}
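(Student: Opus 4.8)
The plan is to show that the map $\overline{\cdot}$ sending an ST-marking $(M,U)$ to the split marking $(M,\overline U)$ acts as a functional bisimulation between the ST-LTS and the split LTS of $N$, in both directions. Items (2), (4) and (6) express that $\overline{\cdot}$ preserves transitions (a forward simulation), while (3), (5) and (7) express that every split transition out of $\overline{\mathfrak M}$ can be lifted to an ST-transition out of $\mathfrak M$ (a backward simulation). I would first establish (1), (2) and (3) directly from \refdf{ST-marking} and \refdf{split marking}, then obtain (4) and (5) by adjoining the trivial no-move case, and finally derive (6) and (7) from the $\tau$-instances of (2) and (3) by induction on the length of a $\tau$-sequence.

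For (1), note that the conversion of a sequence to a multiset satisfies $\overline{\epsilon}=\emptyset$, and that $\epsilon$ is the \emph{only} sequence with $\overline U=\emptyset$; since $\overline{\cdot}$ leaves the first component untouched, $(M,U)=(M_0,\epsilon)$ iff $(M,\overline U)=(M_0,\emptyset)$. For (2) and (3) I would case-split on the action phase. The two identities that drive every case are $\overline{Ut}=\overline U+\{t\}$ (appending $t$ to the sequence adds one copy of $t$ to the multiset) and $\overline{U^{-n}}=\overline U-\{t\}$ when $t$ is the $n^{\it th}$ element of $U$ (deleting the $n^{\it th}$ entry removes one copy of that transition). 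With these, the $a^+$, $a^-$ and $\tau$ clauses of the two definitions match up componentwise, and in each case the required relation $\overline{\eta}=\zeta$ holds by the definition of $\overline{\cdot}$ on action phases.

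The one place needing care, and the main obstacle, is the backward direction (3) for a termination phase $\zeta=a^-$. A split marking records only the multiset $\overline U$, so a transition $t\in\overline U$ with $\ell(t)=a$ may sit at several positions in the underlying sequence $U$, and the split rule does not single one out. The remedy is that $t\in\overline U$ (i.e.\ $\overline U(t)\geq 1$) forces $t$ to occur at \emph{some} position $n$ of $U$; I would pick any such $n$, set $\eta:=a^{-n}$ and $\mathfrak M':=(M+\postcond t,U^{-n})$, and check via $\overline{U^{-n}}=\overline U-\{t\}$ that $\overline{\mathfrak M'}$ is exactly the given split marking $\mathfrak M^\dagger$. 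Different choices of $n$ yield different $\mathfrak M'$ but the same $\overline{\mathfrak M'}$, so the lift suffices for our purposes.

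Items (4) and (5) then follow from (2) and (3) by additionally treating the optional empty move: if $\mathfrak M\goesto[\opt{\eta}]\mathfrak M'$ arises from $\eta=\tau$ and $\mathfrak M=\mathfrak M'$, then $\overline{\mathfrak M}=\overline{\mathfrak M'}$ and $\overline{\tau}=\tau$ give $\overline{\mathfrak M}\goesto[\opt{\tau}]\overline{\mathfrak M'}$, and symmetrically for (5) one takes $\mathfrak M':=\mathfrak M$. Finally (6) and (7), concerning the reflexive-transitive closure $\Goesto$ of $\goesto[\tau]$, follow by induction on the number of $\tau$-steps: the base case is reflexivity (with witness $\mathfrak M':=\mathfrak M$ in (7)), and the inductive step uses the $\tau$-case of (2), respectively (3), on the first step, the induction hypothesis on the remainder, and transitivity of $\Goesto$.
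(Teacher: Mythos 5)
Your proof is correct and matches the paper's intent exactly: the paper states this as an Observation with no proof (the \qed{} is attached directly to the statement), treating it as an immediate consequence of Definitions~\ref{df-ST-marking} and~\ref{df-split marking}, and your case analysis---including the one genuinely non-trivial point, lifting a split $a^-$-transition by choosing \emph{some} position $n$ at which an $a$-labelled transition occurs in $U$---is precisely the routine verification the paper leaves implicit. Nothing to correct.
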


\begin{lem}\label{lem-label sequence}
Let $N_1=(S_1,T_1,F_1,{M_0}_1,\ell)$ and $N_2=(S_2,T_2,F_2,{M_0}_2,\ell_2)$ be two nets, $N_2$ being
\hyperlink{plain}{plain};
let $\mathfrak{M}_1,\mathfrak{M}'_1$ be ST-markings of $N_1$, and
$\mathfrak{M}_2,\mathfrak{M}'_2$ ST-markings of $N_2$.
  If $\ell(\mathfrak{M}_2)=\ell(\mathfrak{M}_1)$,
  $\mathfrak{M}_1\goesto[\eta] \mathfrak{M}'_1$ and
  $\mathfrak{M}_2\goesto[\opt{\eta'}] \mathfrak{M}'_2$ with $\overline{\eta'}=\overline{\eta}$,
  then there is an $\mathfrak{M}''_2$ with $\mathfrak{M}_2\goesto[\opt{\eta}] \mathfrak{M}''_2$,
  $\ell(\mathfrak{M}''_2)=\ell(\mathfrak{M}'_1)$,
  and $\overline{\mathfrak{M}''_2}=\overline{\mathfrak{M}'_2}$.
\end{lem}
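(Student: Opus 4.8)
The plan is to proceed by a case analysis on the shape of the action phase $\eta\in\Act^\pm_\tau$, writing $\mathfrak{M}_1=(M_1,U_1)$ and $\mathfrak{M}_2=(M_2,U_2)$. The hypothesis $\ell(\mathfrak{M}_2)=\ell(\mathfrak{M}_1)$ unfolds to the equality of label strings $\ell(U_2)=\ell(U_1)$; in particular $U_1$ and $U_2$ have equal length and their $n^{\textit{th}}$ transitions carry the same label for every valid $n$. I will use this componentwise agreement throughout, together with the fact that $N_2$ is \hyperlink{plain}{plain}, i.e.\ $\ell_2$ is injective and assigns no transition the label $\tau$. The condition $\overline{\eta'}=\overline{\eta}$ then pins down $\eta'$ up to the information that the split abstraction forgets.

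For $\eta=a^+$ the constraint $\overline{\eta'}=a^+$ forces $\eta'=a^+$, so I may simply take $\mathfrak{M}''_2:=\mathfrak{M}'_2$; the split-marking clause is then trivial and the transition clause is immediate, while the label clause holds because both $\ell(U'_1)$ and $\ell(U'_2)$ arise by appending $a$ to the equal strings $\ell(U_1)=\ell(U_2)$, irrespective of which $a$-labelled transition $N_2$ started. For $\eta=\tau$ the constraint gives $\eta'=\tau$, but plainness of $N_2$ rules out genuine $\tau$-steps, so $\mathfrak{M}_2\goesto[\opt{\tau}]\mathfrak{M}'_2$ can only be the identity and $\mathfrak{M}'_2=\mathfrak{M}_2$. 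Taking $\mathfrak{M}''_2:=\mathfrak{M}_2$ works, since a $\tau$-step of $N_1$ leaves the firing sequence unchanged, whence $\ell(\mathfrak{M}'_1)=\ell(\mathfrak{M}_1)=\ell(\mathfrak{M}_2)$, and the other two clauses are automatic.

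The interesting case is $\eta=a^{-n}$, and reconciling it is the main obstacle. Here $\overline{\eta'}=a^-$ only forces $\eta'=a^{-m}$ for \emph{some} index $m$, which need not equal $n$. Since $n$ indexes a transition of $U_1$ and $|U_1|=|U_2|$, the $n^{\textit{th}}$ element $t$ of $U_2$ exists and, by componentwise agreement of labels, satisfies $\ell_2(t)=a$, so $\mathfrak{M}_2\goesto[a^{-n}]$ is enabled. Let $t'$ be the $m^{\textit{th}}$ transition of $U_2$, the one terminated by $\eta'$; it too is labelled $a$, so injectivity of $\ell_2$ yields $t=t'$. Consequently removing the $n^{\textit{th}}$ or the $m^{\textit{th}}$ occurrence from $U_2$ deletes the same transition from the multiset $\overline{U_2}$ and adds the same tokens $\postcond{t}$ to $M_2$. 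Thus, defining $\mathfrak{M}''_2$ as the result of the $a^{-n}$-step, I obtain $\overline{\mathfrak{M}''_2}=\overline{\mathfrak{M}'_2}$ for free, while $\ell(\mathfrak{M}''_2)=\ell(\mathfrak{M}'_1)$ holds because deleting the $n^{\textit{th}}$ symbol from the equal strings $\ell(U_1)=\ell(U_2)$ yields equal strings.

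In short, the only delicate point is the index mismatch in the termination case, and it is resolved entirely by plainness: injectivity of the labelling collapses the two potentially different terminated occurrences into one and the same transition, so the split abstraction genuinely cannot distinguish them. The remaining cases are routine bookkeeping on markings and label strings, and no use of $N_1$ being plain is required anywhere.
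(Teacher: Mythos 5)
Your proof is correct and follows essentially the same route as the paper's: dispatch $\eta\in\{\tau,a^+\}$ by noting $\overline{\eta'}=\overline{\eta}$ forces $\eta'=\eta$, and in the crucial case $\eta=a^{-n}$ use plainness of $N_2$ to identify the $n^{\textit{th}}$ and $m^{\textit{th}}$ (both $a$-labelled) elements of $U_2$ as one and the same transition, so the $a^{-n}$- and $a^{-m}$-steps agree up to split abstraction. The only cosmetic differences are that the paper verifies the label clause once and for all (since $\ell(\mathfrak{M}'')$ is determined by $\ell(\mathfrak{M})$ and $\eta$) rather than per case, and in the $\tau$ case it simply takes $\mathfrak{M}''_2:=\mathfrak{M}'_2$ without needing your (valid) observation that plain nets admit no $\tau$-steps.
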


\begin{proof}
If $\mathfrak{M}\goesto[\eta] \mathfrak{M}'$ or $\mathfrak{M}\goesto[\opt{\eta}] \mathfrak{M}'$
then $\ell( \mathfrak{M}')$ is completely determined by $\ell(\mathfrak{M})$ and $\eta$.
For this reason the requirement
$\ell(\mathfrak{M}''_2)=\ell(\mathfrak{M}'_1)$ will hold as soon as
the other requirements are met. 

First suppose $\eta$ is of the form $\tau$ or $a^+$. Then
$\overline{\eta}=\eta$ and moreover $\overline{\eta'}=\overline{\eta}$ implies $\eta'=\eta$.
Thus we can take $\mathfrak{M}''_2:=\mathfrak{M}'_2$.

Now suppose $\eta:=a^{-n}$ for some $n>0$. Then $\eta'=a^{-m}$ for
some $m>0$. As $\mathfrak{M}_1\goesto[\eta]$, the $n^{\it th}$ element
of $\ell(\mathfrak{M}_1)$ must (exist and) be $a$.
Since $\ell(\mathfrak{M}_2)=\ell(\mathfrak{M}_1)$, also the $n^{\it th}$ element
of $\ell(\mathfrak{M}_2)$ must be $a$, so there is an
$\mathfrak{M}''_2$ with $\mathfrak{M}_2\goesto[\opt{\eta}]
\mathfrak{M}''_2$. Let $\mathfrak{M}_2:=(M_2,U_2)$. Then $U_2$ is a
sequence of transitions of which the $n^{\it th}$ and the $m^{\it th}$ elements
are both labelled $a$. Since the net $N_2$ is plain, those two
transitions must be equal. Let $\mathfrak{M}'_2:=(M'_2,U'_2)$ and $\mathfrak{M''}_2:=(M''_2,U''_2)$.
We find that $M''_2\mathbin=M'_2$ and $\overline{U''_2}\mathbin=\overline{U'_2}$. It
follows that $\overline{\mathfrak{M}''_2}=\overline{\mathfrak{M}'_2}$.
\end{proof}

\begin{obs}\label{obs-label sequence}
If $\mathfrak{M}\Goesto \mathfrak{M}'$ for ST-markings
$\mathfrak{M},\mathfrak{M}'$ then $\ell(\mathfrak{M}')=\ell(\mathfrak{M})$.
\end{obs}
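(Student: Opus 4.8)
The plan is to exploit the fact that $\Goesto$ is by definition the reflexive and transitive closure of $\goesto[\tau]$, so it suffices to show that a single silent step preserves the label sequence and then to propagate this fact along the chain by induction on its length.

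First I would record the base case. By \refdf{ST-marking}, a silent step $(M,U)\goesto[\tau](M',U')$ between ST-markings requires $U'=U$; that is, a $\tau$-move changes only the underlying ordinary marking and leaves the sequence $U$ of currently firing transitions untouched. Since $\ell(\mathfrak{M})$ was defined by $\ell(M,U):=\ell(U)$, this immediately gives $\ell(M',U')=\ell(U')=\ell(U)=\ell(M,U)$, so a single silent transition between ST-markings preserves $\ell$. I would then do induction on the number $k$ of $\tau$-steps witnessing $\mathfrak{M}\Goesto\mathfrak{M}'$: for $k=0$ we have $\mathfrak{M}=\mathfrak{M}'$ and there is nothing to prove, while for the inductive step one writes the chain as $\mathfrak{M}\goesto[\tau]\mathfrak{M}''\Goesto\mathfrak{M}'$, obtains $\ell(\mathfrak{M}'')=\ell(\mathfrak{M})$ from the base case and $\ell(\mathfrak{M}')=\ell(\mathfrak{M}'')$ from the induction hypothesis, and concludes $\ell(\mathfrak{M}')=\ell(\mathfrak{M})$.

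There is essentially no hard part here: the statement is a direct consequence of the fact that, in \refdf{ST-marking}, the $U$-component of an ST-marking is invariant under $\goesto[\tau]$, and $\ell$ of an ST-marking depends only on that component. The only point requiring a little care is to phrase the argument as an induction along the silent closure rather than for a single step only, since $\Goesto$ may consist of arbitrarily many $\tau$-transitions.
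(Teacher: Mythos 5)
Your proof is correct and is exactly the intended argument: the paper states this as an observation without proof precisely because, by \refdf{ST-marking}, a $\tau$-step leaves the $U$-component unchanged and $\ell(M,U)$ depends only on $U$, so the claim follows by the obvious induction along the reflexive–transitive closure. Your only addition is spelling out that induction explicitly, which is fine.
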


\begin{obs}\label{obs-end-phase determinism 1}
  If $\ell(\mathfrak{M}_1)=\ell(\mathfrak{M}_2)$ and
  $\mathfrak{M}_2\goesto[a^{-n}]$ for some $a\in\Act$ and $n>0$, then $\mathfrak{M}_1\goesto[a^{-n}]$.
\end{obs}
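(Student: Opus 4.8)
The plan is to unfold \refdf{ST-marking} and observe that whether an ST-marking admits an $a^{-n}$-step depends only on its associated label sequence $\ell(\mathfrak{M})$ --- in fact only on the existence and value of its $n^{\it th}$ entry. Write $\mathfrak{M}_1=(M_1,U_1)$ and $\mathfrak{M}_2=(M_2,U_2)$.

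First I would unfold the hypothesis $\mathfrak{M}_2\goesto[a^{-n}]$: by \refdf{ST-marking} this means there is a transition $t$ with $t\in^{(n)}U_2$ and $\ell(t)=a$; in particular $U_2$ has at least $n$ elements, and its $n^{\it th}$ one is labelled $a$. Since $\ell(M,U)=\ell(U)$ and $\ell$ acts componentwise on sequences, this is exactly the statement that the $n^{\it th}$ symbol of the sequence $\ell(\mathfrak{M}_2)=\ell(U_2)$ exists and equals $a$.

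Next I would transport this to $\mathfrak{M}_1$ using the hypothesis $\ell(\mathfrak{M}_1)=\ell(\mathfrak{M}_2)$, \ie $\ell(U_1)=\ell(U_2)$. Hence the $n^{\it th}$ symbol of $\ell(U_1)$ also exists and equals $a$, which forces $U_1$ to have an $n^{\it th}$ element, say $t'$, with $\ell(t')=a$, \ie $t'\in^{(n)}U_1$. Taking $M'_1:=M_1+\postcond{t'}$ and $U'_1:=U_1^{-n}$, \refdf{ST-marking} yields $\mathfrak{M}_1\goesto[a^{-n}](M'_1,U'_1)$, and therefore $\mathfrak{M}_1\goesto[a^{-n}]$.

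There is essentially no obstacle here. The only point requiring care is that the enabling condition for $a^{-n}$ refers purely to the firing-sequence component $U$ --- specifically to the label and position of its $n^{\it th}$ entry --- and imposes no constraint on the marking $M$ (unlike the $a^+$-case, where $M[t\rangle$ is demanded). Consequently, equality of the label sequences alone suffices, and no reachability argument or token-count reasoning on $M_1$ and $M_2$ is needed.
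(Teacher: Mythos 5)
Your proof is correct and is exactly the reasoning the paper intends: the statement is left as an unproven Observation precisely because, as you note, the enabling condition for $a^{-n}$ in \refdf{ST-marking} constrains only the $n^{\it th}$ entry of the sequence $U$ (its existence and label), not the marking $M$, so equality of label sequences immediately transfers enabledness. No gaps; your explicit construction of the target ST-marking $(M_1+\postcond{t'},U_1^{-n})$ just spells out what the paper treats as evident.
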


\begin{obs}\label{obs-end-phase determinism 2}
  If $\mathfrak{M}\goesto[a^{-n}]\mathfrak{M}'$ and $\mathfrak{M}\goesto[a^{-n}]\mathfrak{M}''$
 for some $a\in\Act$ and $n>0$, then $\mathfrak{M}'=\mathfrak{M}''$.
\end{obs}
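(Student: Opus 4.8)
The plan is to unfold \refdf{ST-marking} on both hypotheses and observe that the transition witnessing an $a^{-n}$-step is uniquely determined. Writing $\mathfrak{M}=(M,U)$, the assumption $\mathfrak{M}\goesto[a^{-n}]\mathfrak{M}'$ provides a transition $t\in^{(n)}U$ with $\ell(t)=a$ such that $\mathfrak{M}'=(M+\postcond{t},U^{-n})$, and likewise $\mathfrak{M}\goesto[a^{-n}]\mathfrak{M}''$ provides a $t'\in^{(n)}U$ with $\ell(t')=a$ and $\mathfrak{M}''=(M+\postcond{t'},U^{-n})$.

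The only real step is the observation that the sequence $U$ has exactly one $n^{\it th}$ element, so the two witnesses must coincide: $t=t'$. Once this is noted, both the marking component $M+\postcond{t}$ and the sequence component $U^{-n}$ of the two target ST-markings agree, whence $\mathfrak{M}'=\mathfrak{M}''$.

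There is essentially no obstacle here; the content lies entirely in the indexed nature of ST-markings. It is worth stressing that determinism holds precisely because the action phase $a^{-n}$ records \emph{which} occurrence in the firing sequence $U$ is being terminated. The analogous statement fails for the split transition relation of \refdf{split marking}, where an $a^-$-step only records that \emph{some} $a$-labelled transition in the multiset $U$ finishes, leaving a genuine choice whenever $U$ contains several transitions labelled $a$ with differing postsets. Thus the proof relies on the sequential (rather than multiset) bookkeeping of currently firing transitions, and not on any property of the net $N$ itself.
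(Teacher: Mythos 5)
Your proof is correct and matches the paper exactly: the paper states this as an unproved Observation, treating it as immediate from \refdf{ST-marking}, and your unfolding---the $n^{\it th}$ element of the sequence $U$ is unique, so the witnessing transition and hence both components of the target are determined---is precisely the implicit justification. Your side remark that determinism fails for the split relation of \refdf{split marking} (when $U$ contains distinct $a$-labelled transitions) is also accurate and correctly identifies why the indexed phases $a^{-n}$ matter.
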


\begin{prop}\label{pr-split}
Let $N_1=(S_1,T_1,F_1,{M_0}_1,\ell)$ and
$N_2=(S_2,T_2,F_2,{M_0}_2,\ell_2)$ be two nets, $N_2$ being \hyperlink{plain}{plain}.
Then $N_1$ and $N_2$ are branching ST-bisimilar (with explicit
divergence) iff they are branching split bisimilar (with explicit divergence).
\end{prop}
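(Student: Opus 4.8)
The plan is to prove each direction by transporting a bisimulation across the projection $\overline{\cdot}$ from ST-markings to split markings, reading \refobs{match} as a dictionary that translates ST-moves into split-moves and back. In both directions I write $\Rel$ for the \emph{given} bisimulation and build a new relation $R$ witnessing the other kind.

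For the easy implication (branching ST-bisimilar $\Rightarrow$ branching split bisimilar), which needs no plainness, let $\Rel$ be a branching ST-bisimulation between $N_1$ and $N_2$ and put $\mathfrak{N}_1\mathrel{R}\mathfrak{N}_2$ iff there are ST-markings $\mathfrak{M}_1\Rel\mathfrak{M}_2$ with $\overline{\mathfrak{M}_1}=\mathfrak{N}_1$ and $\overline{\mathfrak{M}_2}=\mathfrak{N}_2$. I would check that $R$ is a branching split bisimulation: Clause~1 comes from \refobs{match}(1); for Clause~2 a split-move $\mathfrak{N}_1\goesto[\opt{\zeta}]\mathfrak{N}'_1$ lifts to an ST-move $\mathfrak{M}_1\goesto[\opt{\eta}]\mathfrak{M}'_1$ of a witness with $\overline{\eta}=\zeta$ by \refobs{match}(5), this is answered in $N_2$ by $\Rel$, and the answering $\Goesto\goesto[\opt{\eta}]$-path is pushed back down to split markings by \refobs{match}(6) and (4); Clause~3 is symmetric. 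Since $\tau$-steps are literally the same relation in the ST- and split-LTS (they leave the firing component $U$ untouched) and $\overline{\tau}=\tau$, divergences correspond verbatim, so Clauses~4 and 5 transfer too.

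For the hard implication (branching split bisimilar $\Rightarrow$ branching ST-bisimilar) let $\Rel$ now be a branching split bisimulation and put $\mathfrak{M}_1\mathrel{R}\mathfrak{M}_2$ iff $\overline{\mathfrak{M}_1}\Rel\overline{\mathfrak{M}_2}$ \emph{and} $\ell(\mathfrak{M}_1)=\ell(\mathfrak{M}_2)$; the second conjunct is indispensable, since $\overline{\cdot}$ forgets the order of the currently firing transitions and only the label sequence pins that order down. Clause~1 is immediate. Clause~2 treats a move $\mathfrak{M}_1\goesto[\eta]\mathfrak{M}'_1$ of $N_1$ that must be matched in the \emph{plain} net $N_2$: project it by \refobs{match}(2) to $\overline{\mathfrak{M}_1}\goesto[\overline{\eta}]\overline{\mathfrak{M}'_1}$, answer this in the split-LTS of $N_2$ using $\Rel$, lift the answer back to ST-markings of $N_2$ by \refobs{match}(7) and (5) --- obtaining an answer with some phase $\eta'$, $\overline{\eta'}=\overline{\eta}$ --- and finally upgrade $\eta'$ to the \emph{required} $\eta$ by \reflem{label sequence}. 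This is exactly where plainness of $N_2$ is spent: the lemma converts a split-faithful termination $a^{-m}$ into the demanded $a^{-n}$ without disturbing the split projection. The relation and label clauses for the new states then follow from \refobs{label sequence} together with the conclusion of the lemma.

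The remaining Clause~3 is the crux and, I expect, the main obstacle: a move $\mathfrak{M}_2\goesto[\eta]\mathfrak{M}'_2$ of the \emph{plain driver} $N_2$ must be matched in the \emph{arbitrary} net $N_1$, and now \reflem{label sequence} is useless, because it only repairs phases on the plain side, which is here the driver rather than the matcher. For $\eta=\tau$ and $\eta=a^+$ there is nothing to do, as $\overline{\cdot}$ is injective there, so the phase recovered on the $N_1$-side is forced to equal $\eta$ and Clause~2 dualises. The genuinely hard case is a termination $\eta=a^{-n}$: the ST-bisimulation must reply with this \emph{same} phase, i.e.\ retire the transition occupying position $n$ of $N_1$'s firing sequence (which $\tau$-steps never reorder), whereas the split bisimulation only dictates which transition \emph{identity} is retired --- and in a non-plain $N_1$ two distinct $a$-labelled transitions may be firing at once, so ``by position'' and ``by identity'' can disagree. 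My plan to close this is to use \refobs{end-phase determinism 1} (via $\ell(\mathfrak{M}_1)=\ell(\mathfrak{M}_2)$) to see that phase $a^{-n}$ is in fact enabled in $N_1$, and \refobs{end-phase determinism 2} to see that it has a unique successor; then to exploit plainness of $N_2$, under which every $a$-termination of $N_2$ retires the single transition $\ell_2^{-1}(a)$ and so lands in one and the same split marking $\overline{\mathfrak{M}'_2}$. Feeding the position-$n$ termination of $N_1$ back into Clause~2 of $\Rel$, and using that a termination commutes with the $\tau$-steps of any matching path (it only deposits tokens, hence preserves $\tau$-enabledness), the position-$n$ successor turns out to be split-related to $\overline{\mathfrak{M}'_2}$ up to a block of $N_2$-internal steps, which a standard branching-bisimulation stuttering argument absorbs. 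Finally Clauses~4 and 5 transfer as in the easy direction, since $\tau$-steps coincide in the two LTSs and $R$ preserves both the marking and the firing multiset along any $\tau$-run.
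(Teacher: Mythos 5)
Your proposal is correct and follows essentially the same route as the paper's proof: the same two relations (the split projection of $\Rel$ for the easy direction, and split-relatedness conjoined with equality of label sequences for the hard one), the same use of \refobs{match} as the transfer dictionary, \reflem{label sequence} spending plainness in Clause~2, and for the critical $a^{-n}$ case of Clause~3 the same combination of \refobs{end-phase determinism 1} and \refobs{end-phase determinism 2} with a re-use of the already established Clause~2. The only (harmless) deviation is your closing appeal to a stuttering argument absorbing ``a block of $N_2$-internal steps'': since $N_2$ is plain it has no $\tau$-transitions at all, so that block is empty and the paper instead concludes directly via end-phase determinism.
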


\begin{proof}
Suppose $\Rel $ is a branching ST-bisimulation between $N_1$ and $N_2$.
Then, by \refobs{match}, the relation $\Rel _{\rm split} := \{(\overline{\mathfrak{M}_1},\overline{\mathfrak{M}_2}) \mid
(\mathfrak{M}_1,\mathfrak{M}_2)\in \Rel \}$ is a branching split bisimulation between $N_1$ and $N_2$.

Now let $\Rel $ be a branching split bisimulation between $N_1$ and $N_2$.
Then, using \refobs{match}, the relation $\Rel _{\rm ST} := \{(\mathfrak{M}_1,\mathfrak{M}_2) \mid
\ell_1(\mathfrak{M}_1)=\ell_2(\mathfrak{M}_2) \wedge 
(\overline{\mathfrak{M}_1},\overline{\mathfrak{M}_2})\in \Rel \}$ turns out to be a
branching ST-bisimulation between $N_1$ and $N_2$:
\begin{enumerate}[1.]
\item $\mathfrak{M_0}_1\Rel_{\rm ST} \mathfrak{M_0}_2$ follows from \refobs{match}(1), since
  $\overline{\mathfrak{M_0}_1}\Rel \overline{\mathfrak{M_0}_2}$ and
  $\ell(\mathfrak{M_0}_1)\mathbin=\ell(\mathfrak{M_0}_2)\mathbin=\epsilon$.
\item Suppose $\mathfrak{M}_1\Rel_{\rm ST} \mathfrak{M}_2$ and
  $\mathfrak{M}_1\!\goesto[\eta]\mathfrak{M}'_1$.
  Then $\overline{\mathfrak{M}_1}\Rel \overline{\mathfrak{M}_2}$ and
  $\overline{\mathfrak{M}_1}\!\goesto[\overline\eta]\overline{\mathfrak{M}'_1}$.
  Hence $\exists \mathfrak{M}^\dagger_2,\mathfrak{M}^\ddagger_2$ such that
  $\overline{\mathfrak{M}_2}\Goesto[] \mathfrak{M}^\dagger_2 \!\goesto[\opt{\overline\eta}] \mathfrak{M}^\ddagger_2$,
  ~$\overline{\mathfrak{M}_1}\Rel \mathfrak{M}^\dagger_2$ and
  $\overline{\mathfrak{M}'_1}\Rel \mathfrak{M}^\ddagger_2$.
  As $N_2$ is plain, $\mathfrak{M}^\dagger_2=\overline{\mathfrak{M}_2}$.
  By \refobs{match}(5), using that $\overline{\mathfrak{M}_2}\goesto[\opt{\overline\eta}] \mathfrak{M}^\ddagger_2$,
  $\exists \mathfrak{M}'_2,\,\eta'$ such that
  $\mathfrak{M}_2\goesto[\opt{\eta'}] \mathfrak{M}'_2$, $\overline{\eta'}=\overline{\eta}$ and
  $\overline{\mathfrak{M}'_2} = \mathfrak{M}^\ddagger_2$.
  By \reflem{label sequence}, there is an ST-marking $\mathfrak{M}''_2$ such that
  $\mathfrak{M}_2\goesto[\opt{\eta}] \mathfrak{M}''_2$,
  $\ell(\mathfrak{M}''_2)=\ell(\mathfrak{M}'_1)$,
  and $\overline{\mathfrak{M}''_2}=\overline{\mathfrak{M}'_2} = \mathfrak{M}^\ddagger_2$.
  It follows that $\mathfrak{M}'_1\Rel_{\rm ST} \mathfrak{M}''_2$.
\item Suppose $\mathfrak{M}_1\Rel_{\rm ST} \mathfrak{M}_2$ and
  $\mathfrak{M}_2\!\goesto[\eta]\mathfrak{M}'_2$.
  Then $\overline{\mathfrak{M}_1}\Rel \overline{\mathfrak{M}_2}$ and
  $\overline{\mathfrak{M}_2}\!\goesto[\overline\eta]\overline{\mathfrak{M}'_2}$.
  Hence $\exists \mathfrak{M}^\dagger_1,\mathfrak{M}^\ddagger_1$ such that
  $\overline{\mathfrak{M}_1}\Goesto[] \mathfrak{M}^\dagger_1 \!\goesto[\opt{\overline\eta}] \mathfrak{M}^\ddagger_1$,
  ~$\mathfrak{M}^\dagger_1\Rel \overline{\mathfrak{M}_2}$ and
  $\mathfrak{M}^\ddagger_1\Rel \overline{\mathfrak{M}'_2}$.
  By \refobs{match}(7), $\exists \mathfrak{M}^*_1$ such that
  $\mathfrak{M}_1\Goesto[] \mathfrak{M}^*_1$ and
  $\overline{\mathfrak{M}^*_1} = \mathfrak{M}^\dagger_1$.
  By \refobs{label sequence},
  $\ell(\mathfrak{M}^*_1)=\ell(\mathfrak{M}_1)=\ell(\mathfrak{M}_2)$,
  so $\mathfrak{M}^*_1\Rel_{\rm ST} \mathfrak{M}_2$.
  Since $N_2$ is plain, $\eta\neq\tau$.
  \begin{iteMize}{$\bullet$}
  \item Let $\eta=a^+$ for some $a\in\Act$.
  Using that $\overline{\mathfrak{M}^*_1}\goesto[\opt{\overline\eta}] \mathfrak{M}^\ddagger_1$,
  by \refobs{match}(5) $\exists \mathfrak{M}'_1,\,\eta'$ such that
  $\mathfrak{M}^*_1\goesto[\opt{\eta'}] \mathfrak{M}'_1$, $\overline{\eta'}=\overline{\eta}$ and
  $\overline{\mathfrak{M}'_1} = \mathfrak{M}^\ddagger_1$.
  It must be that $\eta'=\eta=a^+$ and
  $\ell(\mathfrak{M}'_1)=\ell(\mathfrak{M}^*_1)a=\ell(\mathfrak{M}_2)a=\ell(\mathfrak{M}'_2)$.
  Hence $\mathfrak{M}'_1\Rel_{\rm ST} \mathfrak{M}'_2$.
\item Let $\eta=a^{-n}$ for some $a\in\Act$ and $n>0$.
  By \refobs{end-phase determinism 1}, $\exists \mathfrak{M}'_1$ with
  $\mathfrak{M}^*_1\goesto[\eta]\mathfrak{M}'_1$.
  By Part 2.\ of this proof, $\exists \mathfrak{M}''_2$ such that
  $\mathfrak{M}_2\goesto[\opt{\eta}] \mathfrak{M}''_2$ and
  $\mathfrak{M}'_1\Rel_{\rm ST} \mathfrak{M}''_2$.
  By \refobs{end-phase determinism 2} $\mathfrak{M}''_2=\mathfrak{M}'_2$.
  \end{iteMize}
\end{enumerate}
Since the net $N_2$ is plain, it has no divergence. In such a case,
the requirement ``with explicit divergence'' requires $N_1$ to be free
of divergence as well, regardless of whether split or ST-semantics is
used.
\end{proof}
\noindent
In this paper we will not consider causal semantics.  The reason is
that our distributed implementations will not fully preserve the
causal behaviour of nets.  We will further comment on this in the
conclusion.

\section{Distributed Systems}
\label{sec-distributed systems}
In this section, we stipulate what we understand by a distributed
system, and subsequently formalise a model of distributed systems in
terms of Petri nets.
\begin{enumerate}[$-$]
\item A distributed system consists of components residing on different locations.
\item Components work concurrently.
\item Interactions between components are only possible by explicit communications.
\item Communication between components is time consuming and asynchronous.
\end{enumerate}
Asynchronous communication is the
only interaction mechanism in a
distributed system for exchanging signals or information.
\begin{enumerate}[$-$]
\item The sending of a message happens always strictly before its receipt (there is a causal relation between sending and receiving a message).
\item A sending component sends without regarding the state of the
  receiver; in particular there is no need to synchronise with a receiving component.
  After sending the sender continues its behaviour independently of receipt of the message.
\end{enumerate}
As explained in the introduction, we will add another requirement to
our notion of a distributed system, namely that its components only
allow sequential behaviour.

\subsection{LSGA nets}

Formally, we model distributed systems as nets consisting of component
nets with sequential behaviour and interfaces in terms of input and
output places.

\begin{defi}\label{df-component}
  Let $N \mathbin= (S, T, F, M_0, \ell)$ be a Petri net,
  $I, O \mathbin\subseteq S$,
  $I\mathop\cap O\mathbin=\emptyset$ and
  $\postcond{O} = \varnothing$.
  \begin{enumerate}[1.]
    \item $(N, I, O)$ is a \defitem{component with interface $(I, O)$}.
    \item $(N, I, O)$ is a \defitem{sequential} component with interface $(I,
      O)$ iff\\ $\exists Q \mathbin\subseteq S \mathord\setminus (I \cup O)$ with
      $\forall t \in T. |\precond{t} \restrictedto Q| = 1 \wedge
          |\postcond{t}\! \restrictedto Q| = 1$ and
        $|M_0 \restrictedto Q| = 1$.
  \end{enumerate}
\end{defi}

\noindent
An input place $i\inp I$ of a component $\mathcal{C}\mathbin=(N,I,O)$ can be regarded as a mailbox of
$\mathcal{C}$ for a specific type of messages.  An output place $o\inp
O$, on the other hand, is an address outside $\mathcal{C}$ to which $\mathcal{C}$
can send messages. Moving a token into $o$ is like posting a
letter. The condition $\postcond{o}=\varnothing$ says that a message,
once posted, cannot be retrieved by the component.\footnote{%
We could have required that $\precond{I}=\varnothing$, thereby
disallowing a component to put messages in its own mailbox.
This would not lead to a loss of generality in the class of distributed systems
that can be obtained as the asynchronous parallel composition
of sequential components, defined below. However, this property
is not preserved under asynchronous parallel composition (defined below),
and we like the composition of a set of (sequential) components to be
a component itself (but not a sequential one).}

A set of places like $Q$ above is a special case of an \emph{$S$-invariant}.
The requirements guarantee that the number of tokens in these places
remains constant, in this case $1$. It follows that no two transitions
can ever fire concurrently (in one step).
Conversely, whenever a net is sequential, in the sense that no two
transitions can fire in one step, it is easily converted into a
behaviourally equivalent net with the required $S$-invariant, namely by
adding a single marked place with a self-loop to all transitions.
This modification preserves virtually all semantic equivalences on
Petri nets from the literature, including $\approx^\Delta_{bSTb}$.

Next we define an operator for combining components with asynchronous
communication by fusing input and output places.

\begin{defi}\label{df-parcomp}
  Let $\indexset$ be an index set.\\
  Let $((S_k, T_k, F_k, {M_0}_k, \ell_k), I_k, O_k)$ with $k \in \indexset$
  be components with interface such that
  $(S_k \cup T_k) \cap (S_l \cup T_l) = (I_k \cup O_k) \cap (I_l \cup O_l)$
  for all $k, l \in \indexset$ with $k\neq l$
  (components are disjoint except for interface places)
  and $I_k \cap I_l = \varnothing$ for all $k, l \in \indexset$ with $k\neq l$
  (mailboxes cannot be shared; any message has a unique recipient).

\noindent
  Then the \defitem{asynchronous parallel composition} of
  these components is defined by\vspace{-.5ex}
  \[
  \Big\|_{i \in \indexset} ((S_k, T_k, F_k, {M_0}_k, \ell_k), I_k, O_k) =
    ((S, T, F, {M_0}, \ell), I, O)\vspace{-.5ex}
  \]
  with
  $S \mathord= \bigcup_{k \in \indexset} S_k,~
  T \mathord= \bigcup_{k \in \indexset} \!T_k,~
  F \mathord= \bigcup_{k \in \indexset} F_k,~
  M_0 \mathord= \sum_{k \in \indexset} {M_0}_k,~
  \ell \mathord= \bigcup_{k \in \indexset} \ell_k$
  (componentwise union of all nets),
  $I \mathord= \bigcup_{k \in \indexset} I_k$
  (we accept additional inputs from outside), and
  $O \mathord= \bigcup_{k \in \indexset} O_k \setminus \bigcup_{k \in \indexset} I_k$
  (once fused with an input, $o\inp O_I$ is no longer an output).
\end{defi}

\noindent
Note that the asynchronous parallel composition of components with interfaces is again
a component with interface.

\begin{obs}\label{obs-associativity}
  $\|$ is associative.
\end{obs}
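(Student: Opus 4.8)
The plan is first to pin down what ``associative'' means for the $\indexset$-indexed operator $\|$ of \refdf{parcomp}. I read it as the flattening law: whenever the index set is split as $\indexset=\bigcup_{j\in J}\indexset_j$ into pairwise disjoint blocks, one has
\[
\Big\|_{k\in\indexset}\mathcal{C}_k \;=\; \Big\|_{j\in J}\Big(\Big\|_{k\in\indexset_j}\mathcal{C}_k\Big),
\]
with the ordinary binary law as the two-block instance. Before comparing the two sides I would check both are defined, i.e.\ that the disjointness hypotheses of \refdf{parcomp} for the inner compositions $\mathcal{D}_j:=\|_{k\in\indexset_j}\mathcal{C}_k$ and for the outer composition of the $\mathcal{D}_j$ follow from the hypotheses on the full family $(\mathcal{C}_k)_{k\in\indexset}$. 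The fact that makes this painless is that the whole interface of a composite is the union of the component interfaces: writing $\mathcal{I}_j:=\bigcup_{k\in\indexset_j}I_k$ and $\mathcal{O}_j:=\bigcup_{k\in\indexset_j}O_k$, the output reduction gives $I^{(j)}\cup O^{(j)}=\mathcal{I}_j\cup(\mathcal{O}_j\setminus\mathcal{I}_j)=\mathcal{I}_j\cup\mathcal{O}_j$, so passing to $\mathcal{D}_j$ loses no interface place, and the pairwise conditions for the $\mathcal{D}_j$ reduce termwise to those for the $\mathcal{C}_k$.

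For the net data $S,T,F,M_0$ and $\ell$ there is nothing to do beyond noting that each is assembled by a union over the index set (a sum for $M_0$), and unions and sums are associative and commutative, hence invariant under regrouping the blocks. The input set $I=\bigcup_k I_k$ is likewise a plain union and so insensitive to the grouping. The only ingredient that needs a genuine argument is the output set, because of the subtraction by the inputs. Writing $\mathcal{I}:=\bigcup_{k\in\indexset}I_k$ and $\mathcal{O}:=\bigcup_{k\in\indexset}O_k$, the outer composition produces
\[
O' \;=\; \Big(\bigcup_{j\in J}(\mathcal{O}_j\setminus\mathcal{I}_j)\Big)\setminus\bigcup_{j\in J}\mathcal{I}_j \;=\; \bigcup_{j\in J}\big((\mathcal{O}_j\setminus\mathcal{I}_j)\setminus\mathcal{I}\big),
\]
using $\bigcup_j\mathcal{I}_j=\mathcal{I}$ and distributivity of set difference over union. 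Since $\mathcal{I}_j\subseteq\mathcal{I}$, the identity $(X\setminus Y)\setminus Z=X\setminus(Y\cup Z)$ collapses each term to $\mathcal{O}_j\setminus\mathcal{I}$, whence $O'=\big(\bigcup_j\mathcal{O}_j\big)\setminus\mathcal{I}=\mathcal{O}\setminus\mathcal{I}$, exactly the output set of the flat composition.

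I expect the only real subtlety --- and the step most easily fumbled --- to be the bookkeeping of the output reduction: fusing an output $o\inp O_k$ with an input $o\inp I_l$ deletes $o$ from the output set but \emph{keeps} it in the input set, so the intermediate reductions by the $\mathcal{I}_j$ never discard a place a later block still needs, and the single final subtraction by $\mathcal{I}$ subsumes them all. Disjointness of the $I_k$ (mailboxes are not shared) is what lets the side conditions transfer cleanly between the two groupings; beyond that, the proof is just the elementary set algebra above.
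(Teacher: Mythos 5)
Your proof is correct and takes essentially the same route as the paper, whose entire proof is the single remark that $\|$ inherits associativity from the (multi)set union operator. You additionally spell out the only two points that one-liner leaves implicit---that the disjointness hypotheses of \refdf{parcomp} transfer between the two groupings, and that the iterated output subtractions collapse, in your notation via $(\mathcal{O}_j\setminus\mathcal{I}_j)\setminus\mathcal{I}=\mathcal{O}_j\setminus\mathcal{I}$ since $\mathcal{I}_j\subseteq\mathcal{I}$---and both are handled correctly; the output set is indeed the one ingredient that is not a plain union, so making this explicit genuinely completes the argument.
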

\noindent
  This follows directly from the associativity of the (multi)set union
  operator.\hfill$\Box$\\[1.5ex]
We are now ready to define the class of nets representing systems 
of asynchronously communicating sequential components.

\begin{defi}\label{df-LSGA}
  $\!$A Petri net $N$ is an \defitem{LSGA net} (a \defitem{locally sequential
  globally asynchronous net}) iff there exists an index set
  $\indexset$ and sequential components with interface
  $\mathcal{C}_k,~ k \inp \indexset$, such that $(N, I, O) = \|_{k \in \indexset} \mathcal{C}_k$
  for some $I$ and $O$.
\end{defi}

\noindent
Up to $\approx^\Delta_{bSTb}$---or any reasonable equivalence
preserving causality and branching time but abstracting from internal
activity---the same class of LSGA systems would have been obtained if
we had imposed, in \refdf{component} of sequential components, that $I$, $O$ and $Q$ form
a partition of $S$ and that $\precond{I}=\emptyset$.\footnote{%
First of all, any $i\in I$ with $\precond{i} \neq \emptyset$ can be split into a
pure input place, receiving tokens only from outside the component, and an
internal place, which is the target of all arcs that used to go to $i$.
Any transition $t$ with $i\in\precond{t}$ now needs to be split into one
that takes its input token from the pure input place and one that takes it from
the internal incarnation of $i$. In fact, if $F(i,t)=n$ then $t$ needs to be
split into $n\mathord+1$ copies. The result of this transformation is that $\precond{I}=\emptyset$.

Next, any component $\mathcal{C}=((S,T,F,M_0,\ell),I,O)$ with $\precond{I}=\emptyset$ can be
replaced by an equivalent component $((S',T',F',M'_0,\ell'),I,O)$ whose places $S'$ are
\plat{$I\dcup O \dcup Q$}, where $Q$ is the set of markings of $\mathcal{C}$, each restricted to
the places outside $I$ and $O$.
For each transition $t$ and markings $M,M'$ of the component such that
$M \mathrel{[t\rangle} M'$, writing $q:=M\restrictedto (S \setminus (I\cup O))$
and $q':=M'\restrictedto (S \setminus (I\cup O))$, there will be a transition $t_q \in T'$ with
$F'(i,t_q) = F(i,t)$ for all $i\inp I$,
$F'(t_q,o) = F(t,o)$ for all $o\inp O$,
$F'(q,t) = F'(t,q') =1$, and $F'(p,t)=F'(t,p)=0$ otherwise.
Moreover, $\ell'(t_q)=\ell(t)$ and $M'_0$ consists of the single place
$M_0\restrictedto (S \setminus (I\cup O))$. 
This component clearly has the required properties.
}
However, it is essential that our definition allows multiple
transitions of a component to read from the same input place.

\subsection{Distributed nets}

In the remainder of this section we give a more abstract
characterisation of Petri nets representing distributed systems,
namely as \emph{distributed} Petri nets, which we introduced in
\cite{glabbeek08syncasyncinteractionmfcs}. This will be useful in
\refsec{distributable}, where we investigate
distributability using this more semantic characterisation. We show
below that the concrete characterisation of distributed systems as
LSGA nets and this abstract characterisation agree.

Following \cite{BCD02}, to arrive at a class of nets representing distributed systems,
we associate \emph{localities} to the elements of a net $N=(S,T,F,M_0,\ell)$.
We model
this by a function \mbox{$D: S\cup T \rightarrow\Loc$}, with $\Loc$ a
set of possible locations.  We refer to such a
function as a \defitem{distribution} of $N$.  Since the identity of
the locations is irrelevant for our purposes, we can just as well
abstract from $\Loc$ and represent $D$ by the equivalence relation
$\equiv_D$ on $S\cup T$ given by $x \equiv_D y$ iff $D(x)=D(y)$.

Following \cite{glabbeek08syncasyncinteractionmfcs}, we impose a fundamental
restriction on distributions, namely that when two transitions
can occur in one step, they cannot be co-located. This reflects our
assumption that at a given location \visible actions can only occur
sequentially.

In \cite{glabbeek08syncasyncinteractionmfcs} we observed that
Petri nets incorporate a notion of synchronous interaction, in that a
transition can fire only by synchronously taking the tokens from all
of its preplaces. In general the behaviour of a net would change
radically if a transition would take its input tokens one by one---in
particular deadlocks may be introduced. Therefore we insist that in a
distributed Petri net, a transition and all its input places reside on
the same location. There is no reason to require the same for the
output places of a transition, for the behaviour of a net would not
change significantly if transitions were to deposit their output tokens
one by one \cite{glabbeek08syncasyncinteractionmfcs}.

This leads to the following definition of a distributed Petri net.

\begin{defi}\label{df-distributed}$\!$\cite{glabbeek08syncasyncinteractionmfcs}\,
    A Petri net $N = (S, T, F, M_0, \ell)$ is \defitem{distributed}
    iff there exists a distribution $D$ such that
\begin{enumerate}[(1)]
    \item
      $\forall s \in S, ~t \in T.~\hspace{1pt}s \in \precond{t}
      \implies t \equiv_D s$,
    \item
     $\forall t,u\in T.~t \concurrent u \implies t\not\equiv_D u$.
  \end{enumerate}
\end{defi}

\noindent
A typical example of a net which is not distributed is shown in
\reffig{fullM} on Page \pageref{fig-fullM}.
Transitions $t$ and $v$ are concurrently executable
and hence should be placed on different locations. However,
both have preplaces in common with $u$ which would enforce putting all
three transitions on the same location. In fact, distributed nets can
be characterised in the following semi-structural way.

\begin{obs}\label{obs-distributed}
A Petri net is distributed iff there is no sequence $t_0,\ldots,t_n$ of
transitions with $t_0 \smile t_n$ and
$\precond{t_{i-1}}\cap\precond{t_{i}}\neq\emptyset$ for $i=1,\ldots,n$.\hfill$\Box$
\end{obs}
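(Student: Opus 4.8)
The plan is to prove both implications by matching the forbidden sequences against the two defining conditions of a distribution in \refdf{distributed}. The organising device is the symmetric relation $R$ on transitions given by $t\mathrel{R}u \iff \precond{t}\cap\precond{u}\neq\emptyset$, together with its reflexive-transitive closure $\sim$, which is an equivalence relation on $T$ because $R$ is symmetric. A sequence $t_0,\dots,t_n$ with $\precond{t_{i-1}}\cap\precond{t_i}\neq\emptyset$ for all $i$ is exactly a witness that $t_0\sim t_n$, the degenerate case $n=0$ recording the trivial fact $t_0\sim t_0$. Hence the right-hand condition of the observation says precisely that no two concurrently enabled transitions lie in the same $\sim$-class, i.e.\ that $t\smile u \implies \neg(t\sim u)$ for all $t,u\in T$.

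For the ``only if'' direction I would assume $N$ is distributed via a distribution $D$ and suppose, towards a contradiction, that a forbidden sequence $t_0,\dots,t_n$ exists. Each shared preplace $s\in\precond{t_{i-1}}\cap\precond{t_i}$ forces, by \refdf{distributed}(1), both $t_{i-1}\equiv_D s$ and $t_i\equiv_D s$, hence $t_{i-1}\equiv_D t_i$; chaining these equalities along the sequence yields $t_0\equiv_D t_n$. But $t_0\smile t_n$ together with \refdf{distributed}(2) gives $t_0\not\equiv_D t_n$, a contradiction. The degenerate case $n=0$ is a self-concurrent transition $t_0\smile t_0$, which contradicts reflexivity of $\equiv_D$ via (2) directly.

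For the ``if'' direction I would assume no forbidden sequence exists and construct a witnessing distribution. I take the set of locations to be the $\sim$-classes and set $D(t):=[t]_\sim$ for each transition $t$; for a place $s$ lying in some preset I put $D(s):=[t]_\sim$ for any $t$ with $s\in\precond{t}$, and I assign every remaining place an arbitrary private location. Condition (1) then holds by construction. For condition (2): if $t\smile u$, then since no forbidden sequence exists we cannot have $t\sim u$ (the case $t=u$ being the degenerate forbidden sequence), so $[t]_\sim\neq[u]_\sim$ and $t\not\equiv_D u$.

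The main obstacle is comparatively mild and lies entirely in the ``if'' direction: one must check that $D$ is \emph{well defined} on shared preplaces, since a place may be a preplace of several transitions. This is exactly where the relation $R$ pays off: if $s\in\precond{t}\cap\precond{u}$ then $t\mathrel{R}u$, so $t\sim u$ and $[t]_\sim=[u]_\sim$, making the choice of representative immaterial. Beyond this, the only remaining care is the correct treatment of places outside every preset (which are unconstrained by (1)) and the bookkeeping that identifies the $n=0$ instance with self-concurrency. Once the share-a-preplace relation is packaged as the equivalence $\sim$, both directions become essentially immediate.
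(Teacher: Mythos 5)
Your proof is correct and is essentially the argument the paper intends: the paper states this result as an unproven observation, and your equivalence $\sim$ (the reflexive-transitive closure of the share-a-preplace relation on transitions) is exactly the canonical co-location relation of \refdf{canonical} restricted to transitions, so your ``if''-direction construction coincides with taking the canonical distribution, while your ``only if'' direction is the evident chaining of Condition (1) against Condition (2). Your attention to well-definedness on shared preplaces and to the degenerate $n=0$ case (self-concurrency, which violates Condition (2) by reflexivity of $\equiv_D$) fills in precisely the routine details the paper leaves implicit.
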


\noindent
Since a \hyperlink{scn}{structural conflict net} is defined as a net without such a sequence
with $n\mathbin=1$ (cf.\ \refdf{onesafe}), we obtain:

\begin{obs}\label{obs-distributed-structuralconflict}
  Every distributed Petri net is a \hyperlink{scn}{structural conflict net}.\hfill$\Box$
\end{obs}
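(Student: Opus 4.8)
The plan is to obtain the statement as an immediate specialisation of the semi-structural characterisation in \refobs{distributed}. That observation records that a net is distributed exactly when it admits no sequence $t_0,\ldots,t_n$ of transitions with $t_0 \concurrent t_n$ and $\precond{t_{i-1}}\cap\precond{t_i}\neq\emptyset$ for $i=1,\ldots,n$. The defining condition of a structural conflict net in \refdf{onesafe}—namely that $\precond{t}\cap\precond{u}=\emptyset$ whenever $t\concurrent u$—is precisely the negation of the existence of such a sequence in the single case $n=1$. Hence I would argue: if $N$ is distributed, then in particular no two-element sequence $t_0,t_1$ with $t_0\concurrent t_1$ and $\precond{t_0}\cap\precond{t_1}\neq\emptyset$ exists, which is exactly the structural conflict net property.

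Alternatively, and perhaps more transparently, I would unfold \refdf{distributed} directly. Let $D$ be a distribution witnessing that $N$ is distributed, and take any $t,u\in T$ with $t\concurrent u$. Suppose, towards a contradiction, that there is a place $s\in\precond{t}\cap\precond{u}$. Clause~(1) of \refdf{distributed}, applied to $s\in\precond{t}$ and to $s\in\precond{u}$ separately, yields $t\equiv_D s$ and $u\equiv_D s$; since $\equiv_D$ is an equivalence relation, transitivity gives $t\equiv_D u$. But clause~(2), applied to $t\concurrent u$, gives $t\not\equiv_D u$, a contradiction. Therefore $\precond{t}\cap\precond{u}=\emptyset$, and $N$ is a structural conflict net.

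There is essentially no obstacle here: the result is a direct consequence of the definitions. The only point deserving care is to confirm that the $n=1$ instance of the path condition in \refobs{distributed} coincides verbatim with the structural conflict net condition—reading $t_0\concurrent t_n = t_1$ together with the conjunction $\precond{t_{i-1}}\cap\precond{t_i}\neq\emptyset$ evaluated at the single index $i=1$—so that no additional content is hidden in the general-$n$ formulation. I would present the direct argument via \refdf{distributed} as the primary proof, since it does not rely on the (separately stated) \refobs{distributed} and makes the role of the two distribution clauses, and the use of transitivity of $\equiv_D$, fully explicit.
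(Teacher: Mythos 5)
Your proposal is correct and matches the paper: the paper obtains this observation precisely as the $n=1$ instance of \refobs{distributed}, which is your first route, and your direct unfolding of \refdf{distributed} (clause~(1) twice, transitivity of $\equiv_D$, contradiction with clause~(2)) is just the inlined content of that same specialisation. Both arguments are sound, and your verbatim check that the $n=1$ path condition coincides with the structural conflict net condition is exactly the remark the paper makes before stating the observation.
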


\noindent
Further on, we use  a more liberal definition of a
distributed net, called \emph{essentially distributed}.
We will show that up to $\approx^\Delta_{bSTb}$ any essentially
distributed net can be converted into a distributed net.
In \cite{glabbeek08syncasyncinteractionmfcs} we employed an even more liberal definition of a
distributed net, which we call here \emph{externally distributed}.
Although we showed that up to step failures equivalence any externally
distributed net can be converted into a distributed net, this does not hold for
$\approx^\Delta_{bSTb}$.

\begin{defi}\label{df-externally distributed}
  A net $N=(S,T,F,M_0,\ell)$ is \emph{essentially distributed} iff
  there exists a distribution $D$ satisfying (1) of
  \refdf{distributed} and
\begin{enumerate}[$(1')$]
    \item[($2'$)] $\forall t,u\in T.~t \concurrent u \wedge \ell(t)\neq\tau \implies t\not\equiv_D u$.
  \end{enumerate}
  It is \emph{externally distributed} iff there exists a distribution $D$ satisfying (1) and
\begin{enumerate}[$(1'')$]
    \item[($2''$)] $\forall t,u\in T.~t \concurrent u \wedge \ell(t),\ell(u)\neq\tau \implies t\not\equiv_D u$.
  \end{enumerate}
\end{defi}
\noindent
Instead of ruling out co-location of concurrent transitions in general,
essentially distributed nets permit concurrency of internal transitions---labelled $\tau$---at the same location.
Externally distributed nets even allow concurrency between visible and silent transitions at the same location.
If the transitions $t$ and $v$ in the net of \reffig{fullM} would both be labelled $\tau$,
the net would be essentially distributed, although not distributed; in case only $v$ would
be labelled $\tau$ the net would be externally distributed but not essentially distributed.
Essentially distributed nets need not be structural conflict nets; in fact, \emph{any} net
without visible transitions is essentially distributed.

\begin{defi}\label{df-canonical}
Given any Petri net $N$, the \emph{canonical co-location relation} $\equiv_C$ on $N$ is the
equivalence relation on the places and transitions of $N$
\emph{generated} by Condition (1) of \refdf{distributed},
i.e.\ the smallest equivalence relation $\equiv_D$ satisfying (1).
The \emph{canonical distribution} of $N$ is the distribution $C$ that maps
each place or transition to its $\equiv_C$-equivalence class.
\end{defi}
\begin{obs}\label{obs-canonical}
A Petri net that is distributed (resp.\ essentially or externally
distributed) w.r.t.\ any distribution $D$, is
distributed (resp.\ essentially or externally distributed)
w.r.t.\ its canonical distribution.
\end{obs}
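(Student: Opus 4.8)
The plan is to exploit that $\equiv_C$ is, by \refdf{canonical}, the \emph{smallest} equivalence relation satisfying Condition~(1) of \refdf{distributed}. The single fact driving the whole argument is this: if $D$ is any distribution whose induced relation $\equiv_D$ satisfies Condition~(1), then $\equiv_C\;\subseteq\;\equiv_D$, meaning $x \equiv_C y \implies x \equiv_D y$ for all elements $x,y$. This is immediate from minimality, since $\equiv_D$ is one of the equivalence relations satisfying~(1), and $\equiv_C$ is contained in every such relation.

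First I would observe that all three notions---distributed, essentially distributed, externally distributed---share Condition~(1), and that $\equiv_C$ satisfies~(1) by its very definition. Hence the canonical distribution $C$ always meets the ``forcing'' requirement~(1), no matter which $D$ we start from, so that half of each of the three properties is free.

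Next I would treat the ``forbidding'' requirements (2), ($2'$), ($2''$) uniformly. Each has the shape ``for all $t,u\in T$, if $t\concurrent u$ and [some side condition on $\ell(t),\ell(u)$] then $t\not\equiv_D u$''. Suppose $N$ is distributed (resp.\ essentially, externally distributed) w.r.t.\ $D$; then $\equiv_D$ satisfies both~(1) and the pertinent forbidding clause, so the inclusion $\equiv_C\subseteq\equiv_D$ is available. Now take any $t,u$ satisfying the hypotheses of that clause---these concern only $\concurrent$ and the labelling $\ell$, neither of which depends on the chosen distribution. From the clause for $D$ we get $t\not\equiv_D u$; the contrapositive of the inclusion then yields $t\not\equiv_C u$. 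Thus $\equiv_C$ satisfies the same forbidding clause, and $C$ witnesses that $N$ is distributed (resp.\ essentially, externally distributed), as required.

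The one point to keep straight is the direction of the inclusion and why it is the right one: passing to a \emph{finer} relation can only help the forbidding clauses (fewer pairs are co-located), while Condition~(1) is guaranteed by construction; it is precisely because $\equiv_C$ is the finest relation satisfying~(1) that both halves succeed simultaneously. There is no real obstacle beyond noting that Condition~(1) is common to all three variants, so the inclusion $\equiv_C\subseteq\equiv_D$ is always at hand, and that the side conditions in (2), ($2'$), ($2''$) are distribution-independent, so the same pairs $t,u$ trigger the clause under $C$ as under $D$.
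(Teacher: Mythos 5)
Your proposal is correct and follows essentially the same route as the paper's own argument: the paper likewise notes that $\equiv_C$ satisfies Condition~(1) by construction, and that whenever a co-location relation satisfies the forbidding clause (2), ($2'$) or ($2''$), so does any smaller relation---which applies to $\equiv_C\subseteq\equiv_D$ by minimality. Your write-up merely makes explicit the details the paper leaves implicit (the contrapositive step and the observation that the side conditions on $\concurrent$ and $\ell$ are distribution-independent).
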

\noindent
This follows because whenever a co-location relation $\equiv_D$ satisfies Condition (2) of
\refdf{distributed} (resp.\ Condition ($2'$) or ($2''$) of \refdf{externally distributed}),
then so does any smaller co-location relation.
Hence a net is distributed (resp.\ essentially or externally distributed) iff its canonical
distribution $D$ satisfies (2) (resp.\ ($2'$) or ($2''$)).

\subsection{Correspondence between LSGA nets and distributed nets}

We proceed to show that the classes of LSGA nets, distributable
nets and essentially distributable nets essentially coincide.

That every LSGA net is distributed follows
because we can place each sequential component on a
separate location. The following two lemmas constitute a formal argument.
Here we call a component with interface $(N,I,O)$ distributed iff $N$
is distributed.

\begin{lem}\label{lem-sequential component distributed}
Any sequential component with interface is distributed.
\end{lem}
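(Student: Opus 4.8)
The plan is to exhibit a distribution $D$ witnessing that the underlying net $N$ is distributed, and the simplest candidate already suffices: place every place and transition of $N$ on one and the same location, so that $\equiv_D$ is the all-relation $(S\cup T)^2$. With this choice Condition~(1) of \refdf{distributed} holds trivially, since $s\equiv_D t$ for all $s\in S$ and $t\in T$, in particular whenever $s\in\precond{t}$. The entire burden therefore falls on Condition~(2): I must show that no two transitions are concurrent, i.e.\ that the relation $\concurrent$ is empty, for then (2) holds vacuously. (Equivalently, one could invoke \refobs{distributed}: once $\concurrent$ is empty there is certainly no sequence $t_0,\ldots,t_n$ with $t_0\concurrent t_n$.)

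The crux is the set $Q$ from \refdf{component} of a sequential component, which acts as an $S$-invariant carrying exactly one token. First I would establish, by induction on the reachability of markings, that $|M\restrictedto Q|=1$ for every $M\in[M_0\rangle$. The base case is the hypothesis $|M_0\restrictedto Q|=1$. For the inductive step, suppose $|M\restrictedto Q|=1$ and $M\mathrel{[t\rangle}M'$. Enabledness gives $\precond{t}\leq M$, hence $\precond{t}\restrictedto Q\leq M\restrictedto Q$; since by \refdf{component} both $\precond{t}\restrictedto Q$ and $M\restrictedto Q$ have cardinality $1$, they must coincide, so that $(M-\precond{t})\restrictedto Q=\emptyset$. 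Adding $\postcond{t}\restrictedto Q$, which again has cardinality $1$, yields $|M'\restrictedto Q|=1$.

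With the invariant in hand, the concurrency relation collapses. Suppose, towards a contradiction, that $t\concurrent u$, i.e.\ $M\mathrel{[\{t\}+\{u\}\rangle}$ for some reachable $M$; note this must be excluded for the self-concurrent case $t=u$ as well. Enabledness requires $(\precond{t}+\precond{u})\restrictedto Q\leq M\restrictedto Q$. The left-hand side has cardinality $2$, since each of $\precond{t}\restrictedto Q$ and $\precond{u}\restrictedto Q$ has cardinality $1$, whereas the right-hand side has cardinality $1$ by the invariant---a contradiction. Hence $\concurrent=\emptyset$, Condition~(2) is satisfied, and $N$ is distributed.

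The argument is essentially routine; the one point demanding care is the inductive invariant, specifically the multiset bookkeeping that turns the inequality $\precond{t}\restrictedto Q\leq M\restrictedto Q$ between two multisets of cardinality $1$ into genuine equality, and remembering that steps are multisets, so that the concurrency relation must be ruled out for $t=u$ just as for distinct transitions. Everything else follows directly from the definitions.
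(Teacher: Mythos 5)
Your proof is correct and takes essentially the same route as the paper, whose entire proof is the one-liner ``as a sequential component displays no concurrency, it suffices to co-locate all places and transitions,'' relying on the remark after \refdf{component} that the $S$-invariant $Q$ keeps exactly one token and hence forbids steps of size two. You have merely made that delegated argument explicit (including the self-concurrency case $t=u$); the only cosmetic refinement would be to phrase the inductive step over arbitrary multiset steps $G$ rather than single transitions, which your own cardinality count $|\precond{G}\restrictedto Q|=|G|\leq|M\restrictedto Q|=1$ already handles.
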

\begin{proof}
As a sequential component displays no concurrency,
it suffices to co-locate all places and transitions.
\end{proof}

\noindent
\reflem{parcompdistributed} states that the class of distributed nets is
closed under asynchronous parallel composition.

\begin{lem}\label{lem-parcompdistributed}
  Let $\mathcal{C}_k=(N_k, I_k, O_k)$, $k \inp \indexset$, be components with
  interface, satisfying the requirements of \refdf{parcomp},
  which are all distributed.
  Then $\|_{k \in \indexset} \mathcal{C}_k$ is distributed.
\end{lem}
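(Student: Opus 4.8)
The plan is to exhibit a single distribution $D$ for the composite net $N=\|_{k\in\indexset}\mathcal{C}_k$ that witnesses distributedness, obtained by placing the different components on disjoint ranges of locations. Since each $\mathcal{C}_k=(N_k,I_k,O_k)$ is distributed, by \refobs{canonical} I may take its canonical distribution $D_k$, and then define $D$ on the elements of $N$ by tagging $D_k$ with the index $k$, i.e.\ $D(x):=(k,D_k(x))$ whenever $x$ belongs to component $k$. The only elements needing care are the fused interface places, which belong to several components at once. A place that is a preplace of some transition is, because output places carry no outgoing arcs ($\postcond{O_k}=\varnothing$), never an output place; and since mailboxes are not shared ($I_k\cap I_l=\varnothing$ for $k\neq l$) it is an input place of a \emph{unique} component $l$. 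I would therefore give such a place the location dictated by its recipient $l$, and locate the remaining (never-read) interface places arbitrarily.

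First I would verify Condition~(1) of \refdf{distributed}. If $s\in\precond{t}$ in $N$ then, as transitions are not shared and $F=\bigcup_k F_k$, the arc from $s$ to $t$ already lies in the component $k$ that contains $t$; hence $s\in\precond{t}$ in $N_k$ and $s\in S_k\setminus O_k$. So $s$ and $t$ both draw their location from the tagged copy $D_k$, and Condition~(1) for $N_k$ gives $s\equiv_{D_k}t$, whence $s\equiv_D t$. The structural fact driving this step---that the preset of every transition is confined to its own component---is exactly what $\postcond{O_k}=\varnothing$ together with the uniqueness of mailboxes buys us, and it also shows that any preplace shared by two transitions forces those transitions into a common component.

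For Condition~(2) I would argue by a case split on a pair $t\concurrent u$. If $t$ and $u$ belong to different components then their locations carry different index tags and $t\not\equiv_D u$ is immediate. If instead $t,u$ lie in the same component $N_k$, I want to invoke that $N_k$ is distributed: provided $t$ and $u$ are already concurrent \emph{in $N_k$ in isolation}, distributedness of $N_k$ yields $t\not\equiv_{D_k}u$ and hence $t\not\equiv_D u$. This reduction of same-component concurrency to concurrency inside the isolated component is the step I expect to be the main obstacle.

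To discharge it I would take a firing sequence of $N$ witnessing $t\concurrent u$, project it onto the transitions of $T_k$, and argue that the result is a firing sequence of $N_k$ reaching a marking that still enables the step $\{t\}+\{u\}$. The delicate point is that dropping the transitions of the other components also discards the tokens they had deposited into the mailboxes $I_k$, so one must check that these environmental contributions are not what makes $t$ and $u$ concurrently enabled---in other words, that the enabling of the step, which depends only on preplaces lying in $S_k$, already holds in $N_k$ alone. This is precisely where the behaviour of a distributed component in isolation has to be related to its behaviour inside the composite; by comparison, Condition~(1) and the confinement of shared preplaces to a single component are essentially bookkeeping.
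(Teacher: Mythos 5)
Your construction of $D$ is essentially the paper's: per-component distributions on disjoint ranges of locations, with each fused interface place located at the unique component that reads it (the paper implements this by making $D_k$ undefined on $O_k$ and extending the union of the resulting partial functions, which is a function precisely because a shared place is an input place of at most one component), and your verification of Condition~(1) of \refdf{distributed} is the same as well. The genuine gap is exactly the step you flag as the main obstacle, and the projection argument you propose cannot close it: concurrency of $t,u\in T_k$ in the composite does \emph{not} reduce to concurrency in $N_k$ in isolation, because the tokens that other components deposit in the mailboxes $I_k$ can be precisely what enables the step. Concretely, let $\mathcal{C}_k$ consist of a single input place $i$, initially unmarked, and two transitions $t\neq u$ with $\precond{t}=\precond{u}=\{i\}$; in isolation $[M_{0k}\rangle=\{\emptyset\}$, so the concurrency relation of $N_k$ is empty and $N_k$ is vacuously distributed (note that Condition~(1) forces $t\equiv_{D_k} i\equiv_{D_k} u$, so even the canonical distribution you invoke via \refobs{canonical} co-locates $t$ and $u$). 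Compose with a component $\mathcal{C}_l$ having marked places $p_1,p_2$ and transitions $v_1,v_2$, each $v_m$ consuming $p_m$ and depositing a token in $i\in O_l$; this $\mathcal{C}_l$ is distributed by locating $v_1$ and $v_2$ apart, and all requirements of \refdf{parcomp} hold. In the composite, firing $v_1$ and $v_2$ puts two tokens on $i$, whence $M[\{t\}+\{u\}\rangle$ and $t\concurrent u$; yet the projection of this run onto $T_k$ is the empty sequence, which reaches only the empty marking of $N_k$ and enables nothing. Worse, since Condition~(1) forces $t\equiv_D i\equiv_D u$ under \emph{every} distribution of the composite, no choice of tags can rescue your Case~B: the reduction you need is false, and distributedness of the components in isolation is too weak to control their behaviour in context (a component that deadlocks in isolation is vacuously distributed but may exhibit arbitrary concurrency once fed through its mailboxes).

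For comparison, the paper's proof never attempts your projection: after the same construction it simply asserts that $D$ ``satisfies the requirements of \refdf{distributed} since the $D_k$'s do'', gliding over precisely the point you isolated. That assertion is unproblematic in the only way the lemma is used (\refcor{LSGA distributed}, where the components are \emph{sequential}): there the set $Q$ of \refdf{component}(2) is disjoint from $I_k\cup O_k$, hence by the overlap condition of \refdf{parcomp} no other component touches $Q$, so every reachable marking of the composite carries exactly one token on $Q$, while a step containing two transitions of $T_k$ would require two. Thus same-component concurrency never arises \emph{in the composite}, Condition~(2) only ever concerns cross-component pairs, and your disjoint location tags separate those. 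If you replace the projection argument by this $S$-invariant reasoning (or add an explicit hypothesis excluding same-component concurrency in the composite), your proof goes through; as written, the step you rightly single out is not merely delicate but false in the stated generality.
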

\begin{proof}
  We need to find a distribution $D$ satisfying the requirements of
  \refdf{distributed}.

  Every component $\mathcal{C}_k$
  is distributed and hence comes with a distribution $D_k$.
  Without loss of generality the codomains of all $D_k$ can be assumed
  disjoint.

  Considering each $D_k$ as a function from net elements onto locations,
  a partial function $D_k'$ can be defined which does not map any places in
  $O_k$, denoting that the element may be located arbitrarily, and
  behaves as $D_k$ for all other elements.
  As an output place has no posttransitions within a component, any total
  function larger than (i.e.\ a superset of) $D_k'$ is still a valid
  distribution for $N_k$.

  Now $D' = \bigcup_{k \in \indexset} D_k'$ is a (partial) function, as every place
  shared between components is an input place of at most one.
  The required distribution $D$ can be chosen as any total function extending $D'$;
  it satisfies the requirements of \refdf{distributed} since
  the $D_k$'s do.
\end{proof}

\begin{cor}\label{cor-LSGA distributed}
Every LSGA net is distributed.\qed
\end{cor}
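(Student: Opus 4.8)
The plan is to derive Corollary~\ref{cor-LSGA distributed} directly from the two preceding lemmas. An LSGA net $N$ is, by \refdf{LSGA}, one for which $(N,I,O)=\|_{k\in\indexset}\mathcal{C}_k$ for some family of sequential components with interface $\mathcal{C}_k$ and some $I,O$. So I would start from this decomposition and simply assemble the ingredients already proved.

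First I would invoke \reflem{sequential component distributed}: each $\mathcal{C}_k$ is a sequential component with interface, hence distributed. This checks the hypothesis of the next lemma. Then I would apply \reflem{parcompdistributed}, whose hypothesis is exactly that the $\mathcal{C}_k$ satisfy the disjointness requirements of \refdf{parcomp} (which they do, since their asynchronous parallel composition is defined) and that they are all distributed (which we just established). The conclusion of that lemma is that $\|_{k\in\indexset}\mathcal{C}_k$ is distributed, and since distributedness of a component with interface means distributedness of its underlying net, this says precisely that $N$ is distributed. That is the whole argument.

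There is essentially no obstacle here: the corollary is a one-line consequence obtained by chaining the two lemmas, and the only thing to be careful about is bookkeeping---namely that the $\mathcal{C}_k$ arising from \refdf{LSGA} genuinely satisfy the \refdf{parcomp} requirements, which is immediate because the existence of the composition $\|_{k\in\indexset}\mathcal{C}_k$ presupposes them. Accordingly I would write a proof of at most a sentence: given the decomposition guaranteed by \refdf{LSGA}, each component is distributed by \reflem{sequential component distributed}, so their composition, and hence $N$, is distributed by \reflem{parcompdistributed}. This matches the \qed placement in the statement, which signals that the authors regard the result as an immediate corollary requiring no separate displayed proof.
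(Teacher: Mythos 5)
Your proposal is correct and matches the paper's own argument exactly: the corollary is intended as the immediate chaining of \reflem{sequential component distributed} and \reflem{parcompdistributed}, which the paper explicitly introduces as ``constituting a formal argument'' for this fact, hence the bare \qed. Your bookkeeping remark---that the components from \refdf{LSGA} satisfy the \refdf{parcomp} requirements because the composition is assumed to exist---is the right (and only) point needing care.
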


\begin{cor}\label{cor-LSGA-structuralconflict}
  Every LSGA net is a structural conflict net.\qed
\end{cor}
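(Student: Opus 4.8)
The plan is to obtain this corollary by chaining together the two results immediately preceding it, so that essentially no new work is required. First I would invoke \refcor{LSGA distributed}, which establishes that every LSGA net is distributed: by \refdf{LSGA} an LSGA net $N$ arises as an asynchronous parallel composition $\|_{k\in\indexset}\mathcal{C}_k$ of sequential components with interface, each of which is distributed by \reflem{sequential component distributed}, and distributedness is preserved under $\|$ by \reflem{parcompdistributed}.

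Next I would apply \refobs{distributed-structuralconflict}, which states that every distributed Petri net is a \hyperlink{scn}{structural conflict net}. Composing the two facts yields directly that every LSGA net is a structural conflict net.

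I do not expect any genuine obstacle here, since both constituent statements are already available. The only point worth noting is that \refobs{distributed-structuralconflict} itself rests on the semi-structural characterisation of \refobs{distributed}, which rules out any sequence $t_0,\ldots,t_n$ of transitions with $t_0 \smile t_n$ and $\precond{t_{i-1}}\cap\precond{t_i}\neq\emptyset$; the special case $n=1$ is precisely the defining condition of a structural conflict net in \refdf{onesafe}, so the implication from ``distributed'' to ``structural conflict net'' is immediate. Hence the corollary follows at once, and the proof reduces to citing \refcor{LSGA distributed} and \refobs{distributed-structuralconflict}.
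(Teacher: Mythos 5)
Your proof is correct and is exactly the route the paper intends: the corollary is stated with \qed precisely because it follows immediately by composing \refcor{LSGA distributed} with \refobs{distributed-structuralconflict}, just as you do. Your extra remark tracing the observation back to \refobs{distributed} and the $n=1$ case of \refdf{onesafe} accurately reflects how the paper justifies that step as well.
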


\noindent
Conversely, any distributed net $N$, and even any essentially
distributed net $N$, can be transformed in an LSGA net by
choosing co-located transitions with their pre- and postplaces as
sequential components and declaring any place that belongs to multiple
components to be an input place of component $N_k$ if it is a preplace
of a transition in $N_k$, and an output place of component $N_l$ if it
is a postplace of a transition in $N_l$ and not an input place of $N_l$.
As transitions sharing a preplace are co-located, a place will be an input
place of at most one component.
Furthermore, in order to guarantee that the components are sequential
in the sense of \refdf{component}, an explicit control place is
added to each component---without changing behaviour---as explained
below \refdf{component}. It is straightforward to check that the
asynchronous parallel composition of all so-obtained components is an
LSGA net, and that it is equivalent to $N$ (using $ \approx_\mathscr{F}$,
$\approx^\Delta_{bSTb}$, or any other reasonable equivalence).

\begin{thm}\label{thm-bothdistributedequal}
  For any essentially distributed net $N$ there is an LSGA net $N'$ with $N'\approx^\Delta_{bSTb} N$.
\end{thm}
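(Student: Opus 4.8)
The plan is to turn the informal construction described just above the theorem into a precise one and then to show that the two nets have isomorphic ST-LTSs. By \refobs{canonical} I may assume that $N$ is essentially distributed with respect to its canonical distribution, so I take the $\equiv_C$-classes as localities. For each class $L$ I form a component whose transitions are the transitions of $L$, whose places are all pre- and postplaces of those transitions (the preplaces lie in $L$ by Condition~(1) of \refdf{distributed}), and to which I adjoin one fresh, initially marked \emph{control place} $q_L$ with a self-loop $q_L\in\precond{t}\cap\postcond{t}$ for every $t\in L$. A place occurring in more than one class becomes an input place of the one component in which it is a preplace (there is at most one such, since by Condition~(1) all consumers of a place are co-located) and an output place of each further component in which it is a postplace. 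I then let $N'$ be the asynchronous parallel composition of these components.

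First I would verify that $N'$ is an LSGA net. Each component is sequential in the sense of \refdf{component}, witnessed by the $S$-invariant $\{q_L\}$: every transition of the component consumes and returns exactly one token of $q_L$, and $q_L$ is marked once initially. The disjointness and unique-mailbox requirements of \refdf{parcomp} hold because, again by Condition~(1), each place is a preplace in at most one component and hence an input place of at most one. Thus the composition is defined and is an LSGA net (hence distributed, by \refcor{LSGA distributed}); the control places destroy precisely the co-located concurrency that the distributedness Condition~(2) forbids but that Condition~$(2')$ still tolerates.

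The heart of the matter is $N'\approx^\Delta_{bSTb}N$. Since $N$ and $N'$ have the same transitions and $N'$ merely carries the extra control places, I would relate each reachable ST-marking $(M',U')$ of $N'$ to $(M'\restrictedto S,U')$, the ST-marking obtained by forgetting the control places, and argue that this forgetful map is an \emph{isomorphism} of the two ST-LTSs, which is a fortiori a branching ST-bisimulation satisfying the explicit-divergence clauses. The point is that $q_L$ is an $S$-invariant, so its self-loop changes the enabledness of no single transition; and by \refdf{ST-marking} the ST-transition relation only ever fires one $\tau$-transition or starts/ends one visible transition at a time, never a genuine $\tau$-step. Hence the co-located concurrency of internal transitions that $N'$ removes is invisible in ST-semantics: the interleaving diamond of two concurrent co-located $\tau$-transitions is reproduced in $N'$ because the control token is returned immediately. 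No $\tau$-cycle is added, so $N'$ diverges iff $N$ does, and by \refpr{explicit divergence} that is all Clauses~4 and~5 require.

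The one step I expect to be delicate, and the crux of the whole argument, is to show that the control place never blocks a move that $N$ can make, for then the enabled single-transition moves of $N$ and $N'$ coincide on corresponding markings and the forgetful map is genuinely an isomorphism. The only way $q_L$ could block is while a visible transition $t\in L$ is currently firing, for then $t$ holds the unique token of $q_L$ and no further transition of $L$ can start in $N'$. I would rule this out using Condition~$(2')$ together with the ST-analogue of the reachability characterisation stated after \refdf{split marking}, namely that a reachable ST-marking $(M',U')$ satisfies $M'+\precond{U'}\in[M_0\rangle$, where $\precond{U'}$ sums $\precond{t}$ over the transitions $t$ listed in $U'$: if some $u\in L$ could start at $M'$ while $t\in U'$, then at the reachable marking $M'+\precond{U'}$ both $t$ and $u$ would be jointly enabled, witnessing $t\smile u$ with $t$ visible and $t\equiv_C u$, contradicting $(2')$ (the case $u=t$ being excluded because $(2')$ also forbids self-concurrency of visible transitions). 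Consequently $N$ cannot start $u$ in that situation either, so $N$ and $N'$ proceed in lock-step, and the isomorphism, hence $N'\approx^\Delta_{bSTb}N$, follows.
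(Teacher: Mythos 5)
Your proposal is correct and takes essentially the same route as the paper: the paper's proof builds exactly these per-location sequential components with a fresh initially marked control place $p_{[x]}$ on self-loops to all transitions of the location, assigns interface places by preplace/postplace location as you do, and proves $N'\approx^\Delta_{bSTb} N$ via the same transition-preserving bijection $\bigl((M,U),(M\cup P_U,U)\bigr)$ between reachable ST-markings, invoking essential distributedness precisely where you invoke Condition~($2'$). Your ``control place never blocks'' lemma, argued through $M'+\precond{U'}\in[M_0\rangle$, is simply a more explicit rendering of the paper's terse observation that $|\overline{U}\cap[x]|\leq 1$ and that a silent transition of $[x]$ can only fire when no transition at $[x]$ is currently firing.
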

\begin{proof}
  Let $N = (S, T, F, M_0, \ell)$ be an essentially distributed net with a distribution $D$.
  Then an equivalent LSGA net $N'$ can be constructed
  by composing sequential components with interfaces as follows.
  
  For each equivalence class $[x]$ of net elements according to $D$ a
  sequential component $(N_{[x]}, I_{[x]}, O_{[x]})$ is created.
  Each such component contains one new and initially marked place $p_{[x]}$
  which is connected via self-loops to all transitions in $[x]$.
  The interface of the component is formed by
  $I_{[x]} := (S \cap [x])$\footnote{Alternatively, we could take
  $I_{[x]} := \postcond{(T\backslash[x])}\cap [x]$.}
  and
  $O_{[x]} := \postcond{([x] \cap T)} \setminus [x]$.
  Formally,
  $N_{[x]} := (S_{[x]}, T_{[x]}, F_{[x]}, {M_0}_{[x]}, \ell_{[x]})$
  with
  \begin{iteMize}{$\bullet$}
    \item $S_{[x]} = ((S \cap [x]) \cup O_{[x]} \cup \{p_{[x]}\}$,
    \item $T_{[x]} = T \cap [x]$,
    \item $F_{[x]} = F \restrictedto (S_{[x]} \cup T_{[x]})^2 \cup
      \{(p_{[x]}, t), (t, p_{[x]}) \mid t \in T_{[x]}\}$,
    \item ${M_0}_{[x]} = (M_0 \restrictedto [x]) \cup \{p_{[x]}\}$, and
    \item $\ell_{[x]} = \ell \restrictedto [x]$.
  \end{iteMize}
  All components overlap at interfaces only, as the sole
  places not in an interface are the newly created $p_{[x]}$.
  The $I_{[x]}$ are disjoint as the equivalence classes $[x]$ are, so
  $(N',I',O') := \|_{[x] \in (S \cup T) / D} (N_{[x]}, O_{[x]}, I_{[x]})$ is well-defined.
  It remains to be shown that $N' \approx^\Delta_{bSTb} N$.
  The elements of $N'$ are exactly those of $N$ plus the new places $p_{[x]}$,
  which stay marked continuously except when a transition from $[x]$ is firing,
  and never connect two concurrently enabled transitions.

  As we cannot have concurrently firing visible transitions on a single location,
  $|\overline{U} \cap [x]| \leq 1$ for
  any reachable ST-marking $(M, U)$ of $N$ and
  any $x \in S \cup T$, \ie for any location $[x]$.
  Here $\overline{U}$ is the multiset representation of the sequence $U$,
  defined in \refsec{equivalences}.
  The relation
  $$
  \left\{ \left((M, U), (M \mathord\cup P_U, U)\right) \mid (M, U) \text{ is a reachable ST-marking of } N,
    P_U \mathbin= \{ p_{[x]} \mid \overline{U} \mathord\cap [x] \mathbin= \varnothing \}\! \right\}
  $$
  is a bijection between the reachable ST-markings of $N'$ and $N$ that preserves the
  ST-transition relations between them.
  In particular, if $(M,U)\goesto[\tau](M',U')$, using a silent transition
  that belongs to the equivalence class $[x]$, then $U'=U$ and
  $\overline{U} \mathord\cap [x] \mathbin= \varnothing$, \ie
  no transition at location $[x]$ is currently firing, using that $N$ is essentially distributed.
  Hence $p_{[x]}\in P_U$ and thus $(M\mathord\cup P_U,U)\goesto[\tau](M'\mathord\cup P_U,U)$.
  (This argument does not extend to externally distributed nets $N$.)
  From this it follows that \plat{$N' \approx^\Delta_{bSTb} N$}.
\end{proof}

\begin{exa}\label{essdist2LSGA}
In \reffig{fullMbusy} appears an example of an essentially distributed net;
the location borders are indicated. This net is not distributed, and thus not an LSGA net, because
the two topmost $\tau$-transitions are co-located but can be fired concurrently.
Applying the construction in the proof of \refthm{bothdistributedequal} turns this net into the
distributed net of \reffig{fullMbusy-LSGA}.%
\begin{figure}[tb]
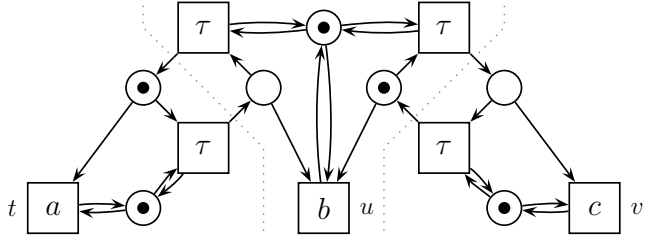

  \begin{center}
    \begin{petrinet}(10,4.5)
      \P (2,3):p;
      \p (4,3):pprime;
      \P (6,3):q;
      \p (8,3):qprime;
      \P (5,4):plsga;

      \P (2,1):pa;
      \P (8,1):pc;

      \ul (0.5,1):a:$a$:$t$;
      \u (5,1):b:$b$:$u$;
      \u (9.5,1):c:$c$:$v$;

      \t (3,2):ptau1:$\tau$;
      \t (3,4):ptau2:$\tau$;
      \t (7,2):qtau1:$\tau$;
      \t (7,4):qtau2:$\tau$;

      \a p->ptau1; \a ptau1->pprime; \a pprime->ptau2; \a ptau2->p;
      \a q->qtau2; \a qtau2->qprime; \a qprime->qtau1; \a qtau1->q;

      \AA plsga->ptau2; \AA ptau2->plsga;
      \AA plsga->qtau2; \AA qtau2->plsga;
      \AA plsga->b; \AA b->plsga;
      
      \AA pa->a; \AA a->pa;
      \AA pc->c; \AA c->pc;

      \AA pa->ptau1; \AA ptau1->pa;
      \AA pc->qtau1; \AA qtau1->pc;

      \a p->a;
      \a pprime->b;
      \a q->b;
      \a qprime->c;

      {%
        \psset{linecolor=gray}
        \psset{linestyle=dotted}
        \psline(2,4.5)(2,4)(4,2)(4,0.5)
        \psline(8,4.5)(8,4)(6,2)(6,0.5)
      }
    \end{petrinet}
  \end{center}
\vspace{-1em}
  \caption{\hspace{-1pt}The LSGA net obtained from converting the essentially distributed net of \reffig{fullMbusy}.}
  \label{fig-fullMbusy-LSGA}
\end{figure}
\end{exa}

\noindent
Likewise, up to $\approx_\mathscr{F}$ any externally distributed net can be converted into a distributed net.
\begin{prop}\label{pr-externallydistributedequal}\cite{glabbeek08syncasyncinteractionmfcs}\,\,
  For any externally distributed net $N$ there is a distributed net $N'$ with $N'\approx_\mathscr{F} N$.
\end{prop}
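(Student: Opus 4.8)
The plan is to reuse the construction from the proof of \refthm{bothdistributedequal}, applied now to an externally distributed net $N$ together with a distribution $D$ witnessing external distributedness, i.e.\ satisfying condition (1) of \refdf{distributed} and condition ($2''$); by \refobs{canonical} we may take $D$ to be the canonical distribution. That construction depends only on the equivalence classes $[x]$ of $\equiv_D$ and not on which variant of distributedness $D$ satisfies: it yields an LSGA net $N'$ whose elements are exactly those of $N$ together with one fresh, initially marked control place $p_{[x]}$ per location $[x]$, each connected by a self-loop to every transition of $[x]$. Being LSGA, $N'$ is distributed by \refcor{LSGA distributed}, so it suffices to prove $N' \approx_\mathscr{F} N$. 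Here the ST-based argument of \refthm{bothdistributedequal} no longer applies, and I would instead reason directly with the step semantics underlying \refdf{step failures}.

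First I would set up a bijection between reachable markings. Since each $p_{[x]}$ occurs only on self-loops, firing any single transition leaves it marked; hence every reachable marking of $N'$ has the shape $M \cup P$, where $M \in [M_0\rangle_N$ and $P = \{p_{[x]} \mid [x] \in (S\cup T)/D\}$ is the full set of control places, and $M \mapsto M\cup P$ is a bijection onto $[M_0\rangle_{N'}$. Under this correspondence a single transition $t$ (visible or silent) is enabled at $M$ in $N$ iff it is enabled at $M\cup P$ in $N'$, the self-loop returning the control token; in particular the $\goesto[\tau]$-moves match, so a marking of $N'$ is stable exactly when the corresponding marking of $N$ is, and $M_0 \Production{\sigma} M$ in $N$ iff $M_0' \Production{\sigma} M\cup P$ in $N'$.

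The decisive step is to show that the control places block no visible step. A step $A\in\powermultiset{\Act}$ is realised by a multiset $G$ of visible transitions; in $N'$ such a $G$ additionally requires, for each location $[x]$, as many tokens in $p_{[x]}$ as $G$ has transitions in $[x]$, so with one token per control place $G$ is enabled at $M\cup P$ iff it is enabled at $M$ and contains at most one transition per location, and no transition with multiplicity $\geq 2$. This is exactly where ($2''$) enters: any two visible transitions that occur together in a $G$ fireable from a reachable marking are concurrent, and, being both visible, cannot be co-located; taking the two occurrences to be copies of a single transition shows moreover that no visible transition is self-concurrent. Hence every visible step enabled in $N$ already has at most one transition per location, and the control places never forbid it. Combining this with the preservation of stability and of $\tau$-moves yields $\failureset(N') = \failureset(N)$, that is $N' \approx_\mathscr{F} N$.

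The main obstacle is conceptual rather than computational: it is precisely the configuration that ($2''$) permits but distributedness forbids --- a visible transition and a concurrent silent transition sharing a location --- that defeats the divergence-sensitive ST-bisimulation of \refthm{bothdistributedequal}, since the control place serialises the silent transition against the durationally firing visible one. The point of the weaker target equivalence is that step failures never witnesses this serialisation: steps range only over multisets of visible actions, silent transitions fire atomically and so enter no step, and the traces are sequences of single visible actions. Thus the only thing that must be checked with care is that serialisation is confined to silent--silent and visible--silent pairs, i.e.\ that ($2''$) genuinely excludes co-located visible--visible and self-concurrent visible transitions, which is the verification carried out above.
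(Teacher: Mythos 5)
Your proposal is correct and follows essentially the same route as the paper: the construction of \refthm{bothdistributedequal} is reused unchanged, the bijection $M \mapsto M \cup P$ between reachable markings is exactly the relation the paper exhibits, and the decisive observation---that by ($2''$) every step in the associated LTS is either a multiset of visible transitions residing on pairwise distinct locations or a single internal transition, so the control places $p_{[x]}$ never block---is the paper's key point as well. Your explicit check that ($2''$) also rules out self-concurrency of visible transitions (by instantiating $t=u$) merely spells out what the paper leaves implicit in the phrase ``all reside on different locations.''
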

\begin{proof}
  The same construction applies. 
  The relation
  $$
  \left\{ (M, M \mathord\cup P) \mid M \text{ is a reachable marking of } N,\;
    P \mathbin= \{ p_{[x]} \mid [x] \textrm{ is a location} \}\! \right\}
  $$
  is a bijection between the reachable markings of $N'$ and $N$ that preserves the
  step transition relations between them.
  Here we use that the transitions in the associated LTS involve either a multiset of
  concurrently firing \emph{visible} transitions (that all reside on
  different locations and thus do not share a preplace $p_{[x]}$), or a single internal one.
  It follows that \plat{$N' \approx_\mathscr{F} N$}.
\end{proof}

\begin{figure}[ht]
\vspace{-3pt}
  \begin{center}
    \begin{petrinet}(10,3,4)
      \Q (1.5,3):p1:$p$;
      \Q (4.5,3):p2:$q$;
      \u (0,1):t1:$a$:$t$;
      \u (3,1):t2:$b$:$u$;
      \ub (6,1):t3:$\tau$:$v$;
      \qb (8,1):p3:$r$;
      \ub (10,1):t4:$c$:$w$;

      \a p1->t1;
      \a p1->t2;
      \a p2->t2;
      \a p2->t3;
      \a t3->p3;
      \a p3->t4;

      \psset{linestyle=dotted}
      \psline(7,3.3)(7,.4)
    \end{petrinet}
  \end{center}
  \vspace{-3ex}
  \caption{Externally distributed, but not
   convertible into a distributed net up to $\approx^\Delta_{bSTb}$.}
  \label{fig-externally distributed}
  \vspace{1ex}
  \begin{center}
    \begin{petrinet}(10,3,4)
      \Q (1.5,3):p1:$p$;
      \Q (4.5,3):p2:$q$;
      \ul (0,1):t1:$a$:$t$;
      \u (3,1):t2:$b$:$u$;
      \ub (6,1):t3:$\tau$:$v$;
      \qb (8,1):p3:$r$;
      \ub (10,1):t4:$c$:$w$;

      \P (3,3):plsga;
      \P (10,3):pc;

      \AA plsga->t1; \AA t1->plsga;
      \AA plsga->t2; \AA t2->plsga;
      \AA plsga->t3; \AA t3->plsga;

      \AA pc->t4; \AA t4->pc;

      \a p1->t1;
      \a p1->t2;
      \a p2->t2;
      \a p2->t3;
      \a t3->p3;
      \a p3->t4;

      \psset{linestyle=dotted}
      \psline(7,3.3)(7,.4)
    \end{petrinet}
  \end{center}
  \vspace{-3ex}
  \caption{The LSGA net obtained from converting the externally distributed net of \reffig{externally distributed}.}
  \label{fig-externally distributed-LSGA}
\end{figure}

\begin{exa}\label{extdist2LSGA}
\reffig{externally distributed} shows an externally distributed net;
the (canonical) location borders are dotted. It is not essentially distributed,
because the transitions $t$ and $v$ are co-located but can be
fired concurrently, while $\ell(t)\neq\tau$.
Applying the construction in the proof of \refpr{externallydistributedequal} turns this net into the
step failures equivalent LSGA net of \reffig{externally distributed-LSGA}.
\end{exa}

\noindent
The counterexample in \reffig{externally distributed} shows that up to $\approx^\Delta_{bSTb}$ not
all externally distributed nets can be converted into distributed nets.
Sequentialising the component with actions $a$, $b$
and $\tau$ (as happens in \reffig{externally distributed-LSGA}) would disable the execution
$\goesto[a^+]\Goesto\goesto[c^+]$.

\section{Distributable Systems}
\label{sec-distributable}

We now consider Petri nets as specifications of concurrent
systems and ask the question which of those specifications can be
implemented as distributed systems. This question can be formalised as
\begin{quote}\em
Which Petri nets are semantically equivalent to distributed nets?
\end{quote}
Of course the answer depends on the choice of a suitable
semantic equivalence. 
Here we will answer this question using the two equivalences discussed in the
introduction.
We will give a precise characterisation of those nets for which we can find semantically equivalent distributed nets. For the negative part of this characterisation, stating that certain nets are not distributable, we will use step failures
equivalence, which is one of the simplest and least discriminating
equivalences imaginable that abstracts from internal actions, but
preserves branching time, concurrency and divergence to some small
degree.\footnote{In \cite{GGS12} we used \emph{step readiness equivalence}, a
  slightly more discriminating equivalence with roughly the same properties. By moving
  to step failures equivalence we strengthen our result.}
Giving up on any of these latter three properties would make any Petri net
distributable, but in a rather trivial and unsatisfactory way:
\begin{iteMize}{$\bullet$}
\item
Every net can be converted into an essentially distributed net by refining every
transition
\psscalebox{0.7}{\begin{pspicture}(4,0)
  \def\thenetimage{trans}
  \u (2,.1):t:$a$:;
  \pnode(.5,.3){ntrans-d}
  \pnode(.5,-.1){ntrans-e}
  \pnode(3.5,.4){ntrans-a}
  \pnode(3.5,.1){ntrans-b}
  \pnode(3.5,-.2){ntrans-c}
  \a d->t;
  \a e->t;
  \a t->a;
  \a t->b;
  \a t->c;
\end{pspicture}}
into the net segment
\psscalebox{0.7}{\begin{pspicture}(8,0)
  \def\thenetimage{ref}
  \u (2,.1):t1:$\tau$:;
  \q (4,.1):p:;
  \u (6,.1):t2:$a$:;
  \pnode(.5,.3){nref-d}
  \pnode(.5,-.1){nref-e}
  \pnode(7.5,.4){nref-a}
  \pnode(7.5,.1){nref-b}
  \pnode(7.5,-.2){nref-c}
  \a t1->p;
  \a p->t2;
  \a d->t1;
  \a e->t1;
  \a t2->a;
  \a t2->b;
  \a t2->c;
\end{pspicture}}.\vspace{2pt}
This construction appears in \cite{BD11} where it is criticised for putting
``all relevant choice resolutions'' on one location. The construction does not introduce or
remove concurrency or divergence.
So it preserves even causality respecting linear time equivalences like pomset trace equivalence \cite{vanglabbeek01refinement}.
It does not preserve branching time equivalences,
because a choice between two visible transitions $a$ and $b$ in the
original net is implemented by a choice between two internal transitions preceding $a$ and $b$.
The resulting net is essentially distributed because all
new $\tau$-transitions can be placed on the same location, whereas all other transitions
get allocated a location of their own. Hence, using \refthm{bothdistributedequal},
it can be converted into an equivalent distributed net.
\item
When working in interleaving semantics, any net can be converted into
an equivalent distributed net by removing all concurrency
between transitions. This can be accomplished by adding a new,
initially marked place, with an arc to and from every transition in
the net.
\item
When fully abstracting from divergence, even when respecting causality
and branching time, the net of \reffig{fullM} is equivalent to the
essentially
distributed net of \reffig{fullMbusy}, and in fact it is not hard to
see that this type of implementation is possible for any given net.
Yet, the implementation may diverge, as the nondeterministic choices
might consistently be decided in an unhelpful way.
This argument is elaborated in \refsec{characterising} below.
The clause \plat{$M \arrownot\production{\tau}$} in \refdf{step failures} is
strong enough to rule out this type of implementation, even though our
step failures semantics abstracts from other forms of divergence.

\begin{figure}[tb]
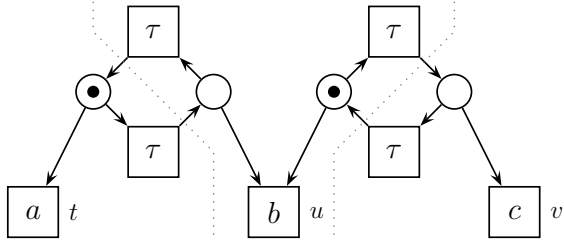

  \begin{center}
    \begin{petrinet}(10,4.5)
      \P (2,3):p;
      \p (4,3):pprime;
      \P (6,3):q;
      \p (8,3):qprime;

      \u (1,1):a:$a$:$t$;
      \u (5,1):b:$b$:$u$;
      \u (9,1):c:$c$:$v$;

      \t (3,2):ptau1:$\tau$;
      \t (3,4):ptau2:$\tau$;
      \t (7,2):qtau1:$\tau$;
      \t (7,4):qtau2:$\tau$;

      \a p->ptau1; \a ptau1->pprime; \a pprime->ptau2; \a ptau2->p;
      \a q->qtau2; \a qtau2->qprime; \a qprime->qtau1; \a qtau1->q;

      \a p->a;
      \a pprime->b;
      \a q->b;
      \a qprime->c;

      {%
        \psset{linecolor=gray}
        \psset{linestyle=dotted}
        \psline(2,4.5)(2,4)(4,2)(4,0.5)
        \psline(8,4.5)(8,4)(6,2)(6,0.5)
      }
    \end{petrinet}
  \end{center}
\vspace{-1em}
  \caption{A busy-wait implementation of the net in \reffig{fullM}, location borders dotted.}
  \label{fig-fullMbusy}
\end{figure}
\end{iteMize}
For the positive part, namely that all other nets are indeed distributable, 
we will use the most discriminating equivalence for which our implementation works, namely branching
ST-bisimilarity with explicit divergence, which is finer than
step failures equivalence. Hence we will obtain the strongest possible results for both directions and it turns out that the concept of distributability is fairly robust w.r.t.\ the choice of
a suitable equivalence: any equivalence notion between step failures
equivalence and branching ST-bisimilarity with explicit divergence will yield
the same characterisation.

\begin{defi}\label{df-distributable}
  A Petri net $N'$ is \defitem{distributable} up to an equivalence
  $\approx$ iff there exists a distributed net $N$ with $N \approx N'$.
\end{defi}

\noindent
Formally we give our characterisation of distributability by classifying which
finitary plain structural conflict nets can be implemented as distributed nets,
and hence as LSGA nets. In such implementations, we use invisible
transitions. We study the concept ``distributable'' for plain nets only, but in
order to get the largest class possible we allow non-plain implementations,
where a given transition may be split into multiple transitions carrying the
same label.

\subsection{Characterising Distributability}\label{sec-characterising}

It is well known that sometimes a global protocol is necessary to implement
synchronous interactions present in system specifications. In
particular, this may be needed for deciding choices in a coherent
way, when these choices require agreement of multiple components.
The simple net in \reffig{fullM} shows a typical situation of this
kind. Independent decisions of the two choices might lead to incorrect
system behaviour. If $p$ and $q$ both decide to send their respective
tokens leftwards, $a$ can fire, yet the token from $q$ gets stuck as
$b$ never receives a second token. Compared to the correct semantics,
a firing of $c$ after $a$ is missing.
It can be argued that for this particular net there exists no
satisfactory distributed implementation that fully respects the
reactive behaviour of the original system: Transitions $t$ and $v$ are
supposed to be concurrently executable (if we do not want to restrict
performance of the system), and hence reside on different locations.
Thus at least one of them, say $t$, cannot be co-located with transition $u$.
However, both transitions are in conflict with $u$. 

As we use nets as models of reactive systems, we allow the environment
of a net to influence decisions at runtime by blocking some of the
possibilities. Equivalently we can say it is the environment that
fires transitions, and this can only happen for transitions that are
currently enabled in the net. If the net decides between $t$ and $u$
before the actual execution of the chosen transition, the environment
might change its mind in between, leading to a state of deadlock.
Therefore we work in a branching time semantics, in which the option
to perform $t$ stays open until either $t$ or $u$ occurs. Hence the
decision to fire $u$ can only be taken at the location of $u$, namely
by firing $u$, and similarly for $t$.  Assuming that it takes time to
propagate any message from one location to another, in no distributed
implementation of this net can $t$ and $u$ be simultaneously enabled,
because in that case we cannot exclude that both of them happen.
Thus, the only possible implementation of the choice between $t$ and
$u$ is to alternate the right to fire between
$t$ and $u$, by sending messages between them (cf.\ \reffig{fullMbusy}).
But if the
environment only sporadically tries to fire $t$ or $u$ it may
repeatedly miss the opportunity to do so, leading to an infinite
loop of control messages sent back and forth, without either
transition ever firing.

\begin{figure}[h]
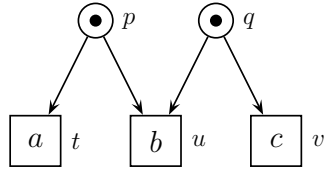

  \begin{center}
    \begin{petrinet}(6,3,4)
      \Q (2,3):p1:$p$;
      \Q (4,3):p2:$q$;
      \u (1,1):t1:$a$:$t$;
      \u (3,1):t2:$b$:$u$;
      \u (5,1):t3:$c$:$v$;

      \a p1->t1;
      \a p1->t2;
      \a p2->t2;
      \a p2->t3;
    \end{petrinet}
  \end{center}
  \vspace{-4ex}
  \caption{A fully reachable pure \structuralM.}
  \label{fig-fullM}
\end{figure}

Indeed such \structuralM-structures, representing
interference between concurrency and choice, turn out to play a crucial
r\^ole for characterising distributability.
To be specific, it is only those {\structuralM}s that are \emph{pure}, i.e.\
don't have extra arcs from their places to their transitions besides those in \reffig{fullM},
and are \emph{fully reachable}, i.e.\ for which there exists a reachable marking enabling
all three transitions at the same time.

\begin{defi}\label{df-fullM}
  Let $N = (S, T, F, M_0, \ell)$ be a Petri net.
  $N$ has a \hypertarget{M}{\defitem{fully reachable \visible pure \structuralM}} iff\\
  $\exists t,u,v \in T.
  \precond{t} \cap \precond{u} \ne \varnothing \wedge
  \precond{u} \cap \precond{v} \ne \varnothing \wedge
  \precond{t} \cap \precond{v} = \varnothing \wedge
  \exists M \in [M_0\rangle.
  \precond{t} \cup \precond{u} \cup \precond{v} \leq M$.
\end{defi}

\noindent
Note that \refdf{fullM} implies that $t\neq u$, $u\neq v$ and $t\neq v$.

\smallskip

\begin{obs}\label{obs-fullmimpliesnotdistr}
  A net with a fully reachable \visible pure {\structuralM} is not distributed. \qed
\end{obs}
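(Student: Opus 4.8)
The plan is to argue by contradiction, exploiting the tension between Condition~(1) of \refdf{distributed} (which forces transitions sharing a preplace onto the same location) and Condition~(2) (which forbids co-locating concurrently enabled transitions). So suppose $N$ has a fully reachable \visible pure {\structuralM}, witnessed by transitions $t,u,v\in T$ and a reachable marking $M\inp[M_0\rangle$ as in \refdf{fullM}, and suppose towards a contradiction that $N$ is distributed via some distribution $D$.

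The first step is to establish that $t$ and $v$ are concurrently enabled at $M$, \ie $t\concurrent v$. The subtlety here---and the only point requiring care---is that \refdf{fullM} gives $\precond{t}\cup\precond{u}\cup\precond{v}\leq M$, a bound in terms of the \emph{union} (pointwise maximum) of presets, whereas firing the step $\{t\}+\{v\}$ requires the \emph{sum} $\precond{t}+\precond{v}\leq M$. The hypothesis $\precond{t}\cap\precond{v}=\varnothing$ is exactly what bridges this gap: since for every place $s$ at most one of $\precond{t}(s)$ and $\precond{v}(s)$ is nonzero, we have $\precond{t}+\precond{v}=\precond{t}\cup\precond{v}\leq\precond{t}\cup\precond{u}\cup\precond{v}\leq M$. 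Hence $M\,[\{t\}+\{v\}\rangle$, so $t\concurrent v$, and Condition~(2) of \refdf{distributed} yields $t\not\equiv_D v$.

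The second step is to derive the opposite co-location fact from the shared preplaces. By assumption there exist places $s_1\inp\precond{t}\cap\precond{u}$ and $s_2\inp\precond{u}\cap\precond{v}$. Applying Condition~(1) of \refdf{distributed} to each incidence gives $t\equiv_D s_1$, $u\equiv_D s_1$, $u\equiv_D s_2$ and $v\equiv_D s_2$. Since $\equiv_D$ is an equivalence relation, symmetry and transitivity yield $t\equiv_D u\equiv_D v$, and in particular $t\equiv_D v$. This contradicts $t\not\equiv_D v$ obtained above, so no distribution $D$ can exist and $N$ is not distributed.

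I expect no real obstacle beyond the preset-union-versus-sum conversion noted in the first step; once $t\concurrent v$ is in hand, the remainder is a routine chase through reflexivity, symmetry and transitivity of $\equiv_D$. (Note also that the remark following \refdf{fullM} guarantees $t\neq v$, so invoking Condition~(2) on the pair $(t,v)$ is legitimate and does not reduce to a vacuous self-concurrency statement.)
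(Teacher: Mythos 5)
Your proof is correct and follows exactly the argument the paper sketches informally (in the discussion of \reffig{fullM} and implicitly via \refobs{distributed}): the disjointness $\precond{t}\cap\precond{v}=\varnothing$ turns the union bound of \refdf{fullM} into the sum bound needed for the step $\{t\}+\{v\}$, giving $t\concurrent v$ and hence $t\not\equiv_D v$, while Condition~(1) chains $t\equiv_D u\equiv_D v$ through the shared preplaces. Your explicit handling of the union-versus-sum point is precisely the small subtlety the paper elides when marking this as an observation.
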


\noindent
We now give an upper bound on the class of distributable nets by adapting a
result from \cite{glabbeek08syncasyncinteractionmfcs}:
We show that fully reachable \visible pure {\structuralM}'s that
are present in a plain structural conflict net are preserved under step failures equivalence.
In \cite{glabbeek08syncasyncinteractionmfcs} we showed this for step
readiness equivalence.

\begin{lem}\label{lem-plainfullmimpliesfailm}
  Let $N = (S, T, F, M_0, \ell)$ be a plain structural conflict net.
  If $N$ has a fully reachable \visible pure \structuralM, then there
  are $\sigma\mathbin\in\Act^*$ and $a,b,c\mathbin\in\Act$ with $a\mathbin{\neq} c$, such that
  $\rpair{\sigma,\{\{a,c\}\}}, \rpair{\sigma,\{\{b\}\}} \mathbin{\notin} \failureset(N)$ and
  $\rpair{\sigma,\{\{a,b\},\{b,c\}\}} \mathbin\in \failureset(N)$.
  (It is implied that $a \mathbin{\ne} b \mathbin{\ne} c$.)
\end{lem}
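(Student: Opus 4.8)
The plan is to read the three actions straight off the \structuralM\ and to use plainness to control the trace. Fix transitions $t,u,v$ and a reachable marking $M$ witnessing the fully reachable pure \structuralM\ of \refdf{fullM}, and set $a:=\ell(t)$, $b:=\ell(u)$, $c:=\ell(v)$. Since $t,u,v$ are pairwise distinct (as noted below \refdf{fullM}) and $\ell$ is injective on a plain net, $a,b,c$ are pairwise distinct, so $a\neq c$ and $a\neq b\neq c$. Because $M\in[M_0\rangle$ and any step can be fired one transition at a time (if $\precond{G}\le M''$ then $\precond{G_1}\le M''$, and the residual step stays enabled), $M$ is reachable by a sequence of single-transition firings; I take $\sigma\in\Act^*$ to be the sequence of labels of those transitions, so that $M_0\Production{\sigma}M$.

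The key observation, which is what makes the otherwise universally quantified non-failure clauses tractable, is that in a plain net the trace $\sigma$ determines its firing sequence uniquely: each label has a unique transition under $\ell^{-1}$, there are no $\tau$-transitions (so $\Goesto$ is the identity), and single-transition firing is deterministic in the resulting marking. Hence $M$ is the \emph{only} marking with $M_0\Production{\sigma}M'$, and every reachable marking is stable ($\arrownot\production{\tau}$) for want of $\tau$-transitions. This collapses the condition $\rpair{\sigma,X}\notin\failureset(N)$ to the single requirement that $M$ enable some step in $X$.

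It then remains to verify four enabledness facts about $M$. For the two non-failures: $\precond{t}\cap\precond{v}=\varnothing$ gives $\precond{t}+\precond{v}=\precond{t}\cup\precond{v}\le M$, so $M\,[\{t,v\}\rangle$ and the step $\{a,c\}$ is enabled, whence $\rpair{\sigma,\{\{a,c\}\}}\notin\failureset(N)$; and $\precond{u}\le M$ gives $M\,[u\rangle$, whence $\rpair{\sigma,\{\{b\}\}}\notin\failureset(N)$. For the failure pair I invoke the structural conflict hypothesis: $\precond{t}\cap\precond{u}\neq\varnothing$ forces $t\not\concurrent u$, so no reachable marking---$M$ in particular---enables the multiset $\{t,u\}$, and since plainness makes $\{t,u\}$ the unique transition multiset labelled $\{a,b\}$, we get $M\arrownot\goesto[\{a,b\}]$; symmetrically $\precond{u}\cap\precond{v}\neq\varnothing$ yields $M\arrownot\goesto[\{b,c\}]$. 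As $M$ is stable, this gives $\rpair{\sigma,\{\{a,b\},\{b,c\}\}}\in\failureset(N)$.

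The only genuine subtlety---and the step I would be most careful to state explicitly---is the collapse of the universal quantifier in the two non-failure clauses, namely that no \emph{other} marking is $\sigma$-reachable in a plain net; everything else is a direct computation with presets together with the definition of $\concurrent$ in a structural conflict net. I would therefore prove the determinism-of-traces remark for plain nets first, and only then run through the four enabledness checks.
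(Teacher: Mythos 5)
Your proof is correct and takes essentially the same route as the paper's: extract $t,u,v$ and $M$ from the fully reachable \visible pure \structuralM, set $a:=\ell(t)$, $b:=\ell(u)$, $c:=\ell(v)$, use plainness to get stability of $M$ and uniqueness of the marking reached after $\sigma$, and use the structural conflict property to refuse $\{a,b\}$ and $\{b,c\}$ while $\precond{t}\cap\precond{v}=\varnothing$ yields enabledness of $\{a,c\}$. The only difference is presentational: you make explicit the serialization of multiset steps into single firings (to obtain $\sigma\in\Act^*$) and the determinism argument behind ``no $M'\neq M$ with $M_0\Production{\sigma}M'$'', both of which the paper asserts without elaboration.
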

\begin{proof}
  $N$ has a {fully reachable \visible pure \structuralM}, so
  there exist $t,u,v \inp T$ and $M \inp [M_0\rangle$ such that
  $
  \precond{t} \cap \precond{u} \ne \varnothing \wedge
  \precond{u} \cap \precond{v} \ne \varnothing \wedge
  \precond{t} \cap \precond{v} = \varnothing \wedge
  \precond{t} \cup \precond{u} \cup \precond{v} \leq M$.
  Let $\sigma \in \Act^*$ such that \plat{$M_0 \Production{\sigma} M$}.
  Let $a:=\ell(t)$, $b:=\ell(u)$ and $c:=\ell(v)$,
  Then \plat{$M\goesto[\{a,c\}]$} and \plat{$M\goesto[\{b\}]$}.\vspace{1pt}
  Moreover, using that $N$ is a structural conflict net,
  \plat{$M\arrownot\goesto[\{a,b\}]$} and \plat{$M\arrownot\goesto[\{b,c\}]$}.
  Since $N$ is a plain net, \plat{$M \arrownot\production{\tau}$}, and there
  is no $M'\neq M$ with $M_0 \Production{\sigma} M'$.
  Hence $\rpair{\sigma,\{\{a,c\}\}}, \rpair{\sigma,\{\{b\}\}} \notin \failureset(N)$ and
  $\rpair{\sigma,\{\{a,b\},\{b,c\}\}} \in \failureset(N)$.
\end{proof}

\begin{lem}\label{lem-failmimpliesfullm}
  Let $N = (S, T, F, M_0, \ell)$ be a structural conflict net. If there
  are $\sigma\in\Act^*$ and $a,b,c\in\Act$ with $a\neq c$, such that
  $\rpair{\sigma,\{\{a,c\}\}}, \rpair{\sigma,\{\{b\}\}} \notin \failureset(N)$ and
  $\rpair{\sigma,\{\{a,b\},\{b,c\}\}} \in \failureset(N)$,
  then $N$ has a fully reachable \visible pure \structuralM.
\end{lem}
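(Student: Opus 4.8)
The plan is to read the three transitions $t,u,v$ and the witnessing marking demanded by \refdf{fullM} directly off the given failure data, using the step refusals to detect shared preplaces. First I would fix a witness for the positive pair: since $\rpair{\sigma,\{\{a,b\},\{b,c\}\}} \in \failureset(N)$, there is a marking $M$ with $M_0 \Production{\sigma} M$, $M \arrownot\production{\tau}$, $M \arrownot\goesto[\{a,b\}]$ and $M \arrownot\goesto[\{b,c\}]$. Because this $M$ is a \emph{stable} marking reachable via $\sigma$, the two negative hypotheses $\rpair{\sigma,\{\{a,c\}\}}, \rpair{\sigma,\{\{b\}\}} \notin \failureset(N)$ apply to it, forcing $M \goesto[\{a,c\}]$ and $M \goesto[\{b\}]$. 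Before continuing I would record that $a \neq b$ and $b \neq c$: if $a=b$ then $\{b,c\}=\{a,c\}$, so $M$ would refuse $\{a,c\}$, contradicting $M \goesto[\{a,c\}]$; the case $b=c$ is symmetric. Together with the hypothesis $a \neq c$ this makes $a,b,c$ pairwise distinct.

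Next I would extract the transitions. From $M \goesto[\{a,c\}]$ and $a\neq c$ there is a step $\{t,v\}$ with $\ell(t)=a$, $\ell(v)=c$, $t\neq v$ and $\precond{t}+\precond{v} \leq M$; from $M \goesto[\{b\}]$ there is a $u$ with $\ell(u)=b$ and $\precond{u} \leq M$. Since $t\neq v$ fire concurrently from the reachable marking $M$ we have $t \concurrent v$, so the structural-conflict-net hypothesis gives $\precond{t} \cap \precond{v} = \varnothing$, the third clause of \refdf{fullM}. For the remaining two overlaps I argue by contradiction via the multiset fact that two presets, each bounded by $M$ and meeting in the empty multiset, can be fired together: if $\precond{t} \cap \precond{u} = \varnothing$, then for every place $s$ one of $\precond{t}(s),\precond{u}(s)$ is $0$, whence $(\precond{t}+\precond{u})(s)=\max(\precond{t}(s),\precond{u}(s)) \leq M(s)$, so $\{t,u\}$ is enabled and, as $t\neq u$ (because $a\neq b$), $M \goesto[\{a,b\}]$, contradicting $M \arrownot\goesto[\{a,b\}]$. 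Hence $\precond{t} \cap \precond{u} \neq \varnothing$, and by the same argument with $u,v$ and the refusal of $\{b,c\}$, $\precond{u} \cap \precond{v} \neq \varnothing$.

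Finally, for full reachability I note that $M \in [M_0\rangle$ and that $\precond{t},\precond{u},\precond{v}$ are each individually $\leq M$ (the first and third because $\precond{t},\precond{v} \leq \precond{t}+\precond{v} \leq M$), so their place-wise maximum also satisfies $\precond{t} \cup \precond{u} \cup \precond{v} \leq M$. Together with the three intersection conditions, $t,u,v$ and $M$ then witness a fully reachable pure \structuralM. The only genuinely delicate point is the multiset bookkeeping that turns ``disjoint presets, each enabled'' into ``the combined step is enabled''; once that observation is isolated, each required non-empty intersection falls out of the corresponding refusal by a single contradiction, and the structural conflict hypothesis supplies the one empty intersection, so nothing else in the argument goes beyond routine checking.
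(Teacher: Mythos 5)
Your proof is correct and takes essentially the same route as the paper's: fix the stable marking $M$ witnessing the positive failure pair, apply the two negative pairs to that same $M$ to get $M \goesto[\{a,c\}]$ and $M \goesto[\{b\}]$, extract $t,u,v$, obtain the empty intersection from the structural conflict hypothesis and the two non-empty intersections by contradiction from the refused steps. You merely make explicit two points the paper leaves implicit---the derivation of $a \neq b \neq c$ and the multiset bookkeeping showing that disjoint presets, each covered by $M$, would jointly enable the refused step---which is sound and arguably an improvement in rigour.
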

\begin{proof}
  Let $M \mathbin\in\powermultiset{S}$ be the marking that gives rise to the step failure pair\vspace{2pt}
  $\rpair{\sigma,\{\{a,b\},\{b,c\}\}}$, \ie $M_0 \Goesto[\sigma] M$,
  $M \arrownot\goesto[\{a,b\}]$ and $M \arrownot\goesto[\{b,c\}]$.
  Since $\rpair{\sigma,\{a,c\}}\notin \failureset(N)$, it must be that\vspace{1pt}
  $M \goesto[\{a,c\}]$. Likewise, $M \goesto[\{b\}]$.

  As $a \ne b \ne c \ne a$ there must exist three transitions $t, u, v \in T$
  with $\ell(t) \mathbin= a \wedge \ell(u) \mathbin= b\linebreak[2] \wedge \ell(v) \mathbin= c$ and
  $ M [\{t,v\}\rangle \wedge M [\{u\}\rangle \wedge
  \neg(M[\{t,u\}\rangle) \wedge \neg(M[\{u,v\}\rangle)$.
  From $M[\{t,v\}\rangle \wedge M[\{u\}\rangle$ it follows that
  $\precond{t} \cup \precond{u} \cup \precond{v} \leq M$ and
  $\precond{t} \cap \precond{v} = \varnothing$,
  using that $N$ is a structural conflict net.
  From $\neg(M[\{t,u\}\rangle)$ then follows $\precond{t} \cap \precond{u} \ne
  \varnothing$ and analogously for $u$ and $v$.
  Hence $N$ has a fully reachable \visible pure \structuralM.
\end{proof}

\noindent
Note that the lemmas above give a behavioural property that for plain
structural conflict nets is equivalent to having a fully reachable \visible pure \structuralM.

\begin{thm}\label{thm-trulysyngltfullm}
  Let $N$ be a plain structural conflict Petri net. 
  If $N$ has a fully reachable \visible pure {\structuralM}, then $N$ is
  not distributable up to step failures equivalence.
\end{thm}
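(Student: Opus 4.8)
The plan is to argue by contradiction, using the two lemmas just proved as the two halves of a behavioural invariant that bridges $N$ and its hypothetical distributed implementation. First I would apply \reflem{plainfullmimpliesfailm} to the net $N$: since $N$ is a plain structural conflict net possessing a fully reachable \visible pure \structuralM, the lemma hands me a trace $\sigma\in\Act^*$ and actions $a,b,c$ with $a\neq c$ such that $\rpair{\sigma,\{\{a,c\}\}},\rpair{\sigma,\{\{b\}\}}\notin\failureset(N)$ while $\rpair{\sigma,\{\{a,b\},\{b,c\}\}}\in\failureset(N)$. This is a purely behavioural fingerprint of the \structuralM: after $\sigma$ the system can perform the steps $\{a,c\}$ and $\{b\}$ but neither $\{a,b\}$ nor $\{b,c\}$.

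Next I would suppose, towards a contradiction, that $N$ is distributable up to step failures equivalence. By \refdf{distributable} there is then a distributed net $N'$ with $N'\approx_\mathscr{F} N$, that is $\failureset(N')=\failureset(N)$. Consequently $N'$ inherits exactly the same fingerprint: the three failure-pair conditions above hold verbatim with $N'$ in place of $N$.

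The crucial point, and where the argument actually hinges, is the deliberate asymmetry between the two lemmas. \reflem{plainfullmimpliesfailm} required its net to be \emph{plain}, but the converse \reflem{failmimpliesfullm} only assumes the weaker structural-conflict-net property. Now $N'$ is distributed, so by \refobs{distributed-structuralconflict} it is automatically a structural conflict net; hence \reflem{failmimpliesfullm} applies to $N'$ even though $N'$ need not be plain, and it yields that $N'$ itself has a fully reachable \visible pure \structuralM. This is precisely why the definition of distributability is allowed to admit non-plain implementations: plainness is never invoked on $N'$.

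Finally, \refobs{fullmimpliesnotdistr} states that any net with a fully reachable \visible pure \structuralM\ is not distributed, contradicting the choice of $N'$ as a distributed net. The contradiction establishes that $N$ is not distributable up to $\approx_\mathscr{F}$. I expect no computational obstacle here, since the two lemmas discharge all the structural reasoning; the only genuine subtlety is conceptual, namely recognising that the failure fingerprint is the right invariant---it is forced by an \structuralM\ in the plain source, is preserved by step failures equivalence, and forces an \structuralM\ back into any structural conflict net realising it---so that the theorem reduces to chaining the two lemmas with the two observations.
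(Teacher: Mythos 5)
Your proof is correct and follows essentially the same route as the paper's: obtain the behavioural fingerprint from \reflem{plainfullmimpliesfailm}, transfer it through $\approx_\mathscr{F}$, recover the \structuralM\ in the implementation via \reflem{failmimpliesfullm}, and conclude with \refobs{fullmimpliesnotdistr}. If anything, you are slightly more careful than the paper, which applies \reflem{failmimpliesfullm} to an arbitrary equivalent net $N'$ without explicitly noting that the structural-conflict-net hypothesis holds; your explicit appeal to \refobs{distributed-structuralconflict} (any distributed net is a structural conflict net) is exactly the right way to discharge it.
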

\begin{proof}
  Let $N$ be a plain structural conflict net which has a fully reachable \visible pure \structuralM.
  Let $N'$ be a net which is step failures equivalent to $N$.
  By \reflem{plainfullmimpliesfailm} and \reflem{failmimpliesfullm}, also
  $N'$ has a fully reachable \visible pure \structuralM.
  By \refobs{fullmimpliesnotdistr}, $N'$ is not distributed. Thus $N$ is
  not distributable up to step failures equivalence.
\end{proof}

\noindent
Since $\approx^\Delta_{bSTb}$ is finer than $\approx_\mathscr{F}$,
this result holds also for distributability up to $\approx^\Delta_{bSTb}$
(and any equivalence between $\approx_\mathscr{F}$ and $\approx^\Delta_{bSTb}$).

In the following, we establish that this upper bound is tight, and hence a
finitary plain structural conflict net is distributable iff it has no fully
reachable \visible
pure \structuralM. For this, it is helpful to first introduce a more compact
graphical notation for Petri nets as well as macros for reversibility of transitions.

\begin{figure}[b]
\vspace{-.5ex}
\begin{minipage}[b]{0.49\linewidth}
  \begin{center}
    \begin{petrinet}(6,3)
      \qb (1,1):s:$s_i$;
      \qb (1,2.5):p:$p$;
      \ub (3,1.75):t:$a$:$t_j$;
      \qb (5,1.75):q:$q_j$;

      \rput[l](1,0){$\forall i \in \{0, 1\}$}
      \rput[l](3,0){$\forall j \in \{2, 3\}$}
       
      \a s->t;
      \a p->t;
      \a t->q;
    \end{petrinet}
  \end{center}
  \caption{A net with quantifiers.}
  \label{fig-quantifiernotation}
\end{minipage}
\begin{minipage}[b]{0.49\linewidth}
  \begin{center}
    \begin{petrinet}(6,3)
      \ql (1,2.75):p:$p$;
      \ql (1,1.75):s0:$s_0$;
      \ql (1,0.75):s1:$s_1$;
      \ub (3,2.5):t2:$a$:$t_2$;
      \ub (3,1):t3:$a$:$t_3$;
      \qb (5,2.5):q2:$q_2$;
      \qb (5,1):q3:$q_3$;

      \a p->t2; \a p->t3;
      \a s0->t2; \a s0->t3;
      \a s1->t2; \a s1->t3;
      \a t2->q2;
      \a t3->q3;
    \end{petrinet}
    \vspace{-1ex}
  \end{center}
  \caption{The same net expanded.}
  \label{fig-quantifiernotationexpanded}
\end{minipage}
\vspace{-1.5ex}
\end{figure}

\subsection{A compressed Petri net notation}

To compress the graphical notation, we allow universal
quantifiers of the form $\forall x. \phi(x)$ to appear
in the drawing 
(cf. Figures~\ref{fig-quantifiernotation} and \ref{fig-quantifiernotationexpanded}).
A quantifier replaces occurrences of $x$ in place and transition identities
with all concrete values for which $\phi(x)$ holds, possibly
creating a set of places, respectively transitions, instead of the depicted single one.
Accordingly, an arc of which only one end is replicated by a given quantifier
results in a fan of arcs, one for each replicated element.
If both ends of an arc are affected by the same quantifier,
an arc is created between pairs of elements corresponding to the same $x$,
but not between elements created due to differing values of $x$.

\subsection{Petri nets with reversible transitions}

A \emph{Petri net with reversible transitions} generalises the notion
of a Petri net; its semantics is given by a translation to an
ordinary Petri net, thereby interpreting the reversible transitions
as syntactic sugar for certain net fragments.  It is defined as a tuple
$(S,T,\UI,\ui,F,M_0,\ell)$ with $S$ a set of places, $T$ a set of (reversible)
transitions, labelled by \plat{$\ell:T\rightarrow\Act\dcup\{\tau\}$}, $\UI$ a set of
\emph{undo interfaces} with the relation $\ui\subseteq \UI\times T$
linking interfaces to transitions, $M_0 \inp \nat^S$ an initial marking, and
$$F\!: (S\times T\times \{{\scriptstyle \it in,~early,~late,~out,~far}\} \rightarrow \nat)$$
the flow relation.
When $F(s,t,{\scriptstyle \it type})>0$ for ${\scriptstyle \it type} \in
\{{\scriptstyle \it in,~early,~late,~out,~far}\}$, this is depicted by drawing an arc from $s$ to
$t$, labelled with its arc weight $F(s,t,{\scriptstyle \it type})$, of the form\
\psscalebox{0.7}{\begin{pspicture}(1.5,0.2)
  \def\thenetimage{arrowexamples}
  \pnode(0,0.1){narrowexamples-a}
  \pnode(1.5,0.1){narrowexamples-b}
  \a a->b;
\end{pspicture}},
\psscalebox{0.7}{\begin{pspicture}(1.5,0.2)
  \def\thenetimage{arrowexamples}
  \pnode(0,0.1){narrowexamples-a}
  \pnode(1.5,0.1){narrowexamples-b}
  \aEarly a->b;
\end{pspicture}},
\psscalebox{0.7}{\begin{pspicture}(1.5,0.2)
  \def\thenetimage{arrowexamples}
  \pnode(0,0.1){narrowexamples-a}
  \pnode(1.5,0.1){narrowexamples-b}
  \aLate a->b;
\end{pspicture}},
\psscalebox{0.7}{\begin{pspicture}(1.5,0.2)
  \def\thenetimage{arrowexamples}
  \pnode(0,0.1){narrowexamples-a}
  \pnode(1.5,0.1){narrowexamples-b}
  \a b->a;
\end{pspicture}},
\psscalebox{0.7}{\begin{pspicture}(1.5,0.2)
  \def\thenetimage{arrowexamples}
  \pnode(0,0.1){narrowexamples-a}
  \pnode(1.5,0.1){narrowexamples-b}
  \aFar b->a;
\end{pspicture}},
respectively. 
For $t\inp T$ and ${\scriptstyle \it type} \in
\{{\scriptstyle \it in,~early,~late,~out,~far}\}$, the multiset of places
$t^{\it type}\inp\nat^S$ is given by $t^{\it type}(s) = F(s,t,{\scriptstyle \it type})$.
When $s\inp t^{\it type}$ for ${\scriptstyle \it type} \in
\{{\scriptstyle \it in,~early,~late}\}$, the place $s$ is called a
\emph{preplace} of $t$ of type {\scriptsize \it type\/};
when $s\inp t^{\it type}$ for ${\scriptstyle \it type} \in
\{{\scriptstyle \it out,~far}\}$, $s$ is called a
\emph{postplace} of $t$ of type {\scriptsize \it type}.
For each undo interface $\omega\inp \UI$ and transition $t$ with $\ui(\omega,t)$ there must be places
$\undo[\omega](t)$, $\reset[\omega](t)$ and $\ack[\omega](t)$ in $S$.
A transition with a nonempty set of interfaces is called \emph{reversible};
the other (\emph{standard}) transitions may have pre- and postplaces
of types {\scriptsize \it in} and {\scriptsize \it out} only---for these
transitions $t^{\it in}\mathbin=\precond{t}$ and $t^{\it out}\mathbin=\postcond{t}$.
In case $\UI=\emptyset$, the net is just a normal Petri net.

A global state of a Petri net with reversible transitions is given by a marking
$M\inp\nat^S$, together with the state of each reversible transition ``currently
in progress''.  Each transition in the net can fire as usual. A reversible transition can
moreover take back (some of) its output tokens, and be \emph{undone} and
\emph{reset}.
(The use in our implementation will be that every reversible transition that fires is undone and reset later.)
When a transition $t$ fires, it consumes $\sum_{{\scriptstyle \it
type}\in\{{\scriptstyle \it in,~early,~late}\}} F(s,t,{\scriptstyle \it type})$
tokens from each of its preplaces $s$ and produces $\sum_{{\scriptstyle \it
type}\in\{{\scriptstyle \it out,~far}\}} F(s,t,{\scriptstyle \it type})$
tokens in each of its postplaces $s$.  A reversible transition $t$ that has fired
can start its reversal by consuming a token from $\undo[\omega](t)$ for one of
its interfaces $\omega$.  Subsequently, it can
take back the tokens from its postplaces of type {\scriptsize \it
far}. After it has retrieved all its output of type
{\scriptsize \it far}, the transition is undone, thereby returning
$F(s,t,{\scriptstyle \it early})$ tokens in each of its preplaces $s$ of type
{\scriptsize \it early}.
Afterwards, by
consuming a token from $\reset[\omega](t)$, for the same interface $\omega$ that
started the undo-process, the transition terminates its chain of activities by
returning $F(s,t,{\scriptstyle \it late})$ tokens in each of its {\scriptsize
\it late} preplaces $s$.  At that occasion it also produces a token in
$\ack[\omega](t)$. Alternatively, two tokens in $\undo[\omega](t)$ and
$\reset[\omega](t)$ can annihilate each other without involving the transition
$t$; this also produces a token in $\ack[\omega](t)$. The latter mechanism comes in action
when trying to undo a transition that has not yet fired.

\begin{figure}[ht]
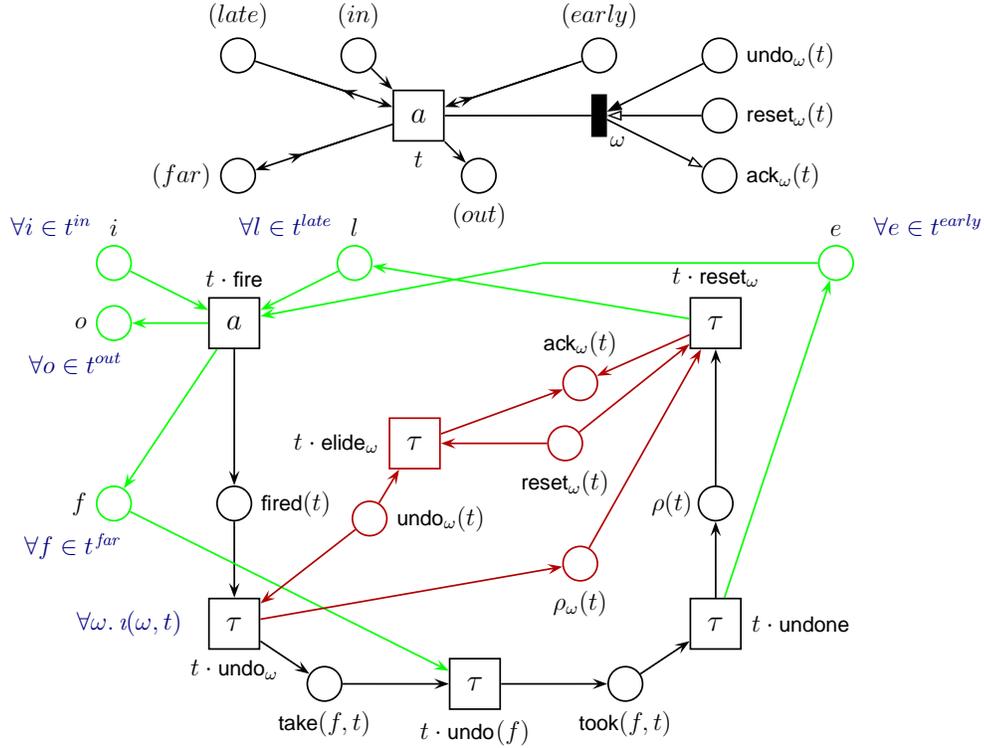

\vspace*{-.5ex}
  \begin{center}
    \begin{petrinet}(10,4)
      \qt(3,3):in:$(in)$;
      \qt(1,3):late:$(late)$;
      \qt(7,3):early:$(early)$;
      \q(9,3):undo:$\undo[\omega](t)$;
      \q(9,2):reset:$\reset[\omega](t)$;
      \q(9,1):ack:$\ack[\omega](t)$;
      \ql(1,1):far:$(far)$;
      \qb(5,1):out:$(out)$;
      \ub(4,2):t:$a$:$t$;

      \interface(7,2):ti:t:$\omega$;

      \a in->t;
      \aLate late->t;
      \aEarly early->t;
      \a t->out;
      \aFar t->far;
      \aUndo undo->ti;
      \aReset reset->ti;
      \aAck ti->ack;
    \end{petrinet}
  \end{center}
  \vspace{1ex}
  \begin{center}
    \begin{petrinet}(14,8)
      \psset{linecolor=green}
      \ql(1,4):far:$f$;
      \ql(1,7):out:$o$;
      \qt(1,8):in:$i$;
      \qt(5,8):late:$l$;
      \qt(13,8):early:$e$;

      \psset{linecolor=black}
      \qb(4.5,1):take:$\take(f,t)$;
      \ub(7,1):tundop:$\tau$:$t \cdot \undo[](f)$;
      \qb(9.5,1):took:$\took(f,t)$;

      \ub(3,2):tundoa:$\tau$:$t \cdot \undo[\omega]$;
      \ur(11,2):tundone:$\tau$:$t \cdot \undone$;

      \qr(3,4):fired:$\Fired(t)$;
      \ql(11,4):p2:$\rho(t)\!$;

      \ut(3,7):tfire:$a$:$t \cdot \fire$;
      \ut(11,7):treseta:$\tau$:$t \cdot \reset[\omega]$;

      \psset{linecolor=darkred}
      \qr(5.25,3.75):undoa:$\undo[\omega](t)$;
      \qb(8.75,3):pa:$\keep[\omega](t)$;
      \ul(6,5):elidea:$\tau$:$t \cdot \elide[\omega]$;
      \qt(8.75,6):acka:$\ack[\omega](t)$;
      \qb(8.5,5):reseta:$\reset[\omega](t)$;

      {
        \darkblue
        \rput[tr](1,3.5){\large $\forall f \in t^{\,\it far}$}
        \rput[tr](1,6.5){\large $\forall o \in t^{out}$}
        \rput[rb](0.5,8.45){\large $\forall i \in t^{in}$}
        \rput[rb](4.5,8.45){\large $\forall l \in t^{late}$}
        \rput[lb](13.5,8.45){\large $\forall e \in t^{early}$}
        \rput[r](2,2){\large $\forall \omega.\, \ui(\omega, t)$}
      }

      \psset{linecolor=black}
      \a tfire->fired;
      \a fired->tundoa;
      \a tundoa->take;
      \a take->tundop;
      \a tundop->took;
      \a took->tundone;
      \a tundone->p2;
      \a p2->treseta;

      \psset{linecolor=green}
      \a in->tfire;
      \a tfire->out;
      \a tfire->far;
      \a far->tundop;
      \a tundone->early;
      \a treseta->late;
      \av early[180]-(8,8)->[15]tfire;
      \a late->tfire;

      \psset{linecolor=darkred}
      \a tundoa->pa;
      \a pa->treseta;
      \a undoa->tundoa;
      \a undoa->elidea;
      \a elidea->acka;
      \a treseta->acka;
      \a reseta->elidea;
      \a reseta->treseta;
    \end{petrinet}
  \end{center}
\vspace{-1ex}
\caption{A reversible transition and its macro expansion.}
\label{fig-reversible}
\end{figure}

\reffig{reversible} shows the translation of a reversible transition
$t$ with $\ell(t)\mathbin=a$ into an ordinary net fragment.
The arc weights on the green (or grey) arcs are inherited from the
untranslated net; the other arcs have weight~1.
Formally, a net $(S,T,\UI,\ui,F,M_0, \ell)$ with reversible
transitions translates into the Petri net containing all places $S$,
all standard transitions in
$T$, labelled according to $\ell$, along with their pre- and
postplaces, and furthermore all net elements mentioned in
\reftab{reversible},
\hypertarget{Tback}{}$T^\leftarrow$ denoting the set of
reversible transitions in $T$.
The initial marking is exactly $M_0$.

\begin{table}[ht]
\[
\begin{array}{@{}l@{}c@{}clll@{}}
\textbf{Transition} & \textrm{at} & \textrm{label} & \textrm{Preplaces} & \textrm{Postplaces} & \textrm{for all} \\
\hline\rule[11pt]{0pt}{1pt}
t\cdot\fire          & t & \ell(t) & t^{in},~ t^{early},~ t^{late} & \Fired(t),~ t^{out}, t^{\,\it far} & t \in T^\leftarrow \\
t\cdot\undo[\omega]  & t\txf{-undo}& \tau & \undo[\omega](t),~\Fired(t) & \keep[\omega](t),~\take(f,t) &
  t \in T^\leftarrow,~\ui(\omega,t),~f \mathbin\in t^{\,\it far}\\
t\cdot\und           & f              & \tau & \take(f,t),~ f & \took(f,t) & t\in T^\leftarrow,~f \in t^{\,\it far}\\
t\cdot\undone        & t\txf{-undo}& \tau & \took(f,t) & \rho(t),~ t^{early} & t\in T^\leftarrow,~f \in t^{\,\it far}\\
t\cdot\reset[\omega] & t\txf{-undo}& \tau & \reset[\omega](t),~\keep[\omega](t),~\rho(t) & t^{late},~ \ack[\omega](t) &
 t \in T^\leftarrow,~\ui(\omega,t)\\
t\cdot\elide[\omega] & t\txf{-undo}& \tau & \undo[\omega](t),~\reset[\omega](t) & \ack[\omega](t) & t \in T^\leftarrow,~\ui(\omega,t)\\
\end{array}
\]
\vspace{-1.5ex}
\caption{Expansion of a Petri net with reversible transitions into a place/transition system.}
\label{tab-reversible}
\end{table}

A distribution of a Petri net with reversible transitions can be given
as a function $D:S \cup T \rightarrow \Loc$. As in Condition (1) of \refdf{distributed}
we require that a transition and its preplaces (of types {\scriptsize \it in},
{\scriptsize \it early} or {\scriptsize \it late}) reside on the same location.
Additionally, for any given transition $t$, all its undo-interface places $\undo[\omega](t)$ and
$\reset[\omega](t)$ for all $\omega\in\UI$ must reside on the same location---we refer to this
location as $t$-\txf{undo}. The second column of \reftab{reversible} indicates how such a
distribution is translated under expansion of reversible transitions into ordinary net fragments:
The location of a reversible transition $t$ is really the location of $t\cdot\fire$; it should be the
same as all preplaces of $t$. Furthermore, the transition $t\cdot\und$ and its preplace $\take(f,t)$
reside on the same location as the place $f\in t^{\,\it far}$. All other net elements that are part
of the macro expansion of $t$, except for $\ack[\omega](t)$, reside at the location $t$-\txf{undo}.
The resulting distribution of the expanded net is now guaranteed to satisfy (1).
Whether a Petri net with reversible translations is (essentially) distributed requires checking Condition (2) of \refdf{distributed} (or Condition ($2'$) of \refdf{externally distributed}) on its expansion.

\subsection{The conflict replicating implementation}\label{sec-implementation}

Now we establish that a finitary plain structural conflict net that has no fully reachable
\visible pure \structuralM\ is distributable.  We do this by proposing the \emph{conflict
replicating implementation} of any such net, and show that this implementation is always
(a) essentially distributed, and (b) equivalent to the original net. In order to get the strongest
possible result, for (b) we use branching ST-bisimilarity with explicit divergence.

\begin{figure}
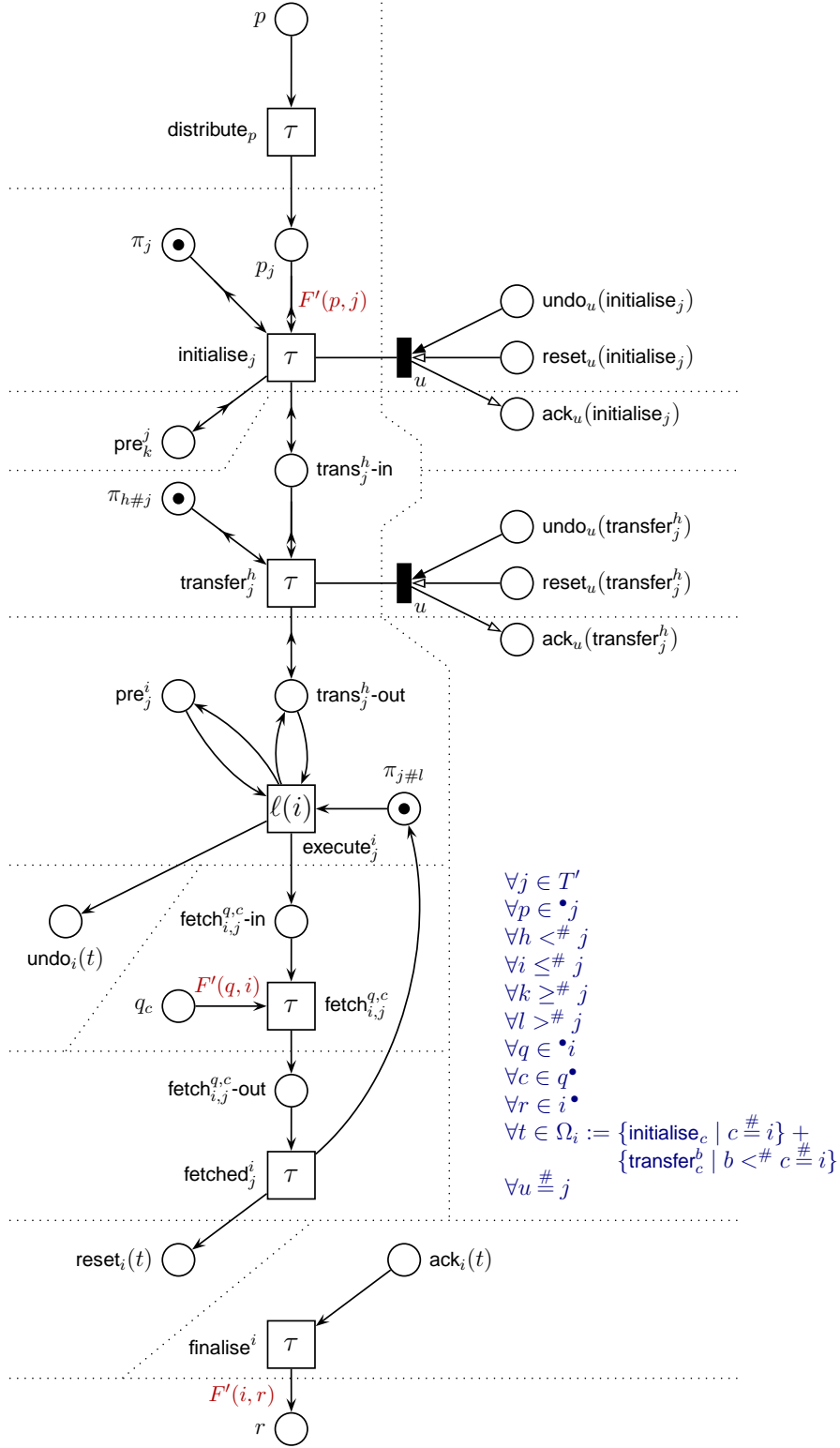

  \begin{center}
    \begin{petrinet}(17,26)
      {
        \darkblue

        \rput[lb](9,5){\large$\begin{array}{l}\displaystyle
          \forall j \in T'\\
          \forall p \in \precond{j}\\
          \forall h <^\# j\\
          \forall i \leq^\# j\\
          \forall k \geq^\# j\\
          \forall l >^\# j\\
          \forall q \in \precond{i}\\
          \forall c \in \postcond{q}\\
          \forall r \in \postcond{i\,}\\
          \forall t \in \UIij := \begin{array}[t]{@{}l@{}}
          \{\ini[c]\mid c\confeq i\} +\mbox{}\\ \{\trans[b]{c} \mid b <^\# c \confeq i\}
          \end{array}\\
          \forall u \confeq j\\
        \end{array}$}
      }

      {
        \darkred

        \rput[l](5.25,21){$F'(p,j)$}
        \rput[r](4.85,1.6){$F'(i,r)$}
        \rput[b](4,8.65){$F'(q,i)$}
      }

      \ql(5,26):p:$p$;
      \ul(5,24):distributep:$\tau$:$\dist$;
      \qx(5,22):pj:$p_j$:(-0.45,-0.45);
      \ql(3,18.5):prejk:$\Pre^j_k$;
      \Ql(3,22):readyinitialisej:$\pi_j$;
      \ul(5,20):initialisej:$\tau$:$\ini$;
      \interface(7,20):initialiseji:initialisej:$u$;
      \qr(9,21):undoinij:$\undo[u](\ini)$;
      \qr(9,20):resetinij:$\reset[u](\ini)$;
      \qr(9,19):ackinij:$\ack[u](\ini)$;
      \qr(5,18):transin:$\transin{j}$;
      \Ql(3,17.5):pconh:$\pi_{h\#j}$;
      \ul(5,16):trans:$\tau$:$\trans{j}$;
      \interface(7,16):transi:trans:$u$;
      \qr(9,17):undotransj:$\undo[u](\trans{j})$;
      \qr(9,16):resettransj:$\reset[u](\trans{j})$;
      \qr(9,15):acktransj:$\ack[u](\trans{j})$;
      \qr(5,14):transout:$\transout{j}$;
      \ql(3,14):prehj:$\Pre^i_j$;
      \Qt(7,12):pconj:$\pi_{j\#l}$;
      \ul(5,12):executehj:\makebox[0pt]{$\ell(i)$}:;
      \rput[l](5.2,11.3){\large $\exec{j}$}
      \qb(1,10):undohjt:$\undo(t)$;
      \ql(5,10):fetchphjin:$\fetchin[q,c]$;
      \ql(3,8.5):pbackbottom:$q_c$;
      \ur(5,8.5):fetchphj:$\tau$:$\fetch[q,c]$;
      \ql(5,7):fetchphjout:$\fetchout[q,c]$;
      \ul(5,5.5):fetchedhj:$\tau$:$\fetched{j}$;
      \qr(7,4):ackhjt:$\ack(t)$;
      \ql(3,4):resethjt:$\reset(t)$;
      \ul(5,2.5):completehj:$\tau$:$\comp{j}$;
      \ql(5,1):r:$r$;

      \a p->distributep;
      \aBack pbacktop->distributep;
      \a distributep->pj;
      \aEarly pj->initialisej;
      \aLate readyinitialisej->initialisej;
      \aFar initialisej->prejk;
      \aFar initialisej->transin;
      \aEarly transin->trans;
      \aLate pconh->trans;
      \aFar trans->transout;
      \B prehj->executehj;
      \B executehj->prehj;
      \A transout->executehj;
      \A executehj->transout;
      \a pconj->executehj;
      \a executehj->undohjt;
      \a executehj->fetchphjin;
      \a fetchphjin->fetchphj;
      \a pbackbottom->fetchphj;
      \a fetchphj->fetchphjout;
      \a fetchphjout->fetchedhj;
      \a fetchedhj->resethjt;
      \a ackhjt->completehj;
      \a completehj->r;
      \BB fetchedhj->pconj;
      \aUndo undoinij->initialiseji;
      \aReset resetinij->initialiseji;
      \aAck initialiseji->ackinij;
      \aUndo undotransj->transi;
      \aReset resettransj->transi;
      \aAck transi->acktransj;

      \psset{linestyle=dotted}
      \psline(6.6,26.3)(6.6,19)(7.3,18.5)(7.3,17.5)(6.6,17)(6.6,15.4)(7.8,14.5)(7.8,4.7)
      \psline(0,23)(6.6,23)
      \psline(0,19.4)(13,19.4)
      \psline(0,18)(3.8,18)(4.65,19.4)
      \psline(7.3,18)(13,18)
      \psline(0,15.4)(13,15.4)
      \psline(0,11)(7.8,11)
      \psline(1,7.7)(3.4,11)
      \psline(0,7.7)(7.8,7.7)
      \psline(0,4.7)(13,4.7)
      \psline(0,1.9)(13,1.9)
      \psline(2,1.9)(5.4,4.7)
    \end{petrinet}
  \end{center}
  \vspace{-3.5ex}
  \caption{The entire conflict replicating implementation, drawn with emphasis on the structure of the component of $j$;
    location borders dotted.}
  \label{fig-conflictrepl}
  \vspace{-1.3ex}
\end{figure}

\begin{figure}
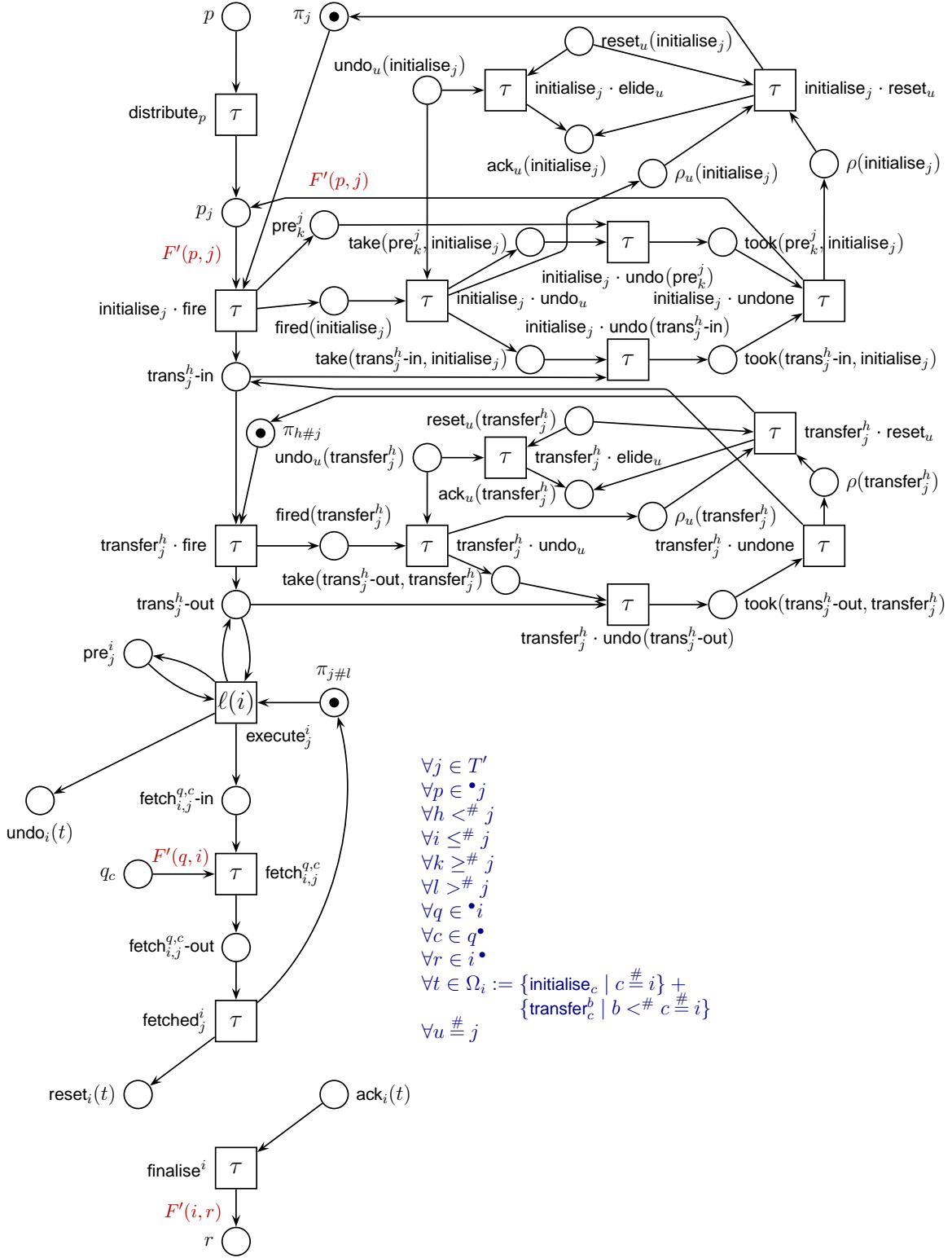

  \hspace{130pt}\makebox[0pt][c]{
    \begin{petrinet}(26,26)
      {
        \darkblue

        \rput[lb](9,5){\large$\begin{array}{l}\displaystyle
          \forall j \in T'\\
          \forall p \in \precond{j}\\
          \forall h <^\# j\\
          \forall i \leq^\# j\\
          \forall k \geq^\# j\\
          \forall l >^\# j\\
          \forall q \in \precond{i}\\
          \forall c \in \postcond{q}\\
          \forall r \in \postcond{i\,}\\
          \forall t \in \UIij := \begin{array}[t]{@{}l@{}}
          \{\ini[c]\mid c\confeq i\} +\mbox{}\\ \{\trans[b]{c} \mid b <^\# c \confeq i\}
          \end{array}\\
          \forall u \confeq j\\
        \end{array}$}
      }

      {
        \darkred

        \rput[r](7.85,22.65){$F'(p,j)$}
        \rput[r](4.85,21.15){$F'(p,j)$}
        \rput[r](4.85,1.6){$F'(i,r)$}
        \rput[b](4,8.65){$F'(q,i)$}
      }

      \ql(5,26):p:$p$;
      \ul(5,24):distributep:$\tau$:$\dist$;
      \ql(5,22):pj:$p_j$;

      \ul(5,20):initialisejfire:$\tau$:$\ini\cdot\fire$;
      \qbs(7,20.2):initialisejfired:$\Fired(\ini)$;
      \ur(8.9,20.2):initialisejundo:$\tau$:$\ini\cdot\undo[u]$;
      \ql(11,21.4):initialisejtakepre:$\take(\Pre^j_k, \ini)$;
      \ql(11,19):initialisejtaketrans:$\take(\transin{j}, \ini)$;
      \ubs(13,21.4):initialisejundopre:$\tau$:$\ini\cdot\undo[](\Pre^j_k)$;
      \uts(13,19):initialisejundotrans:$\tau$:$\ini\cdot\undo[](\transin{j})$;
      \qr(14.93,21.4):initialisejtookpre:$\took(\Pre^j_k, \ini)$;
      \qr(14.93,19):initialisejtooktrans:$\took(\transin{j}, \ini)$;
      \ul(17,20.2):initialisejundone:$\tau$:$\ini\cdot\undone$;
      \qr(17,23):initialisejrho:$\rho(\ini)$;
      \ur(16,24.5):initialisejreset:$\tau$:$\ini\cdot\reset[u]$;
      \ur(10.5,24.5):initialisejelide:$\tau$:$\ini\cdot\elide[u]$;
      \qr(13.5,22.8):initialisejrhou:$\rho_u(\ini)$;

      \ql(6.8,21.75):prejk:$\Pre^j_k\!$;
      \Ql(7,26):readyinitialisej:$\pi_j$;
      \qx(8.9,24.5):undoinij:$\undo[u](\ini)$:(-0.55,0.5);
      \qr(12,25.5):resetinij:$\reset[u](\ini)$;
      \qx(12,23.5):ackinij:$\ack[u](\ini)$:(-0.65,-0.55);
      \ql(5,18.65):transin:$\transin{j}$;
      \Qr(5.5,17.5):pconh:$\pi_{h\#j}$;

      \ul(5,15.2):transjfire:$\tau$:$\trans{j}\cdot\fire$;
      \qts(7,15.2):transjfired:$\Fired(\trans{j})$;
      \ur(8.9,15.2):transjundo:$\tau$:$\trans{j}\cdot\undo[u]$;
      \ql(10.5,14.5):transjtaketrans:$\take(\transout{j}, \trans{j})$;
      \ubs(13,14):transjundotrans:$\tau$:$\trans{j}\cdot\undo[](\transout{j})$;
      \qr(14.93,14):transjtooktrans:$\took(\transout{j}, \trans{j})$;
      \ul(17,15.2):transjundone:$\tau$:$\trans{j}\cdot\undone$;
      \qr(17,16.5):transjrho:$\rho(\trans{j})$;
      \ur(16,17.5):transjreset:$\tau$:$\trans{j}\cdot\reset[u]$;
      \ur(10.5,17):transjelide:$\tau$:$\trans{j}\cdot\elide[u]$;
      \qr(13.5,15.8):transjrhou:$\rho_u(\trans{j})$;

      \ql(8.9,17):undotransj:$\undo[u](\trans{j})$;
      \ql(12,17.75):resettransj:$\reset[u](\trans{j})$;
      \ql(12,16.25):acktransj:$\ack[u](\trans{j})$;
      \ql(5,14):transout:$\transout{j}$;
      \ql(3,13):prehj:$\Pre^i_j$;
      \Qt(7,12):pconj:$\pi_{j\#l}$;
      \ul(5,12):executehj:\makebox[0pt]{$\ell(i)$}:;
      \rput[l](5.2,11.3){\large $\exec{j}$}
      \qb(1,10):undohjt:$\undo(t)$;
      \ql(5,10):fetchphjin:$\fetchin[q,c]$;
      \ql(3,8.5):pbackbottom:$q_c$;
      \ur(5,8.5):fetchphj:$\tau$:$\fetch[q,c]$;
      \ql(5,7):fetchphjout:$\fetchout[q,c]$;
      \ul(5,5.5):fetchedhj:$\tau$:$\fetched{j}$;
      \qr(7,4):ackhjt:$\ack(t)$;
      \ql(3,4):resethjt:$\reset(t)$;
      \ul(5,2.5):completehj:$\tau$:$\comp{j}$;
      \ql(5,1):r:$r$;

      \a p->distributep;
      \a distributep->pj;

      \a pj->initialisejfire;
      \av initialisejundone[135]-(15,22.3)(6,22.3)->[20]pj;
      \a readyinitialisej->initialisejfire;
      \av initialisejreset[120]-(15,26)->[0]readyinitialisej;
      \a initialisejfire->prejk;
      \av prejk[0]-(7.5,21.75)->[140]initialisejundopre;
      \a initialisejfire->transin;
      \av transin[0]-(6.5,18.65)->[220]initialisejundotrans;
      \a undoinij->initialisejundo;
      \a undoinij->initialisejelide;
      \a resetinij->initialisejelide;
      \a resetinij->initialisejreset;
      \a initialisejelide->ackinij;
      \a initialisejreset->ackinij;
      \a initialisejfire->initialisejfired;
      \a initialisejfired->initialisejundo;
      \a initialisejundo->initialisejtakepre;
      \a initialisejundo->initialisejtaketrans;
      \a initialisejtakepre->initialisejundopre;
      \a initialisejtaketrans->initialisejundotrans;
      \a initialisejundopre->initialisejtookpre;
      \a initialisejundotrans->initialisejtooktrans;
      \a initialisejtookpre->initialisejundone;
      \a initialisejtooktrans->initialisejundone;
      \a initialisejundone->initialisejrho;
      \a initialisejrho->initialisejreset;
      \av initialisejundo[15]-(11.75,21.25)(11.75,22)->[210]initialisejrhou;
      \a initialisejrhou->initialisejreset;

      \a transin->transjfire;
      \av transjundone[135]-(13.8,18.4)(7,18.4)->[340]transin;
      \a pconh->transjfire;
      \av transjreset[135]-(15.25,18.25)(7,18.25)->[50]pconh;
      \a transjfire->transout;
      \a transout->transjundotrans;
      \a undotransj->transjundo;
      \a undotransj->transjelide;
      \a resettransj->transjelide;
      \a resettransj->transjreset;
      \a transjelide->acktransj;
      \a transjreset->acktransj;
      \a transjfire->transjfired;
      \a transjfired->transjundo;
      \a transjundo->transjtaketrans;
      \a transjtaketrans->transjundotrans;
      \a transjundotrans->transjtooktrans;
      \a transjtooktrans->transjundone;
      \a transjundone->transjrho;
      \a transjrho->transjreset;
      \av transjundo[30]-(10.5,15.8)->[180]transjrhou;
      \a transjrhou->transjreset;

      \B prehj->executehj;
      \B executehj->prehj;
      \A transout->executehj;
      \A executehj->transout;
      \a pconj->executehj;
      \a executehj->undohjt;
      \a executehj->fetchphjin;
      \a fetchphjin->fetchphj;
      \a pbackbottom->fetchphj;
      \a fetchphj->fetchphjout;
      \a fetchphjout->fetchedhj;
      \a fetchedhj->resethjt;
      \a ackhjt->completehj;
      \a completehj->r;
      \BB fetchedhj->pconj;
    \end{petrinet}
 }
  \vspace{-2.5ex}
  \caption{The entire conflict replicating implementation (with macros expanded).}
  \label{fig-conflictrepl-expanded}
  \vspace{-1.3ex}
\end{figure}

To define the conflict replicating implementation of a net $N'=(S',T',F',M'_0,\ell')$
we fix an arbitrary well-ordering $<$ on its transitions. We let
$b,c,g,h,i,j,k,l,u$ range over these ordered transitions, and write
\begin{enumerate}[$-$]
\item $i\mathbin\#j$ iff ~$i\neq j \wedge \precond{i} \cap \precond{j} \ne \varnothing$
  ~(transitions $i$ and $j$ are \emph{in conflict}),
  ~and $i \confeq j$ iff ~$i\mathbin\# j \vee i\mathbin=j$,
\item $i <^\#\! j$ iff ~$i<j \wedge i\mathbin{\#} j$,
  ~and $i \leqc j$ iff ~$i <^\#\! j \vee i=j$.
\end{enumerate}
\reffig{conflictrepl} shows the conflict replicating implementation of $N'$.  It is presented
as a Petri net $$\impl{N'}=(S,T,F,\UI,\ui,M_0,\ell)$$ with reversible transitions.  The set
$\UI$ of undo interfaces is $T'$, and for $i\inp \UI$ we
have $\ui(i,t)$ iff $t\inp \UIij$, where the sets of transitions
$\UI_i\subseteq T$ are specified in \reffig{conflictrepl}. The implementation $\impl{N'}$
inherits the places of $N'$ (\ie $S\supseteq S'$), and we define
$M_0\mathord\upharpoonright S'$ to be $M'_0$.  Given this, \reffig{conflictrepl} is not merely an
illustration of $\impl{N'}$---it provides a complete and accurate description of it, thereby
defining the conflict replicating implementation of any net. In interpreting this figure
it is important to realise that net elements are completely determined by their name
(identity), and exist only once, even if they show up multiple times in the figure. For
instance, the place $\pi_{h\#j}$ with $h\mathord=2$ and $j\mathord=5$ (when using
natural numbers for the transitions in $T'$) is the same as the place \plat{$\pi_{j\#l}$}
with $j\mathord=2$ and $l\mathord=5$; it is a standard preplace of \plat{$\exec{2}$} (for
all $i\leqc 2$), a
standard postplace of $\fetched{2}$, as well as a late preplace of \plat{$\trans[2]{5}$}.
\reffig{conflictrepl-expanded} depicts the same net after expanding the macros for reversible transitions.
An alternative description of the latter net appears in \reftab{conflictrepl} on Page~\pageref{tab-conflictrepl}.

The r\^ole of the transitions $\dist$ for $p\inp S'$ is to distribute a
token in $p$ to copies $p_j$ of $p$ in the localities of all
transitions $j\inp T'$ with $p\inp \precond{j}$.  In case $j$ is
enabled in $N'$, the transition $\ini$ will become enabled in $\impl{N'}$.
These transitions put tokens in the places \plat{$\Pre^j_k$}, which
are preconditions for all transitions \plat{$\exec[j]{k}$}, which
model the execution of $j$ at the location of $k$.  When two
conflicting transitions $h$ and $j$ are both enabled in $N'$, the first
steps $\ini[h]$ and $\ini$ towards their execution in $\impl{N'}$ can happen
in parallel. To prevent them from executing both, \plat{$\exec[j]{j}$}
(of $j$ at its own location) is only possible after \plat{$\trans{j}$},
which disables $\exec[h]{h}$.
This happens because \plat{$\trans{j}$} takes the initially present
token from the place $\pi_{h\#j}$, which is needed to fire $\exec[h]{h}$.

The main idea behind the conflict replicating implementation is that
a transition $h\inp T'$ is primarily executed by a sequential component of
its own, but when a conflicting transition $j$ gets enabled, the
sequential component implementing $j$ may ``steal'' the possibility to
execute $h$ from the home component of $h$, by putting a token in
\plat{$\transin[h]{j}$} and getting \plat{$\trans[h]{j}$} to fire, and then keep the options to do
$h$ and $j$ open on the home component of $j$ until one of them occurs. To prevent $h$ and $j$ from
stealing each other's initiative, which would result in deadlock, a
global asymmetry is built in by ordering the transitions.
Transition $j$ can steal the initiative from $h$ only when $h<j$.

In case $j$ is also in conflict with a transition $l$, with $j<l$,
the initiative to perform $j$ may subsequently be stolen by $l$.
In that case either $h$ and $l$ are in conflict too---then $l$ takes
responsibility for the execution of $h$ as well---or $h$ and $l$ are
concurrent---in that case $h$ will not be enabled, due to the absence
of fully reachable pure \structuralM s in $N'$.
The absence of fully reachable pure \structuralM s also guarantees that it
cannot happen that two concurrent transitions $j$ and $k$ both
steal the initiative from an enabled transition $h$.

After the firing of $\exec{j}$ all tokens that were left behind in the process of
carefully orchestrating this firing will have to be cleaned up, in order to prepare the
net for the next activity in the same neighbourhood. This is the reason for
the reversibility of the transitions preparing the firing of \plat{$\exec{j}$}.
Hence there is an undo interface for each transition $i\in T'$, cleaning up the mess made in
preparation of firing $\exec{j}$ for some $j\geq^\# i$. $\UIij$ is the set of all transitions $t$
that could possibly have
contributed to this. For each of them the undo interface $i$ is activated, by
\plat{$\exec{j}$} depositing a token in $\undo(t)$. After all preparatory transitions that
have fired are undone, tokens appear in the places $p_c$ for all $p\inp\precond{i}$ and $c\inp\postcond{p}$.
These are collected by $\fetch$, after which all transitions in $\UIij$ get a reset signal.
Those that have fired and were undone are reset, and those that never fired perform $\elide(t)$.
In either case a token appears in $\ack(t)$. These are collected by $\comp{j}$, which
finishes the process of executing $i$ by depositing tokens in its postplaces.

We allow multiple tokens to reside on the same place in the specification. To
ensure that this does never lead to the component implementing a transition $j$
starting the firing protocol again, even though it has not yet completed an
earlier round, we introduce a place $\pi_j$ which only holds a token while the
component is idle.

By means of location boundaries, \reffig{conflictrepl} also displays a distribution of $\impl{N'}$.
It has
\begin{iteMize}{$\bullet$}
\item a location $p$ for every place $p\inp S'$, containing $\dist$ and $p$;
\item locations $\ini$ and $\txf{execute}_j$ for every $j\in T'$---collectively
  referred to as ``the location of $j$''---the latter containing all transitions $\exec{j}$ for $i\leq^\# j \inp T'$;
\item locations $\fetched{j}$ for every $i\leq^\# j \inp T'$;
\item locations $\ini$-\txf{undo} for every $j\in T'$;
\item locations $\trans{j}$-\txf{undo} for every $h<^\#j\in T'$;
\item and locations $\comp{j}$ for every $i\in T'$.
\end{iteMize}
A transition $\trans{j}$ resides at location $\txf{execute}_h$, due to its common preplace
$\pi_{h\#j}$ with $\exec[g]{h}$. Likewise, $\fetch$ resides at location $\ini[c]$. Provided
$N'$ is a finitary plain structural conflict net without a fully reachable pure $\structuralM$,
the proof of \refthm{cri-distributed} will show that this distribution makes $\impl{N'}$ an essentially
distributed net.

\bigskip

\begin{figure}[hbt]
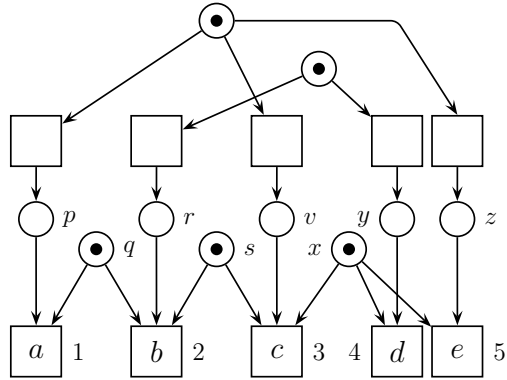

  \begin{center}
    \begin{petrinet}(10, 6)
      \P(4,6):inip1;
      \P(5.7,5.2):inip2;

      \t(1,4):init1:;
      \t(3,4):init2:;
      \t(5,4):init3:;
      \t(7,4):init4:;
      \t(8,4):init5:;

      \qr(1,2.7):p:$p$;
      \Qr(2,2.2):q:$q$;
      \qr(3,2.7):r:$r$;
      \Qr(4,2.2):s:$s$;
      \qr(5,2.7):v:$v$;
      \Ql(6.2,2.2):x:$x$;
      \ql(7,2.7):y:$y$;
      \qr(8,2.7):z:$z$;

      \ur(1,0.5):a:$a$:$1$;
      \ur(3,0.5):b:$b$:$2$;
      \ur(5,0.5):c:$c$:$3$;
      \ul(7,0.5):d:$d$:$4$;
      \ur(8,0.5):e:$e$:$5$;

      \a inip1->init1;
      \a inip1->init3;
      \av inip1[0]-(7,6)->[90]init5;
      \a inip2->init2;
      \a inip2->init4;

      \a init1->p;
      \a init2->r;
      \a init3->v;
      \a init4->y;
      \a init5->z;

      \a p->a;
      \a q->a;
      \a q->b;
      \a r->b;
      \a s->b;
      \a s->c;
      \a v->c;
      \a x->c;
      \a x->d;
      \a y->d;
      \a x->e;
      \a z->e;
    \end{petrinet}
  \end{center}
  \caption{An example net.}
  \label{fig-bigexampleoriginal}
\end{figure}

\begin{landfloat}{figure}{\rotateleft}
  \hfill
  \psscalebox{0.7}{
  \begin{petrinet}(38,22)
    {%
      \psset{linecolor=gray}
      \psset{linestyle=dotted}
      \psline(8,18)(8,20.5)(14,20.5)(14,18.25)(19,18.25)(19,20.5)(26,20.5)(26,18)

      \psline(-1,18)(38,18)
      \psline(-1,14)(38,14)
      \psline(-1,10)(38,10)

      \psline(4,18)(4,14)
      \psline(8,18)(8,14)
      \psline(12,18)(12,14)
      \psline(16,18)(16,14)
      \psline(20,18)(20,14)
      \psline(24,18)(24,14)
      \psline(28,18)(28,14)

      \psline(5,14)(5,10)
      \psline(13,14)(13,10)
      \psline(21,14)(21,10)
      \psline(28,14)(28,10)

      \psline(10,10)(10,6)(5,6)(3.5,5.5)(3.5,-1)
      \psline(18,10)(18,6)(13,6)(11.5,5.5)(11.5,-1)
      \psline(32,10)(32,6.875)(28.5,6)(24.75,6)(19.75,5)(19.75,-1)
      \psline(39,6)(29.75,6)(27.75,5.5)(27.75,-1)
    }

    {%
      \psset{linecolor=darkblue}
      \P(15,21):inip1;
      \P(20,20):inip2;

      \t(1,19):init1:;
      \t(9,19):init2:;
      \t(17,19):init3:;
      \t(25,19):init4:;
      \t(31,19):init5:;
    }%

    \ql(1,17):p:$p$;
    \Ql(5,17):q:$q$;
    \ql(9,17):r:$r$;
    \Ql(13,17):s:$s$;
    \ql(17,17):v:$v$;
    \Ql(21,17):x:$x$;
    \ql(25,17):y:$y$;
    \ql(31,17):z:$z$;

    \ur(1,15):distrp:$\tau$:$\dist[p]$;
    \ur(5,15):distrq:$\tau$:$\dist[q]$;
    \ur(9,15):distrr:$\tau$:$\dist[r]$;
    \ur(13,15):distrs:$\tau$:$\dist[s]$;
    \ur(17,15):distrv:$\tau$:$\dist[v]$;
    \ur(21,15):distrx:$\tau$:$\dist[x]$;
    \ur(25,15):distry:$\tau$:$\dist[y]$;
    \ur(31,15):distrz:$\tau$:$\dist[z]$;

    \ql(1,13):p1:$p_1$;
    \qr(3,13):q1:$q_1$;
    \ql(7,13):q2:$q_2$;
    \ql(9,13):r2:$r_2$;
    \qr(11,13):s2:$s_2$;
    \qr(15,13):s3:$s_3$;
    \qr(17,13):v3:$v_3$;
    \qr(19,13):x3:$x_3$;
    \qr(23,13):x4:$x_4$;
    \qr(25,13):y4:$y_4$;
    \qb(29,13):x5:$x_5$;
    \qr(31,13):z5:$z_5$;

    \ur(1,11):ini1:$\tau$:$\ini[1]$;
    \ur(9,11):ini2:$\tau$:$\ini[2]$;
    \ur(17,11):ini3:$\tau$:$\ini[3]$;
    \ur(25,11):ini4:$\tau$:$\ini[4]$;
    \ur(31,11):ini5:$\tau$:$\ini[5]$;

    {%
      \psset{linecolor=darkblue}
      \qr(7,9):trans12in:$\transin[1]{2}$;
      \qr(15,9):trans23in:$\transin[2]{3}$;
      \qr(23,9):trans34in:$\transin[3]{4}$;
      \qr(29,9):trans35in:$\transin[3]{5}$;
      \qr(33,9):trans45in:$\transin[4]{5}$;

      \ur(7,7):trans12:$\tau$:$\trans[1]{2}$;
      \ur(15,7):trans23:$\tau$:$\trans[2]{3}$;
      \ur(23,7):trans34:$\tau$:$\trans[3]{4}$;
      \ur(29,7):trans35:$\tau$:$\trans[3]{5}$;
      \ur(33,7):trans45:$\tau$:$\trans[4]{5}$;

      \ql(7,5):trans12out:$\transout[1]{2}$;
      \ql(15,5):trans23out:$\transout[2]{3}$;
      \ql(23,5):trans34out:$\transout[3]{4}$;
      \ql(32,5):trans35out:$\transout[3]{5}$;
      \qr(35,5):trans45out:$\transout[4]{5}$;
    }%

    \Ql(1,5):pi12:$\pi_{1\#2}$;
    \Ql(9,5):pi23:$\pi_{2\#3}$;
    \Ql(17,5):pi34:$\pi_{3\#4}$;
    \Ql(19,5):pi35:$\pi_{3\#5}$;
    \Ql(25,5):pi45:$\pi_{4\#5}$;

    \ur(1,3):exec11:$a$:$\exec[1]{1}$;
    \ul(7,3):exec12:$a$:$\exec[1]{2}$;
    \ur(9,3):exec22:$b$:$\exec[2]{2}$;
    \ul(15,3):exec23:$b$:$\exec[2]{3}$;
    \ur(17,3):exec33:$c$:$\exec[3]{3}$;
    \ul(23,3):exec34:$c$:$\exec[3]{4}$;
    \ur(25,3):exec44:$d$:$\exec[4]{4}$;
    \ul(31,3):exec35:$c$:$\exec[3]{5}$;
    \ur(34,3):exec45:$d$:$\exec[4]{5}$;
    \ur(37,3):exec55:$e$:$\exec[5]{5}$;

    \qr(1,1):pre11:$\Pre^1_1$;
    \ql(7,1):pre12:$\Pre^1_2$;
    \qr(9,1):pre22:$\Pre^2_2$;
    \ql(15,1):pre23:$\Pre^2_3$;
    \qr(17,1):pre33:$\Pre^3_3$;
    \ql(23,1):pre34:$\Pre^3_4$;
    \qr(25,1):pre44:$\Pre^4_4$;
    \ql(31,1):pre35:$\Pre^3_5$;
    \qr(34,1):pre45:$\Pre^4_5$;
    \qr(37,1):pre55:$\Pre^5_5$;

    {%
      \psset{linecolor=darkblue}
      \av inip1[180]-(8,21)->[0]init1;
      \a inip1->init3;
      \av inip1[0]-(26,21)->[180]init5;
      \av inip2[180]-(13,20)->[15]init2;
      \a inip2->init4;

      \a init1->p;
      \a init2->r;
      \a init3->v;
      \a init4->y;
      \a init5->z;
    }%

    \a p->distrp;
    \a q->distrq;
    \a r->distrr;
    \a s->distrs;
    \a v->distrv;
    \a x->distrx;
    \a y->distry;
    \a z->distrz;

    \a distrp->p1;
    \a distrq->q1;
    \a distrq->q2;
    \a distrr->r2;
    \a distrs->s2;
    \a distrs->s3;
    \a distrv->v3;
    \a distrx->x3;
    \a distrx->x4;
    \a distrx->x5;
    \a distry->y4;
    \a distrz->z5;

    \a p1->ini1;
    \a q1->ini1;
    \a q2->ini2;
    \a r2->ini2;
    \a s2->ini2;
    \a s3->ini3;
    \a v3->ini3;
    \a x3->ini3;
    \a x4->ini4;
    \a y4->ini4;
    \a x5->ini5;
    \a z5->ini5;

    \a ini2->trans12in;
    \a ini3->trans23in;
    \a ini4->trans34in;
    \a ini5->trans35in;
    \a ini5->trans45in;

    {%
      \psset{linecolor=darkblue}
      \a trans12in->trans12;
      \a trans23in->trans23;
      \a trans34in->trans34;
      \a trans35in->trans35;
      \a trans45in->trans45;

      \a trans12->trans12out;
      \a trans23->trans23out;
      \a trans34->trans34out;
      \a trans35->trans35out;
      \a trans45->trans45out;

      \a pi12->trans12;
      \a pi23->trans23;
      \a pi34->trans34;
      \a pi35->trans35;
      \a pi45->trans45;
    }%

    \a pi12->exec11;
    \a pi23->exec12;
    \a pi23->exec22;
    \a pi34->exec23;
    \a pi34->exec33;
    \a pi35->exec23;
    \a pi35->exec33;
    \a pi45->exec34;
    \a pi45->exec44;

    \a trans12out->exec12;
    \a trans12out->exec22;
    \a trans23out->exec23;
    \a trans23out->exec33;
    \a trans34out->exec34;
    \a trans34out->exec44;
    \a trans35out->exec35;
    \a trans35out->exec45;
    \a trans35out->exec55;
    \a trans45out->exec35;
    \a trans45out->exec45;
    \a trans45out->exec55;

    \a pre11->exec11;
    \a pre12->exec12;
    \a pre22->exec22;
    \a pre23->exec23;
    \a pre33->exec33;
    \a pre34->exec34;
    \a pre44->exec44;
    \a pre35->exec35;
    \a pre45->exec45;
    \a pre55->exec55;

    \av ini1[180]-(-1,9)(-1,0)(6,0)->[235]pre12;
    \av ini1[210]-(-0.8,9)(-0.8,0.2)(0,0.2)->[235]pre11;
    \av ini2[180]-(4,9.1)(4,-0.5)(14,-0.5)->[235]pre23;
    \av ini2[200]-(4.2,9)(4.2,-0.3)(8,-0.3)->[235]pre22;
    \av ini3[205]-(12.2,9)(12.2,0.2)(16,0.2)->[235]pre33;
    \av ini3[195]-(12.1,9.05)(12.1,0.1)(22,0.1)->[235]pre34;
    \av ini3[183]-(12,9.1)(12,0)(30,0)->[235]pre35;
    \av ini4[200]-(20.2,9)(20.2,-0.3)(24,-0.3)->[235]pre44;
    \av ini4[180]-(20,9.1)(20,-0.5)(33,-0.5)->[235]pre45;
    \av ini5[200]-(28,9)(28,-0.25)(36,-0.25)->[235]pre55;
  \end{petrinet}
  }%
  \vspace{4ex}
  \caption{The (relevant parts of the) conflict replicating implementation of the net in \reffig{bigexampleoriginal},
    location borders dotted.}
  \label{fig-bigexample}
  \vspace{2ex}
\end{landfloat}

The conflict replicating implementation is illustrated by means of the
finitary plain structural conflict net $N'$ of \reffig{bigexampleoriginal}.
The places and transitions $a$-$q$-$b$-$s$-$c$-$x$-$d$ in this net
constitute a \emph{Long \structuralM}: for each pair $a$-$b$,
$b$-$c$ and $c$-$d$ of neighbouring transitions, as well as for the
pair $a$-$d$ of extremal transitions, there exists a reachable marking
enabling them both. Moreover, neighbouring transitions in the long
\structuralM\ are in conflict: $a\mathbin \# b$, $b\mathbin \# c$ and $c\mathbin \# d$,
whereas the extremal transitions are concurrent: $a \smile d$.
However, $N'$ has no fully reachable pure \structuralM: no \structuralM-shaped triple of transitions
$a$-$b$-$c$, $b$-$c$-$d$ or $b$-$c$-$e$ is ever simultaneously enabled.

In \cite{glabbeek08syncasyncinteractionmfcs} we gave a simpler
implementation, the \emph{transition-controlled choice
  implementation}, that works for all finitary plain 1-safe Petri nets
without such a long \structuralM. Hence $N'$ constitutes an example
where that implementation does not apply, yet the conflict replicating
implementation does. In fact, when leaving out the $z$-$e$-branch it
may be the simplest example with these properties. We have added this
branch to illustrate the situation where three transitions are pairwise in conflict.

\reffig{bigexample} presents relevant parts of the conflict
replicating implementation $\impl{N'}$ of $N'$. What corresponds to the ten places of $N'$
can easily be discerned in $\impl{N'}$, but the transitions of $N'$ are replaced by more
complicated net fragments. In \reffig{bigexample} we have simplified
the rendering of $\impl{N'}$ by simply just copying the five topmost
transitions of $N'$, instead of displaying the net fragments replacing them.
This simplification is possible since the top half of $N'$ is already
distributed. To remind the reader of this, we left those transitions
unlabelled.\footnote{While it is highly desirable in practical applications to use
such simplifications to reduce the implementation size,
we refrained from doing so in the formal definition of our implementation.
It would have become less regular and the proofs correspondingly longer.}

In order to fix a well-ordering $<$ on the remaining transitions, we
named them after the first five positive natural numbers. The ordered
conflicts between those transitions now are $1 \mathord{\leq^\#} 2$,
$2 \mathord{\leq^\#} 3$, $3 \mathord{\leq^\#} 4$, $3 \mathord{\leq^\#} 5$
and $4 \mathord{\leq^\#} 5$. In \reffig{bigexample} we have skipped
all places, transitions and arcs involved in the cleanup of tokens
after firing of a transition. In this example the cleanup is not
necessary, as no place of $N'$ is visited twice. Thus, we displayed only the non-reversible part
of the transitions $\ini$ and \plat{$\trans{j}$}---i.e.\ $\ini\cdot\fire$ and
\plat{$\trans{j}\cdot\fire$}---as well as the transitions $\dist$ and \plat{$\exec{j}$}.
Likewise, we omitted the outgoing arcs of \plat{$\exec{j}$}, the
places $\pi_j$, and those places that have arcs only to omitted transitions.
We leave it to the reader to check this net against the definition in
\reffig{conflictrepl}, and to play the token game on this net, to see 
that it correctly implements $N'$.
\bigskip

  In \refsec{correctness} we will show,
  for any finitary plain structural conflict net $N'$
  without a fully reachable \visible pure \structuralM,
  that $\impl{N'} \approx^\Delta_{bSTb} N'$,
  and that $\impl{N'}$ is essentially distributed.
  Hence $\impl{N'}$ is an essentially distributed implementation of $N'$.
  By \refthm{bothdistributedequal} this implies that
  $N'$ is distributable up to $\approx^\Delta_{bSTb}$.
Together with \refthm{trulysyngltfullm} it follows that, for any equivalence between
$\approx_\mathscr{F}$ and $\approx^\Delta_{bSTb}$,
a finitary plain structural conflict net is distributable
iff it has no fully reachable \visible pure~\structuralM.

Given the complexity of our construction, no techniques known to us were adequate for
performing the equivalence proof. We therefore had to develop an entirely new method for
rigorously proving the equivalence of two Petri nets up to $\approx^\Delta_{bSTb}$, one of
which known to be plain. This method is presented in \refsec{method}.

\section{Proving Implementations Correct}\label{sec-method}

This section presents a method for establishing the equivalence of two Petri nets, one of
which known to be \hyperlink{plain}{plain}, up to branching ST-bisimilarity with explicit divergence.
It appears as \refthm{3ST}. First approximations of this method are presented in
Lemmas~\ref{lem-1ST} and~\ref{lem-2ST}. The progression from \reflem{1ST} to \reflem{2ST}
and to \refthm{3ST} makes the method more specific (so less general) and more powerful.
By means of a simplification a similar method can be obtained, also in three steps, for
establishing the equivalence of two Petri nets up to interleaving branching bisimilarity
with explicit divergence. This is elaborated at the end of this section.

We sometimes illustrate the results of this section in terms of the conflict replicating
implementation of a net defined in \refsec{implementation}. However, the actual application of these
results to show the correctness of that implementation is presented in \refsec{correctness}.

\begin{defi}\label{df-deterministic}
A labelled transition system $(\st,\tr,\inist)$ is called \emph{deterministic}
if for all reachable states $\mathfrak{M}\in [\inist\rangle$ we have
$\mathfrak{M}\arrownot\goesto[\tau]$ and if $\mathfrak{M}\goesto[a]\mathfrak{M}'$
and $\mathfrak{M}\goesto[a]\mathfrak{M}''$ for some $a\in\act$ then $\mathfrak{M}'=\mathfrak{M}''$.
\end{defi}
\noindent
Deterministic systems may not have reachable $\tau$-transitions at all;
this way, if $\mathfrak{M}\Goesto[\sigma]\mathfrak{M}'$ and
$\mathfrak{M}\Goesto[\sigma]\mathfrak{M}''$ for some $\sigma\in\act^*$ then
$\mathfrak{M}'=\mathfrak{M}''$.
Note that the labelled transition system associated to a
\hyperlink{plain}{plain} Petri net is deterministic; the same applies
to the ST-LTS, the split LTS or the step LTS associated to such a net.

\begin{lem}\label{lem-plain branching bisimilarity}
Let $(\st_1,\tr_1,\inist_1)$ and $(\st_2,\tr_2,\inist_2)$ be two
labelled transition systems, the latter being deterministic.
Suppose there is a relation $\Rel \subseteq \st_1\times\st_2$ such that
\begin{enumerate}[\em(a)]
\item $\inist_1\Rel \inist_2$,
\item if $\mathfrak{M}_1\Rel \mathfrak{M}_2$ and
  $\mathfrak{M}_1\goesto[\tau]\mathfrak{M}'_1$ then $\mathfrak{M}'_1\Rel \mathfrak{M}_2$,
\item if $\mathfrak{M}_1\Rel \mathfrak{M}_2$ and
  $\mathfrak{M}_1\goesto[a]\mathfrak{M}'_1$ for some $a\in\act$ then
  $\exists \mathfrak{M}'_2.~\mathfrak{M}_2\goesto[a]\mathfrak{M}'_2 \wedge \mathfrak{M}'_1\Rel \mathfrak{M}'_2$,
\item if $\mathfrak{M}_1\Rel \mathfrak{M}_2$ and
  $\mathfrak{M}_2\goesto[a]$ for some $a\in\act$ then either
  $\mathfrak{M}_1 \goesto[a]$ or $\mathfrak{M}_1 \goesto[\tau]$
\item and there is no infinite sequence $\mathfrak{M}_1\!\goesto[\tau]\! \mathfrak{M}'_1\!\goesto[\tau]\! \mathfrak{M}''_1\goesto[\tau] \cdots$
  with $\mathfrak{M}_1\Rel \mathfrak{M}_2$ for some $\mathfrak{M}_2$.
\end{enumerate}
Then $\Rel$ is a branching bisimulation with explicit divergence, and the two LTSs are
branching bisimilar with explicit divergence.
\end{lem}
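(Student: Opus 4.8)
The plan is to verify directly the five conditions defining a branching bisimulation with explicit divergence --- Clauses 1--3 of \refdf{branching LTS} together with Clauses 4 and 5 of \refdf{explicit divergence} --- invoking determinism of $(\st_2,\tr_2,\inist_2)$ at every point where the right-hand system is inspected. Since a deterministic LTS has no reachable $\tau$-transitions, $\st_2$ displays no divergence; by \refpr{explicit divergence} it then also suffices, for the final equivalence statement, to show that $\Rel$ is an ordinary branching bisimulation and that $\st_1$ displays no divergence. Throughout I would restrict attention to pairs whose second component is reachable: determinism is assumed only for reachable states, and the $\st_2$-states met while playing the bisimulation game from $\inist_2$ are reachable, so this restriction is harmless.

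Clause 1 is hypothesis (a). For Clause 2, suppose $\mathfrak{M}_1\Rel\mathfrak{M}_2$ and $\mathfrak{M}_1\goesto[\alpha]\mathfrak{M}'_1$. If $\alpha=\tau$, hypothesis (b) gives $\mathfrak{M}'_1\Rel\mathfrak{M}_2$, and I take $\mathfrak{M}^\dagger_2:=\mathfrak{M}_2=:\mathfrak{M}'_2$, using the equality option in $\goesto[\opt{\tau}]$. If $\alpha=a\inp\act$, hypothesis (c) supplies $\mathfrak{M}'_2$ with $\mathfrak{M}_2\goesto[a]\mathfrak{M}'_2$ and $\mathfrak{M}'_1\Rel\mathfrak{M}'_2$, and I take $\mathfrak{M}^\dagger_2:=\mathfrak{M}_2$. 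In both cases $\mathfrak{M}_2\Goesto\mathfrak{M}^\dagger_2\goesto[\opt{\alpha}]\mathfrak{M}'_2$ with the required pairs related.

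Clause 3 is the crux. Suppose $\mathfrak{M}_1\Rel\mathfrak{M}_2$ and $\mathfrak{M}_2\goesto[\alpha]\mathfrak{M}'_2$. As $\mathfrak{M}_2$ is reachable and $\st_2$ is deterministic, $\alpha=a\inp\act$. I would then chase along the left: setting $\mathfrak{M}^0_1:=\mathfrak{M}_1$, hypothesis (d) applied to $\mathfrak{M}^0_1\Rel\mathfrak{M}_2$ and $\mathfrak{M}_2\goesto[a]$ yields either $\mathfrak{M}^0_1\goesto[a]$ or $\mathfrak{M}^0_1\goesto[\tau]\mathfrak{M}^1_1$; in the latter case (b) gives $\mathfrak{M}^1_1\Rel\mathfrak{M}_2$ and I repeat. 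Were the $\tau$-branch always taken, this would build an infinite sequence $\mathfrak{M}^0_1\goesto[\tau]\mathfrak{M}^1_1\goesto[\tau]\cdots$ starting from a state in the domain of $\Rel$, contradicting hypothesis (e). Hence at some stage $\mathfrak{M}^\dagger_1:=\mathfrak{M}^n_1$ satisfies $\mathfrak{M}^\dagger_1\goesto[a]$ while $\mathfrak{M}_1\Goesto\mathfrak{M}^\dagger_1$ and $\mathfrak{M}^\dagger_1\Rel\mathfrak{M}_2$. Applying (c) to $\mathfrak{M}^\dagger_1\Rel\mathfrak{M}_2$ and an $a$-step $\mathfrak{M}^\dagger_1\goesto[a]\mathfrak{M}'_1$ produces $\mathfrak{M}''_2$ with $\mathfrak{M}_2\goesto[a]\mathfrak{M}''_2$ and $\mathfrak{M}'_1\Rel\mathfrak{M}''_2$; determinism of $\st_2$ forces $\mathfrak{M}''_2=\mathfrak{M}'_2$, which closes the clause.

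For the divergence conditions: by Clause 2 the domain of $\Rel$ contains $\inist_1$ (hypothesis (a)) and is closed under $\tr_1$, hence contains every reachable state of $\st_1$; by (e) none of these begins an infinite $\tau$-path, so $\st_1$ displays no divergence, and Clause 4 holds vacuously for the same reason. Clause 5 holds vacuously because a reachable $\mathfrak{M}_2$ has no outgoing $\tau$-transition. With $\st_2$ free of divergence as well, \refpr{explicit divergence} then delivers that the two LTSs are branching bisimilar with explicit divergence. The one genuinely delicate step I expect is the termination of the $\tau$-chase in Clause 3: everything hinges on reading (e) as the assertion that no state in the domain of $\Rel$ starts an infinite $\tau$-computation, which is precisely what rules out the bad infinite branch and, at the same time, yields the absence of divergence on the left.
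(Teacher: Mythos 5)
Your proof is correct and follows essentially the same route as the paper's: Clauses 1--2 are discharged directly from (a)--(c) with $\mathfrak{M}^\dagger_2:=\mathfrak{M}_2$, and Clause 3 by the identical $\tau$-chase using (d) and (b), terminated by (e), then closed with (c) and determinism of $\st_2$, with the explicit-divergence conditions following from (e) and the fact that a deterministic LTS admits no divergence. Your added care in restricting to pairs with reachable second components (and the optional detour through Proposition~\ref{pr-explicit divergence}) is a harmless refinement of a point the paper's proof passes over silently.
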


\begin{proofNobox}
It suffices to show that $\Rel $ satisfies Conditions 1--3 of \refdf{branching LTS};
the condition on explicit divergence follows immediately from (e),
using that a deterministic LTS admits no divergence at all.
\begin{enumerate}[(1)]
\item By (a).
\item In case $\alpha=\tau$ this follows directly from (b), and otherwise from (c).
  In both cases $\mathfrak{M}^\dagger_2:=\mathfrak{M}_2$ and when $\alpha=\tau$ also $\mathfrak{M}'_2:=\mathfrak{M}_2$.
\item Suppose $\mathfrak{M}_1\Rel \mathfrak{M}_2$ and $\mathfrak{M}_2\goesto[\alpha]\mathfrak{M}'_2$.
  Since $(\st_2,\tr_2,\inist_2)$ is deterministic, $\alpha=a\in\Act$. By (d) we have either 
  $\mathfrak{M}_1 \goesto[a] \mathfrak{M}^1_1$ or $\mathfrak{M}_1 \goesto[\tau] \mathfrak{M}^1_1$ for some $\mathfrak{M}^1_1\in\st_1$.
  In the latter case (b) yields $\mathfrak{M}^1_1\Rel \mathfrak{M}_2$, and using (d) again,
  either $\mathfrak{M}^1_1 \goesto[a] \mathfrak{M}^2_1$ or $\mathfrak{M}^1_1 \goesto[\tau] \mathfrak{M}^2_1$ for some $\mathfrak{M}^2_1\in\st_1$.
  Repeating this argument, if the choice between $a$ and $\tau$ is made $k$ times in
  favour of $\tau$ (with $k\geq 0$), we obtain $\mathfrak{M}^{k}_1\Rel
  \mathfrak{M}_2$ (where $\mathfrak{M}^0_1:=\mathfrak{M}_1$) and
  either $\mathfrak{M}^{k}_1 \goesto[a] \mathfrak{M}^{k+1}_1$ or $\mathfrak{M}^k_1 \goesto[\tau] \mathfrak{M}^{k+1}_1$.
  By (e), at some point the choice must be made in favour of $a$, say at $\mathfrak{M}^k_1$.
  Thus $\mathfrak{M}_1\Goesto \mathfrak{M}^k_1 \goesto[a] \mathfrak{M}^{k+1}_1$, with $\mathfrak{M}^k_1\Rel \mathfrak{M}_2$.
  We take \plat{$\mathfrak{M}^\dagger_1$} and $\mathfrak{M}'_1$ from
  \refdf{branching LTS} to be $\mathfrak{M}^k_1$ and $\mathfrak{M}^{k+1}_1$.
  It remains to show that $\mathfrak{M}^{k+1}_1\Rel \mathfrak{M}'_2$.
  By (c) there is an $\mathfrak{M}''_2\in\st_2$ with $\mathfrak{M}_2\goesto[a]\mathfrak{M}''_2$ and $\mathfrak{M}^{k+1}_1\Rel \mathfrak{M}''_2$.
  Since $(\st_2,\tr_2,\inist_2)$ is deterministic, $\mathfrak{M}'_2=\mathfrak{M}''_2$.
  \qed
\end{enumerate}
\end{proofNobox}

\begin{lem}\label{lem-1ST}
Let $N=(S,T,F,M_0,\ell)$ and $N'=(S',T',F',M'_0,\ell')$ be two nets, $N'$ being plain.
Suppose there is a relation
$\Rel  \subseteq (\nat^{S}\times\nat^{T})\times (\nat^{S'}\times\nat^{T'})$
such that
\begin{enumerate}[\em(a)]
\item $(M_0,\emptyset)\Rel (M'_0,\emptyset)$,
\item if $(M_1,U_1)\Rel (M_1',U'_1)$ and
  $(M_1,U_1)\goesto[\tau](M_2,U_2)$ then $(M_2,U_2)\Rel (M_1',U'_1)$,
\item if $(M_1,U_1)\Rel (M_1',U'_1)$ and
  $(M_1,U_1)\goesto[\eta](M_2,U_2)$ for some $\eta\in\Act^\pm$\\\mbox{}\qquad then
  $\exists (M'_2,U'_2).~(M'_1,U'_1)\goesto[\eta](M'_2,U'_2) \wedge (M_2,U_2)\Rel (M_2',U'_2)$,
\item if $(M_1,U_1)\Rel (M_1',U'_1)$ and
  $(M'_1,U'_1)\goesto[\eta]$ with $\eta\in\Act^\pm$\\\mbox{}\qquad then either
  $\mathord{(M_1,U_1) \goesto[\eta]}$ or $\mathord{(M_1,U_1) \goesto[\tau]}$
\item and there is no infinite sequence $(M,U)\goesto[\tau] (M_1,U_1)\goesto[\tau] (M_2,U_2)
  \goesto[\tau] \cdots$\\\mbox{}\qquad with $(M,U)\Rel (M',U')$ for some $(M',U')$.
\end{enumerate}
Then $\Rel $ is a branching split bisimulation with explicit divergence, and $N \approx^\Delta_{bSTb} N'$.
\end{lem}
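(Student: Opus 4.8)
The plan is to obtain this lemma as an instance of \reflem{plain branching bisimilarity}, applied to the split LTSs associated with $N$ and $N'$, followed by an appeal to \refpr{split} to cross over from split to ST-semantics.

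First I would invoke the observation following \refdf{deterministic}: since $N'$ is plain, its associated split LTS is deterministic. Concretely, plainness forces $\ell'$ to be injective and to use no label $\tau$, so the split LTS of $N'$ has no $\tau$-transitions, and for any split marking and any visible split phase $a^+$ (resp.\ $a^-$) there is at most one transition $t$ with $\ell'(t)=a$ able to realise it, so by \refdf{split marking} the target split marking is uniquely determined.

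Next I would read $\Rel$ as a relation between the states of the split LTS of $N$, taken as $(\st_1,\tr_1,\inist_1)$, and those of the split LTS of $N'$, taken as the deterministic $(\st_2,\tr_2,\inist_2)$, with $\inist_1=(M_0,\emptyset)$, $\inist_2=(M'_0,\emptyset)$ and $\act$ the set $\Act^\pm_{\rm split}$ of visible split action phases. Under this reading Clauses (a)--(e) of the present lemma are exactly Clauses (a)--(e) of \reflem{plain branching bisimilarity}: (b) matches the $\tau$-clause, (c) and (d) range over the visible split phases $\eta$, and (a) and (e) are identical. Hence \reflem{plain branching bisimilarity} applies and yields that $\Rel$ is a branching bisimulation with explicit divergence between these split LTSs, which is by definition what it means for $\Rel$ to be a branching split bisimulation with explicit divergence between $N$ and $N'$.

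Finally, as $N'$ is plain, \refpr{split} allows me to promote branching split bisimilarity with explicit divergence to branching ST-bisimilarity with explicit divergence, giving $N\approx^\Delta_{bSTb}N'$. I do not anticipate a genuine obstacle: all of the work is discharged by \reflem{plain branching bisimilarity} and \refpr{split}, both already proved. The only point requiring attention is the identification in the middle step---verifying that the split transition relation of \refdf{split marking} is precisely the transition relation of the LTS handed to \reflem{plain branching bisimilarity}, and that ``branching split bisimulation with explicit divergence'' is, by the definition of branching split bisimilarity as branching bisimilarity on split LTSs, nothing more than a branching bisimulation with explicit divergence on those LTSs---so that the five clauses transfer unchanged.
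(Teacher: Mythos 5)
Your proposal is correct and follows exactly the paper's own proof: instantiate \reflem{plain branching bisimilarity} with the split LTSs of $N$ and $N'$, using that the split LTS of a plain net is deterministic, and then conclude $N \approx^\Delta_{bSTb} N'$ via \refpr{split}. The extra care you take in checking that the five clauses and the determinism condition transfer verbatim is sound but adds nothing beyond what the paper's two-line argument already relies on.
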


\begin{proof}
That $N$ and $N'$ are branching split bisimilar with explicit
divergence follows directly from  Lemma \ref{lem-plain branching bisimilarity}
by taking $(\st_1,\tr_1,\inist_1)$ and $(\st_2,\tr_2,\inist_2)$ to be
the split LTSs associated to $N$ and $N'$ respectively.
Here we use that the split LTS associated to a plain net is deterministic.
The final conclusion follows by \refpr{split}.
\end{proof}
\noindent
\reflem{1ST} provides a method for proving \plat{$N \approx^\Delta_{bSTb} N'$} that can
be more efficient than directly checking the definition. In particular, the
intermediate states $\mathfrak{M}^\dagger$ and the sequence of
$\tau$-transitions $\Goesto$ from \refdf{branching LTS} do not occur in
\reflem{plain branching bisimilarity}, and hence not in \reflem{1ST}. Moreover,
in Condition (d) one no longer has to match the targets of corresponding transitions.
\reflem{2ST} below, when applicable, provides an even more efficient method:
it is no longer necessary to specify the branching split bisimulation $\Rel$,
and the targets have disappeared from the transitions in Condition~\ref{2cST} as well.
Instead, we have acquired Condition~\ref{clause1ST}, but this is a structural
property, which is relatively easy to check.

\begin{lem}\label{lem-2ST}
Let $N=(S,T,F,M_0,\ell)$ be a net and $N'=(S',T',F',M'_0,\ell')$ be a plain
net with $S'\subseteq S$ and $M'_0=M_0\upharpoonright S'$.
Suppose:
\begin{enumerate}[\em(1)]
\item $\forall t\inp T,~\ell(t)\neq\tau.~ \exists t'\inp T',~\ell(t')=\ell(t).~
       \exists G\fin \nat^T,~\ell(G)\equiv\emptyset.~ \marking{t'}=\marking{t+G}$.
      \label{clause1ST}
\item For any $G\fin \Int^T$ with $\ell(G)\equiv\emptyset$, ~\plat{$M'\inp\nat^{S'}$},
      ~\plat{$U'\inp\nat^{T'}$} and ~$U\inp\nat^T$ with ~$\ell'(U')\mathbin=\ell(U)$,
      ~$M'+\precond{U'}\in [M'_0\rangle_{N'}$
      and ~$M:=M'+\precond{U'}+(M_0-M'_0)+\marking{G}-\precond{U}\in\nat^S$ with $M+\precond{U}\in[M_0\rangle_N$,
      it holds that:\label{clause2ST}
\begin{enumerate}[\em(a)]
\item there is no infinite sequence $M\goesto[\tau] M_1\goesto[\tau] M_2\goesto[\tau] \cdots$\label{2aST}
\item if $M' \goesto[a]$ with $a\in \Act$ then $M \goesto[a]$ or $M \goesto[\tau]$\label{2bST}
\item and if $M\goesto[a]$ with $a\inp \Act$ then $M'\goesto[a]$.\label{2cST}
\end{enumerate}
\end{enumerate}
Then $N \approx^\Delta_{bSTb} N'$.
\end{lem}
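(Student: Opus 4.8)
The plan is to derive $N \approx^\Delta_{bSTb} N'$ from \reflem{1ST} by exhibiting a single relation $\Rel$ between the split markings of $N$ and those of $N'$ and checking its five clauses. The relation should be chosen so that its defining condition coincides \emph{verbatim} with the hypothesis block of Clause~\refitem{clause2ST}; then for every related pair the conclusions \refitem{2aST}--\refitem{2cST} are immediately available. Concretely, I would declare $(M,U)\Rel(M',U')$ to hold iff both split markings are reachable (so in particular $M+\precond{U}\in[M_0\rangle_N$, $M'+\precond{U'}\in[M'_0\rangle_{N'}$, and $\tau\notin\ell(U)$), $\ell(U)=\ell'(U')$, and there is a $G\fin\Int^T$ with $\ell(G)\equiv\emptyset$ such that
$$M+\precond{U} = M'+\precond{U'} + (M_0-M'_0) + \marking{G}.$$
Here $M_0-M'_0$ is the constant token content of the ``extra'' places $S\setminus S'$, and $G$ records the net silent progress of $N$ relative to $N'$; the displayed equation is the invariant that the bisimulation must preserve. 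Because $N'$ is plain, its split LTS is deterministic, so once the clauses hold, \reflem{1ST} (through \reflem{plain branching bisimilarity} and \refpr{split}) delivers the result.

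The four ``administrative'' clauses I expect to be routine. For clause~(a) take $G=\emptyset$, so the equation reduces to $M_0=M'_0+(M_0-M'_0)$. For clause~(b), a $\tau$-step of $N$ fires some $s$ with $\ell(s)=\tau$, leaving $U$ fixed and shifting $M$ by $\marking{s}$; I absorb this by replacing the witness $G$ with $G+\{s\}$, which keeps $\ell(G)\equiv\emptyset$ and the same $N'$-state related. For clause~(d), an enabled $a^+$ in $N'$ means $M'\goesto[a]$, whence \refitem{2bST} gives $M\goesto[a]$ or $M\goesto[\tau]$; an enabled $a^-$ in $N'$ means $a\in\ell'(U')=\ell(U)$, so $(M,U)\goesto[a^-]$. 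For clause~(e), $\tau$-steps never alter $U$, so a $\tau$-divergence of $(M,U)$ projects onto a $\tau$-divergence of $M$ in $N$, which \refitem{2aST} forbids.

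The heart of the argument is clause~(c). In the $\eta=a^+$ case, $(M,U)\goesto[a^+]$ forces $M\goesto[a]$, so \refitem{2cST} gives $M'\goesto[a]$; the unique $t'\in T'$ with $\ell'(t')=a$ is then enabled, and since an $a^+$-step alters neither $M+\precond{U}$ nor $M'+\precond{U'}$, the same $G$ witnesses relatedness of the successors. In the $\eta=a^-$ case the matching $a^-$-step of $N'$ exists because $a\in\ell(U)=\ell'(U')$; now the underlying markings change by $\marking{t}$ on the $N$ side and by $\marking{t'}$ on the $N'$ side. This is exactly where Clause~\refitem{clause1ST} is needed: it supplies a silent $G_t\fin\nat^T$ with $\marking{t'}=\marking{t+G_t}$, and since $N'$ is plain this $t'$ is the label-partner of $t$; hence updating the witness to $G-G_t$ restores the invariant.

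I expect this $a^-$ case to be the main obstacle, both because it is the only place Condition~\refitem{clause1ST} enters and because the signed-multiset bookkeeping is delicate there. One must verify that the effect of terminating $t$ in $N$ and terminating its label-partner $t'$ in $N'$ on the underlying marking differs by precisely the silent $\marking{G_t}$, that $G-G_t$ is again a finite signed multiset with $\ell(G-G_t)\equiv\emptyset$ (using that $\equiv\emptyset$ for a \emph{signed} multiset only requires the visible labels to cancel, not that every constituent transition be silent), and that reachability of the new split markings is retained together with the constant offset $M_0-M'_0$ living on the places outside $S'$. Once these checks are in place, all five clauses of \reflem{1ST} hold and the conclusion $N \approx^\Delta_{bSTb} N'$ follows.
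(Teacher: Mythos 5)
Your proposal is correct and follows essentially the same route as the paper's own proof: the identical relation $(M,U)\Rel(M',U')$ defined by $\ell'(U')=\ell(U)$, reachability of $M'+\precond{U'}$ and $M+\precond{U}$, and the invariant $M+\precond{U}=M'+\precond{U'}+(M_0-M'_0)+\marking{G}$ with $\ell(G)\equiv\emptyset$, verified against Clauses (a)--(e) of \reflem{1ST}, with $G+\{t\}$ absorbing $\tau$-steps, the same $G$ surviving $a^+$-steps (since neither $M+\precond{U}$ nor $M'+\precond{U'}$ changes), and Condition~(\ref{clause1ST}) plus plainness of $N'$ yielding the witness $G-G_t$ in the $a^-$ case. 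Your checklist of the delicate points in the $a^-$ case (signed-multiset cancellation of visible labels, preserved reachability of both sides) is exactly what the paper's proof carries out.
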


\begin{proofNobox}\hspace{-5pt}\footnote{%
For didactic reason it may be preferable to skip ahead and read the (simpler) proof of \reflem{2} first.}\,
Define $\Rel \subseteq (\nat^{S}\times\nat^{T})\times (\nat^{S'}\times\nat^{T'})$ by
$(M,U) \Rel  (M',U') :\Leftrightarrow
\ell'(U')=\ell(U) \wedge M'+\!\precond{U'}\inp [M'_0\rangle_{N'}
\wedge \exists G\fin\Int^T.~ \ell(G)\equiv\emptyset \wedge
M+\precond{U}=M'+\precond{U'}+(M_0\mathord-M'_0)+\marking{G}\in[M_0\rangle_N$.
It suffices to show that $\Rel $ satisfies Conditions (a)--(e) of \reflem{1ST}.
\begin{enumerate}[(a)]
\item Take $G=\emptyset$.
\item Suppose $(M_1,U_1)\Rel (M_1',U'_1)$ and $(M_1,U_1)\goesto[\tau](M_2,U_2)$.
  Then $\ell'(U'_1)\mathbin=\ell(U_1) \wedge M'_1+\!\precond{U'_1}\inp [M'_0\rangle_{N'}\linebreak[2] \wedge
  \exists G\fin\Int^T.~ \ell(G)\mathbin\equiv\emptyset \wedge
  M_1=M'_1+\!\precond{U'_1}+(M_0-M'_0)+\marking{G}-\!\precond{U_1} \wedge M_1+\precond{U}\in[M_0\rangle_N$
  and moreover $M_1\goesto[\tau]M_2 \wedge U_2=U_1$.
  So $M_1 [t\rangle M_2$ for some $t\mathbin\in T$ with $\ell(t)\mathbin=\tau$. Hence
  $M_2=M_1+\marking{t}=M'_1+\!\precond{U'_1}+(M_0\mathord-M'_0)+\marking{G+t}\linebreak[2]-\!\precond{U_1}$.
  Since $(M_1+\precond{U_1}) [t\rangle (M_2+\precond{U_1})$, we have $M_2+\precond{U_1}\in[M_0\rangle_N$.
  Since also $\ell(G+t)\equiv\emptyset$ it follows that $(M_2,U_1)\Rel (M_1',U'_1)$.
\item Suppose $(M_1,U_1)\Rel (M_1',U'_1)$ and $(M_1,U_1)\goesto[\eta](M_2,U_2)$,
  with $\eta\in\Act^\pm$.
  Then $\ell'(U'_1)\mathbin=\ell(U_1)$, ~$M'_1+\!\precond{U'_1}\inp [M'_0\rangle_{N'}$ and
  \begin{equation}\label{G}
  \exists G\fin\Int^T.~ \ell(G)\mathbin\equiv\emptyset \wedge
  M_1+\!\precond{U_1}=M'_1+\!\precond{U'_1}+(M_0-M'_0)+\marking{G}\in[M_0\rangle_N.
  \end{equation}
  First suppose $\eta=a^+$. Then $\exists t\inp T.~ \ell(t)\mathbin=a \wedge M_1[t\rangle
  \wedge M_2=M_1-\precond{t} \wedge U_2=U_1+\{t\}$.
  Using that $M_1\goesto[a]$ with $a\in \Act$, by Condition~\ref{2cST}
  we have $M'_1\goesto[a]$, \ie $M'_1[t'\rangle$ for some
  $t'\in T$ with $\ell'(t')=a$.  Let $M'_2:=M'_1-\precond{t}$ and $U'_2:=U'_1+\{t'\}$.
  Then $(M'_1,U'_1)\goesto[a^+](M'_2,U'_2)$.
  Moreover, $\ell(U_2)=\ell(U'_2)$,
  ~$M'_2+\precond{U'_2} = M'_1+\precond{U'_1}\inp [M'_0\rangle_{N'}$ and
  $M_2+\precond{U_2} = M_1+\precond{U_1}$. In combination with (\ref{G}) this yields
  $$~~~M_2+\!\precond{U_2}=M_1+\precond{U_1} =M'_1+\!\precond{U'_1}+(M_0\mathord-M'_0)+\marking{G}
   =M'_2+\!\precond{U'_2}+(M_0\mathord-M'_0)+\marking{G},\!\!\!\!\!$$
  so $(M_2,U_2)\Rel (M_2',U'_2)$.

  Now suppose $\eta=a^-$. Then $\exists t\inp U_1.\ \ell(t)\mathbin=a \wedge
  U_2\mathbin=U_1\mathord-\{t\} \wedge M_2=M_1+\postcond{t}$.  Since
  $\ell'(U'_1)\mathbin=\ell(U_1)$ there is a $t'\inp U'_1$ with
  $\ell(t')\mathbin=a$.  Let $M'_2:=M'_1+\postcond{t'}$ and
  $U'_2:=U'_1-\{t'\}$. Then $(M'_1,U'_1)\goesto[a^-](M'_2,U'_2)$.
  By construction, $\ell(U_2)=\ell(U'_2)$.
  Moreover, $M_2+\precond{U_2} = M_1+\postcond{t}+\precond{U_1}-\precond{t}=(
  M_1+\precond{U_1})+\marking{t}$, and likewise
  \begin{equation}\label{M'}
  M'_2+\precond{U'_2} = (M'_1+\precond{U'_1})+\marking{t'}
  \end{equation}
  so $(M'_1+\precond{U'_1})[t'\rangle (M'_2+\precond{U'_2})$.
  Since $M'_1+\precond{U'_1}\inp [M'_0\rangle_{N'}$, this yields 
  $M'_2+\precond{U'_2}\inp [M'_0\rangle_{N'}$.
  Moreover, $M_2+\!\precond{U_2} = M_1+\postcond{t}+\!\precond{U_1}-\!\precond{t} =
  M_1+\!\precond{U_1}+\marking{t} \in[M_0\rangle_N$.
  Furthermore, combining (\ref{G}) and (\ref{M'}) gives
  \begin{equation}\label{G2}
  \exists G\fin\Int^T.~ \ell(G)\mathbin\equiv\emptyset \wedge
  M_2+\!\precond{U_2}-\marking{t}=M'_2+\!\precond{U'_2}-\marking{t'}+(M_0-M'_0)+\marking{G}.
  \end{equation}
  By Condition~\ref{clause1ST} of \reflem{2ST}, $\exists t''\inp T',~\ell(t'')=\ell(t).~
       \exists G_t\fin \nat^T,~\ell(G_t)\equiv\emptyset.~ \marking{t}=\marking{t''-G_t}$.
  Since $N'$ is a \hyperlink{plain}{plain} net, it has only one
  transition $t^\dagger$ with $\ell(t^\dagger)\mathbin=a$, so $t''\mathbin=t'$.
  Substitution of $\marking{t'-G_t}$ for $\marking{t}$ in (\ref{G2}) yields
  \[\qquad
  \exists G\fin\Int^T.~ \ell(G)\mathbin\equiv\emptyset \wedge
  M_2+\!\precond{U_2}=M'_2+\!\precond{U'_2}+(M_0-M'_0)+\marking{G-G_t}.
  \]
  Since $\ell(G-G_t)\equiv\emptyset$ we obtain $(M_2,U_2)\Rel (M_2',U'_2)$.
\item Follows directly from Condition~\ref{2bST} and \refdf{split marking}.
\item Follows directly from Condition~\ref{2aST} and \refdf{split marking}.
  \qed
\end{enumerate}
\end{proofNobox}
\noindent
To illustrate the use of Lemmas~\ref{lem-2} and~\ref{lem-2ST}, let $N'$ be a plain net and $N$
be its conflict replicating implementation, depicted in \reffig{conflictrepl-expanded}.
Condition (1) says that for any visible transition $t$ in the implementation---this must be
\plat{$\exec{j}$} for some $i$ and $j$---there must be a transition $t'$ in $N'$ with the same label---this
must be $i$---such that the same token replacement $\marking{t'}$ that results from firing $t'$ in
the net $N'$ can also achieved by $t$ in $N$ together with a multiset $G$ of internal transitions of $N$.
For this to even make sense it is necessary that $S'\subseteq S$, so that $\marking{t'}$ can just as
well be seen as a token replacement of $N$. This condition can be fulfilled by taking $G$ to
contain $\dist$ for every preplace $p$ of $i$, \plat{$\fetch$} for every preplace $p$ of $i$ and
every $c\in\postcond{p}$, $\fetched{j}$, $u\cdot\elide$ for $u\in\UIij$, and $\comp{j}$.

In the proof of \reflem{2}/\ref{lem-2ST}, a branching bisimulation is constructed between the markings of $N'$ and $N$,
by relating any reachable marking $M'$ of $N'$ with the corresponding marking
$M'+(M_0\mathord-M'_0)$ of $N$; the latter is the marking $M'$ seen as a marking of $N$, together with
those places in $S\setminus S'$ that are marked initially (or by default).
In addition, $M'$ is also related to markings obtained from $M'+(M_0\mathord-M'_0)$ by adding or
subtracting the token replacement due to firing some internal transitions of $N$.
For instance, compared to the state of $N$ given by the marking $M'+(M_0\mathord-M'_0)$ it could be
that $\comp{j}$ has not yet fired---so that $\ack(t)$ is marked for all $t\in \UIij$ instead of the
postplaces $r$ of $i$---and that $\dist$ has already fired for some place $p$.
This gives rise to the marking $M'+(M_0\mathord-M'_0)+\marking{G}$ being related to $M'$, with
$G=-\{\comp{j}\}+\{\dist\}$.
To show that the relation really is a branching bisimulation with explicit divergence it
suffices to check the conditions (a)--(c). That these are enough to obtain the stronger conditions
(a)--(e) of \reflem{1}/\ref{lem-1ST} follows with help of the new condition (1).

In the proof of \reflem{2ST} the bisimulation constructed in the proof of \reflem{2} is strengthened to
a split bisimulation by taking account of the sets $U'$ and $U$ of transitions currently firing in
$N'$ and $N$. Here we need to require that $U'$ and $U$ carry the same multiset of labels.
Moreover, the preplaces of $U'$ and $U$ need to be added to $M'$ and $M$ when determining that they
are reachable markings, and in relating these markings to each other; for these purposes we thus
use the markings we would have had before starting the transitions that are currently firing. On the other
hand, $M'$ and $M$ themselves need to be markings (i.e.\ put a nonnegative number of tokens in
each place), and in conditions (a)--(c) only those transitions matter that can be fired from $M'$ and $M$
themselves---without the preplaces of $U'$ and $U$.

In \reflem{2ST} a relation is explored between markings $\bar{M}$ and $\bar{M}+\marking{H}$
(where $\bar{M}$ is $M'+\!\precond{U'}+(M_0-M'_0)$ of \reflem{2ST}, $H:=G$, and
$\bar{M}+\marking{H}$ is $M+\!\precond{U}$ of \reflem{2ST}).
In such a case, we can think of $\bar{M}$ as an ``original marking'', and of
$\bar{M}+\marking{H}$ as a modification of this marking by the token replacement
$\marking{H}$. The next lemma provides a method to trace certain places $s$
marked by $\bar{M}+\marking{H}$ (or transitions $t$ that are enabled under $\bar{M}+\marking{H}$)
back to places that must have been marked by $\bar{M}$ before taking into account the
token replacement $\marking{H}$. Such places are called \emph{faithful origins}
of $s$ (or $t$). In tracking the faithful origins of places and transitions, we
assume that the places marked by $\bar{M}$ are taken from a set $S_+$ and the
transitions in $H$ from a set $T_+$. In \reflem{origin} we furthermore assume that the flow
relation restricted to $S\cup T_+$ is acyclic. We will need this lemma in proving the
correctness of our final method of proving $N \approx^\Delta_{bSTb} N'$.

\begin{defi}\label{df-faithful}
Let $N=(S,T,F,M_0,\ell)$ be a Petri net, $T_+\subseteq T$ a set of
transitions and $S_+\subseteq S$ a set of places.
\begin{iteMize}{$\bullet$}
\item
A \emph{path} in $N$ is an alternating sequence $\pi=x_0 x_1 x_2 \cdots x_n \in (S\cup T)^*$ of
places and transitions, such that $F(x_i,x_{i+1})>0$ for $0\mathbin\leq i \mathbin< n$.
The \emph{arc weight} $F(\pi)$ of such a path is the product $\Pi_0^{n-1}F(x_i,x_{i+1})$.
\item
  A place $s\in S$ is called \emph{faithful} w.r.t.\ $T_+$ and $S_+$
  iff $|\{s\}\cap S_+| + \sum_{t\in T_+}F(t,s)=1$.
\item
A path $x_0 x_1 x_2 \cdots x_n \in (S\cup T)^*$ from $x_0$ to $x_n$ is \emph{faithful}
w.r.t.\ $T_+$ and $S_+$ iff all intermediate nodes $x_i$ for $0\leq i < n$ are either
transitions in $T_+$ or faithful places w.r.t.\ $T_+$ and $S_+$.
\item
  For $x\in S\cup T$, the \emph{infinitary multiset} $^*x\in(\nat\cup\{\infty\})^{S_+}$
of \emph{faithful origins} of $x$ is given by
$^*x(s)=\sup\{F(\pi)\mid \pi \mbox{ is a faithful path from $s\in S_+$ to $x$}\}$. (So
$^*x(s)=0$ if no such path exists.)
\end{iteMize}
\end{defi}
\noindent
Suppose a marking $M$ is reachable from a marking $\bar M\in \nat^{S_+}$
by firing transitions from $T_+$ only. So $M=\bar M+\marking{H}$ for
some $H\fin \nat^{T_+}$. Then, if a faithful place $s$ bears a token
under $M$---i.e.\ $M(s)>0$---this token has a unique source:
if $s\in S_+$ it must stem from $\bar M$ and otherwise it must be
produced by the unique transition $t\mathbin\in T_+$ with $F(t,s)\mathbin=1$.

Now consider a period in the evolution of the net $N$ that starts with
the marking $\bar M$, and during which only transitions from $T_+$ fire.
Suppose $\pi=x_0 x_1 x_2 \cdots x_n$ is a faithful path from a place
$x_0\in S_+$ to a either a faithful place $x_n$ that gets marked
at some point during this period or a transition $x_n$ that fires
during (or right after) this period. In that case a token, left on
$x_0$ by the marking $\bar M$, must have travelled along that path from
$x_0$ to $x_n$---where a token is understood to visit a transition
when that transition fires. Namely, if $x_{i+1}$ is a transition that
fired at some point, then its (faithful) preplace $x_i$ must have been
marked right beforehand; and if a faithful place $x_{j+i}$ was marked at
some point, then $x_{j+i}\notin S_+$ and the token in $x_{j+i}$ must
have been produced by the transition $x_i\in T_+$.

Note that $F(\pi)$ is the product of all arc weights in the path on
arcs from places to transitions; for all the weights on arcs from
transitions in $T_+$ to faithful places are 1.  Taking arc weights
into account, for every token in $x_n$ as many as $F(\pi)$ token must
have started in $x_0$. Namely, for a transition $x_{i+1}$ to fire
once, $F(x_i,x_{i+1})$ tokens must have come from place $x_i$, and for
each token in a faithful place $x_{j+1}$, the transition $x_j$ must
have fired once.

In a net without arc weights, $^*x$ is always a set, namely the set of 
places $s$ in $S_+$ from which the flow relation of the net admits a path to $x$ that passes only
through faithful places and transitions from $T_+$ (with the possible exception of $x$ itself).
For nets with arc weights, the underlying set of $^*x$ is
the same, and the multiplicity of $s \in \mbox{}^*x$ is obtained by multiplying all
arc weights on the qualifying path from $s$ to $x$; in case of multiple such paths, we
take the upper bound over all such paths (which could yield the value $\infty$).
It follows from the analysis above that if a faithful place $x$ gets
marked, or a transition $x$ enabled, during a period as described above,
then at least $\mbox{}^*x(s)$ tokens must have been present in
$s$ at the beginning of this period. \reflem{origin} formalises this analysis by comparing a marking
$\bar{M}+\marking{H}$ that marks or enables $x$ (possibly multiple times) with the marking $\bar{M}$
that marks the faithful origins $\mbox{}^*x$ of $x$. Here $H \fin \nat^{T_+}$ is the multiset of
transitions whose firing converts $\bar{M}$ into $\bar{M}+\marking{H}$. However, \reflem{origin}
does not require that this multiset actually can be fired in any particular order.
To enable that generalisation, it must assume that $F\upharpoonright(S\cup T_+)$ is acyclic.

For $k \ne 0$, we have
$k\cdot{}^*x(s)=\sup\{k \cdot F(\pi)\mid \pi \mbox{ is a faithful path from $s\in S_+$ to $x$}\}$.
In order to also have this equality for $k=0$ and ${}^*x(s) = \infty$ we define
$0\cdot\infty := 0$ in this context.

\begin{obs}\label{obs-origin}
  Let $(S,T,F,M_0,\ell)$ be a Petri net, $T_+\subseteq T$ a set of
  transitions and ${S_+}\subseteq S$ a set of places.
For faithful places $s$ and transitions $t\in T$ we have\vspace{-1.5pt}
$$^*s=\left\{\begin{array}{@{}ll@{}}\{s\} & \mbox{if}~s\in {S_+}\\
  ^*t   & \mbox{if}~t\in T_+ \wedge F(t,s)=1
            \end{array}\right.
\qquad\qquad\vspace{-1.5pt}
^*t=\bigcup\{F(s,t)\cdot\mbox{}^*s \mid s\in \precond{t} \wedge s {\rm ~faithful}\}.$$
\end{obs}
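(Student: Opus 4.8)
The plan is to derive both identities directly from \refdf{faithful} by analysing how a faithful path decomposes near its endpoint, exploiting that in a Petri net places and transitions strictly alternate along any path. The recurring device is that the penultimate node $x_{n-1}$ of a faithful path $x_0\cdots x_n$ is an \emph{intermediate} node (since $n-1<n$), hence is either a transition in $T_+$ or a faithful place; together with bipartiteness this pins down its type from the type of the endpoint $x_n$, and deleting or appending a node sets up weight-preserving or weight-scaling bijections between path sets, from which the identities follow by taking suprema.

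For the place identity, first suppose the faithful place $s$ lies in $S_+$. Faithfulness then forces $\sum_{t\in T_+}F(t,s)=0$, so no transition of $T_+$ produces into $s$. A faithful path of length $n\ge 1$ ending at $s$ would have $x_{n-1}$ a transition (by alternation) that is intermediate, hence in $T_+$, with $F(x_{n-1},s)>0$ --- a contradiction. Thus the only faithful path ending at $s$ is the length-$0$ path $s$, which starts in $S_+$ and has empty-product weight $1$, giving ${}^*s=\{s\}$. If instead $s\notin S_+$, faithfulness supplies a \emph{unique} $t\in T_+$ with $F(t,s)=1$, and the same alternation argument shows every faithful path to $s$ has $n\ge 1$ with $x_{n-1}=t$. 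Deleting the final node is then a weight-preserving bijection between faithful paths $s'\to s$ and faithful paths $s'\to t$ (weight-preserving precisely because the deleted arc has weight $F(t,s)=1$), whence ${}^*s={}^*t$ after passing to suprema.

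For the transition identity, every faithful path $s'\to t$ has $n\ge 1$ and, by alternation, a penultimate \emph{place} $x_{n-1}$; being intermediate it is faithful, and $F(x_{n-1},t)>0$ forces $x_{n-1}\in\precond{t}$. Conversely any faithful path $s'\to s$ to a faithful preplace $s\in\precond{t}$ extends to $t$ by appending $t$. For each faithful $s\in\precond{t}$ this gives a bijection between faithful paths $s'\to t$ whose penultimate node is $s$ and faithful paths $s'\to s$, scaling the weight by $F(s,t)$. Grouping all faithful paths to $t$ by their penultimate place and passing to suprema yields, for every $s'\in S_+$,
\[
{}^*t(s')=\sup\{F(s,t)\cdot{}^*s(s')\mid s\in\precond{t},\ s\text{ faithful}\}
=\Bigl(\bigcup\{F(s,t)\cdot{}^*s\mid s\in\precond{t}\wedge s\text{ faithful}\}\Bigr)(s'),
\]
which is the claimed equation.

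All the conceptual content sits in the two bijections; the step needing the most care is the bookkeeping of the suprema. There I must use that $\sup$ commutes with the (possibly infinite) union over the faithful preplaces of $t$, that $\sup$ commutes with scaling by the nonnegative integer $F(s,t)$ --- relying on the convention $0\cdot\infty:=0$ so the scaling law survives the value $\infty$ --- and that a transition with \emph{no} faithful preplace correctly gives the empty multiset on both sides, the union ranging over an empty index set. The only genuine edge case is the length-$0$ path, which must be admitted exactly when the start place lies in $S_+$ and coincides with the target, and it is precisely this that makes ${}^*s=\{s\}$ emerge with multiplicity $1$.
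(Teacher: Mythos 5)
Your proof is correct: the paper states this observation without an explicit proof, treating it as immediate from \refdf{faithful}, and your argument is exactly the intended justification---decomposing a faithful path at its final node, using alternation plus the definition of faithful place to pin down the penultimate node (a $T_+$-transition for the place identity, a faithful preplace for the transition identity), and passing to suprema via the weight-preserving, respectively $F(s,t)$-scaling, bijections obtained by deleting or appending that node. Your edge-case bookkeeping (the length-$0$ path giving $^*s=\{s\}$, empty unions, $\sup$ commuting with scaling) is also sound; note only that the convention $0\cdot\infty:=0$ is not actually needed here, since $s\in\precond{t}$ forces $F(s,t)\geq 1$.
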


\begin{lem}\label{lem-origin}\label{lem-faithful origins}
  Let $(S,T,F,M_0,\ell)$ be a Petri net, $T_+\subseteq T$ a set of transitions
  such that $F\upharpoonright(S\cup T_+)$ is acyclic, and ${S_+}\subseteq S$ a set of places. 
  Let $\bar M\in \nat^{S_+}$ and $H\fin \nat^{T_+}$, such that $\bar M+\marking{H}\in\nat^S$ (\ie
  places occur only non-negatively in $\bar M+\marking{H}$). Then
\begin{enumerate}[\em(a)]
\item
  for any faithful place $s$ w.r.t.\ $T_+$ and ${S_+}$ we have
$(\bar M+\marking{H})(s)\cdot\mbox{}^*s \leq \bar M$;
\item
for any $k\in\nat$, and any transition $t$ with $(\bar M+\marking{H}) [k\cdot\{t\}\rangle$,
we have $k\cdot\mbox{}^*t\leq \bar M$.
\end{enumerate}
\end{lem}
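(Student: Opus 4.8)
The plan is to prove a statement slightly stronger than (a). Writing $T(s):=\bar M(s)+\postcond{H}(s)$ for the \emph{total supply} of a place $s$ — the tokens present in $\bar M$ together with those produced by $H$ — I will show that $T(s)\cdot{}^*s\le\bar M$ for every faithful place $s$. Part (a) is then immediate, since $(\bar M+\marking{H})(s)=T(s)-\precond{H}(s)\le T(s)$ and ${}^*s$ has non-negative entries. The reason for passing to $T(s)$ is the main obstacle of the proof: the literal inequality (a), phrased with the current marking $\bar M+\marking{H}$, does \emph{not} survive the inductive step, because tokens that have already been consumed by transitions further downstream are invisible in the current marking, yet they must still be charged against $\bar M$. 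The quantity $T(s)$ records exactly the current-plus-consumed tokens, and this is what is conserved along faithful paths.

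First I would record two consequences of faithfulness. If $s\in S_+$ then $\sum_{t}H(t)F(t,s)=\postcond{H}(s)=0$, so $T(s)=\bar M(s)$ and, by \refobs{origin}, ${}^*s=\{s\}$; the bound $T(s)\cdot{}^*s\le\bar M$ is then trivial (this is the base case). If $s\notin S_+$ then $\bar M(s)=0$, and faithfulness provides a \emph{unique} $t_0\in T_+$ with $F(t_0,s)=1$, whence $T(s)=\postcond{H}(s)=H(t_0)$ and, by \refobs{origin}, ${}^*s={}^*t_0=\bigcup\{F(s'',t_0)\cdot{}^*s''\mid s''\in\precond{t_0}\wedge s''~\text{faithful}\}$.

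The core is an induction along the acyclic order of $F\upharpoonright(S\cup T_+)$ (equivalently an induction on the length of faithful paths, which is the formulation I would actually use to stay safe for infinite nets, taking suprema only at the end). In the step for $s\notin S_+$ it suffices, since multiplication by the scalar $T(s)=H(t_0)$ distributes over the union/supremum defining ${}^*$ (using the convention $0\cdot\infty:=0$), to bound $H(t_0)\cdot F(s'',t_0)\cdot{}^*s''$ for each faithful preplace $s''$ of $t_0$. This is the one point where the hypothesis $\bar M+\marking{H}\in\nat^S$ enters: from $(\bar M+\marking{H})(s'')\ge 0$ I get $\precond{H}(s'')\le T(s'')$, and since $t_0$ contributes the summand $H(t_0)F(s'',t_0)$ to $\precond{H}(s'')=\sum_{t}H(t)F(s'',t)$, I obtain $H(t_0)F(s'',t_0)\le\precond{H}(s'')\le T(s'')$. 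As $s''$ precedes $s$ in the acyclic order, the induction hypothesis gives $T(s'')\cdot{}^*s''\le\bar M$, so $H(t_0)F(s'',t_0)\cdot{}^*s''\le T(s'')\cdot{}^*s''\le\bar M$; taking the union over the faithful preplaces $s''$ yields $T(s)\cdot{}^*s\le\bar M$, completing (a).

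Finally I would derive (b) from the same bound. If $(\bar M+\marking{H})[k\cdot\{t\}\rangle$ then $k\cdot F(s,t)\le(\bar M+\marking{H})(s)\le T(s)$ for every preplace $s$ of $t$; using \refobs{origin} to write ${}^*t=\bigcup\{F(s,t)\cdot{}^*s\mid s\in\precond{t}\wedge s~\text{faithful}\}$ together with the bound $T(s)\cdot{}^*s\le\bar M$ just established for faithful $s$, each term satisfies $k\cdot F(s,t)\cdot{}^*s\le T(s)\cdot{}^*s\le\bar M$, and the union of multisets each $\le\bar M$ is again $\le\bar M$, giving $k\cdot{}^*t\le\bar M$. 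The decisive step is spotting the strengthening to $T(s)$; once that is in place, acyclicity (which makes the induction well-founded and, via the per-path bound $T(s)\cdot F(\pi)\le\bar M$, keeps the suprema defining ${}^*$ finite whenever $T(s)>0$) and the single non-negativity inequality $\precond{H}(s'')\le T(s'')$ do all the remaining work.
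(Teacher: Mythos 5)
Your proof is correct, and it takes a genuinely different route from the paper's. The paper establishes (a) and (b) by a mutual induction on $|H|$: for a faithful $s\notin S_+$ with unique producer $t$, it bounds $(\bar M+\marking{H})(s)\cdot{}^*s$ by $H(t)\cdot{}^*t$, restricts $H$ to the set $U$ of transitions strictly upstream of $t$, verifies that $\bar M+\marking{H\!\upharpoonright\! U}\in\nat^S$ and that $(\bar M+\marking{H\!\upharpoonright\! U})[H(t)\cdot\{t\}\rangle$, and then invokes (b) for the strictly smaller multiset $H\!\upharpoonright\! U$ (acyclicity guaranteeing $t\notin U$). The downstream-consumption phenomenon you identify as the obstacle to a naive induction is exactly what this restriction trick is there to neutralise; your diagnosis is accurate, and your fix is the dual one: instead of deleting the downstream part of $H$, you strengthen the invariant to the total supply $T(s)=\bar M(s)+\postcond{H}(s)$, after which a single direct induction (on faithful path length, or along $(F\upharpoonright(S\cup T_+))^+$) suffices, the entire enabling argument collapsing to the one-line estimate $H(t_0)F(s'',t_0)\le\precond{H}(s'')\le T(s'')$, which uses only $H\geq 0$ and $\bar M+\marking{H}\geq 0$. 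Part (b) then follows from the strengthened (a) exactly as the paper derives (b) from (a), via \refobs{origin} and the convention $0\cdot\infty:=0$. What each approach buys: yours is more local and elementary --- no mutual induction between (a) and (b), and no need to check that a restricted multiset yields a legal marking enabling $H(t)\cdot\{t\}$ --- and your per-path formulation of the induction is the right precaution, since for infinite nets acyclicity alone does not make the order $(F\upharpoonright(S\cup T_+))^+$ well-founded (although the finite support of $H$ would also rescue the order-based phrasing, since the recursion terminates at places of zero supply); the paper's version, by contrast, leaves the lemma statement untouched and its restriction argument directly mirrors the intended token-flow reading, literally reconstructing a firing of $H(t)$ copies of $t$ from the upstream portion of $H$.
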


\begin{proof}
We apply induction on $|H|$.
In the base case, $H=\emptyset$, which formally is included in the induction step,
(a) follows directly from the assumption that $\bar M\in \nat^{S_+}$
and the observation that $^*s = \{s\}$.
\\[.5ex]
(a).
When $(\bar M+\marking{H})(s)=0$ it trivially follows that $(\bar M+\marking{H})(s)\cdot\mbox{}^*s \leq \bar M$.
So suppose $(\bar M+\marking{H})(s)>0$. Then either $s\in {S_+}$ or there is a
unique $t\in T_+$ with $H(t)>0$ and $F(t,s)=1$.
In the first case, using that $s\in\postcond{u}$ for no $u\in T_+$, we
have $(\bar M+\marking{H})(s)\leq \bar M(s)$, so $(\bar M+\marking{H})(s)\cdot\mbox{}^*s
  \leq \bar M(s)\cdot \{s\} \leq \bar M$.
  In the latter case, we have $(\bar M+\marking{H})(s) \leq \bar M(s)+\sum_{u\in T_+}H(u)\cdot F(u,s) =
\bar M(s) + H(t) = H(t)$ and $^*s = \mbox{}^*t$.
Thus:
\begin{equation}\label{Referee3}
 (\bar M + \marking{H})(s) \cdot {}^*s \leq H(t)\cdot {}^*t\;.
\end{equation}

Let $U:=\{u\in T_+\mid H(u)>0 \wedge u F^+ t\}$ be the set of
transitions occurring in $H$ from which the flow relation of the net
offers a non-empty path to $t$. As $F\upharpoonright(S\cup T_+)$ is acyclic,
$t\notin U$, so $H\!\upharpoonright\! U < H$.
Let $s'$ be any place with $s'\in\precond{u}$ for some transition
$u\in U$. Then, by construction of $U$, it cannot happen
that $s'\in \postcond{v}$ for some transition $v\notin U$ with $H(v)>0$.
Hence $(\bar M+\marking{H\!\upharpoonright\! U})(s')\geq (\bar M+\marking{H})(s')\geq 0$.
Moreover, for any other place $s''$ we have
$\precond{(H\!\upharpoonright\! U)}(s'')=0$ and thus
$(\bar M+\marking{H\!\upharpoonright\! U})(s'')\geq \bar M(s'')\geq 0$.
It follows that $\bar M+\marking{H\!\upharpoonright\! U}\in\nat^S$.

For each $s'''\in \precond{t}$ we have
$\postcond{(H-H\!\upharpoonright\! U)}(s''')=0$ and
$\precond{(H-H\!\upharpoonright\! U)}(s''') \ge H(t)\cdot \precond{t} (s''')$ and therefore
$0 \leq (\bar M+\marking{H})(s''') \leq (\bar M+\marking{H\!\upharpoonright\! U})(s''') - H(t)\cdot \precond{t}(s''')$.
For this reason, $H(t)\cdot\precond{t} \leq \bar M+\marking{H\!\upharpoonright\! U}$.
It follows that $(\bar M+\marking{H\!\upharpoonright\! U})[H(t)\cdot \{t\}\rangle$.
Thus, by (\ref{Referee3}) and induction,
$(\bar M+\marking{H})(s)\cdot\mbox{}^*s \leq H(t)\cdot \mbox{}^*t \leq \bar M$.\vspace{1ex}

\noindent
(b).
Let $(\bar M+\marking{H}) [k\cdot\{t\}\rangle$.
For any faithful $s\in \precond{t}$ we have $(\bar M+\marking{H})(s)\geq k\cdot F(s,t)$,
and thus, using (a),\vspace{-1ex}
 $$ k\cdot F(s,t)\cdot\mbox{}^*s
        \leq  (\bar M+\marking{H})(s)\cdot\mbox{}^*s
        \leq \bar M\;.\vspace{1ex}$$
Therefore, by \refobs{origin}, $k\cdot\mbox{}^*t =
\bigcup\{k\cdot F(s,t)\cdot\mbox{}^*s \mid s\in \precond{t} \wedge s {\rm ~faithful}\}\leq \bar M$.
\end{proof}
\noindent
As a (forthcoming) application of \reflem{faithful origins}---in fact the only one we'll need in this
paper---consider the branching split bisimulation with explicit divergence between a net $N'$ and
its conflict replicating implementation $N$ that is constructed according to the proof of \reflem{2ST}.
When a split marking $(M',U')$ is related to $(M,U)$, then
$M+\precond{U}=M'+\precond{U'}+(M_0\mathord-M'_0)+\marking{G}$ for a signed multiset $G$ of internal
transitions of $N$. Furthermore suppose that $G$ is a true multiset over the set of transitions
$T_+$, consisting of $\dist$, $\ini\cdot\fire$ and $\trans{j}\cdot\fire$ only (for arbitrary $p$,
$j$ and $h$). Take $\bar M:=M'+\precond{U'}+(M_0\mathord-M'_0)$, $H:=G$ and thus
$M+\precond{U}=\bar M +\marking{H}$. Let $S_+:=S' \cup \{s \in S \mid (M_0\mathord-M'_0)(s) >0\}$.
Then $~ p ~~ \dist ~~ p_i ~~ \ini[i]\cdot\fire ~~ \Pre^i_j ~~ \exec{j} ~$
is a faithful path from $p$ to $\exec{j}$. The arc weight of this path is
$F'(p,i)$. So \plat{$\mbox{}^*\exec{j} \geq F'(p,i)$.} Thus if \plat{$\exec{j}$} is enabled under $M+\precond{U}$
then $\bar{M}$ must place at least $F'(p,i)$ tokens in the place $p$. As this reasoning applies to
every preplace $p$ of $i$, it follows that $i$ is enabled under $M'+\precond{U'}$.

The following theorem is the main result of this section. It presents a method for proving
$N \approx^\Delta_{bSTb} N'$ for $N$ a net and $N'$ a plain net. Its main advantage w.r.t.\ 
directly using the definition, or w.r.t.\ application of \reflem{1ST} or \ref{lem-2ST}, is
the replacement of requirements on the dynamic behaviour of nets by structural requirements.
Such requirements are typically easier to check.
Replacing the requirement ``$M+\precond{U}\in[M_0\rangle_N$'' in Condition~\ref{lastST} by
``$M+\precond{U}\in\nat^S$'' would have yielded an even more structural version of this
theorem; however, that version turned out not to be strong enough for the verification
task performed in \refsec{correctness}.

\begin{thm}\label{thm-3ST}
Let $N=(S,T,F,M_0,\ell)$ be a net and $N'=(S',T',F',M'_0,\ell')$ be a plain
net with $S'\subseteq S$ and $M'_0=M_0\upharpoonright S'$.
Suppose there exist sets $T_+ \subseteq T$ and $T_-\subseteq T$
and a class $\NF\subseteq \Int^T$, such that
\begin{enumerate}[\em(1)]
\item $F\upharpoonright(S\cup T_+)$ is acyclic.\label{acyclic+ST}
\item $F\upharpoonright(S\cup T_-)$ is acyclic.\label{acyclic-ST}
\item $\forall t\inp T,~\ell(t)\mathbin{\neq}\tau.~ \exists t'\inp T',~\ell(t')\mathbin=\ell(t).~
       \left(\precond{t'} \leq \mbox{}^*t \wedge
       \exists G\fin \nat^T,~\ell(G)\equiv\emptyset.~ \marking{t'}=\marking{t+G}\right)$.\\
  Here $\mbox{}^*t$ is the multiset of faithful origins of $t$ w.r.t.\ $T_+$ and
  $S'\cup\{s\in S \mid M_0(s)>0\}$.\label{matchingST}
\item There exists a function $f:T\rightarrow\nat$ with $f(t)>0$ for all $t\inp T$,
  extended to $\Int^T$ as in \refdf{multiset}, such that
  for each $G\fin \Int^T$ with $\ell(G)\equiv\emptyset$ there is an $H\fin \NF$
  with $\ell(H)\equiv\emptyset$, $\marking{H}=\marking{G}$ and $f(H)=f(G)$.
  \label{normalformST}
\item For every $M'\in\nat^{S'}$, $U'\in\nat^{T'}$ and $U\in\nat^{T}$ with
  $\ell(U)=\ell'(U')$ and $M'+\precond{U'}\in[M'_0\rangle_{N'}$, there is an \plat{$H_{M',U}\fin\nat^{T_+}$}
  with $\ell(H_{M',U})\equiv\emptyset$, such that\label{lastST}
  for each $H\mathbin{\fin} \NF$ with $M:=M'+\precond{U'}+(M_0-M'_0)+\marking{H}-\precond{U}\in\nat^S$
  and $M+\precond{U}\in[M_0\rangle_N$:
  \begin{enumerate}[\em(a)]
  \item $M_{M',U}:=M'+\precond{U'}+(M_0-M'_0)+\marking{H_{M',U}}-\precond{U}\in\nat^S$,\label{markingST}
  \item if $M'\goesto[a]$ with $a\in\Act$ then \plat{$M_{M',U}\goesto[a]$},\label{matchST}
    \item $H\leq H_{M',U}$.\label{upperboundST}
    \item if $H(u)<0$ then $u\in T_-$,\label{T-ST}
    \item if $H(u)<0$ and $H(t)>0$ then $\precond{u} \cap \precond{t} = \emptyset$,
          \label{disjoint preplacesST}
    \item if $H(u)<0$ and $(M+\!\precond{U})[t\rangle$ with $\ell(t)\neq\tau$
          then $\precond{u} \cap \precond{t} = \emptyset$,
          \label{disjoint preplaces 2ST}
    \item if $(M+\!\precond{U})[\{t\}\mathord+\{u\}\rangle$ and
          and $t',u'\in T'$ with $\ell'(t')=\ell(t)$ and $\ell'(u')=\ell(u)$, then
          $\precond{t'}\cap\precond{u'}=\emptyset$.\label{concurrent}
  \end{enumerate}
\end{enumerate}
Then $N \approx^\Delta_{bSTb} N'$.
\end{thm}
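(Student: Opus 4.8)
The plan is to derive \refthm{3ST} from \reflem{2ST}: I will show that hypotheses \refitem{acyclic+ST}--\refitem{lastST} of the theorem guarantee the two conditions of that lemma, which then yields $N \approx^\Delta_{bSTb} N'$ directly. Condition~(1) of \reflem{2ST} is nothing but the second conjunct of hypothesis~\refitem{matchingST} here (the extra clause $\precond{t'} \leq \mbox{}^*t$ is only used later), so it holds for free. The real work is to establish condition~(2) of \reflem{2ST}: given a signed multiset $G\fin\Int^T$ with $\ell(G)\equiv\emptyset$ and markings $M',U',U$ as specified there, writing $M := M'+\precond{U'}+(M_0-M'_0)+\marking{G}-\precond{U}$, I must verify its three subclauses.

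First I would normalise. By hypothesis~\refitem{normalformST} there is an $H\fin\NF$ with $\ell(H)\equiv\emptyset$ and $\marking{H}=\marking{G}$; since only $\marking{G}$ enters the definition of $M$, I may assume $G=H\in\NF$ throughout. Now hypothesis~\refitem{lastST} supplies the canonical multiset $H_{M',U}\fin\nat^{T_+}$ together with the structural facts \refitem{markingST}--\refitem{concurrent}. Setting $\bar M := M'+\precond{U'}+(M_0-M'_0)$, so that $M+\precond{U}=\bar M+\marking{H}$ and $M_{M',U}+\precond{U}=\bar M+\marking{H_{M',U}}$, I think of $M_{M',U}$ as the ``fully prepared'' state reached from $\bar M$ by firing only the forward internal transitions $H_{M',U}$, and of $M$ as a partially prepared variant. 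By \refitem{upperboundST} we have $H\leq H_{M',U}$, so the positive part $H^+$ of $H$ is supported on $T_+$, and by \refitem{T-ST} its negative part $H^-$ is supported on $T_-$.

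Subclause~(2a), absence of divergence, I expect to be the easiest. An infinite sequence $M\goesto[\tau]M_1\goesto[\tau]\cdots$ corresponds to internal transitions $t_1,t_2,\dots$; the accumulated multiset $H+\{t_1,\dots,t_k\}$ is again internal, so \refitem{normalformST} yields a normal form $H'_k\in\NF$ with $f(H'_k)=f(H)+\sum_{i\le k}f(t_i)\geq f(H)+k$, while \refitem{upperboundST} forces $H'_k\leq H_{M',U}$ and hence $f(H'_k)\leq f(H_{M',U})$ ($f$ being non-negative and monotone on multisets). These bounds are incompatible for large $k$. For subclause~(2c), from $M\goesto[a]$, \ie $M[t\rangle$ with $\ell(t)=a$, I would use \refitem{T-ST} and \refitem{disjoint preplaces 2ST} to argue that the backward $T_-$-transitions of $H^-$ leave the preplaces of $t$ untouched, so that the tokens enabling $t$ may be traced through the forward $T_+$-transitions only. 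Then \reflem{faithful origins}, applicable since $F\restrictedto(S\cup T_+)$ is acyclic by \refitem{acyclic+ST}, gives $\mbox{}^*t\leq\bar M$; with $\precond{t'}\leq\mbox{}^*t$ from \refitem{matchingST} this yields $\precond{t'}\leq\bar M\restrictedto S' = M'+\precond{U'}$. Finally, for each $u'\in U'$ its label-matching partner $u\in U$ satisfies $(M+\precond{U})[\{t\}+\{u\}\rangle$ (as $\precond{t}\le M$ and $\precond{u}\le\precond{U}$), so \refitem{concurrent} makes $\precond{t'}$ disjoint from $\precond{u'}$; hence $\precond{t'}\leq M'$ and $M'\goesto[a]$.

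The genuinely delicate subclause is the forward simulation~(2b). Given $M'\goesto[a]$, \refitem{matchST} gives $M_{M',U}\goesto[a]$, say via $t^*$; since $M_{M',U}=M+\marking{D}$ for the non-negative multiset $D:=H_{M',U}-H$ of internal transitions, I must show that either $t^*$ is already enabled at $M$ (so $M\goesto[a]$), or some transition of $D$ is enabled at $M$ (so $M\goesto[\tau]$). This is where I expect the main obstacle: proving that the gap $D$ between $M$ and the canonical marking can always be reduced, \ie that a ``source'' internal transition of $D$ is enabled at $M$ whenever $M\neq M_{M',U}$. The argument must exploit the acyclicity of both $T_+$ and $T_-$ (\refitem{acyclic+ST}, \refitem{acyclic-ST}) to locate a minimal fireable transition, together with the disjointness clause~\refitem{disjoint preplacesST}, which prevents the forward ``catch-up'' steps from competing with the backward ``undo'' steps for the same tokens, so that the intermediate marking never deadlocks short of $M_{M',U}$. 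Once all three subclauses are verified, \reflem{2ST} delivers $N\approx^\Delta_{bSTb}N'$.
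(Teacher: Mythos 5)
Your proposal follows the paper's architecture: you derive the theorem from \reflem{2ST}, observe that Condition~(\ref{clause1ST}) of that lemma is contained in Condition~(\ref{matchingST}) of the theorem, renormalise $G$ into $\NF$ at the outset, prove absence of divergence (\ref{2aST}) exactly as the paper does (renormalising the accumulated multiset and playing the strictly growing value $f(H'_k)\geq f(H)+k$ against the bound $f(H_{M',U})$ obtained from Condition~(\ref{upperboundST})), and your closing move for (\ref{2cST})---$(M+\precond{U})[\{t\}\mathord+\{u\}\rangle$ plus Condition~(\ref{concurrent}) to strip $\precond{U'}$ from $M'+\precond{U'}$---is word for word the paper's.

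Where you genuinely depart from the paper is the first half of (\ref{2cST}). The paper proceeds by induction on $f(H_{M',U}-H)$: while $H_-\neq\emptyset$ it fires a $<_-$-minimal $u\in H_-$ (shown enabled by the same token count as in case (\ref{2bST})), notes via Condition~(\ref{disjoint preplaces 2ST}) that $t$ stays enabled, renormalises, and applies \reflem{faithful origins} only once $H\fin\nat^{T_+}$, at which point $\bar M+\marking{H}$ is an actual reachable marking and the lemma's hypothesis $\bar M+\marking{H}\in\nat^S$ comes for free. You instead split off the positive part $H^+:=H\cup\emptyset$ directly (correctly supported on $T_+$ by Conditions~(\ref{upperboundST}) and~(\ref{T-ST})) and apply the lemma to $H^+$. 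This shortcut does work, and avoids the induction, but as stated it has an unverified hypothesis: \reflem{origin} requires $\bar M+\marking{H^+}\in\nat^S$ \emph{globally}, not merely on $\precond{t}$, and this does not follow from Conditions~(\ref{T-ST}) and~(\ref{disjoint preplaces 2ST}), which are all you invoke---writing $K:=H^+-H\in\nat^T$, dropping the negative part alters the marking by $\marking{K}$, which removes tokens from the preplaces of negatively counted transitions. The repair uses Condition~(\ref{disjoint preplacesST}): for any place $s$, either $s\in\precond{t''}$ for some $t''$ with $H(t'')>0$, in which case (\ref{disjoint preplacesST}) forbids $s\in\precond{u}$ for every $u$ with $H(u)<0$, so $\marking{K}(s)\geq 0$ and $(\bar M+\marking{H^+})(s)\geq(\bar M+\marking{H})(s)\geq 0$; or no positively counted transition consumes from $s$, in which case $\marking{H^+}(s)\geq 0$ and $(\bar M+\marking{H^+})(s)\geq\bar M(s)\geq 0$. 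The same case analysis, now using Condition~(\ref{disjoint preplaces 2ST}) as you do, gives enabledness of $t$ under $\bar M+\marking{H^+}$. With this inserted, your route is correct and arguably slicker than the paper's.

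For (\ref{2bST}) you name the right strategy but do not carry it out, and this is precisely the theorem's heaviest step. For the record, the paper discharges it exactly along your lines, in three cases: if $H_-\neq\emptyset$, a $<_-$-minimal $u\in H_-$ is enabled at $M$, where the token count at $s\in\precond{u}$ needs not only minimality and Condition~(\ref{disjoint preplacesST}) (your ``no competition'' clause) but also Condition~(\ref{disjoint preplaces 2ST}), to rule out consumption of $s$ by the currently firing transitions in $U$ (the $-\precond{U}$ summand in $M$); if $H_-=\emptyset$ but $H\neq H_{M',U}$, a $<_+$-minimal $u$ in the support of $H_{M',U}-H$ is enabled, using Conditions~(\ref{markingST}) and~(\ref{upperboundST}); and if $H=H_{M',U}$ then $M=M_{M',U}$ and Condition~(\ref{matchST}) applies. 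So nothing in your plan would fail, but as submitted it is a plan rather than a proof at this point, and it omits the roles of Conditions~(\ref{markingST}) and~(\ref{disjoint preplaces 2ST}) in making the minimal transition fireable.
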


\begin{proofNobox}
It suffices to show that Condition~\ref{clause2ST} of \reflem{2ST} holds
(as Condition~\ref{clause1ST} of \reflem{2ST} is part of Condition~\ref{matchingST} above).
So let $G\fin \Int^T$ with $\ell(G)\equiv\emptyset$, ~\plat{$M'\inp\nat^{S'}$},
\plat{$U'\inp\nat^{T'}\!$} and $U\inp\nat^T\!$ with $\ell'(U')\mathbin=\ell(U)$,
      ~$M'\mathord+\!\precond{U'}\in [M'_0\rangle_{N'}$,
      ~$M:=M'\mathord+\!\precond{U'}\mathord+(M_0\mathord-M'_0)\mathord+\marking{G}
      \mathord-\!\precond{U}\inp\nat^S$ and $M+\precond{U}\in[M_0\rangle_N$.
\begin{enumerate}[(a)]
\item
Suppose $M\goesto[\tau]M_1\goesto[\tau]M_2\goesto[\tau] \cdots$.
Then there are transitions $t_i\in T$ with $\ell(t_i)=\tau$, for all $i\mathbin\geq 1$, such that
$M[t_1\rangle M_1[t_2\rangle M_2[t_3\rangle \cdots$.
As also $(M+\!\precond{U})[t_1\rangle (M_1+\!\precond{U})[t_2\rangle (M_2+\!\precond{U})[t_3\rangle \cdots$,
it follows that $(M_i+\!\precond{U})\inp[M_0\rangle_N$ for all $i\geq 1$.
Let $G_0:=G$ and for all $i\geq 1$ let $G_{i}:=G_{i-1}+\{t_i\}$.
Then $\ell(G_i)\equiv\emptyset$ and $M_i=M'+\!\precond{U'}+(M_0-M'_0)+\marking{G_i}-\!\precond{U}$.
Moreover, $f(G_{i})=f(G_{i-1})+f(t_i) > f(G_{i-1})$.
For all $i\geq 0$, using Condition~\ref{normalformST},
let $H_i\mathbin{\fin}\NF$ be so that $\marking{H_i}\mathbin=\marking{G_i}$ and $f(H_i)\mathbin=f(G_i)$.
Then $M_i=M'+\!\precond{U'}+(M_0-M'_0)+\marking{H_i}-\!\precond{U}$ and $f(H_0)<f(H_1)<f(H_2)<\cdots$.
However, from Condition~\ref{upperboundST} we get $f(H_i)\leq f(H_{M'})$ for all $i\geq 0$.
The sequence $M\goesto[\tau]M_1\goesto[\tau]M_2\goesto[\tau] \cdots$ therefore must be finite.
\item
Now suppose $M' \goesto[a]$ with $a\in\Act$.
By Condition~\ref{normalformST} above there exists an $H\fin\NF$
such that $\ell(H)\equiv\emptyset$ and $\marking{H}=\marking{G}$, and hence
$M=M'+\!\precond{U'}+(M_0-M'_0)+\marking{H}-\!\precond{U}$.
Let $H_-:=\{u\in T\mid H(u)<0\}$.
\begin{iteMize}{$\bullet$}
\item First suppose $H_-\neq\emptyset$.
By Condition~\ref{T-ST}, $H_-\subseteq T_-$.
  By Condition~\ref{acyclic-ST}, the relation $<_-:=(F\upharpoonright(S\cup T_-))^+$
  is a partial order on $S\cup T_-$, and hence on $H_-$.
  Let $u$ be a minimal transition in $H_-$ w.r.t.\ $<_-$.
  By definition, for all $s\in S$,
  \begin{equation*}
  M(s)=M'(s)+\!\precond{U'}(s)+(M_0-M'_0)(s)+\mbox{}
  \end{equation*}
  \begin{equation}\label{token countST}
  \!\sum_{t\in T}H(t)\cdot F(t,s)+\!\sum_{t\in T}\!-H(t)\cdot F(s,t)
  +\!\sum_{t\in U}\!-U(t)\cdot F(t,s).
  \end{equation}%
  As $M'_0=M_0\upharpoonright S'$, we have $M'_0\leq M_0$.
  Hence the first three summands in this equation are always nonnegative.
  Now assume $s\in\precond{u}$. Since $u$ is minimal w.r.t.\ $<_-$,
  there is no $t\in T$ with $H(t)<0$ and $F(t,s)\neq 0$.
  Hence also all summands $H(t)\cdot F(t,s)$ are nonnegative.
  By Condition~\ref{disjoint preplacesST}, there is no
  $t\in T$ with $H(t)>0$ and $F(s,t)\neq 0$,
  so all summands $-H(t)\cdot F(s,t)$ are nonnegative as well.
  By Condition~\ref{disjoint preplaces 2ST}, there is no
  $t\in T$ with $U(t)>0$ and $F(s,t)\neq 0$,
  for this would imply that $\ell(t)\neq\tau$ and $(M+\!\precond{U})[t\rangle$, so
  no summands in (\ref{token countST}) are negative.
  Thus $0\leq -H(u)\cdot F(s,u) \leq M(s)$.
  Since $H(u)\leq -1$, this implies $M(s)\geq F(s,u)$.
  Hence $u$ is enabled in $M$. As $\ell(u)=\tau$, we have $M\goesto[\tau]$.
\item Next suppose $H_-\!=\emptyset$ but $H\neq H_{M',U}$.
  Let $H^\smile:=\{u\in T\mid H_{M',U}(u)-H(u)>0\}$.
  Then $H^\smile\neq\emptyset$ by Condition~\ref{upperboundST}.
  Since \plat{$H_{M',U}\fin\nat^{T_+}\!\!$}, $H^\smile\subseteq T_+$.
  By Condition~\ref{acyclic+ST}, $<_+:=(F\upharpoonright(S\cup T_+))^+$
  is a partial order on $S\cup T_+$, and hence on $H^\smile$.
  Let $u$ be a minimal transition in $H^\smile$ w.r.t.\ $<_+$.
  We have $M=M'+\!\precond{U'}+(M_0-M'_0)+\marking{H_{M',U}+(H-H_{M',U})}-\!\precond{U}=M_{M',U}+\marking{H-H_{M',U}}$.
  Hence, for all $s\in S$,
  \vspace{-1ex}

  {\small
  \begin{equation}\qquad\qquad\label{token count 2ST}
  M(s)=M_{M',U}(s)+\sum_{t\in T}(H-H_{M',U})(t)\cdot F(t,s)+\sum_{t\in T}-(H-H_{M',U})(t)\cdot F(s,t)\;.
  \end{equation}}%
  By Condition~\ref{markingST}, $M_{M',U}\in\nat^S$.
  By Condition~\ref{upperboundST}, $H- H_{M',U}\leq 0$.
  For $s\in \precond{u}$ there is moreover no $t\in H^\smile$ with $s\in\postcond{t}$,
  so no $t\in T$ with $(H-H_{M',U})(t)<0$ and $F(t,s)\neq 0$.
  Hence no summands in (\ref{token count 2ST}) are negative.
  It thereby follows that $0\leq -(H\mathord-M_{M',U})(u)\cdot F(s,t) \leq M(s)$.
  Since $(H\mathord-H_{M',U})(u)\leq -1$, this implies $M(s)\geq F(s,u)$.
  Hence $u$ is enabled in $M$. As $\ell(u)=\tau$, we have $M\goesto[\tau]$.
\item Finally suppose $H= H_{M',U}$. Then $M=M_{M',U}$ and
  $M\goesto[a]$ follows by Condition~\ref{matchST}.
\end{iteMize}
\item
  Next suppose $M\goesto[a]$ with $a\in\Act$.
  Then there is a $t\in T$ with $\ell(t)=a\neq\tau$ and $M[t\rangle$.
  So $(M+\!\precond{U})[t\rangle$.
  We will first show that $(M'+\!\precond{U'})\goesto[a]$.
  By Condition~\ref{normalformST} there exists an $H_0\fin\NF\subseteq \Int^T$
  such that $\ell(H_0)\equiv\emptyset$ and $\marking{H_0}=\marking{G}$, and hence
  $M+\!\precond{U}=M'+\!\precond{U'}+(M_0-M'_0)+\marking{H_0}\in[M_0\rangle_N$.
  For our first step, it suffices to show that whenever $H \mathbin{\fin} \NF$ with
  $M_H := M'+\!\precond{U'}+(M_0-M'_0)+\marking{H}\inp[M_0\rangle$
  and $M_H [t\rangle$, then $(M'+\!\precond{U'})\goesto[a]$.
  We show this by induction on $f(H_{M',U}-H)$, observing that
  $f(H_{M',U}-H)\in\nat$ by Conditions~\ref{upperboundST} (with empty $U$) and~\ref{normalformST}.

  We consider two cases, depending on the emptiness of $H_-:=\{u\in T\mid H(u)<0\}$.

  First assume $H_-\!\mathbin=\emptyset$. Then $H\mathbin{\fin}\nat^T\!$.
  By Condition~\ref{upperboundST} (with empty $U$) we even have $H\mathbin{\fin}\nat^{T_+}\!\!$. 
  Let $\mbox{}^*t$ denote the multiset of faithful origins of $t$ w.r.t.\ $T_+$ and
  $S_+ := S'\cup\linebreak[3]\{s\in S \mid M_0(s)>0\}$.
  By \reflem{faithful origins}(b), taking $k\mathbin=1$ and $\bar M := M'+\!\precond{U'}+(M_0-M'_0)$,
  and using Condition~\ref{acyclic+ST} of \refthm{3ST},
  $^*t \leq M'+\!\precond{U'}+(M_0-M'_0)$. So by Condition~\ref{matchingST} of \refthm{3ST}
  there is a $t'\in T'$ with $\ell(t')=\ell(t)$ and $\precond{t'} \leq M'+\!\precond{U'}+(M_0-M'_0)$.
  Since $\precond{t'} \in \nat^{S'}$ and $M'_0=M_0\!\upharpoonright\! S'$, this implies
  $\precond{t'} \leq M'+\!\precond{U'}$.
  It follows that $(M'+\!\precond{U'})[t'\rangle_{N'}$ and hence $(M'+\!\precond{U'})\goesto[a]$.

  Now assume $H_- \neq \emptyset$.
  By the same proof as for (b) above, case $H_- \neq \emptyset$,
  there is a transition $u\in H_-$ that is enabled in $M_H$.
  So $M_H[u\rangle M_1$ for some $M_1\in[M_0\rangle_N$, and $M_1=M'+\!\precond{U'}+(M_0-M'_0)+\marking{H+u}$.
  By Condition~\ref{disjoint preplaces 2ST} of \refthm{3ST} (still with empty $U$),
  $\precond{u}\cap\precond{t}=\emptyset$, and thus $M_1[t\rangle$.
  By Condition~\ref{normalformST} of \refthm{3ST} there exists an $H_1\mathbin{\fin}\NF$
  such that $\ell(H_1)\mathbin{\equiv}\emptyset$, $\marking{H_1}\mathbin=\marking{H+u}$, and
  $f(H_1)\mathbin=f(H+u)\mathbin>f(H)$. Thus $M_1=M_{H_1}$ and $f(H_{M',U}-H_1)<f(H_{M',U}-H)$.
  By induction we obtain $(M'+\!\precond{U'})\goesto[a]$.

  By the above reasoning, there is a $t'\in T'$ such that
  $\ell'(t')=\ell(t)$ and $(M'+\!\precond{U'})[t'\rangle$.
  Now take any $u'\in U'$. Then there must be an $u\in U$ with
  $\ell'(u')=\ell(u)$. Since $M[t\rangle$, we have $(M+\!\precond{U})[\{t\}\mathord+\{u\}\rangle$ and
  by Condition~\ref{concurrent} we obtain $\precond{t'}\cap\precond{u'}=\emptyset$.
  It follows that $M'[t'\rangle$, and hence $M'\goesto[a]$.
  \qed
\end{enumerate}
\end{proofNobox}

\noindent
\refthm{3ST} will be applied in \refsec{correctness} to show the correctness of our conflict
replicating implementation $N$ of a given net $N'$. A crucial observation about $N$ is that
its internal transitions can be partitioned into a set $T_+$ of transitions (3 boxes in
\reffig{conflictrepl-expanded}) that have to occur before firing $\exec{j}$ (for some $i$ and $j$)
and a set $T_-$ of transitions (14 boxes) that can only occur afterwards. In the construction of our
bisimulation we consider markings of the form $M'+\precond{U'}+(M_0\mathord-M'_0)+\marking{H}$,
where $H$ is a signed multiset of internal transitions that tells how much the marking deviates from
the marking $M'+\precond{U'}+(M_0\mathord-M'_0)$ of $N$. The bisimulation relates both markings of
$N$ to the marking $M'+\precond{U'}$ of $N'$. When an internal transition of $N$ fires, the related
marking of $N'$ remains the same. However, when $N$ fires a visible transition $\exec{j}$ then the
related marking of $N'$ becomes $M'+\precond{U'} + \marking{i}$, so in view of the structural
property in \reflem{2ST}(1), a new set $H'$ can be calculated as $H':=H-G$, where $G$ is the signed
multiset for which $\marking{i}=\marking{\exec{j}+G}$. A consequence of this is that elements of
$T_+$ only occur with positive multiplicities in $H$, whereas elements of $T_-$ occur only with
negative multiplicities. 

To be precise, it may be that two different sets $H_1$ and $H_2$ yield the same token replacement,
i.e.\ $\marking{H_1}=\marking{H_2}$. As a result of this, there may be multiple ways to write a
marking as $M'+\precond{U'}+(M_0\mathord-M'_0)+\marking{H}$ for given $M'$ and $U'$.
The above applies only when converting the signed multisets $H$ to a normal form $\NF$ that
eliminates this ambiguity.

For given $M'$ and $U'$, the multiset $H_{M',U}$ is an upper bound of the possible choices of $H$
for which $M'+\precond{U'}+(M_0\mathord-M'_0)+\marking{H}$ can be a reachable marking. This is
expressed by Condition~\ref{upperboundST}. If all internal transitions in $H_{M',U}$ have fired, the
next transition must be an external one. Now the conditions of \refthm{3ST} guarantee that as long
as this upper bound is not reached, the net $N$ can perform internal actions, and when it is reached
(and possibly also beforehand) it can perform the same actions as the net $N'$ under marking $M'$.
Condition~\ref{normalformST} moreover guarantees that this upper bound will be reached in finitely
many steps. Due the the need to renormalise the signed multisets $H$ after adding elements to them,
this is not straightforward.

These considerations imply that transitions fired by $N'$ can be simulated by $N$. The other
direction involves similar arguments, together with an application of \reflem{faithful origins}.

\subsection*{Digression: Interleaving semantics}

Above, a method is presented for establishing the equivalence of two Petri nets, one of
which known to be \hyperlink{plain}{plain}, up to branching ST-bisimilarity with explicit divergence.
Here, we simplify this result into a method for establishing the equivalence of the two
nets up interleaving branching bisimilarity with explicit divergence.
This result is not applied in the current paper.

\begin{lem}\label{lem-1}
Let $N=(S,T,F,M_0,\ell)$ and $N'=(S',T',F',M'_0,\ell')$ be two nets, $N'$ being plain.
Suppose there is a relation $\Rel  \subseteq \nat^{S}\times\nat^{S'}$ such that
\begin{enumerate}[\em(a)]
\item $M_0\Rel M'_0$,
\item if $M_1\Rel M_1'$ and
  $M_1\goesto[\tau]M_2$ then $M_2\Rel M_1'$,
\item if $M_1\Rel M_1'$ and
  $M_1\goesto[a]M_2$ for some $a\in\Act$ then
  $\exists M'_2.~M'_1\goesto[a]M'_2 \wedge M_2\Rel  M'_2$,
\item if $M_1\Rel M_1'$ and
  $M'_1\goesto[a]$ for some $a\in\Act$ then either
  $\mathord{M_1 \goesto[a]}$ or $\mathord{M_1 \goesto[\tau]}$
\item and there is no infinite sequence $M\goesto[\tau] M_1\goesto[\tau] M_2\goesto[\tau] \cdots$
  with $M\Rel M'$ for some $M'$.
\end{enumerate}
Then $N$ and $N'$ are interleaving branching bisimilar with explicit divergence. 
\end{lem}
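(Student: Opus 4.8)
The plan is to treat this statement as the interleaving counterpart of \reflem{1ST} and to reduce it, in exactly the same manner, to the abstract \reflem{plain branching bisimilarity}. First I would instantiate the two labelled transition systems $(\st_1,\tr_1,\inist_1)$ and $(\st_2,\tr_2,\inist_2)$ of \reflem{plain branching bisimilarity} as the \emph{interleaving} LTSs associated to $N$ and $N'$: in both, the states are the markings of the respective net, the initial states are $M_0$ and $M'_0$, the alphabet is $\act:=\Act$, and the transition relation is the one with $M_1\goesto[\alpha]M_2$ iff some transition $t$ with $\ell(t)=\alpha$ satisfies $M_1[t\rangle M_2$. Under this reading the relation $\Rel\subseteq\nat^S\times\nat^{S'}$ of the lemma is precisely a relation between the state sets of the two LTSs, as \reflem{plain branching bisimilarity} requires.

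The key ingredient I would invoke is that the hypothesis of \reflem{plain branching bisimilarity}---that the second LTS be \emph{deterministic} in the sense of \refdf{deterministic}---is automatically satisfied here: since $N'$ is \hyperlink{plain}{plain}, the interleaving LTS associated to it is deterministic, as already observed immediately below \refdf{deterministic}. Thus the determinism requirement comes for free from plainness of $N'$, and no separate argument is needed.

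Once the LTSs are fixed, Conditions (a)--(e) of this lemma are literally Conditions (a)--(e) of \reflem{plain branching bisimilarity}, read off in the interleaving LTSs with the $\mm$'s ranging over markings and the $\alpha$'s over $\Act_\tau$: clause (b) governs $\goesto[\tau]$-steps (firings of internal transitions), clauses (c) and (d) govern $\goesto[a]$-steps for visible $a$, and clause (e) is the divergence condition. Hence \reflem{plain branching bisimilarity} applies verbatim and yields that $\Rel$ is a branching bisimulation with explicit divergence between these two interleaving LTSs, i.e.\ that the LTSs are branching bisimilar with explicit divergence. By the definition of interleaving branching bisimilarity (\refdf{branching LTS} with markings as states and actions from $\Act_\tau$), this is exactly the assertion that $N$ and $N'$ are interleaving branching bisimilar with explicit divergence.

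I expect no real obstacle. Unlike the proof of \reflem{1ST}, which additionally had to pass from branching split bisimilarity to branching ST-bisimilarity via \refpr{split}, here the interleaving LTS \emph{directly} defines the target equivalence, so the reduction to \reflem{plain branching bisimilarity} is the whole argument and this proof is in fact one step shorter. The only point requiring a moment's care is to confirm that, with $\act:=\Act$ and the transition relation chosen above, Conditions (a)--(e) of this lemma match those of \reflem{plain branching bisimilarity} on the nose---which they do.
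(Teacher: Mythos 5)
Your proposal is correct and coincides with the paper's own proof: the paper likewise derives this lemma directly from \reflem{plain branching bisimilarity} by instantiating the two LTSs as the interleaving LTSs of $N$ and $N'$, using that the LTS associated to a plain net is deterministic. Your observation that no analogue of the \refpr{split} step from \reflem{1ST} is needed here is also accurate.
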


\begin{proof}
This follows directly from \reflem{plain branching bisimilarity}
by taking $(\st_1,\tr_1,\inist_1)$ and $(\st_2,\tr_2,\inist_2)$ to be
the interleaving LTSs associated to $N$ and $N'$ respectively, using the fact
that the LTS associated to a plain net is deterministic.
\end{proof}

\begin{lem}\label{lem-2}
Let $N=(S,T,F,M_0,\ell)$ be a net and $N'=(S',T',F',M'_0,\ell')$ be a plain
net with $S'\subseteq S$ and $M'_0=M_0\upharpoonright S'$.
Suppose:
\begin{enumerate}[\em(1)]
\item $\forall t\inp T,~\ell(t)\neq\tau.~ \exists t'\inp T',~\ell(t')=\ell(t).~
       \exists G\fin \nat^T,~\ell(G)\equiv\emptyset.~ \marking{t'}=\marking{t+G}$.
      \label{clause1}
\item For any $G\mathbin{\fin} \Int^T$ with $\ell(G)\mathbin\equiv\emptyset$, $M'\inp[M'_0\rangle_{N'}$
      and $M:=M'\mathord+(M_0\mathord-M'_0)\mathord+\marking{G}\inp[M_0\rangle_N$, it holds that:\label{clause2}
\begin{enumerate}[\em(a)]
\item there is no infinite sequence $M\goesto[\tau] M_1\goesto[\tau] M_2\goesto[\tau] \cdots$,\label{2a}
\item if $M' \goesto[a]$ with $a\in \Act$
  then $\mathord{M \goesto[a]}$ or $\mathord{M \goesto[\tau]}$\label{2b}
\item and if $M \goesto[a]$ with $a\in \Act$ then $M' \goesto[a]$.\label{2c}
\end{enumerate}
\end{enumerate}
Then $N$ and $N'$ are interleaving branching bisimilar with explicit divergence.
\end{lem}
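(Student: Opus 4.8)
The plan is to deduce the statement from \reflem{1} by exhibiting one concrete relation $\Rel \subseteq \nat^S\times\nat^{S'}$ and verifying its five clauses (a)--(e). The relation to use is the interleaving simplification of the branching split bisimulation built in the proof of \reflem{2ST}, obtained by discarding the firing-multiset components $U,U'$; namely
\[
M \Rel M' \ :\Leftrightarrow\ M'\in[M'_0\rangle_{N'} \ \wedge\ \exists G\fin\Int^T.~\ell(G)\equiv\emptyset \ \wedge\ M = M' + (M_0-M'_0) + \marking{G} \in [M_0\rangle_N .
\]
Intuitively $M'$ ranges over the reachable markings of $N'$, while $M$ is $M'$ read as a marking of $N$ (padded with the default tokens on $S\setminus S'$, which is meaningful because $S'\subseteq S$ and $M'_0=M_0\upharpoonright S'$) and then perturbed by the net token replacement $\marking{G}$ of a signed multiset $G$ of internal transitions of $N$.

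First I would dispatch the routine clauses. Clause (a) holds with $G=\emptyset$. Clause (b) is an absorption step: if $M_1\Rel M'_1$ is witnessed by $G$ and $M_1[t\rangle M_2$ with $\ell(t)=\tau$, then $M_2 = M'_1 + (M_0-M'_0) + \marking{G+t}$, the marking $M_2$ stays in $[M_0\rangle_N$ because $M_1$ does, and $\ell(G+t)\equiv\emptyset$ since $\ell(t)=\tau$; hence $M_2\Rel M'_1$ via the new witness $G+t$. Clauses (d) and (e) are immediate: with the $G$, $M'$, $M$ witnessing $M_1\Rel M'_1$, clause (d) is exactly Condition~\ref{2b} and clause (e) is exactly Condition~\ref{2a}.

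The crux, and the step I expect to be the main obstacle, is clause (c): the forward simulation of a visible step. Suppose $M_1\Rel M'_1$ witnessed by $G$, and $M_1[t\rangle M_2$ with $\ell(t)=a\in\Act$. Condition~\ref{2c} gives $M'_1\goesto[a]$, and since $N'$ is \emph{plain} there is a unique $t'\in T'$ with $\ell'(t')=a$; firing it yields $M'_1[t'\rangle M'_2$ with $M'_2=M'_1+\marking{t'}\in[M'_0\rangle_{N'}$. It then remains to produce a witness $G'$ for $M_2\Rel M'_2$, which unwinds to the requirement $\marking{G'}=\marking{G}+\marking{t}-\marking{t'}$ (together with $\ell(G')\equiv\emptyset$). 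This is where Condition~\ref{clause1} is used: it supplies some $t''\in T'$ with $\ell(t'')=a$ and a $G_t\fin\nat^T$ with $\ell(G_t)\equiv\emptyset$ and $\marking{t''}=\marking{t+G_t}$; plainness of $N'$ forces $t''=t'$, so $\marking{t}-\marking{t'}=-\marking{G_t}$ and I may take $G':=G-G_t$, still internal. The delicate point is precisely this identification of the transition $t'$ fired in $N'$ with the one furnished by Condition~\ref{clause1}, plus the bookkeeping keeping $\marking{G'}$ $\tau$-labelled; it is exactly the place where plainness of $N'$ is indispensable. With clause (c) established, \reflem{1} yields that $N$ and $N'$ are interleaving branching bisimilar with explicit divergence.
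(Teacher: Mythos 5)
Your proposal is correct and matches the paper's own proof essentially verbatim: the same relation $M \Rel M' :\Leftrightarrow M'\inp[M'_0\rangle_{N'} \wedge \exists G\fin\Int^T.~\ell(G)\equiv\emptyset \wedge M=M'+(M_0-M'_0)+\marking{G}\inp[M_0\rangle_N$, the same routine verification of clauses (a), (b), (d), (e) of \reflem{1}, and the same treatment of clause (c), where plainness of $N'$ forces the transition supplied by Condition~(\ref{clause1}) to coincide with the unique $a$-labelled $t'\in T'$, yielding the new witness $G-G_t$.
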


\begin{proofNobox}
Define $\Rel \subseteq \nat^{S}\times \nat^{S'}$ by
$$M \Rel  M' :\Leftrightarrow M'\inp[M'_0\rangle_{N'} \wedge \exists
G\mathbin{\fin}\Int^T.~M=M'\mathord+(M_0\mathord-M'_0)\mathord+\marking{G}\inp[M_0\rangle_N
\wedge \ell(G)\equiv\emptyset.$$
It suffices to show that $\Rel $ satisfies Conditions (a)--(e) of \reflem{1}.
\begin{enumerate}[(a)]
\item Take $G=\emptyset$.
\item Suppose $M_1\Rel M_1'$ and $M_1\goesto[\tau]M_2$.
  Then $\exists G\fin\Int^T\!\!.~M_1=M'_1+(M_0-M'_0)+\marking{G} \wedge \ell(G)\equiv\emptyset$
  and $\exists t\mathbin\in T.~\ell(t)=\tau \wedge M_2=M_1+\marking{t}=M'_1+(M_0-M'_0)+\marking{G+t}$.
  Moreover, $M_1\in[M_0\rangle_N$ and hence $M_2\in[M_0\rangle_N$.
  Furthermore, $M_1'\in[M'_0\rangle_{N'}$ and $\ell(G+t)\equiv\emptyset$, so $M_2\Rel M'_1$.
\item Suppose $M_1\Rel M_1'$ and $M_1\goesto[a]M_2$.
  Then $\exists G\fin\Int^T\!\!.~M_1=M'_1+(M_0-M'_0)+\marking{G} \wedge \ell(G)\equiv\emptyset$
  and $\exists t\mathbin\in T.~\ell(t)=a\neq\tau \wedge M_2=M_1+\marking{t}=M'_1+(M_0-M'_0)+\marking{G+t}$.
  Moreover, $M_1\in[M_0\rangle_N$ and hence $M_2\in[M_0\rangle_N$.
  Furthermore, $M_1'\in[M'_0\rangle_{N'}$.
  By Condition~\ref{clause1} of \reflem{2}, $\exists t'\inp T',~\ell(t')\mathbin=\ell(t).\linebreak[3]\
       \exists G_t\mathbin{\fin} \nat^T,~\ell(G_t)\equiv\emptyset.~ \marking{t}=\marking{t'-G_t}$.
  Substitution of $\marking{t'-G_t}$ for $t$ yields $M_2=M'_1+\marking{t'}+(M_0\mathord-M'_0)+\marking{G-G_t}$.
  By Condition~\ref{2c}, $M'_1\goesto[a]$, so $M'_1\goesto[a] M'_2$ for some $M'_2\in[M'_0\rangle_{N'}$.
  As $t'$ is the only transition in $T'$ with $\ell'(t')=a$, we must have $M'_1[t'\rangle M'_2$.
  So $M'_1+\marking{t'}=M'_2$. Since $\ell(G-G_t)\equiv\emptyset$ it follows that $M_2\Rel M'_2$.
\item Follows directly from Condition~\ref{2b}.
\item Follows directly from Condition~\ref{2a}.
  \qed
\end{enumerate}
\end{proofNobox}
\noindent
The above is a variant of \reflem{2ST} that requires Condition~\ref{clause2ST} only
for $U=U'=\emptyset$, and allows to conclude that $N$ and $N'$ are interleaving branching
bisimilar (instead of branching ST-bisimilar) with explicit divergence.
Likewise, the below is a variant of \refthm{3ST} that requires Condition~\ref{lastST} only
for $U=U'=\emptyset$, and misses Condition~\ref{concurrent}.

\begin{thm}\label{thm-3}
Let $N=(S,T,F,M_0,\ell)$ be a net and $N'=(S',T',F',M'_0,\ell')$ be a plain
net with $S'\subseteq S$ and $M'_0=M_0\upharpoonright S'$.
Suppose there exist sets $T_+ \subseteq T$ and $T_-\subseteq T$
and a class $\NF\subseteq \Int^T$, such that
\begin{enumerate}[\em(1)-(4)]
\item Conditions~{\rm(\ref{acyclic+ST})--(\ref{normalformST})} from \refthm{3ST} hold, and
\item[\em(5)] For every reachable marking $M'\in [M'_0\rangle_{N'}$ there is an $H_{M'}\fin\nat^{T_+}$
  with $\ell(H_{M'})\equiv\emptyset$, such that for each $H\fin \NF$ with
  $M:=M'+(M_0-M'_0)+\marking{H}\in[M_0\rangle_N$ one has:
  \begin{enumerate}[\em(a)]
  \item $M_{M'}:=M'+(M_0-M'_0)+\marking{H_{M'}}\in\nat^S$,\label{marking}
  \item if $M'\goesto[a]$ with $a\in\Act$ then $M_{M'}\goesto[a]$,
    \item $H\leq H_{M'}$,\label{upperbound}
    \item if $H(u)<0$ then $u\in T_-$,\label{T-}
    \item if $H(u)<0$ and $H(t)>0$ then $\precond{u} \cap \precond{t} = \emptyset$,
          \label{disjoint preplaces}
    \item if $H(u)<0$ and $M[t\rangle$ with $\ell(t)\neq\tau$ then $\precond{u} \cap \precond{t} = \emptyset$.
          \label{disjoint preplaces 2}
  \end{enumerate}
\end{enumerate}
Then $N$ and $N'$ are interleaving branching bisimilar with explicit divergence.
\end{thm}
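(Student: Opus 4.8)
The plan is to obtain Theorem~\ref{thm-3} from \reflem{2} in exactly the manner in which \refthm{3ST} is obtained from \reflem{2ST}. Since Condition~(1) of \reflem{2} coincides with Condition~(\ref{matchingST}) of \refthm{3ST}, which is among the hypotheses here, it suffices to verify Condition~(\ref{clause2}) of \reflem{2}. So I would fix $G\fin\Int^T$ with $\ell(G)\equiv\emptyset$, a reachable $M'\inp[M'_0\rangle_{N'}$, and assume $M:=M'+(M_0\mathord-M'_0)+\marking{G}\inp[M_0\rangle_N$, and then establish the three clauses (a)--(c). The whole argument is the specialisation of the proof of \refthm{3ST} to $U=U'=\emptyset$: every occurrence of $\precond{U}$ and $\precond{U'}$ drops out of the marking equations, and the concluding concurrency step of \refthm{3ST} (the sole user of Condition~(\ref{concurrent})) becomes vacuous, which is precisely why \refthm{3} omits that condition.

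For clause (a) I would reproduce the divergence argument. From a hypothetical infinite sequence $M\goesto[\tau]M_1\goesto[\tau]M_2\goesto[\tau]\cdots$ fired by internal transitions $t_i$, set $G_i:=G+\{t_1,\dots,t_i\}$, and use Condition~(\ref{normalformST}) of \refthm{3ST} to select normal forms $H_i\fin\NF$ with $\marking{H_i}=\marking{G_i}$ and $f(H_i)=f(G_i)$, so that $f(H_0)<f(H_1)<f(H_2)<\cdots$. Since $M_i=M'+(M_0\mathord-M'_0)+\marking{H_i}\inp[M_0\rangle_N$, Condition~(5)(c) bounds every $f(H_i)$ by $f(H_{M'})$, contradicting strict monotonicity; hence the $\tau$-sequence is finite.

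For clause (b) I would assume $M'\goesto[a]$ and, via Condition~(\ref{normalformST}), pick $H\fin\NF$ with $\marking{H}=\marking{G}$, so $M=M'+(M_0\mathord-M'_0)+\marking{H}$, and split on $H_-:=\{u\mid H(u)<0\}$. If $H_-\neq\emptyset$, Condition~(5)(d) gives $H_-\subseteq T_-$; using that $<_-:=(F\restrictedto(S\cup T_-))^+$ is a partial order (Condition~(\ref{acyclic-ST}) of \refthm{3ST}) I take a $<_-$-minimal $u\in H_-$, and the token-count equation for a preplace $s\in\precond{u}$---now free of $U$-terms---has only nonnegative summands (minimality handles the $F(t,s)$ terms, Condition~(5)(e) the $F(s,t)$ terms), whence $u$ is enabled and $M\goesto[\tau]$. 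If $H_-=\emptyset$ but $H\neq H_{M'}$, I work instead with $H^\smile:=\{u\mid H_{M'}(u)-H(u)>0\}\subseteq T_+$, take a minimal element for $<_+:=(F\restrictedto(S\cup T_+))^+$ (Condition~(\ref{acyclic+ST})), and conclude $M\goesto[\tau]$ by the same counting argument together with Conditions~(5)(a) and~(5)(c). Finally, if $H=H_{M'}$ then $M=M_{M'}$ and $M\goesto[a]$ follows from the matching clause~(5)(b).

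For clause (c) I would assume $M\goesto[a]$, i.e.\ $M[t\rangle$ with $\ell(t)=a\neq\tau$, and prove $M'\goesto[a]$ by induction on $f(H_{M'}-H)$ (a natural number by Conditions~(5)(c) and~(\ref{normalformST})). In the base case $H_-=\emptyset$, so $H\fin\nat^{T_+}$, and \reflem{faithful origins}(b) with $\bar M:=M'+(M_0\mathord-M'_0)$ and $k=1$ gives $\mbox{}^*t\leq M'+(M_0\mathord-M'_0)$; Condition~(\ref{matchingST}) then yields $t'\inp T'$ with $\ell(t')=\ell(t)$ and $\precond{t'}\leq\mbox{}^*t$, and as $\precond{t'}\inp\nat^{S'}$ and $M'_0=M_0\restrictedto S'$ this restricts to $\precond{t'}\leq M'$, so $M'[t'\rangle$ and $M'\goesto[a]$. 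In the inductive step $H_-\neq\emptyset$, the clause-(b) argument produces an enabled $u\in H_-$; firing it and invoking Condition~(5)(f) to keep $t$ enabled, I renormalise $H+\{u\}$ to some $H_1\fin\NF$ with $f(H_1)>f(H)$, which lowers $f(H_{M'}-H_1)$ and closes the induction. Because $U'=\emptyset$, the transition $t'$ so obtained already satisfies $M'[t'\rangle$, so no concurrency step is needed. The main obstacle I anticipate is not a new idea but the bookkeeping in clause~(c): ensuring that after firing the minimal negative transition the renormalised $H_1$ still lies in $\NF$, realises $\marking{H_1}=\marking{H+\{u\}}$, strictly increases $f$, and preserves the reachability side-condition $M_1\inp[M_0\rangle_N$ at each step; conceptually, the only thing to confirm is that discarding $U,U'$ and Condition~(\ref{concurrent}) loses nothing, i.e.\ that in \refthm{3ST} these served solely to strip the $\precond{U'}$ contribution at the very end of clause~(c).
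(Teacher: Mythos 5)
Your proposal is correct and takes essentially the same route as the paper, whose entire proof of this theorem reads ``A straightforward simplification of the proof of \refthm{3ST}''---that is, precisely the specialisation to $U=U'=\emptyset$ that you carry out, with \reflem{2} playing the r\^ole of \reflem{2ST}. Your diagnosis that Condition~(\ref{concurrent}) of \refthm{3ST} is used only in the final concurrency step of clause~(c), which becomes vacuous when $U'=\emptyset$, is exactly why the paper can omit it here.
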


\begin{proof}
A straightforward simplification of the proof of \refthm{3ST}.
\end{proof}

\section{The Correctness Proof}\label{sec-correctness}

We now apply the preceding theory to prove the correctness of the conflict replicating
implementation.

\begin{thm}\label{thm-correctness}
Let $N'$ be a finitary plain structural conflict net without a
fully reachable pure $\structuralM$. Then $\impl{N'} \approx^\Delta_{bSTb} N'$.
\end{thm}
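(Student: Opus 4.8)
The plan is to apply \refthm{3ST} with $N:=\impl{N'}$ and the given plain net $N'$. Since $\impl{N'}$ inherits the places of $N'$ we have $S'\subseteq S$, and by construction $M_0\restrictedto S'=M'_0$, so the hypotheses of the theorem are satisfied. What remains is to exhibit the sets $T_+,T_-\subseteq T$ and the normal-form class $\NF\subseteq\Int^T$, and then to verify Conditions (\ref{acyclic+ST})--(\ref{lastST}); from these \refthm{3ST} yields $\impl{N'}\approx^\Delta_{bSTb}N'$ directly.

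First I would fix the data. Let $T_+$ consist of the transitions $\dist$, $\ini\cdot\fire$ and $\trans{j}\cdot\fire$---the three kinds of internal transition that must fire \emph{before} an $\exec{j}$ can---and let $T_-$ consist of the fourteen kinds of cleanup and undo transitions of \reftab{reversible}, which can fire only \emph{after} an $\exec{j}$. The class $\NF$ collects those signed multisets in which every transition of $T_+$ occurs with non-negative and every transition of $T_-$ with non-positive multiplicity, chosen canonically so that $\marking{H}$ determines $H$ (this eliminates the ambiguity caused by distinct multisets inducing the same token replacement). Conditions (\ref{acyclic+ST}) and (\ref{acyclic-ST}) are then read off \reffig{conflictrepl-expanded}: neither $F\restrictedto(S\cup T_+)$ nor $F\restrictedto(S\cup T_-)$ contains a cycle.

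For Condition (\ref{matchingST}) the only visible transitions of $\impl{N'}$ are the $\exec{j}$, each labelled $\ell(i)$, and the matching transition of $N'$ is $i$ itself. I would verify $\precond{i}\le{}^*\!\exec{j}$ using, for each $p\in\precond{i}$, the faithful path $p~\dist~p_i~\ini[i]\cdot\fire~\Pre^i_j~\exec{j}$ of arc weight $F'(p,i)$, exactly as sketched below \reflem{faithful origins}. The witnessing $G$ with $\marking{i}=\marking{\exec{j}+G}$ is assembled from the internal transitions that complete the firing protocol: one $\dist$ per preplace $p$ of $i$, one $\fetch$ per $p\in\precond{i}$ and $c\in\postcond{p}$, together with $\fetched{j}$, the transitions $u\cdot\elide$ for $u\in\UIij$, and $\comp{j}$. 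Condition (\ref{normalformST}) is obtained by choosing $f$ so that firing any $T_+$ transition strictly increases, and its $T_-$ undo strictly decreases, the weight $f$, and by normalising each $G$ into an $H\fin\NF$ with $\marking{H}=\marking{G}$ and $f(H)=f(G)$.

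The main work---and the expected obstacle---is Condition (\ref{lastST}). For given $M',U$ (with $M'+\precond{U'}\in[M'_0\rangle_{N'}$ and $\ell(U)=\ell'(U')$) I would define $H_{M',U}$ as the maximal multiset of $T_+$ transitions whose firing is licensed by the reachable marking $M'+\precond{U'}$, namely the $\dist$, $\ini\cdot\fire$ and $\trans{j}\cdot\fire$ preparing the execution of each $N'$-transition enabled at $M'+\precond{U'}$. Verifying (\ref{markingST})--(\ref{concurrent}) then amounts to analysing which internal transitions can have fired from a marking of the form $M'+\precond{U'}+(M_0-M'_0)+\marking{H}-\precond{U}$. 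Properties (\ref{upperboundST}) and (\ref{T-ST}) follow from the construction of $H_{M',U}$ and the sign discipline of $\NF$; (\ref{markingST}) and (\ref{matchST}) follow by converting enabledness of $\exec{j}$ under $M+\precond{U}$ back into enabledness of $i$ under $M'+\precond{U'}$ via the faithful-origins bound of \reflem{faithful origins}(b). Here the hypothesis that $N'$ is a structural conflict net \emph{without a fully reachable pure \structuralM} is indispensable: it is precisely what guarantees that two concurrent transitions never both steal the initiative from a common enabled transition, so that the preplaces governed by oppositely-signed entries of $H$---and those of concurrently enabled visible transitions---are genuinely disjoint, which is the content of (\ref{disjoint preplacesST}), (\ref{disjoint preplaces 2ST}) and (\ref{concurrent}). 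Once these seven sub-conditions are established, \refthm{3ST} delivers the claim. I expect the bookkeeping for (\ref{lastST}), rather than any single deep idea, to be the most delicate part, since it requires a precise description of the reachable configurations of the orchestration protocol.
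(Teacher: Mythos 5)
Your skeleton is exactly the paper's: apply \refthm{3ST} with $T_+=\{\dist,\ \ini\cdot\fire,\ \trans{j}\cdot\fire\}$, $T_-$ the cleanup transitions, the faithful path $p~\dist~p_i~\ini[i]\cdot\fire~\Pre^i_j~\exec{j}$ for Condition~\ref{matchingST}, the same witness $G$ realising $\marking{i}=\marking{\exec{j}+G}$ (though you should take $F'(p,i)$ copies of $\dist$ and of each $\fetch$, not one), and essentially the paper's $H_{M',U}$. However, two of your instantiations would make the appeal to \refthm{3ST} fail as stated. First, your class $\NF$ (all $T_+$-entries non-negative, all $T_-$-entries non-positive, one representative per token replacement) cannot satisfy Condition~\ref{normalformST}: writing $t(\omega)$ for the full cycle $t\cdot\fire+t\cdot\undo[\omega]+\sum_{f\in t^{\,\it far}}t\cdot\und+t\cdot\undone+t\cdot\reset[\omega]$, the only marking-preserving rewriting available is the congruence generated by $t(\omega)\equiv t\cdot\elide[\omega]$, so a multiset such as $G=t(\omega)$---which genuinely arises, since a transition that fired and was later undone and reset contributes exactly this cycle---admits \emph{no} marking-equivalent representative obeying your sign discipline: every representative $\{t\cdot\elide[\omega]\}+k\cdot\bigl(t(\omega)-t\cdot\elide[\omega]\bigr)$ carries a strictly positive count on some $T_-$-transition. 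This is why the paper's $\NF$ constrains only the signs of $t\cdot\elide[\omega]$, $t\cdot\undo[\omega]$ and $t\cdot\fire$ (conditions NF-1--NF-3), derives $H(u)\leq 0$ for $u\in T_-$ only for those $H$ compatible with a reachable marking (via the place-invariant inequations of \refcl{G-properties}), and never needs uniqueness of normal forms. Second, your weight function (``$T_-$ undos strictly decrease $f$'') contradicts the hypothesis $f(t)>0$ for \emph{all} $t\in T$ in Condition~\ref{normalformST}; the termination argument of \refthm{3ST} is not a decreasing potential on undos, but that \emph{every} internal firing strictly increases $f(G)$, while normalisation is $f$-invariant and Condition~\ref{upperboundST} bounds $f(H)$ by $f(H_{M',U})$. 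The correct choice is $f\equiv 1$ except $f(t\cdot\elide[\omega]):=f(t(\omega))$, precisely so that applying the congruence preserves $f$.

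Beyond this, your $H_{M',U}$ is under-specified in a way that matters for Condition~\ref{markingST}: the paper includes $\trans[h]{j}\cdot\fire$ only when no $\exec[g]{h}\in U$ is currently firing (otherwise $\precond{U}$ has already consumed $\pi_{h\#j}$ and $M_{M',U}(\pi_{h\#j})$ would go negative), and the $\dist$-coefficient is the token count $(M'+\precond{U'})(p)$ rather than ``per enabled transition''. Finally, you are right that Condition~\ref{lastST} is where the work lies and right about where the absence of a fully reachable pure \structuralM{} enters, but be aware that discharging it is not mere bookkeeping on top of the above: the paper needs a sequence of six claims---place-invariant inequations, reachability invariants (A)--(N) established by induction along firing sequences, a normal-form kernel lemma (\refcl{0}), and \refcl{D}---with the no-\structuralM{} hypothesis invoked separately in several of them (e.g.\ invariants (J), (K) and the verification of \ref{markingST} and \ref{matchST}), before conditions (\ref{markingST})--(\ref{concurrent}) can be verified. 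So: the approach is the paper's, but the definitions of $\NF$ and $f$ must be replaced by the paper's, and the invariant machinery for Condition~\ref{lastST} still has to be built.
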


\begin{proofNobox}
Let $N'=(S',T',F',M'_0,\ell')$ be the given finitary plain structural conflict net without a fully reachable
pure $\structuralM$, and $N=(S,T,F,M_0,\ell)$ be its conflict replicated implementation $\impl{N'}$.
This convention (at the expense of primes in the statement of the theorem) pays off in terms of a
significant reduction in the number of primes in this paper.

For future reference, \reftab{conflictrepl} provides a place-oriented
representation of the conflict replicating implementation of a given net
$N'=(S',T',F',M'_0,\ell')$, with the macros for reversible transitions expanded.
\hypertarget{far}{Here $\mbox{\hyperlink{Tback}{$T^\leftarrow$}}=\{\ini \mid j\inp T'\} \cup
\{\trans{j} \mid h <^\# j\inp T'\}$, \plat{$(\trans{j})^{\,\it far}=\{\transout{j}\}$}
and \plat{$(\ini)^{\,\it far}=\{\Pre^j_k \mid k\geq^\# j\} \cup \{\transin{j}\mid h<^\# j\}$}.}

\[
\begin{array}{@{}l@{~}lll@{}}
~\\[-1.5ex]
\textbf{Place} &
\textrm{Pretransitions}\hfill\scriptstyle\rm{arc~weights} & \textrm{Posttransitions}\hfill\scriptstyle\rm{arc~weights} & \textrm{for all} \\
\hline
p                & \comp{j}\weight{i,p} & \dist ~~~\mbox{\scriptsize (if $\postcond{p}\mathbin{\neq}\emptyset$)}
                     & p\inp S',~ i\in \precond{p} \\
p_c            & \left\{
                          \begin{array}{@{}l@{}}\dist\\\ini[c]\mathop{\cdot}\undone\raisebox{1ex}{\weight{p,c}}\!\!\!\end{array}\right. &
                          \begin{array}{@{}l@{}}\ini[c]\cdot\fire~~~~~~~~~~~\qquad\weight{p,c}\\\fetch\weight{p,i}\end{array} &
                          \begin{array}{@{}l@{}}p\inp S',~c\in\postcond{p}\\j\geq^\# i \in \postcond{p}\end{array} \\
\pi_c ~\hfill\mbox{\scriptsize(marked)} & \ini[c]\cdot\reset & \ini[c]\cdot\fire & i\confeq c \in T'\\
\Pre^i_j           & \left\{
                          \begin{array}{@{}l@{}}\ini[i]\cdot\fire\\\exec{j}\end{array}\right. &
                          \begin{array}{@{}l@{}}\exec{j}\\\ini[i]\cdot\und[\Pre^i_j]\end{array} &
                          \begin{array}{@{}l@{}}j\geq^\# i\in T'\end{array} \\
\transin{j}          & \left\{
                          \begin{array}{@{}l@{}}\ini\cdot\fire\\\trans{j}\cdot\undone\end{array}\right. &
                          \begin{array}{@{}l@{}}\trans{j}\cdot\fire\\\ini\mathop\cdot\und[\transin{j}]\end{array} &
                                    h <^\# j\in T' \\
\transout{j}         & \left\{
                          \begin{array}{@{}l@{}}\trans{j}\cdot\fire\\\exec{j}\end{array}\right. &
                          \begin{array}{@{}l@{}}\exec{j}\\\trans{j}\mathord\cdot\und[\transout{j}]\!\!\!\end{array} &
                                          h<^\# j\in T',~ i\leqc j \\
\pi_{j\#l} ~\hfill\mbox{\scriptsize(marked)} &  \left\{
                          \begin{array}{@{}l@{}}\fetched{j}\\\trans[j]{l}\cdot\reset[c]\end{array}\right. &
                          \begin{array}{@{}l@{}}\exec{j}\\\trans[j]{l}\cdot\fire\end{array} &
                          \begin{array}{@{}l@{}}i\leqc j <^\# l \in T',~ c\confeq l \end{array} \\
\fetchin           & \exec{j} & \fetch & j\mathbin{\geq^\#} i\inp T',~p\inp\precond{i},~c\inp\postcond{p} \\
\fetchout          & \fetch & \fetched{j} & j\mathbin{\geq^\#} i\inp T',~p\inp\precond{i},~c\inp\postcond{p} \\
\hline
\rule{0pt}{12pt}
\undo(t)      & \exec[i]{j}\cdot\fire & t\cdot\undo,\quad t\cdot\elide & j\geq^\# i\in T',~ t\in \UIij \\
\reset(t)      & \fetched[i]{j}      & t\cdot\reset,\quad t\cdot\elide & j\geq^\# i\in T',~ t\in \UIij \\
\ack(t)      & t\cdot\reset,\quad t\cdot\elide & \comp[i]{j} & i\in T',~ t\in \UIij \\
\Fired(t)       & t\cdot\fire & t\cdot\undo & t\in T^\leftarrow,~ \UIij\ni t \\
\keep(t)    & t\cdot\undo & t\cdot\reset & t\in T^\leftarrow,~ \UIij\ni t \\
\take(f,t)  & t\cdot\undo & t\cdot\und  & t\in T^\leftarrow,~ \UIij\ni t,~ f\inp t^{\,\it far} \\
\took(f,t)  & t\cdot\und & t\cdot\undone  & t\in T^\leftarrow,~ f\in t^{\,\it far} \\
\rho(t)       & t\cdot\undone & t\cdot\reset & t\in T^\leftarrow,~ \UIij\ni t \\
\end{array}
\]
\begin{center}
\vspace{1ex}
\refstepcounter{table}
Table \thetable: The conflict replicating implementation.
\label{tab-conflictrepl}
\vspace{3ex}
\end{center}

We will obtain \refthm{correctness} as an application of \refthm{3ST}.
Following the construction of $N$ described in \refsec{implementation},
we indeed have $S'\subseteq S$ and $M'_0=M_0\upharpoonright S'$.
Let $T_+\subseteq T$ be the set of transitions
\begin{equation}\label{positive transitions}
  \dist      \qquad
  \ini\cdot\fire       \qquad
  \trans{j}\cdot\fire  \qquad
\end{equation}
for any applicable values of $p\inp S'$ and $h,j\inp T'\!$.
Furthermore, $T_-:=(T\setminus (T_+ \cup \{\exec{j}\mid i\leqc j \in T'\}))$.
We start with checking Conditions \ref{acyclic+ST}, \ref{acyclic-ST} and \ref{matchingST} of \refthm{3ST}.
\begin{enumerate}[1.]
\item[\ref{acyclic+ST}.] Let $<_+$ be the partial order on $T_+$ given by the order of
  listing in (\ref{positive transitions})---so $\ini[i]\cdot\fire <_+ \trans{j}\!\cdot\fire$, for any
  $i\in T'$ and $h<^\#j\in T'$, but
  the transitions $\trans{j}\cdot\fire$ and $\trans[k]{l}\cdot\fire$ for $(i,j)\neq(k,l)$ are unordered.
  By examining  \reftab{conflictrepl}
  we see that for any place with a pretransition $t$ in $T_+$, all its posttransitions $u$ in $T_+$
  appear higher in the $<_+$-ordering: $t<_+ u$. From this it follows that
  $F\upharpoonright(S\cup T_+)$ is acyclic.
\item[\ref{acyclic-ST}.] Let $<_-\!$ be the partial order on $T_-\!$ given by the column-wise order of
  the following enumeration of $T_-\!$:
\[\begin{array}{l}              t\cdot\undo \\
  \trans{j}\cdot\und  \\ \trans{j}\cdot\undone \\
  \ini\cdot\und        \\  \ini\cdot\undone  \\
\end{array}\qquad\qquad
\begin{array}{l}
  \fetch               \\  \fetched{j}         \\
  t\cdot\reset[i]  \\ t\cdot\elide[i] \\  \comp{j}
\end{array}\]
  for any $t\in\{\ini,~ \trans{j}\}$ and any applicable
  values of $f\inp S$, $p\inp S'$, and $h,i,j,c\inp T'\!$.
  By examining \reftab{conflictrepl}
  we see that for any place with a pretransition $t$ in $T_-$, all its posttransitions $u$ in $T_-$
  appear higher in the $<_-$-ordering: $t<_- u$. From this it follows that
  $F\upharpoonright(S\cup T_-)$ is acyclic.
\item[\ref{matchingST}.]
  The only transitions $t\in T$ with $\ell(t)\neq\tau$ are $\exec{j}$, with
  $i\leqc j\in T'$. So take $i\leqc j\in T'$. Then the only transition $t'\inp T'$
  with $\ell'(t')\mathbin=\ell(\exec{j})$ is $i$. Now two statements regarding $i$ and $\exec{j}$
  need to be proven. For the first, note that, for any $p\in\precond{i}$, the places
  $p$, $p_i$ and $\Pre^i_j$ are faithful w.r.t.\ $T_+$ and $S'\cup\{s\in S \mid M_0(s)>0\}$.
  Hence $~ p ~~ \dist ~~ p_i ~~ \ini[i]\cdot\fire ~~ \Pre^i_j ~~ \exec{j} ~$
  is a faithful path from $p$ to $\exec{j}$. The arc weight of this path is
  $F'(p,i)$. Thus $\precond{i} \leq \mbox{}^*\exec{j}$.

  The second statement holds because, for all $i\leqc j\in T'$,
  \begin{equation}\label{mimic}\qquad
  \marking{i} = \llbracket
     \exec{j} + \!\!\sum_{p\in\precond{i}}\big(F'(p,i)\cdot\dist +
     \!\!\sum_{c\in\postcond{p}}\fetch\big) + \fetched{j} + \comp{j} +
     \sum_{t\in\UIij} t\cdot\elide
     \rrbracket.
  \end{equation}
  To check that these equations hold, note that
  $$\qquad\quad\begin{array}{l@{~}c@{~}l@{}}
    \marking{\dist}&=&-\{p\} + \{p_c \mid c\in \postcond{p}\},\\
    \marking{\exec{j}}&=&-\{\pi_{j\#l}\mid l\geq^\# j\}+\{\fetchin \mid p\inp\precond{i},~
    c\inp\postcond{p}\}+\{\undo(t) \mid t\in\UIij\},\\
    \marking{\fetch}&=&-\{\fetchin\} - F'(p,i)\cdot\{p_c\}+\{\fetchout\},\\
    \marking{\fetched{j}}&=&-\{\fetchout\mid p\inp\precond{i},~ c\inp\postcond{p}\}
    +\{\pi_{j\#l}\mid l\geq^\# j\}+\{\reset(t)\mid t\inp \UIij\},\\
    \marking{t\cdot\elide}&=&-\{\undo(t),~\reset(t)\mid t\in \UIij\}+\{\ack(t) \mid t\in\UIij\},\\
    \marking{\comp{j}} &=&-\{\ack(t) \mid t\in\UIij\}+
                           \plat{$\displaystyle\sum_{r\in\postcond{i}}F'(i,r)\cdot\{r\}$}.\\[1ex]
  \end{array}$$
\end{enumerate}
Before we define the class $\NF\subseteq \Int^T$ of signed multisets of transitions in
normal form, and verify conditions \ref{normalformST} and \ref{lastST}, we derive
some properties of the conflict replicating implementation $N=\impl{N'}$.

\begin{clm}\label{cl-G-properties}
  For any $M'\in\Int^{S'}$ and $G\fin \Int^T$ such that
  $M:=M'+(M_0-M'_0)+\marking{G}\in\nat^S$
  and for each $i\in T'$ and $t\inp \UIij$ we have
  \begin{eqnarray}
  G(t\cdot\elide)+G(t\cdot\undo) &\!\!\!\!\leq\!\!\!\!& \sum_{j\geq^\# i}G(\exec[i]{j})\label{undo}\\
  \hspace{-2em}
  G(\comp[i]{j}) \leq G(t\cdot\elide)+G(t\cdot\reset) &\!\!\!\!\leq\!\!\!\!& \sum_{j\geq^\# i}G(\fetched[i]{j})\label{reset}\\
  G(t\cdot\reset) &\!\!\!\!\leq\!\!\!\!& G(t\cdot\undo)\label{interfacecount}.
  \end{eqnarray}
  Moreover, for each \hyperlink{far}{$t\in T^\leftarrow$ and $f\in t^{\,\it far}$},
  \begin{equation}\label{undocount}
  \sum_{\{\omega\mid t\in \UI_\omega\}}\!\!\!\!\!\!\!\! G(t\cdot\reset[\omega])
  \leq G(t\cdot\undone) \leq G(t\cdot\und)
  \leq \!\!\!\!\!\!\!\!\sum_{\{\omega\mid t\in \UI_\omega\}}\!\!\!\!\!\!\!\! G(t\cdot\undo[\omega]) \leq G(t\cdot\fire)
  \end{equation}
  and for each appropriate $c,h,i,j,l\in T'$ and $p\in S'$:
  \begin{eqnarray}
  G(\fetched[i]{j}) \leq G(\fetch) &\!\!\!\!\leq\!\!\!\!& G(\exec{j})\label{fetch}
  \\[2pt]\label{pi-j}
  G(\ini\cdot\fire) &\!\!\!\!\leq\!\!\!\!& 1+\sum_{\omega }G(\ini\cdot\reset[\omega])
  \\[-6pt]\label{transin}\hspace*{-2em}
  G(\trans{j}\mathord\cdot\fire) \mathord- G(\trans{j}\mathord\cdot\undone) &\!\!\!\!\leq\!\!\!\!&
  G(\ini\mathord\cdot\fire) \mathord- G(\ini\mathord\cdot\und[\transin{j})]\hspace*{2em}
  \end{eqnarray}
  \begin{eqnarray}\label{pi}\hspace*{-2em}
  G(\trans[j]{l}\!\cdot\fire) + \sum_{i\leqc j}G(\exec{j}) &\!\!\!\!\leq\!\!\!\!&
  1+\sum_{\omega}G(\trans[j]{l}\mathord\cdot\reset[\omega])\mathop+ \!\sum_{i\leqc j}G(\fetched{j})
  \\[-5pt]\label{pre}
  \mbox{if ~$M[\exec{j}\rangle$~ then}\quad
  1 &\!\!\!\!\leq\!\!\!\!& G(\ini[i]\mathord\cdot\fire) \mathord- G(\ini[i]\mathord\cdot\und[\Pre^i_j])
  \\[2pt]\label{transout}
  \mbox{if ~$\exists i.~M[\exec{j}\rangle$~ then}\quad
  1 &\!\!\!\!\leq\!\!\!\!& G(\trans{j}\mathord\cdot\fire) \mathord- G(\trans{j}\mathord\cdot\und[\transout{j}]\!)~~~~~
  \end{eqnarray}
    \begin{equation}\label{p_j}
      F'(p,c)\mathord\cdot \big(G(\ini[c]\!\cdot\fire) \mathord- G(\ini[c]\!\cdot\undone)\big)
      + \plat{$\displaystyle\sum_{j\geq^\# i\in \postcond{p}}$}
      F'(p,i) \cdot G(\fetch) \leq G(\dist)
    \end{equation}
  \begin{eqnarray}\label{p}\hspace{-1.67em}
  G(\dist) &\!\!\!\!\leq\!\!\!\!&
  M'(p)+ \hspace{-1em}\sum_{\{i\in T'\mid p\in\postcond{i}\}}\hspace{-1em} G(\comp{j}).
  \end{eqnarray}
\end{clm}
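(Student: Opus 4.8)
The plan is to recognise that each of the twelve inequalities \refitem{undo}--\refitem{p} is simply the statement that one explicitly identifiable place $s$ of the implementation $\impl{N'}$ carries a nonnegative number of tokens under the marking $M=M'+(M_0-M'_0)+\marking{G}$. Since $M\in\nat^S$ is exactly the standing hypothesis, we have $M(s)\ge 0$ for every place $s$, so the whole claim amounts to reading these constraints off the place-oriented description of $N$ in \reftab{conflictrepl}. For a place $s$ with initial contribution $\iota_s$ — equal to $M'(s)$ for $s\inp S'$, to $1$ for the initially marked control places $\pi_c$ and $\pi_{j\#l}$, and to $0$ otherwise — one has $M(s)=\iota_s+\sum_t F(t,s)\,G(t)-\sum_t F(s,t)\,G(t)$, where $t$ ranges over the pre- resp.\ posttransitions of $s$ listed in \reftab{conflictrepl}; a one-line rearrangement of $M(s)\ge 0$ then yields the stated inequality.

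Carrying this out, fix $i\inp T'$ and $t\inp\UIij$. The undo-interface places give \refitem{undo} from $M(\undo[\omega](t))\ge 0$ (with $\omega=i$; pretransitions the $\exec{j}$ for $j\geq^\# i$, posttransitions $t\cdot\undo[\omega]$ and $t\cdot\elide[\omega]$), the two halves of \refitem{reset} from $M(\ack(t))\ge 0$ and $M(\reset(t))\ge 0$, \refitem{interfacecount} from the keep-place $\keep(t)$, and the four-term chain \refitem{undocount} from the successive places $\rho(t)$, $\took(f,t)$, $\take(f,t)$ and $\Fired(t)$. The fetch-places yield \refitem{fetch} via $M(\fetchout)\ge 0$ and $M(\fetchin)\ge 0$; the two initially marked control places give \refitem{pi-j} from $\pi_j$ and \refitem{pi} from $\pi_{j\#l}$, the constant $1$ on their right-hand sides being precisely $\iota_s=1$; \refitem{transin} is $M(\transin{j})\ge 0$; and \refitem{p_j}, \refitem{p} are $M(p_c)\ge 0$ and $M(p)\ge 0$, these being the only cases in which the weights $F'(\cdot,\cdot)$ inherited from $N'$ appear, as arc-weight coefficients in the token count.

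The only two inequalities that use more than bare nonnegativity are \refitem{pre} and \refitem{transout}, which carry the extra hypothesis $M[\exec{j}\rangle$. Here the relevant place ($\Pre^i_j$, resp.\ $\transout{j}$) is a self-loop preplace of $\exec{j}$: it lies in both the pre- and the postset of $\exec{j}$ with equal weight $1$, so it cancels in $\marking{\exec{j}}$ and is invisible to the token-replacement bookkeeping, yet enabledness of $\exec{j}$ forces $M$ to place at least $F(s,\exec{j})=1$ token on it. For these two places we therefore read $M(s)\ge 1$ instead of $M(s)\ge 0$, which is exactly what supplies the ``$1\le$'' on the left.

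I do not expect a real obstacle: the content is pure bookkeeping. The part needing care is extracting, for each place, the exact multiset of pre- and posttransitions together with their arc weights from \reftab{conflictrepl} — in particular distinguishing the per-interface transitions $t\cdot\undo[\omega]$, $t\cdot\reset[\omega]$, $t\cdot\elide[\omega]$ (to be summed over all $\omega$ with $t\inp\UI_\omega$) from the interface-independent $t\cdot\fire$, $t\cdot\und$, $t\cdot\undone$, correctly handling the self-loops of $\exec{j}$, and keeping track of which control places are initially marked. Once the right columns are selected, each inequality falls out of $M(s)\ge 0$ (or $M(s)\ge 1$) immediately.
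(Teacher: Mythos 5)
Your proof is correct and coincides with the paper's own: the paper likewise obtains each inequality by writing out the token count of a single place ($\undo(t)$ for \refitem{undo}; $\ack(t)$ and $\reset(t)$ for the two halves of \refitem{reset}; $\keep(t)$ for \refitem{interfacecount}; $\rho(t)$, $\took(f,t)$, $\take(f,t)$, $\Fired(t)$ for the chain \refitem{undocount}; then $\fetchout$, $\fetchin$, $\pi_j$, $\transin{j}$, $\pi_{j\#l}$, $\Pre^i_j$, $\transout{j}$, $p_c$ and $p$ for the rest) and invoking $M(s)\geq 0$ together with the initial-marking contributions you call $\iota_s$. Your additional observation that \refitem{pre} and \refitem{transout} instead use the enabledness bound $M(s)\geq 1$ at the self-loop preplaces $\Pre^i_j$ and $\transout{j}$ of $\exec{j}$ is precisely the reading the paper intends (and leaves implicit).
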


\begin{proofclaim}
  For any $i\in T'$ and $t\in\UIij$, we have
  $$M(\undo(t))=\big(\sum_{j\geq^\# i}G(\exec[i]{j})\big)-G(t\cdot\elide)-G(t\cdot\undo)\geq 0,$$ given that
  $M'(\undo(t))=(M_0-M'_0)(\undo(t))=\emptyset$.
  In this way, the place $\undo(t)$ gives rise to the inequation (\ref{undo}) about $G$.
  Likewise, the places $\ack(t)$, $\reset(t)$ and $\keep(t)$,
  respectively, contribute (\ref{reset}) and (\ref{interfacecount}), whereas
  $\rho(t)$, $\took(t)$, $\take(t)$ and $\Fired(t)$ yield (\ref{undocount}).
  The remaining inequations arise from $\fetchout$, $\fetchin$, $\pi_j$,
  $\transin{j}$, $\pi_{j\#l}$, $\Pre^i_j$, $\transout{j}$, $p_c$ and $p$, respectively.
\end{proofclaim}
  (\ref{pi}) can be rewritten as $T^j_l+\sum_{i\leqc j} E^i_j \leq 1$, where
  $T^j_l:=G(\trans[j]{l}\cdot\fire) - \sum_{\omega}G(\trans[j]{l}\cdot\reset[\omega])$ and
  $E^i_j:=G(\exec[i]{j}) - G(\fetched[i]{j})$.
  By (\ref{undocount})
  $\sum_\omega G(\trans[j]{l}\cdot\reset)\leq G(\trans[j]{l}\cdot\fire)$, so $T^j_l\geq 0$,
  and likewise, by (\ref{fetch}), $E^i_j\geq 0$ for all $i\leqc j$.
  Hence, for all $i\leqc j <^\# l\in T'$,\vspace{-1ex}
  \begin{equation}\label{pi-execute}
  0\leq T^j_l\leq 1 \qquad 0\leq E^i_j\leq 1\qquad T^j_l+\sum_{i\leqc j} E^i_j \leq 1.
  \vspace{-2ex}
  \end{equation}

\newcommand{\follow}{\textit{next}}
\noindent
In our next claim we study triples $(M, M', G)$ with
\begin{enumerate}[(A)]
\item $M\in[M_0\rangle_N$, $M'\in[M'_0\rangle_{N'}$ and $G \fin \Int^T$,\label{r0}
\item $M=M'+(M_0-M'_0)+\marking{G}$,\label{r1}
\item $G(\comp{j}) = 0$ for all $i\in T'$,\label{r2}
\item $G(\dist) \leq M'(p)$ for all $p\in S'$,\label{r3}
\item $G(\fetched[k]{l})\geq 0$ for all $k\leqc l\in T'$,\label{rFp}
\item \plat{$\displaystyle G(\dist) \geq F'(p,i)\cdot G(\exec{j})$}
  for all $i\leqc j\in T'$ and $p\in\precond{i}$,\label{r4}
\item \plat{$0\leq G(\exec{j})\leq 1$} for all $i\leqc j\in T'$,\label{rExec}
\item \plat{$\displaystyle G(\dist) \geq F'(p,j)\cdot G(\exec{j})$}
  for all $i\leqc j\in T'$ and $p\in\precond{j}$,\label{r5}
\item (in the notation of (\ref{pi-execute}))
  if $E^i_j=1$ with $i\leqc j\in T'$ then \plat{$T^h_j=1$} for all $h <^\# j$,\label{rtrans}
\item there are no $j\geq^\# i \confeq k \leq^\# l \inp T'$ with $(i,j)\mathbin{\neq} (k,\ell)$,
  \plat{$G(\exec[i]{j})\mathbin>0$} and \plat{$G(\exec[k]{l})\mathbin>0$},\label{rExec2}
\item there are no $i\leq^\# j \confeq k \leq^\# l \inp T'$ with $(i,j)\mathbin{\neq} (k,\ell)$,
  \plat{$G(\exec[i]{j})\mathbin>0$} and \plat{$G(\exec[k]{l})\mathbin>0$}.\label{rExec3}
\label{rLast}
\end{enumerate}
Given such a triple $(M_1,M'_1,G_1)$ and a transition $t\in T$,
we define $\follow(M_1, M'_1, G_1, t) =: (M, M', G)$ as follows:
Let $G_2:=G_1+\{t\}$.
Take $M:=M_1+\marking{t} = M'_1+(M_0-M'_0)+\marking{G_2}$.
In case $t$ is not of the form $\comp{j}$ we take $M':=M'_1\in[M'_0\rangle_{N'}$ and $G:=G_2\fin\Int^T$.
In case $t\mathbin=\comp{j}$ for some $i\in T'$ then
\plat{$1=G_2(\comp{j}) \leq \sum_{j\geq^\# i} G_2(\exec{j})
  =\sum_{j\geq^\# i} G_1(\exec{j})$} by (\ref{r2}), (\ref{reset}) and (\ref{fetch}),
so by (\ref{rExec}) and (\ref{rExec2}) there is a unique $j\geq^\# i$ with
\plat{$G_1(\exec{j})=1$}. We take $M':=M'_1+\marking{i}$ and
\plat{$G:=G_2-G^i_{\!\!j}$}, where $G^i_{\!\!j}$ is the right-hand side of (\ref{mimic}).

\begin{clm}\label{cl-reachable}~
\begin{enumerate}[(1)]
\item If $M_1[t\rangle$ and $(M_1, M'_1, G_1)$ satisfies (\ref{r0})-(\ref{rExec3}),
  then so does $\follow(M_1, M'_1, G_1, t)$.\label{follow1}
\item For any $M\in[M_0\rangle_{N}$ there exist $M'$ and $G$ such that
  (\ref{r0})-(\ref{rExec3}) hold.\label{follow2}
\end{enumerate}
\end{clm}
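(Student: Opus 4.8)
The plan is to prove Part~(\ref{follow1}) first and then derive Part~(\ref{follow2}) from it by induction. For Part~(\ref{follow2}) I would induct on the number of steps in a firing sequence $M_0[t_1\rangle M_1[t_2\rangle\cdots[t_n\rangle M$ witnessing $M\in[M_0\rangle_N$. The base case $M=M_0$ is handled by the triple $(M_0,M'_0,\emptyset)$: clause~(\ref{r1}) reads $M_0=M'_0+(M_0-M'_0)$, clause~(\ref{r3}) reads $0\le M'_0(p)$, and every remaining clause among (\ref{r2})--(\ref{rExec3}) is a statement about the zero multiset and so holds vacuously. For the inductive step, the penultimate marking $M_{n-1}$ carries a valid triple $(M_{n-1},M',G')$ by hypothesis, and since $M_{n-1}[t_n\rangle$, the triple $\follow(M_{n-1},M',G',t_n)$ is valid by Part~(\ref{follow1}); its first component is $M_{n-1}+\marking{t_n}=M$, as required.

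The bulk of the work is thus Part~(\ref{follow1}), which I would prove by verifying that $\follow$ preserves each clause (\ref{r0})--(\ref{rExec3}) in turn. I would first dispatch the bookkeeping clause~(\ref{r1}): when $t$ is not of the form $\comp{j}$ it is immediate from $\marking{G}=\marking{G_1}+\marking{t}$, and when $t=\comp{j}$ it follows because subtracting $G^i_{\!\!j}$ from $G_2$ while adding $\marking{i}=\marking{G^i_{\!\!j}}$ (equation~(\ref{mimic})) to $M'_1$ leaves $M'+(M_0-M'_0)+\marking{G}$ unchanged. The only delicate reachability obligation in~(\ref{r0}) is $M'\in[M'_0\rangle_{N'}$ in the case $t=\comp{j}$, where $M'=M'_1+\marking{i}$. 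Here I would show that $i$ is enabled at $M'_1$ in $N'$: for the unique $j\geq^\# i$ selected in the definition of $\follow$ one has $G_1(\exec{j})=1$, so clauses~(\ref{r4}) and~(\ref{r3}) give, for every $p\in\precond{i}$, the chain $\precond{i}(p)=F'(p,i)=F'(p,i)\cdot G_1(\exec{j})\le G_1(\dist)\le M'_1(p)$; hence $\precond{i}\le M'_1$ and $M'_1[i\rangle M'$.

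The remaining clauses I would treat by a case analysis on the kind of transition $t$ that fired---$\dist$, $\ini\cdot\fire$ and $\trans{j}\cdot\fire$ (the members of $T_+$), the visible $\exec{j}$, the various undo/reset/elide/fetch transitions of $T_-$, and the distinguished case $t=\comp{j}$---identifying in each case precisely which coordinates of $G$ change and bounding them using the inequalities (\ref{undo})--(\ref{p}) and (\ref{pi-execute}) of Claim~\ref{cl-G-properties}. Most clauses are local and short: (\ref{r2}) is restored in the $\comp{j}$-case exactly because $G^i_{\!\!j}$ carries one $\comp{j}$; (\ref{rExec}) can be endangered only by $t=\exec{j}$, and there one argues from the $\pi$-places (consumed by $\exec{j}$ and not refilled until $\fetched{j}$ fires) that $\exec{j}$ cannot fire again while $G_1(\exec{j})=1$; and (\ref{r3})--(\ref{r5}) track $\dist$, $\exec{j}$ and $\comp{j}$ against the standing inequalities.

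The hard part will be preserving the mutual-exclusion clauses~(\ref{rtrans}), (\ref{rExec2}) and~(\ref{rExec3}), which is exactly where the absence of a fully reachable pure $\structuralM$ in $N'$ enters. When $t=\exec{j}$ turns $G(\exec{j})$ positive I must rule out that a structurally conflicting $\exec[k]{l}$ (one with $i\confeq k$) is already positive. I would argue this by tracing, through the shared $\pi$-places and the asymmetry built into the ordering $<^\#$ on $T'$, that the firing protocol keeps two co-conflicting executions from being simultaneously in progress, using \reflem{faithful origins} to pull the enabledness of the competing initiators back to markings of $N'$ and the no-$\structuralM$ hypothesis to discard the remaining configurations---namely those in which the two competitors would be concurrent rather than in conflict. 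It is this invariant, rather than the routine quantitative clauses, that is the real content of the claim.
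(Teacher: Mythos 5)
Your overall architecture matches the paper's proof closely: Part~(\ref{follow2}) is obtained from Part~(\ref{follow1}) by induction on reachability with base triple $(M_0,M'_0,\emptyset)$; Part~(\ref{follow1}) is a clause-by-clause preservation check driven by the inequalities (\ref{undo})--(\ref{p}) and (\ref{pi-execute}) of \refcl{G-properties}; and your argument that $M'_1[i\rangle$ in the case $t=\comp{j}$, via $F'(p,i)=F'(p,i)\cdot G_1(\exec{j})\leq G_1(\dist)\leq M'_1(p)$ using (\ref{r4}) and (\ref{r3}), is literally the paper's proof of clause~(\ref{r0}). There is, however, a genuine gap in your treatment of clause~(\ref{rExec}). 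You propose to bound $G(\exec{j})$ by $1$ through the places $\pi_{j\#l}$, ``consumed by $\exec{j}$ and not refilled until $\fetched{j}$ fires''. But $\exec{j}$ has such preplaces only for transitions $l>^\# j$; when $j$ is maximal among its conflicts---and every finite net has such transitions---$\exec{j}$ consumes no $\pi$-place, and its remaining preplaces $\Pre^i_j$ and $\transout{j}$ are connected by self-loops, so no protocol-internal token count of the kind you invoke is available (indeed, inequality (\ref{pi-execute}), which yields $E^i_j\leq 1$, exists only when some $l>^\# j$ does). The paper's argument takes a different route that does not depend on the $\pi$-places: if $G(\exec{j})\geq 2$, then clauses (\ref{r3}) and (\ref{r4}) give $M'(p)\geq G(\dist)\geq 2\cdot F'(p,i)$ for all $p\in\precond{i}$, hence $M'[2\cdot\{i\}\rangle_{N'}$, which is impossible because the finitary structural conflict net $N'$ admits no self-concurrency. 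This detour through $N'$ is what you need.

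A second, smaller misfit concerns your plan for the mutual-exclusion clauses (\ref{rtrans}), (\ref{rExec2}) and (\ref{rExec3}): you propose to use \reflem{faithful origins} to pull enabledness of the competing initiators back to markings of $N'$. That lemma is not applicable inside this claim: its hypotheses require a multiset $H\fin\nat^{T_+}$ and acyclicity of $F\upharpoonright(S\cup T_+)$, whereas the multisets $G$ maintained here record firings of transitions far outside $T_+$ (and, after the $\comp{j}$-corrections by $G^i_{\!\!j}$, even carry negative counts); in the paper it is reserved for the proof of \refthm{3ST}. The paper instead transfers enabledness to $N'$ through the invariants themselves---(\ref{r3}) together with (\ref{r4}) and (\ref{r5}) yield $M'[i\rangle$, $M'[j\rangle$, $M'[k\rangle$, $M'[l\rangle$ as needed---and through inequality chases such as the displayed derivation in the proof of clause~(\ref{r4}), applying the no-\structuralM{} hypothesis twice in clause~(\ref{rExec2}) and closing with (\ref{rtrans}) and (\ref{pi-execute}). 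Your high-level idea (reduce to enabledness in $N'$ and discard the remaining configurations by the absence of a fully reachable \visible{} pure \structuralM) is exactly right, but the named instrument is the wrong one, and the real work is quantitative bookkeeping on $G$ rather than path-tracing through the net.
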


\begin{proofclaimNobox}
(\ref{follow2}) follows from (\ref{follow1}) via induction on the reachability of $M$.
In case $M=M_0$ we take $M':=M'_0$ and $G:=\emptyset$. Clearly,
(\ref{r0})--(\ref{rLast}) are satisfied.

Hence we now show (\ref{follow1}). Let $(M, M', G) := \follow(M_1, M'_1, G_1, t)$.
We check that $(M,M',G)$ satisfies the requirements (\ref{r0})--(\ref{rLast}).
\begin{enumerate}[(A)]
\item By construction, $M\in[M_0\rangle_N$ and $G \fin \Int^T$.
  If $t$ is not of the form $\comp{j}$ we have \mbox{$M'\mathbin=M_1\inp[M'_0\rangle_{N'}$}.
  Otherwise, by (\ref{r3}) and (\ref{r4}) we have $M'_1(p)\geq G_1(\dist)\geq F'(p,i)$
  for all $p\inp \precond{i}$, and hence $M'_1[i\rangle$. 
  This in turn implies that $M'=M'_1+\marking{i}\in[M'_0\rangle_{N'}$.
\item  In case $t$ is not of the form $\comp{j}$ we have
  $$M=M_1+\marking{t}=M'_1+(M_0-M'_0)+\marking{G_1+t}= M'+(M_0-M'_0)+\marking{G}.$$
  In case $t=\comp{j}$ we have $M=M'_1+(M_0-M'_0)+\marking{G_2}=
  M'+(M_0-M'_0)+\marking{G}$, using that $\marking{i}=\marking{G^i_{\!\!j}}$.
\item In case $t=\comp{j}$ we have $G(\comp{j}) = G_1(\comp{j})+1-G^i_{\!\!j}(\comp{j}) =
  0+1-1=0$.\\ Otherwise $G(\comp{j}) = G_1(\comp{j})+0=0+0=0$.
\item This follows immediately from (\ref{r2}) and (\ref{p}).
\item  
  The only time that this invariant is in danger is when $t=\comp{j}$.
  Then $G=G_1+\{\comp{j}\}-G^i_{\!\!j}$ for a certain $j\geq^\# i$ with $G_1(\exec{j})=1$.
  By (\ref{rExec2})\footnote{We use (\ref{rExec2}) and (\ref{rFp}) for $G_1$ only, making
  use of the induction hypothesis.}\let\fnote\thefootnote\
  $G_1(\exec{l})\leq 0$ for all $l\geq^\# i$ with $l\neq j$.
  Hence by (\ref{fetch}) $G_1(\fetched{l})\leq 0$ for all such $l$.
  By (\ref{r2}) $G_2(\comp{j}) = G_1(\comp{j})+1 = 1$, so by (\ref{reset})
  $\sum_{l\geq^\# i} G_1(\fetched{l}) \mathbin= \sum_{l\geq^\# i} G_2(\fetched{l})\linebreak[2]>0$;
  hence it must be that $G_1(\fetched{j})\mathord>0$. By (\ref{rFp})$^\fnote$ $G_1(\fetched[k]{l})\mathbin{\geq} 0$
  for all $k\leqc l\inp T'$.
  Given that $G^i_{\!\!j}(\fetched{j})=1$ and $G^i_{\!\!j}(\fetched[k]{l})=0$ for all $(k,l)\neq(i,j)$,
  we obtain $G(\fetched[k]{l})\geq 0$ for all $k\leqc l\in T'$.
\item Take $i\mathbin{\leqc} j \inp T'$ and $p\inp\precond{i}$.
  There are two occasions where the invariant is in danger: when $t=\exec{j}$
  and when $t=\comp[k]{l}$ with $k\in T'$.  First let $t=\exec{j}$.
  Then $M_1[\exec{j}\rangle$.  Thus,\vspace{-1ex}
  \hypertarget{proofr4}{$$\quad\begin{array}[b]{@{}r@{~\geq~}l@{}}
  \multicolumn{2}{@{}l}{G(\dist)}\\
  \mbox{}& \displaystyle
    F'(p,i)\cdot\big(G(\ini[i]\cdot\fire) - G(\ini[i]\cdot\undone)\big)
    + \hspace{-.5em}\sum_{h\geq^\# g\in\postcond{p}}\hspace{-.5em}F'(p,g)\cdot G(\txf{fetch}_{g,h}^{p,i})
\\
  & \displaystyle
    F'(p,i)\cdot\big(G(\ini[i]\cdot\fire) - G(\ini[i]\cdot\undone)\big)
    + \hspace{-.5em}\sum_{h\geq^\# g\in\postcond{p}}\hspace{-.5em}F'(p,g)\cdot G(\fetched[g]{h})
\\
  & F'(p,i)\cdot\big(G(\ini[i]\cdot\fire) - G(\ini[i]\cdot\undone)\big) + F'(p,i)\cdot G(\fetched{j})
\\
  & F'(p,i)\cdot\left(\big(G(\ini[i]\cdot\fire) - G(\ini[i]\cdot\und[\Pre^i_j])\big) + G(\fetched{j})\right)
\\
  & F'(p,i)\cdot\big(1 + G(\fetched{j})\big)
\\
  & F'(p,i)\cdot G(\exec{j})
  \end{array}$$}
  by (\ref{p_j}), (\ref{fetch}), (\ref{rFp}), (\ref{undocount}), (\ref{pre}) and (\ref{pi-execute}), respectively.
  By (\ref{undocount}) $G(\ini[i]\cdot\fire) - G(\ini[i]\cdot\undone)\geq 0$.
  So by (\ref{p_j}), (\ref{rFp}), and (\ref{fetch}) $G(\dist)\geq 0$.
  For this reason we may assume, w.l.o.g., that $G(\exec{j}) \geq 1$.

  We have $G=G_1+\{\comp[k]{l}\}-G^k_l$ for certain $l\geq^\# k$ with $G_1(\exec[k]{l})\mathbin=1$.
  Since \plat{$G^i_{\!\!j}(\exec{j})\mathbin\geq 0$}, we also have $G_1(\exec{j}) \geq 1$.
  By (\ref{rExec2}) this implies that $\neg(i\confeq k)$ or $(i,j)=(k,l)$.
  In the latter case \plat{$G(\exec{j})=\mbox{}$}
  \plat{$G_1(\exec{j})-G^i_{\!\!j}(\exec{j})=1\mathord-1=0$},
  contradicting our assumption.
  In the former case $p\notin\precond{k}$, so $G^k_l(\dist)=0$ and hence
  $G(\dist)=G_1(\dist)\geq F'(p,i)\cdot G_1(\exec{j})=F'(p,i)\cdot G(\exec{j})$.
\item That $G(\exec{j})\geq 0$ follows from (\ref{rFp}) and (\ref{fetch}).
  If $G(\exec{j})\geq 2$ for some $i\leqc j\in T'$ then
  $M'(p)\geq G(\dist) \geq 2\cdot F'(p,i)$ for all $p\in \precond{i}$, using (\ref{r3}) and
  (\ref{r4}), so $M'[2\cdot\{i\}\rangle_{N'}$. Since $N'$ is
  a \hyperlink{finitary}{finitary} \hyperlink{scn}{structural conflict net}, it has no
  self-concurrency, so this is impossible.
\item Take $i\mathbin{\leqc} j \inp T'$ and $p\inp\precond{j}$. The case $i=j$ follows
  from (\ref{r4}), so assume $i<^\# j$.
  By (\ref{undocount}) we have $G(\ini[i]\cdot\fire) - G(\ini[i]\cdot\undone)\geq 0$.
  So by (\ref{p_j}), (\ref{rFp}), and (\ref{fetch}) $G(\dist)\geq 0$.
  Hence, using (\ref{rExec}), we may assume, w.l.o.g., that \plat{$G(\exec{j})=1$}.
  We need to investigate the same two cases as in the proof of (\ref{r4}) above.
  First let $t=\exec{j}$.  Then $M_1[\exec{j}\rangle$.  Thus,
  \hypertarget{proofr5}{$$\qquad\begin{array}[b]{@{}r@{~\geq~}lr@{}}
  \multicolumn{2}{@{}l}{G(\dist)}  & \mbox{(by (\ref{p_j}))} \\
  \mbox{}& \displaystyle
    F'(p,j)\cdot \big(G(\ini\cdot\fire) - G(\ini\cdot\undone)\big)
    + \hspace{-.5em}\sum_{h\geq^\# g\in\postcond{p}}\hspace{-.5em}F'(p,g)\cdot G(\txf{fetch}_{g,h}^{p,j})
\hspace{-4.5em}\\[-10pt]
  & F'(p,j)\cdot \big(G(\ini\cdot\fire) - G(\ini\cdot\undone)\big)
  & \hspace{-4em}\mbox{(by (\ref{rFp}) and (\ref{fetch}))} \\
  & F'(p,j)\cdot \big(G(\ini\cdot\fire) - G(\ini\cdot\und[\mbox{$\transin[i]{j}$}])\big)
  & \mbox{(by (\ref{undocount}))} \\
  & F'(p,j)\cdot \big(G(\trans[i]{j}\cdot\fire) - G(\trans[i]{j}\cdot\undone)
  & \mbox{(by (\ref{transin}))} \\
  & F'(p,j)\cdot \big(G(\trans[i]{j}\cdot\fire) - G(\trans[i]{j}\cdot\und[\mbox{$\transout[i]{j}$}])\big)
  & \mbox{(by (\ref{undocount}))} \\
  & F'(p,j)
  & \mbox{(by (\ref{transout}))}\makebox[0pt]{\,.}
  \end{array}$$}
  Now let $t=\comp[k]{l}$ with $k\in T'$. 
  We have $G=G_1+\{\comp[k]{l}\}-G^k_l$ for certain $l\geq^\# k$ with $G_1(\exec[k]{l})=1$.
  Since $G^i_{\!\!j}(\exec{j})\mathbin\geq 0$, we also have $G_1(\exec{j}) \geq 1$.
  By (\ref{rExec3}) this implies that $\neg(j\confeq k)$ or $(i,j)=(k,l)$.
  In the latter case \plat{$G(\exec{j})=\mbox{}$}
  $G_1(\exec{j})-G^i_{\!\!j}(\exec{j})=1-1=0$, contradicting our assumption.
  In the former case $p\notin\precond{k}$, so \plat{$G^k_l(\dist)=0$} and hence
  $G(\dist)=G_1(\dist)\geq F'(p,j)\cdot G_1(\exec{j})=F'(p,j)\cdot G(\exec{j})$.

\item Let $i\mathbin{\leqc} j \inp T'$ and $h<^\# j$.
  Since, for all $k\mathbin{\leqc} l\inp T'$,
  $G^k_l(\trans{j}\!\cdot\fire)\mathbin=\sum_\omega G^k_l(\trans{j}\!\cdot\reset[\omega])\mathbin=0$ and
  \plat{$G^k_l(\exec{j})=G^k_l(\fetched{j})$},
  the invariant is preserved when $t$ has the form \plat{$\comp[b]{c}\!$}. Using
  (\ref{pi-execute}), it is in danger only when $t=\exec{j}$ or $t=\trans{j}\!\cdot\reset[\omega]$
  for some $\omega$ with \plat{$\trans{j}\inp\UI_\omega$}.

  First assume $M_1[\exec{j}\rangle$ and $T^h_j=G_1(\trans{j}\cdot\fire)-\sum_\omega G_1(\trans{j}\cdot\reset[\omega])=0$.
  Then
  $$\begin{array}[b]{r@{~\leq~}ll}
  \multicolumn{1}{r@{~\leq~}}{1}
  & G_1(\trans{j}\cdot\fire) - G_1(\trans{j}\cdot\und[\mbox{$\transout{j}$}])
  & \mbox{(by (\ref{transout}))} \\
  & G_1(\trans{j}\cdot\fire) - \sum_{\omega} G_1(\trans{j}\cdot\reset[\omega]) =0
  & \mbox{(by (\ref{undocount}))}, \\
  \end{array}$$
  which is a contradiction.

  Next assume \plat{$t=\trans{j}\!\cdot\reset[k]$} with $k \confeq j$, and $E^i_j=1$.
  By (\ref{rFp}) and (\ref{rExec}) the latter implies that \plat{$G_1(\exec{j})=1$} and \plat{$G_1(\fetched{j})=0$}\textsl{}.
  Then  
  $$\begin{array}[b]{r@{~\leq~}ll}
  \multicolumn{1}{r@{~=~}}{0}
  & G_1(\comp[k]{l})
  & \mbox{(by (\ref{r2}))} \\
  & G_1(\trans{j}\cdot\elide[k])+G_1(\trans{j}\cdot\reset[k])
  & \mbox{(by (\ref{reset}))} \\
  \multicolumn{1}{r@{~<~}}{} & G(\trans{j}\cdot\elide[k])+G(\trans{j}\cdot\reset[k]) \\
  & \sum_{l\geq^\#k}G(\fetched[k]{l})
  & \mbox{(by (\ref{reset}))}.
  \end{array}$$
  Hence $G_1(\fetched[k]{l})\mathbin=G(\fetched[k]{l})\mathbin>0$ for some $l\geq^\#k$, and by
  (\ref{fetch}) also $G_1(\exec[k]{l})\mathbin>0$.
  Using (\ref{rExec3}) we obtain $(i,\!j)\mathop=(k,l)$, thereby obtaining a contradiction\vspace{-2pt}
  ({$0\mathbin=G_1(\fetched{j})\mathbin=G_1(\fetched[k]{l})\mathbin>0$}).

\item
Let $j\geq^\# i \confeq k \leq^\# l \in T'$ with $(i,j)\neq (k,\ell)$.
  The invariant is in danger only when \plat{$t\mathbin=\exec{j}$} or $t\mathbin=\exec[k]{l}$.
  W.l.o.g.\ let $t\mathbin=\exec[k]{l}$, with
  $G_1(\exec[k]{l})\mathbin=0$ and $G_1(\exec{j})\mathbin\geq 1$.

  Making a case distinction, first assume \plat{$G(\fetched{j})\mathbin\geq 1$}.
  Using (\ref{r3}), (\ref{r4}) and that $G(\exec[k]{l})=1$, $M'(p) \geq G(\dist)\geq
  F'(p,k)$ for all $p\in\precond{k}$.   Likewise, $M'(p) \geq G(\dist)\geq F'(p,i)$ for all $p\in\precond{i}$.
  Moreover, just as in \hyperlink{proofr4}{the proof of} (\ref{r4}), we derive, for all
  $p\in\precond{i}\cap\precond{k}$,
  $$\quad\begin{array}[b]{@{}r@{~\geq~}l@{}}
  \multicolumn{2}{@{}l}{M'(p)\geq G(\dist)} \\
  \mbox{}& \displaystyle
    F'(p,k)\cdot\big(G(\ini[k]\cdot\fire) - G(\ini[k]\cdot\undone)\big)
    + \hspace{-.7em}\sum_{h\geq^\# g\in\postcond{p}}\hspace{-.5em}F'(p,g)\cdot G(\txf{fetch}_{g,h}^{p,k})
\\
  & \displaystyle
    F'(p,k)\cdot\big(G(\ini[k]\cdot\fire) - G(\ini[k]\cdot\undone)\big)
    + \hspace{-.7em}\sum_{h\geq^\# g\in\postcond{p}}\hspace{-.5em}F'(p,g)\cdot G(\fetched[g]{h})
\\
  & F'(p,k)\cdot\big(G(\ini[k]\cdot\fire) - G(\ini[k]\cdot\undone)\big) + F'(p,i)\cdot G(\fetched{j})
\\
  & F'(p,k)\cdot\big(G(\ini[k]\cdot\fire) - G(\ini[k]\cdot\und[\Pre^k_l])\big) + F'(p,i)\cdot G(\fetched{j})
\\
  & F'(p,k) + F'(p,i)
  \end{array}$$
  by (\ref{r3}), (\ref{p_j}), (\ref{fetch}), (\ref{rFp}), (\ref{undocount}) and (\ref{pre}), respectively.
  It follows that $M'[\{k\}\mathord+\{i\}\rangle$. As $i\confeq k$ and $N'$ is a \hyperlink{finitary}{finitary}
  \hyperlink{scn}{structural conflict net}, this is impossible. (Note that this argument
  holds regardless whether $i=k$.)

  Now assume \plat{$G(\fetched{j})\mathord\leq 0$}.
  Then, in the notation of (\ref{pi-execute}), \plat{$E^i_j\mathord=1$}.
  As $G_1(\exec[k]{l})\linebreak[2]=0$, (\ref{rFp}) and (\ref{fetch}) yield $G_1(\fetched[k]{l})=0$.
  Hence $G(\exec[k]{l})=1$ and $G(\fetched[k]{l})= 0$, so $E^k_l=1$.
  We will conclude the proof by deriving a contradiction from \plat{$E^i_j=E^k_l=1$}.
  In case $j=l$ this contradiction emerges immediately from (\ref{pi-execute}).
  By symmetry it hence suffices to consider the case $j< l$.

  By (\ref{r3}) and (\ref{r5}) we have $M'(p)\geq G(\dist)\geq F'(p,j)$ for all
  $p\in\precond{j}$, so $M'[j\rangle$. Likewise $M'[l\rangle$ and, using (\ref{r4}),
  $M'[i\rangle$ and $M'[k\rangle$. Since $j \confeq i \confeq k$ and
  $N'$ has no \hyperlink{M}{fully reachable \visible pure \structuralM},
  $j \confeq k$. Since $j \confeq k \confeq l$ and
  $N'$ has no \hyperlink{M}{fully reachable \visible pure \structuralM},
  $j \confeq l$. So $j<^\# l$.
  By (\ref{pi-execute}), using that \plat{$E^i_j=1$}, \plat{$T^j_l=0$}. This is in contradiction with
  $E^k_l=1$ and (\ref{rtrans}).

\item Suppose that $G(\exec{j})>0$ and $G(\exec[k]{l})>0$, with $i\leq^\# j \confeq k \leq^\# l \in T'$.
  By (\ref{r3}) and (\ref{r5}) we have $M'(p)\mathbin\geq G(\dist)\mathbin\geq F'(p,j)$ for all
  $p\inp\precond{j}$, so $M'[j\rangle$. Likewise, using (\ref{r4}),
  $M'[i\rangle$ and $M'[k\rangle$. Since $i\confeq j \confeq k$ and
  $N'$ has no \hyperlink{M}{fully reachable \visible pure \structuralM},
  $i \confeq k$. Using this, the result follows from (\ref{rExec2}).
  \hfill\filledbox
\end{enumerate}
\end{proofclaimNobox}

\begin{clm}\label{cl-extra}
For any $M\in[M_0\rangle_N$ there exist $M'\in[M'_0\rangle_{N'}$ and
$G\fin \Int^T$ satisfying (\ref{r0})--(\ref{rLast}) from \refcl{reachable}, and
\begin{enumerate}[(A)]
\setcounter{enumi}{11}
\item there are no $j\geq^\# i \confeq k \leq^\# l \in T'$ with
  \plat{$M[\exec[i]{j}\rangle$} and \plat{$G(\exec[k]{l})>0$},\label{rExec4}
\item there are no $i\leq^\# j \confeq k \leq^\# l \in T'$ with
  \plat{$M[\exec[i]{j}\rangle$} and \plat{$G(\exec[k]{l})>0$},\label{rExec5}
\item if \plat{$M[\exec[i]{j}\rangle$} for $i\leqc j \in T'$ then $M'[j\rangle$.\label{rVeryLast}
\end{enumerate}
\end{clm}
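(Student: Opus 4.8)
The plan is to invoke \refcl{reachable}(2) to obtain, for the given $M\in[M_0\rangle_N$, a marking $M'\in[M'_0\rangle_{N'}$ and a signed multiset $G\fin\Int^T$ satisfying (\ref{r0})--(\ref{rLast}), and then to verify that this \emph{same} pair $(M',G)$ also satisfies the three new conditions (\ref{rExec4})--(\ref{rVeryLast}). These differ from (\ref{rExec2})--(\ref{rExec3}) only in that the firing hypothesis ``$G(\exec[i]{j})>0$'' is replaced by the stronger local hypothesis ``$M[\exec[i]{j}\rangle$''. The first task is therefore to extract from $M[\exec[i]{j}\rangle$ the token bounds that were previously obtained from the $G$-values.

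Concretely, I would first establish the auxiliary fact that whenever $M[\exec[i]{j}\rangle$ with $i\leqc j\in T'$, one has both $M'[i\rangle$ and $M'[j\rangle$. For $M'[i\rangle$ I would re-run the estimate of \hyperlink{proofr4}{the proof of} (\ref{r4}): under $M[\exec[i]{j}\rangle$ the hypothesis of (\ref{pre}) is met, so $G(\ini[i]\cdot\fire)-G(\ini[i]\cdot\und[\Pre^i_j])\geq 1$, and together with (\ref{p_j}), (\ref{fetch}), (\ref{rFp}) and (\ref{undocount}) this yields $G(\dist)\geq F'(p,i)$ for every $p\in\precond{i}$; by (\ref{r3}) this gives $M'(p)\geq F'(p,i)$, i.e.\ $M'[i\rangle$. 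For $M'[j\rangle$ I would likewise reuse \hyperlink{proofr5}{the proof of} (\ref{r5}), which under $M[\exec[i]{j}\rangle$ uses (\ref{pre}), (\ref{transin}), (\ref{transout}) and (\ref{undocount}) to derive $G(\dist)\geq F'(p,j)$ for every $p\in\precond{j}$, combined again with (\ref{r3}) (the case $i=j$ reducing to the previous step). This auxiliary fact is exactly condition (\ref{rVeryLast}).

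With this in hand, (\ref{rExec4}) would be proved by contradiction along the lines of (\ref{rExec2}). Assuming $j\geq^\# i\confeq k\leq^\# l$ with $M[\exec[i]{j}\rangle$ and $G(\exec[k]{l})>0$, I would split on whether $G(\fetched[i]{j})\geq 1$. In the first case I would reconstruct the additive estimate of (\ref{rExec2}): for $p\in\precond{i}\cap\precond{k}$, the instance (\ref{p_j}) contributes both an $\ini$-firing term and an $i$-labelled $\fetch$ term to the single quantity $G(\dist)$, and the two hypotheses force each to be positive, so $M'(p)\geq F'(p,i)+F'(p,k)$, whence the step $M'[\{i\}\mathord+\{k\}\rangle$; since $i\confeq k$ this contradicts $N'$ being a structural conflict net. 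In the second case $G(\fetched[i]{j})\leq 0$ gives $E^i_j=1$ and, via (\ref{fetch})/(\ref{rFp}), $E^k_l=1$; using $M'[i\rangle,M'[j\rangle$ (auxiliary fact) and $M'[k\rangle,M'[l\rangle$ (from $G(\exec[k]{l})>0$ via (\ref{r4}),(\ref{r5}),(\ref{r3})) together with the absence of a \hyperlink{M}{fully reachable \visible pure \structuralM}, one forces $j<^\# l$, and then (\ref{pi-execute}) gives $T^j_l=0$, contradicting $E^k_l=1$ and (\ref{rtrans}). The main obstacle lies precisely in the additive count of the first case: because the two sides now enter asymmetrically---one through enabledness $M[\exec[i]{j}\rangle$, the other through $G(\exec[k]{l})>0$---care is needed to see that the forward ($\ini$) and backward ($\fetch$) contributions to $G(\dist)$ in (\ref{p_j}) are simultaneously positive and do not cancel on the shared preplaces.

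Finally, (\ref{rExec5}) would be reduced to (\ref{rExec4}) exactly as (\ref{rExec3}) was reduced to (\ref{rExec2}). Given $i\leq^\# j\confeq k\leq^\# l$ with $M[\exec[i]{j}\rangle$ and $G(\exec[k]{l})>0$, the auxiliary fact yields $M'[i\rangle$ and $M'[j\rangle$, while $G(\exec[k]{l})>0$ yields $M'[k\rangle$. Since $i\confeq j\confeq k$ and the marking condition for a pure \structuralM\ is taken with $\cup$ (hence follows from the separate enablednesses $\precond{i}\cup\precond{j}\cup\precond{k}\leq M'$), the absence of a fully reachable pure \structuralM\ forces $i\confeq k$; the hypotheses then match those of (\ref{rExec4}), which applies and closes the argument.
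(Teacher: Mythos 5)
You have missed the one\mbox{-}step reduction on which the paper's proof rests, and the direct route you take instead breaks down exactly where you yourself flag it as delicate. Since $M\in[M_0\rangle_N$ and $M[\exec{j}\rangle$, the paper simply \emph{fires} the enabled transition: putting $M_1:=M+\marking{\exec{j}}$ and $G_1:=G+\{\exec{j}\}$, \refcl{reachable}(1) guarantees that $(M_1,M',G_1)$ again satisfies (\ref{r0})--(\ref{rLast}), and $G_1(\exec{j})>0$ since $G(\exec{j})\geq 0$ by (\ref{rExec}). This converts the enabledness hypothesis into the positivity hypothesis, so (\ref{rExec4}) and (\ref{rExec5}) follow instantly from (\ref{rExec}), (\ref{rExec2}) and (\ref{rExec3}) applied to $G_1$ -- including the case $(i,j)=(k,l)$, where $G_1(\exec{j})\geq 2$ contradicts (\ref{rExec}) -- and (\ref{rVeryLast}) is the two-line consequence $M'(p)\geq G_1(\dist)\geq F'(p,j)\cdot G_1(\exec{j})\geq F'(p,j)$ of (\ref{r3}) and (\ref{r5}). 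Your ``auxiliary fact'' is a correct but needless re-derivation of the computations inside the proof of \refcl{reachable}(1) (it does yield (\ref{rVeryLast})), and your reduction of (\ref{rExec5}) to (\ref{rExec4}) via the absence of a fully reachable pure \structuralM\ is sound; the genuine gaps are in your argument for (\ref{rExec4}).

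There are three. First, the case split is on the wrong quantity: it must be on $G(\fetched[k]{l})$, not on $G(\fetched[i]{j})$. The only instance of (\ref{p_j}) that can deliver $F'(p,i)+F'(p,k)$ for $p\in\precond{i}\cap\precond{k}$ is the one for the place $p_i$, combining the (\ref{pre})-bound $G(\ini[i]\cdot\fire)-G(\ini[i]\cdot\und[\Pre^i_j])\geq 1$ (available from $M[\exec{j}\rangle$) with the summand $F'(p,k)\cdot G(\txf{fetch}^{p,i}_{k,l})$, which is positive only when $G(\fetched[k]{l})\geq 1$. Under your split both positive contributions attach to the $i$-side, giving only $2F'(p,i)$; and the alternative instance for $p_k$ fails because nothing forces $G(\ini[k]\cdot\fire)-G(\ini[k]\cdot\undone)\geq 1$: (\ref{pre}) for $(k,l)$ would need $M[\exec[k]{l}\rangle$, and the inequalities of \refcl{G-properties} are consistent with $G(\exec[k]{l})=1$ while $\ini[k]$ has been fired and completely undone, so ``the two hypotheses force each to be positive'' is not a valid deduction. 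Second, in your other case ``$E^i_j=1$'' is unjustified: $E^i_j=G(\exec{j})-G(\fetched{j})$, and you have $M[\exec{j}\rangle$ but not $G(\exec{j})\geq 1$; the needed conclusion $T^j_l=0$ must instead be read off from enabledness directly, since $\pi_{j\#l}\in\precond{\exec{j}}$ gives $M(\pi_{j\#l})\geq 1$ and hence $T^j_l+\sum_{i'\leqc j}E^{i'}_j\leq 0$. Third, ``one forces $j<^\# l$'' is false: the no-\structuralM\ argument yields only $j\confeq l$, and because your two hypotheses are asymmetric the paper's appeal to symmetry is unavailable for the orientation $l<^\# j$, which needs a separate argument (e.g.\ (\ref{transout}) with $h:=l$ gives, via (\ref{undocount}), $T^l_j\geq 1$, contradicting $E^k_l=1$ in (\ref{pi-execute})); the cases $j=l$ and $(i,j)=(k,l)$, which the claim does not exclude, are likewise unaddressed. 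Each hole is patchable, but the patched proof amounts to redoing, with permuted hypotheses, the case analysis already carried out in \refcl{reachable}(1) -- which is precisely what the paper's fire-and-reuse step avoids.
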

\begin{proofclaimNobox}
Given $M$, by \refcl{reachable}(2) there are $M'$ and $G$ so that
the triple $(M,M',G)$ satisfies (\ref{r0})--(\ref{rLast}).
Assume \plat{$M[\exec[i]{j}\rangle$} for some $i\leqc j \in T'$.
Let $M_1:=M+\marking{\exec[i]{j}}$ and $G_1:=G+\{\exec{j}\}$.
By (\ref{rExec}) $G(\exec{j})\geq 0$, so $G_1(\exec{j})>0$.
By \refcl{reachable}(1) the triple ($M_1,M',G_1$) satisfies (\ref{r0})--(\ref{rLast}).
\begin{enumerate}[(A)]
\setcounter{enumi}{11}
\item
  Suppose \plat{$G(\exec[k]{l})>0$} for certain $l\geq^\# k \confeq i$.\
  In case $(i,j)=(k,\ell)$, $G_1(\exec{j})\geq 2$, contradicting (\ref{rExec}).
  In case $(i,j)\neq (k,\ell)$, $G_1$ fails (\ref{rExec2}), also a contradiction.
\item
  Suppose \plat{$G(\exec[k]{l})>0$} for certain $l\geq^\# k \confeq j$.
  Then $G_1$ fails (\ref{rExec}) or (\ref{rExec3}), a contradiction.
\item
  By (\ref{r3}) and (\ref{r5}) $M'(p)\geq G_1(\dist)\geq F(p,j)$ for all $p\in\precond{j}$, so $M'[j\rangle$.
\hfill\filledbox
\end{enumerate}
\end{proofclaimNobox}

\begin{clm}\label{cl-concurrency}
If \plat{$M[\{\exec[i]{j}\}\mathord+\{\exec[k]{l}\}\rangle$} for some $M\in[M_0\rangle_N$
then $\neg (i \confeq k)$.
\end{clm}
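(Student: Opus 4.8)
The plan is to argue by contradiction, assuming the step $\{\exec[i]{j}\}\mathord+\{\exec[k]{l}\}$ is enabled at some reachable $M$ together with $i\confeq k$, and to reach a contradiction using only the invariants already collected in \refcl{reachable}. First I would serialise the step: from $\precond{\exec[i]{j}}+\precond{\exec[k]{l}}\leq M$ the firing $M[\exec[i]{j}\rangle M_1$ is possible with $M_1=M+\marking{\exec[i]{j}}$, and since $\precond{\exec[k]{l}}\leq M-\precond{\exec[i]{j}}\leq M_1$, the transition $\exec[k]{l}$ remains enabled, i.e.\ $M_1[\exec[k]{l}\rangle$.

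Next I would track a witnessing triple along these two firings. By \refcl{reachable}(\ref{follow2}) there is a triple $(M,M',G)$ satisfying (\ref{r0})--(\ref{rLast}). Applying $\follow$ for $\exec[i]{j}$ at $M$ and then for $\exec[k]{l}$ at $M_1$, and invoking \refcl{reachable}(\ref{follow1}) after each step to preserve the invariants, yields a triple $(M_2,M',G_2)$ that again satisfies (\ref{r0})--(\ref{rLast}). Because neither $\exec[i]{j}$ nor $\exec[k]{l}$ has the form $\comp{j}$, each $\follow$ step merely adds the fired transition to $G$; hence $G_2$ counts $\exec[i]{j}$ and $\exec[k]{l}$ each one higher than $G$ does, and $G(\exec{j})\geq 0$ by (\ref{rExec}).

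Then the contradiction is immediate from (\ref{rExec}) and (\ref{rExec2}). If $(i,j)=(k,l)$, the two increments give $G_2(\exec[i]{j})\geq 2$, contradicting the bound $G_2(\exec[i]{j})\leq 1$ of (\ref{rExec}). If $(i,j)\neq(k,l)$, then $G_2(\exec[i]{j})>0$ and $G_2(\exec[k]{l})>0$ while $i\confeq k$ and $(i,j)\neq(k,l)$---precisely the situation ruled out by (\ref{rExec2}). Either way the assumption collapses, giving $\neg(i\confeq k)$.

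The main thing to be careful about is the bookkeeping around $\follow$: one must confirm that firing an execute transition is dispatched to the non-$\comp{j}$ branch of $\follow$ (so that $M'$ is left unchanged and $G$ grows by exactly that transition), and that the enabledness facts $M[\exec[i]{j}\rangle$ and $M_1[\exec[k]{l}\rangle$ are exactly the hypotheses \refcl{reachable}(\ref{follow1}) consumes. I expect no deeper obstacle, since the real content---that conflicting transitions cannot be executed simultaneously---is already packaged into (\ref{rExec2}), whose proof in turn used the absence of a fully reachable \visible pure \structuralM; consequently that structural hypothesis need not be re-invoked here.
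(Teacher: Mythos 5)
Your proposal is correct and follows essentially the same route as the paper's own proof: obtain a witnessing triple via \refcl{reachable}(\ref{follow2}), serialise the step and apply $\follow$ together with \refcl{reachable}(\ref{follow1}) twice (the paper happens to fire $\exec[k]{l}$ first, which is immaterial), and derive the contradiction from (\ref{rExec}) in the case $(i,j)=(k,l)$ and from (\ref{rExec2}) otherwise. Your explicit check that both firings land in the non-$\comp{j}$ branch of $\follow$, and that enabledness of the second transition persists after the first firing, is exactly the bookkeeping the paper's proof relies on implicitly.
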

\begin{proofclaim}
Suppose \plat{$M[\{\exec[i]{j}\}\mathord+\{\exec[k]{l}\}\rangle$} for some $M\in[M_0\rangle_N$.
By \refcl{reachable}(2) there exist $M'\in[M'_0\rangle_{N'}$ and
$G\fin \Int^T$ satisfying (\ref{r0})--(\ref{rLast}).
Let $M_1:=M+\marking{\exec[k]{l}}$ and \plat{$G_1:=G\mathbin+\{\exec[k]{l}\}$}.
By \refcl{reachable}(1) the triple $(M_1,M',G_1)$ satisfies (\ref{r0})--(\ref{rLast}).
Let $M_2:=M_1+\marking{\exec{j}}$ and \plat{$G_2:=G_1\mathbin+\{\exec{j}\}$}.
Again by \refcl{reachable}(1), the triple $(M_2,M',G_2)$ also satisfies (\ref{r0})--(\ref{rLast}).
As (\ref{rExec}) implies \plat{$G(\exec{j})\mathbin\geq 0$}, in case $(i,j)\mathbin=(k,l)$ we obtain
\plat{$G_2(\exec{j})\mathbin\geq 2$}, contradicting
(\ref{rExec}). Hence $(i,j)\mathbin{\neq}(k,l)$. Moreover, $G_2(\exec[k]{l})>0$ and $G_2(\exec{j})>0$.
Now (\ref{rExec2}) implies $\neg (i \confeq k)$.
\end{proofclaim}
\noindent
For any $t\in\{\ini,~ \trans{j}\}$ with
$h,j\inp T'$, and any \mbox{$\omega\inp \UI$} with $t\in\UI_\omega$, we write
$$t(\omega) := t\cdot\fire + t\cdot\undo[\omega] +
               \big(\sum_{f\in t^{\,\it far}} t\cdot\und\big) +
               t\cdot\undone+t\cdot\reset[\omega]\;.\vspace{-1ex}$$
The transition $t$ has no preplaces of type {\scriptsize \it in}, nor postplaces of type {\scriptsize \it out}.
By checking in \reftab{reversible} or \reffig{reversible} that each other place
occurs as often in $\precond{u(\omega)}+\postcond{(u\cdot\elide[\omega])}$ as in
$\postcond{u(\omega)}+\precond{(u\cdot\elide[\omega])}$, one verifies, for any $\omega\in
\UI$ with $t\in \UI_\omega$, that\vspace{-1ex}
\begin{equation}\label{elide}
\marking{t(\omega)} = \marking{t\cdot\elide[\omega]}.
\end{equation}
Let $\equiv$ be the congruence relation on finite signed multisets of transitions
generated by
\begin{eqnarray}\label{elideNF}
t(\omega) &\equiv& t\cdot\elide[\omega]
\end{eqnarray}
for all $t \in \{\ini,~ \trans{j} \mid h,j\inp T'\}$ and $\omega\in \UI$ with
$\UI_\omega\ni t$.
Here \emph{congruence} means that $G_1\mathbin\equiv G_2$ implies $k\cdot G_1 \mathbin\equiv k\cdot G_2$ and
$G_1+H \mathbin\equiv G_2+H$ for all $k\inp\Int$ and $H\fin \Int^T$.
Using (\ref{elide}) $G_1\equiv G_2$ implies $\marking{G_1}=\marking{G_2}$.

\begin{clm}\label{cl-0}
If $M'=\marking{G}$ for $M'\in \Int^{S'}$ and $G\fin \Int^T$ such that for all $i\in T'$
we have $G(\comp{j})\mathbin=0$ and either $\forall j\geq^\# i.~G(\exec{j})\mathbin\geq 0$ or $\forall
j\geq^\# i.~G(\exec{j})\mathbin\leq 0$, then $G \mathbin\equiv \emptyset$.
\end{clm}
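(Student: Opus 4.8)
The plan is to exploit the hypothesis $M'=\marking{G}\in\Int^{S'}$ in its strongest form: since $M'$ lives on the original places only, the token replacement $\marking{G}$ must vanish on \emph{every} auxiliary place $s\in S\setminus S'$, and each such vanishing is a linear balance equation $\sum_{t\in T}G(t)\cdot\marking{t}(s)=0$. I would read these equations off \reftab{conflictrepl} place by place. First I would run through the macro-expansion places of each reversible transition $t\in\{\ini[i],\,\trans{j}\}$, namely $\Fired(t)$, $\keep[\omega](t)$, $\take(f,t)$, $\took(f,t)$ and $\rho(t)$. These yield $G(t\cdot\undo[\omega])=G(t\cdot\reset[\omega])$ for every interface $\omega$ with $t\in\UI_\omega$, and $G(t\cdot\fire)=G(t\cdot\undone)=G(t\cdot\und[f])=\sum_{\{\omega\mid t\in\UI_\omega\}}G(t\cdot\undo[\omega])$ for every $f\in t^{\,\it far}$.

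Next I would bring in the interface places and the hypothesis $G(\comp{j})=0$. The balance equation for $\ack[\omega](t)$ reads $G(t\cdot\reset[\omega])+G(t\cdot\elide[\omega])=\sum_{j\geq^\# \omega}G(\comp[\omega]{j})=0$, so $G(t\cdot\elide[\omega])=-G(t\cdot\reset[\omega])=-G(t\cdot\undo[\omega])$. Feeding this into the balance equation for $\undo[\omega](t)$, which is $\sum_{j\geq^\# \omega}G(\exec[\omega]{j})=G(t\cdot\undo[\omega])+G(t\cdot\elide[\omega])$, gives $\sum_{j\geq^\# i}G(\exec{j})=0$ for each $i\in T'$ (take $t=\ini[i]\in\UIij$, which always has interface $i$). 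This is the pivotal point: the sign hypothesis says that for each fixed $i$ the numbers $G(\exec{j})$ with $j\geq^\# i$ all have one sign, so a vanishing sum forces $G(\exec{j})=0$ for every $i\leqc j$. The balance equations for $\fetchin$ and $\fetchout$ then propagate this along the fetch chain to $G(\fetch)=0$ and $G(\fetched{j})=0$.

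At this stage $G$ vanishes on $\exec{j}$, $\fetch$, $\fetched{j}$ and (by hypothesis) $\comp{j}$, so its only possibly nonzero components sit on the transitions $\dist$ and on the macro transitions of the reversible $t$. Here I would use the congruence directly, via the observation that $\equiv$, being a congruence on $\Int^T$, satisfies $G\equiv\emptyset$ iff $G$ lies in the subgroup generated by the differences $t(\omega)-t\cdot\elide[\omega]$. Setting $a_{t,\omega}:=G(t\cdot\undo[\omega])$, I would form $G':=G-\sum_{t,\omega}a_{t,\omega}\,(t(\omega)-t\cdot\elide[\omega])$ and check component by component, using the equalities gathered above together with $G(t\cdot\elide[\omega])=-a_{t,\omega}$, that $G'$ is zero on $t\cdot\fire$, $t\cdot\undo[\omega]$, $t\cdot\und[f]$, $t\cdot\undone$, $t\cdot\reset[\omega]$ and $t\cdot\elide[\omega]$. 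Thus $G'$ is supported on the $\dist$ alone, and $G\equiv G'$.

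Finally I would show $G'=\emptyset$. Since $\marking{t(\omega)}=\marking{t\cdot\elide[\omega]}$ by (\ref{elide}), we have $\marking{G'}=\marking{G}=M'\in\Int^{S'}$. But $\marking{\dist}=-\{p\}+\{p_c\mid c\in\postcond{p}\}$, and no surviving transition other than $\dist$ touches the auxiliary place $p_c\notin S'$; hence $\marking{G'}(p_c)=G'(\dist)$ must be $0$ whenever $\postcond{p}\neq\varnothing$ (and $\dist$ does not exist otherwise), forcing $G'(\dist)=0$ for every $p\in S'$. Therefore $G'=\emptyset$ and $G\equiv\emptyset$. I expect the main obstacle to be organisational rather than conceptual: correctly extracting and combining the dozen-or-so balance equations from \reftab{conflictrepl} without sign errors, in particular assembling the chain through the reversible-transition macro that delivers $\sum_{j\geq^\# i}G(\exec{j})=0$; once that equation is in hand the sign hypothesis does the real work, and the subsequent cancellation against the generators $t(\omega)-t\cdot\elide[\omega]$ is mechanical.
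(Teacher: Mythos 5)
Your proposal is correct and takes essentially the same route as the paper's proof: both exploit that $\marking{G}$ must vanish on every place in $S\setminus S'$ to turn the inequations (\ref{undo})--(\ref{fetch}) and (\ref{p_j}) of \refcl{G-properties} into balance equations, derive $\sum_{j\geq^\# i}G(\exec{j})=0$ so that the sign hypothesis forces $G(\exec{j})=0$ for all $i\leqc j$, propagate the zeros through the fetch chain, and finally kill $G(\dist)$ via the balance at the place $p_c$. The only difference is bookkeeping: the paper applies the congruence (\ref{elideNF}) up front, normalising $G$ w.l.o.g.\ so that $G(t\cdot\elide[\omega])=0$ throughout, whereas you invoke it at the end by explicitly subtracting the generators $t(\omega)-t\cdot\elide[\omega]$ --- the same manoeuvre in a different order.
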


\begin{proofclaim}
  Let $M'$ and $G$ be as above. 
  W.l.o.g.\ we assume $G(t\cdot\elide[\omega])=0$ for all $t\in\{\ini,~
  \trans{j}\}$ and all $\omega\in\UI$ with $t\in \UI_\omega$, for any $G$ can be brought
  into that form by applying (\ref{elideNF}).
  For each $s\in S\setminus S'$ we have $M'(s)=0$, and using this the inequations
  (\ref{undo})--(\ref{fetch}) and (\ref{p_j}) of \refcl{G-properties} turn into equations.
  For each $i\in T'$ we have $G(\sum_{j\geq^\# i}\exec{j})=0$, using (the equational form
  of) (\ref{undo})--(\ref{interfacecount}), and that $G(\comp{j})=0$. Since
  \plat{$G(\exec{j})\geq 0$} (or $\mbox{}\leq 0$) for all $j\geq^\# i$,
  this implies that \plat{$G(\exec{j})= 0$} for each $i\leqc j\in T'$.
  With (\ref{fetch}) we obtain $G(\fetched{j})=G(\fetch)=0$ for each applicable $p,c,i,j$.
  Using that $G(t\cdot\elide[\omega])=0$ for each applicable $t$ and $\omega$, with
  (\ref{reset})--(\ref{undocount}) and (\ref{p_j}) we find $G(t)=0$ for all $t\in T$.
\end{proofclaim}

\begin{clm}\label{cl-D}
Let $M:=M'+(M_0\mathord-M'_0)+\marking{H}\in[M_0\rangle_N$ for $M'\inp[M'_0\rangle_{N'}$ and
$H\fin \Int^T$ with \plat{$H(\exec{j})\mathbin= 0$} for all $i\leqc j\in T'$.
\begin{enumerate}[(a)]
\item If \plat{$H(\comp[i]{j})<0$} and \plat{$H(\comp[k]{l})<0$} for certain
  $i,k \in T'$ then $\neg(i \mathrel{\#} k)$.\label{Hcomp}
\item If \plat{$M[\exec{j}\rangle$} and \plat{$H(\comp[k]{l})<0$} for certain
  $i,k \in T'$ then $\neg(i \confeq k)$ and $\neg(j \confeq k)$.\label{Hexec}
\item $H(\dist)\geq 0$ for all $p\in S'$ (with $\postcond{p}\neq\emptyset$).\label{dist-positive-H}
\item Let $c\confeq i \in T'$.
  If $H(\dist)\geq F'(p,c)$ for all $p\in\precond{c}$, then $H(\comp{j})=0$.\label{dist-final}
\item If $M[\exec{j}\rangle$ with $i\leqc j\in T'$ then $M'[j\rangle$.\label{Hexecj}
\end{enumerate}
\end{clm}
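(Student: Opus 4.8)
The plan is to handle all five parts from a single setup and then peel them off in the order (c), (d), (e), (a), (b), because the conflict statements (a),(b) rest on the enabledness facts extracted earlier. The given data is a reachable marking $M=M'+(M_0-M'_0)+\marking{H}\in\nat^S$ with $H(\exec{j})=0$ for all $i\leqc j$. The first observation is that, since $M\in\nat^S$, the signed multiset $H$ itself satisfies every inequation (\ref{undo})--(\ref{p}) of \refcl{G-properties} (taking $G:=H$). Feeding $H(\exec{j})=0$ into (\ref{fetch}) gives $H(\fetch)\leq 0$ and $H(\fetched{j})\leq 0$, and then (\ref{reset}) yields the clean global consequence $H(\comp{j})\leq 0$ for all $i\leqc j$; thus the hypotheses ``$H(\comp{j})<0$'' in (a),(b) merely select the strictly negative completions. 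In parallel I would invoke \refcl{extra} to fix a canonical witness $M=M''+(M_0-M'_0)+\marking{G}$ with $M''\in[M'_0\rangle_{N'}$ and $G$ satisfying (\ref{r0})--(\ref{rVeryLast}); note that $\marking{H-G}=M''-M'$ is supported on $S'$, which is what allows structural facts about $G$ to be transported to $H$.

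For (c) I would bound $H(\dist)$ from below using the token count at the copy places $p_c$, i.e.\ inequation (\ref{p_j}), in which the $\ini[c]\cdot\fire$--$\ini[c]\cdot\undone$ contribution is nonnegative by the chain (\ref{undocount}); the delicate point is that the $\fetch$ summand is nonpositive here, so plain nonnegativity of $M$ does not suffice, and one must compare with the canonical $G$ (for which $G(\dist)\geq 0$ by (\ref{r4})) and cancel the $\fetch$ and $\comp$ deviations of $H-G$ through the elide congruence $\equiv$ of (\ref{elideNF}). Part (d) is then an accounting step: assuming $H(\dist)\geq F'(p,c)$ on all $p\in\precond c$ with $c\confeq i$, I would run the reset/complete balance (\ref{reset}) together with (\ref{fetch}) and (c) to force $\sum_{j\geq^\# i}H(\fetched{j})=0$, hence $H(\comp{j})=0$; here structural conflict of $N'$ (via $c\confeq i$, so $c$ and $i$ share a preplace) is what precludes the competing distribution that would otherwise drive some completion strictly negative.

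The substance lies in transporting multiplicity information into genuine enabledness in $N'$, which is where (e),(a),(b) live. For (e), from $M[\exec{j}\rangle$ I would run the faithful-origin / token-tracing argument of the paragraph following \reflem{origin} (and reused in \refcl{reachable} to establish (\ref{r5})), pushing the distribution lower bound along the path $p\;\dist\;p_j\;\ini[j]\cdot\fire\;\ldots\;\exec{j}$ and combining (\ref{p_j}), (\ref{pre}), (\ref{transin}), (\ref{transout}), (\ref{pi-execute}); parts (c) and (d) are used to neutralise the $\dist$, $\fetch$ and $\comp$ deviations so that the trace lands in $M'$, yielding $M'(p)\geq F'(p,j)$ for every $p\in\precond j$, i.e.\ $M'[j\rangle$. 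For (a) I would analogously derive, from $H(\comp{j})<0$ and $H(\comp[k]{l})<0$, that $M'$ carries at least $F'(p,i)$ and $F'(p,k)$ tokens on the respective preplaces and, crucially, that these demands \emph{add} on any shared preplace, so that $M'[\{i\}\mathord+\{k\}\rangle$; as $N'$ is a structural conflict net this forces $\precond i\cap\precond k=\varnothing$, i.e.\ $\neg(i\mathrel{\#}k)$. Part (b) then combines (e) (giving $M'[j\rangle$, and $M'[i\rangle$ since $i\leqc j$) with the $k$-enabledness from $H(\comp[k]{l})<0$; ruling out $i\confeq k$ and $j\confeq k$ proceeds by exhibiting the relevant concurrent steps and invoking both structural conflict and the absence of a fully reachable \visible pure \structuralM, exactly as in \refcl{extra}(\ref{rExec4}) and (\ref{rExec5}).

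The main obstacle I anticipate is not any single part but the common bridge feeding (a),(b),(e): converting the \emph{formal} sign of an auxiliary multiplicity ($H(\comp{j})<0$, or enabledness of $\exec{j}$ under $M$) into an honest token count of $M'$ on the preplaces of the underlying $N'$-transitions, and doing so \emph{additively} enough to produce a genuine step (or pure \structuralM) in $N'$. This forces one to trace tokens all the way back through the initialise/transfer/fetch/complete protocol using the conservation inequations of \refcl{G-properties}, to keep the $\fetch$ and $\comp$ deviations controlled via (c) and (d), and only at the very end to discharge the contradiction against the structural-conflict and no-pure-\structuralM\ hypotheses on $N'$.
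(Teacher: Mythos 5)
You correctly diagnose the central difficulty---turning the sign of $H(\comp{j})$, or enabledness of $\exec{j}$ under $M$, into honest token counts at a reachable marking of $N'$---and your opening moves coincide with the paper's: apply \refcl{G-properties} with $G:=H$, deduce $H(\comp{j})\leq 0$ from $H(\exec{j})=0$ via (\ref{fetch}) and (\ref{reset}), and invoke \refcl{extra} for a canonical witness $(M'',G)$. But the bridge you then build between $H$ and $G$ fails. The congruence of (\ref{elideNF}) only exchanges $t(\omega)$ for $t\cdot\elide[\omega]$ with $t\in\{\ini,\,\trans{j}\}$; it \emph{preserves} the counts of $\comp{j}$, $\fetch$, $\dist$ and $\exec{j}$, so it cannot ``cancel the fetch and comp deviations of $H-G$'', and \refcl{0} is not applicable to $H-G$ because $(H-G)(\comp{j})=H(\comp{j})$ is in general nonzero. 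The missing device is the compensation step at the heart of the paper's proof: set $M'_2:=M'+\sum_{i}H(\comp{j})\cdot\marking{i}$ and $G_2:=H-\sum_i H(\comp{j})\cdot G^i_{\!f(i)}$, where $G^i_{\!f(i)}$ is the right-hand side of (\ref{mimic}) and $f(i)$ is the unique $j\geq^\# i$ with $G(\exec{j})>0$ (uniqueness by (\ref{rExec2})). Then $G_2(\comp{j})=0$, so \refcl{0} yields $G_2\equiv G$ and hence $M'_2=M''$; this single identity is what transports exec-multiplicities and token counts between $H$ and the canonical witness, and nothing in your sketch replaces it.

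With that bridge absent, your ordering (c),(d),(e),(a),(b) is circular. The paper proves (a) and (b) \emph{first}, with no token counting at all: $G(\exec[k]{l})=G_2(\exec[k]{l})=-H(\comp[k]{l})>0$, so two negative completions with $i \mathrel{\#} k$ contradict (\ref{rExec2}), and $M[\exec{j}\rangle$ together with a negative completion contradicts (\ref{rExec4}) or (\ref{rExec5}). Part (c) then \emph{uses} (a): for a place $p$ one has $H(\dist)=G(\dist)+\sum_{i\in\postcond{p}}H(\comp{j})\cdot F'(p,i)$, and (\ref{r4}) bounds $G(\dist)$ against each negative summand only separately, so one needs that at most one $i\in\postcond{p}$ has $H(\comp{j})<0$---which is exactly (a). Your additive-token route to (a) presupposes $H(\dist)\geq 0$, i.e.\ (c): circular. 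It also asserts $M'[\{i\}\mathord+\{k\}\rangle$ for the \emph{given} $M'$, which is false in general---the deviation $\sum_m(-H(\comp[m]{l}))\cdot\marking{m}$ may deplete $\precond{i}$ in $M'$; the marking that enables the step is the corrected $M'_2=M''$ (compare the paper's (d), which derives $M'_2[\{c\}\mathord+\{i\}\rangle$). Finally, your direct token-chase for (e) gets stuck: in (\ref{p_j}) the summand $\sum F'(p,i)\cdot H(\fetch)$ is nonpositive for your $H$, and since \refcl{D} carries no normal-form hypothesis on $H$, the \refcl{C}-style vanishing of fetch counts is unavailable, so (c) and (d) cannot neutralise it. The paper instead obtains $M''[j\rangle$ from (\ref{rVeryLast}) and then needs precisely (b) to know that the comp-deviations avoid $\precond{j}$, giving $M'(p)\geq M''(p)$ there---so (e) depends on (b), which your ordering places after (e) and your sketch of (e) never invokes.
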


\begin{proofclaimNobox}
By \refcl{extra} there exist $M'_1\in[M'_0\rangle_{N'}$ and $G_1\fin \Int^T$
satisfying (\ref{r1})--(\ref{rVeryLast}) (with $M$, $M'_1$ and $G_1$ playing the r\^oles of $M$, $M'$ and $G$).
In particular, $M=M'_1+(M_0-M'_0)+\marking{G_1}$, $G_1(\comp{j}) = 0$ for all $i\in T'$,
and $G_1(\exec{j})\geq 0$ for all $i\leqc j\in T'$. Using (\ref{rExec2}), for each $i\in
T'$ there is at most one $j\geq^\#i$ with \plat{$G_1(\exec{j})>0$}; we denote this $j$ by $f(i)$,
and let $f(i):=i$ when there is no such $j$. This makes $f:T'\rightarrow T'$ a function,
satisfying $G_1(\exec{j})=0$ for all $j\geq^\# i$ with $j\neq f(i)$.

Given that \plat{$H(\exec{j})\mathbin=0$} for all $i\leqc j\in T'$,
(\ref{undo})--(\ref{interfacecount}) (or (\ref{reset}) and (\ref{fetch})) imply
$H(\comp{j})\leq 0$ for all $i\in T'$.
Let $M'_2:=M'+\sum_{i \inp T'} H(\comp{j})\cdot \marking{i}$
and $G_2:=H-\sum_{i \inp T'}H(\comp{j})\cdot G^i_{\!f(i)}$, where $G^i_{\!\!j}$ is the
right-hand side of (\ref{mimic}).
Then $M = M'+(M_0-M'_0)+\marking{H} = M'_2+(M_0-M'_0)+\marking{G_2}$,
using that \plat{$\marking{i}=\marking{G^i_{\!f(i)}}$}.
Moreover, $G_2(\comp{j})=0$ for all $i\inp T'$, using that \plat{$G^i_{\!f(i)}(\comp{j})=1$}.

It follows that $M'_1-M'_2 = \marking {G_2-G_1}$. Moreover, we have $(G_2-G_1)(\comp{j})=0$ for
all $i\in T'$. We proceed to show that $G_2-G_1$ satisfies the remaining precondition of \refcl{0}.
So let $i \in T'$. In case \plat{$H(\comp{j})=0$}, for all $j\geq^\# i$ we have
\plat{$G_2(\exec{j})= 0$}, and $G_1(\exec{j})\geq 0$ by (\ref{rExec}). Hence \plat{$(G_2-G_1)(\exec{j})\leq 0$}.
In case \plat{$H(\comp{j})<0$}, we have \plat{$G_2(\exec{{f(i)}})\geq 1$}, and hence, using (\ref{rExec}),
\plat{$(G_2-G_1)(\exec{{f(i)}})\geq 0$}. Furthermore, for all $j\neq f(i)$,
\plat{$G_2(\exec{j})\geq 0$} and \plat{$G_1(\exec{j})=0$}, so again $(G_2-G_1)(\exec{j})\geq 0$.

Thus we may apply \refcl{0}, which yields $G_2 \mathbin\equiv G_1$. It follows that $M'_2\mathbin=M'_1\inp[M'_0\rangle_{N'}$.
\begin{enumerate}[(a)]
\item
Suppose that \plat{$H(\comp[i]{j})<0$} and \plat{$H(\comp[k]{l})<0$} for certain $i\mathrel\#k \in T'$.
Then \plat{$G_2(\exec{{f(i)}})\mathbin>0$} and $G_2(\exec[k]{{f(k)}})\mathbin>0$, so
$G_1(\exec{{f(i)}})\mathbin>0$ and $G_1(\exec[k]{{f(k)}})\mathbin>0$, contradicting (\ref{rExec2}).
\item
Suppose that \plat{$M[\exec{j}\rangle$} and \plat{$H(\comp[k]{l})<0$} for certain
  $k\confeq i$ or $k\confeq j$.\\
Then \plat{$G_1(\exec[k]{{f(k)}})=G_2(\exec[k]{{f(k)}})>0$},
contradicting (\ref{rExec4}) or (\ref{rExec5}).
\item
By (\ref{Hcomp}), for any given $p\in S'$ there is at most one $i\in\postcond{p}$ with $H(\comp{j})<0$.
For all $i\in T'$ with $i\notin\postcond{p}$ we have $G^i_{\!f(i)}(\dist)=0$.
First suppose $k\in\postcond{p}$ satisfies $H(\comp[k]{{f(k)}})<0$.
Then 
$$G_1(\exec[k]{{f(k)}})\begin{array}[t]{@{~=~}l}G_2(\exec[k]{{f(k)}})\\
   H(\exec[k]{{f(k)}})-\sum_{i \in T'}H(\comp{j})\cdot G^i_{\!f(i)}(\exec[k]{{f(k)}})\\
   0-H(\comp[k]{{f(k)}}),\end{array}$$
so by (\ref{r4}) $G_1(\dist)\geq -F'(p,k)\cdot H(\comp[k]{{f(k)}})$.
Hence
$$H(\dist)~\begin{array}[t]{@{}l}=~G_2(\dist)+\sum_{i\in T'}H(\comp{j})\cdot G^i_{\!f(i)}(\dist)\\
   =~ G_1(\dist)+H(\comp[k]{{f(k)}})\cdot G^{k}_{f(k)}(\dist)\\
   \geq~ -F'(p,k)\cdot H(\comp[k]{{f(k)}}) + H(\comp[k]{{f(k)}})\cdot F'(p,k)=0.
  \end{array}$$
In case there is no $i\in\postcond{p}$ with $H(\comp{j})<0$ we have
$$\qquad\! H(\dist)=G_2(\dist)+\!\sum_{i\in T'}H(\comp{j})\cdot G^i_{\!f(i)}(\dist)=G_1(\dist)\mathbin\geq 0\vspace{-2ex}$$
by (\ref{r4}) and (\ref{rExec}).

\item
Since \plat{$H(\comp{j})\leq 0$} and \plat{$G^i_{\!f(i)} (\dist)\geq 0$} for all $i \inp T'$,
also using (\ref{dist-positive-H}),
all summands in \plat{$H(\dist)+\sum_{i \in T'}-H(\comp{{f{i}}})\cdot G^i_{\!f(i)}(\dist)$} are positive.
Now suppose \plat{$H(\comp{j})<0$} for certain $i\inp T'$.
Then, using (\ref{r3}), for all $p\in\precond{i}$,
$$M'_1(p)\geq G_1(\dist) = G_2(\dist) \geq G^i_{\!f(i)}(\dist)=F'(p,i).$$
Furthermore, let $c \confeq i$ and suppose $H(\dist)\geq F'(p,c)$ for all $p\in\precond{c}$.
Then, using (\ref{r3}),\vspace{-2ex}
$$M'_1(p)\geq G_1(\dist) = G_2(\dist) \geq H(\dist)\geq F'(p,c)$$
for all $p\in\precond{c}$.
Moreover, if $p\in \precond{c}\cap\precond{i}$ then
$$M'_1(p)\geq G_2(\dist) \geq H(\dist)+G^i_{\!f(i)}(\dist) \geq F'(p,c)+F'(p,i).$$
Hence $M'_2 [\{c\}\mathord+\{i\}\rangle$. However, since $c \confeq i$ and $N'$ is a
\hyperlink{scn}{structural conflict net}, this is impossible.

\item Suppose $M[\exec{j}\rangle$ with $i\leqc j\in T'$.
Then $M'_1[j\rangle$ by (\ref{rVeryLast}).\\
Now $M'=M'_1+\sum_{k \inp T'} -H(\comp[k]{{f(k)}})\cdot \marking{k}$, with $-H(\comp[k]{j})\geq 0$
for all $k\in T'$. Whenever $-H(\comp[k]{j})>0$ then $\neg(j \confeq k)$ by (\ref{Hexec}).
Hence $M'[j\rangle$.
\hfill\filledbox
\end{enumerate}
\end{proofclaimNobox}
\bigskip
\noindent
We now define the class $\NF\subseteq \Int^T$ of signed multisets of transitions in
\emph{normal form} by $H\in\NF$ iff $\ell(H)\equiv\emptyset$ and, for all $t\in\{\ini,~
\trans{j} \mid h,j\inp T'\}$:
\begin{enumerate}[(NF-1)]
\item $H (t\cdot\elide[\omega]) \leq 0$ for each  $\omega\inp \UI$,\label{NF1}
\item $H (t\cdot\undo[\omega]) \geq 0$ for each  $\omega\inp \UI$, or $H (t\cdot\fire) \geq 0$,\label{NF2}
\item and if $H (t\cdot\elide[\omega]) < 0$ for any  $\omega\inp \UI$,
  then $H (t\cdot\undo[\omega]) \leq 0$ and $H (t\cdot\fire) \leq 0$.\label{NF3}
\end{enumerate}
We proceed verifying the remaining conditions of \refthm{3ST}.
\begin{enumerate}[1.]
\item[\ref{normalformST}.]
By applying (\ref{elideNF}), each signed multiset $G\fin\Int^T$ with $\ell(G)\equiv\emptyset$
can be converted into a signed multiset \mbox{$H\fin\NF$} with $\ell(H)\equiv\emptyset$, such that
$\marking{H}=\marking{G}$. Namely, for any $t\in\{\ini,~
\trans{j} \mid h,j\inp T'\}$, first of all perform the following
three transformations, until none is applicable:
\begin{enumerate}[(i)]
\item correct a positive count of a transition $t\cdot\elide[\omega]$ in $G$ by adding
  $t(\omega)-t\cdot\elide[\omega]$ to $G$;
\item if both $H(t\cdot\undo[\omega])<0$ for some $\omega$ and $H(t\cdot\fire)<0$,
  correct this in the same way;
\item and if, for some $\omega$, $t\mathord\cdot\elide[\omega]$ has a negative and
  $t\mathord\cdot\undo[\omega]$ a positive count, add $t\cdot\elide[\omega]-t(\omega)$.
\end{enumerate}
Note that transformation (iii) will never be applied to the same $\omega$ as (i) or (ii),
so termination is ensured. Properties (NF-\ref{NF1}) and (NF-\ref{NF2}) then hold for $t$.
After termination of (i)--(iii), perform
\begin{enumerate}[(i)]
\item[(iv)] if, for some $\omega$, $H(t\cdot\elide[\omega])<0$ and
  $H(t\cdot\fire)>0$, add $t\cdot\elide[\omega]-t(\omega)$.
\end{enumerate}
This will ensure that also (NF-\ref{NF3}) is satisfied, while preserving (NF-\ref{NF1}) and (NF-\ref{NF2}).

Define the function $f:T\rightarrow\nat$ by $f(u):=1$ for all $u\in T$ not of the form
$u=t\cdot\elide[\omega]$, and $f(t\cdot\elide[\omega]):=f(t(\omega))$ (applying the last
item of \refdf{multiset}). Then surely $f(G)=f(H)$.

\item[\ref{lastST}.] 
  Let $M'\in\nat^{S'}$, $U'\in\nat^{T'}$ and $U\in\nat^{T}$ with $\ell(U)=\ell'(U')$
  and $M'+\!\precond{U'}\in[M'_0\rangle_{N'}$.
  Since $N'$ is a \hyperlink{finitary}{finitary} \hyperlink{scn}{structural conflict net}, it admits no
  self-concurrency, so, as $\precond{U'}\leq M'+\!\precond{U'}\in[M'_0\rangle_{N'}$,
  the multiset $U'$ must be a set. As $N'$ is \hyperlink{plain}{plain}, this implies that the multiset $\ell'(U')$ is a set.
  Since $\ell(U)=\ell'(U')$, also $\ell(U)$, and hence $U$, must be a set.
  All its elements have the form $\exec{j}$ for  $i\leqc j\in T'$,
  since these are the only transitions in $T$ with visible labels.
  Note that $U'$ is completely determined by $U$, namely by
  \plat{$U'=\{i\mid \exists j.~\exec{j}\in U\}$}.
  We take $H_{M',U}:=$\vspace{-1ex}
  $$\qquad\quad\sum_{p\in S'} (M'\mathord+\!\precond{U'})(p)\cdot\{\dist\} +
  \!\!\!\!\!\!\!\! \sum_{(M'+\!\precond{U'})[j\rangle}\!\!\!\!\left(\{\ini\cdot\fire\} + 
  \hspace{-2.7em} \sum_{h<^\#j,~\nexists\exec[g]{h}\in U}\hspace{-2.6em} \{\trans[h]{j}\cdot\fire\}\right)
  $$
  Since $N'$ is \hyperlink{finitary}{finitary}, \plat{$H_{M',U}\fin\nat^{T_+}$}. Moreover,
  $\ell(H_{M',U})\equiv\emptyset$.

  Let $H\mathbin{\fin} \NF$ with $M:=M'+\!\precond{U'}+(M_0\mathord-M'_0)+\marking{H}-\precond{U}\in\nat^S$
  and $M+\precond{U}\in[M_0\rangle_N$.
  Since $H\inp\NF$, and thus $\ell(H)\equiv\emptyset$, $H(\exec[i]{j})=0$.
  From here on we apply \refcl{G-properties} and \refcl{D} with $M+\precond{U}$ and
  $M'+\precond{U'}$ playing the r\^oles of $M$ and $M'$.
  Note that the preconditions of these claims are met.

  That $H(\exec[i]{j})=0$ for all $i\leqc j\inp T'$, together with (\ref{undo}) and the
  requirements (NF-\ref{NF1}) and \mbox{(NF-\ref{NF3})} for normal forms, yields
  $H(t\cdot\elide)\leq 0$ as well as $H(t\cdot\undo)\leq 0$.
  Using this, (\ref{reset})--(\ref{fetch}) imply that
  \begin{equation}\label{T-negative}
  H(u)\leq 0 ~\mbox{ for each }~ u\in T_-.
  \end{equation}
  \begin{clm}\label{cl-C} Let $c \inp T'$ and $p\in\precond{c}$. Then
  \begin{iteMize}{$\bullet$}
  \item if $H(\ini[c]\cdot\fire)>0$ then $H(\fetch)=0$ for all $i\in\postcond{p}$ and $j\geq^\# i$, and
  \item if $H(\trans[b]{c}\cdot\fire)>0$ for some $b<^\#c$ then $H(\fetch)=0$ for all $i\in\postcond{p}$ and $j\geq^\# i$.
  \end{iteMize}
  \end{clm}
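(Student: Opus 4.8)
The plan is to prove both bullets by one uniform argument, taking $t:=\ini[c]$ for the first and $t:=\trans[b]{c}$ for the second. In either case $t$ is a reversible transition whose relevant undo interfaces are exactly the $\omega$ with $\omega\confeq c$: indeed $\ini[c]\inp\UI_\omega$ iff $\omega\confeq c$, and $\trans[b]{c}\inp\UI_\omega$ iff $b<^\# c\confeq\omega$, which under the standing hypothesis $b<^\# c$ again amounts to $\omega\confeq c$. I would first record that, since $H(\exec[i]{j})=0$ for all $i\leqc j$ (already derived from $\ell(H)\equiv\emptyset$), equation~(\ref{fetch}) gives $H(\fetched[i]{j})\le 0$ and $H(\fetch)\le 0$ for all applicable indices (alternatively this is (\ref{T-negative}), as both transitions lie in $T_-$). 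Consequently the target equality $H(\fetch)=0$ will follow by the squeeze $H(\fetched[i]{j})\le H(\fetch)\le 0$ of~(\ref{fetch}) as soon as I establish $H(\fetched[i]{j})=0$; this is the quantity I will actually drive to zero.

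Next I would exploit that $H$ is in normal form. Because $H(t\cdot\fire)>0$, the contrapositive of (NF-\ref{NF3}) together with (NF-\ref{NF1}) forces $H(t\cdot\elide[\omega])=0$ for every interface $\omega\confeq c$. Substituting this into~(\ref{reset}) (applied with the transition $t$ and the interface $\omega$) yields $H(t\cdot\reset[\omega])\le\sum_{j\geq^\#\omega}H(\fetched[\omega]{j})\le 0$ for each such $\omega$, the last inequality using the nonpositivity of the fetched-counts noted above.

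Then I would pin the reset-sum down using the ``idle-place'' counting inequation: (\ref{pi-j}) when $t=\ini[c]$, and (\ref{pi}) when $t=\trans[b]{c}$. In the transfer case the extra summand $\sum_{i\leqc b}H(\exec[i]{b})$ vanishes while $\sum_{i\leqc b}H(\fetched[i]{b})\le 0$, so both cases reduce to $H(t\cdot\fire)\le 1+\sum_{\omega}H(t\cdot\reset[\omega])$. Combined with $H(t\cdot\fire)\ge 1$ this forces $\sum_{\omega}H(t\cdot\reset[\omega])\ge 0$, hence $=0$ (each summand being $\le 0$). A sign-squeeze then gives $H(t\cdot\reset[\omega])=0$ for every $\omega\confeq c$, whence $\sum_{j\geq^\#\omega}H(\fetched[\omega]{j})=0$, and squeezing once more, $H(\fetched[\omega]{j})=0$ for all $\omega\confeq c$ and $j\geq^\#\omega$. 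Finally, for any $i\inp\postcond{p}$ we have $p\inp\precond{i}\cap\precond{c}$, so by definition $i\confeq c$; thus $i$ is one of the interfaces $\omega$, giving $H(\fetched[i]{j})=0$ and therefore $H(\fetch)=0$ for all $j\geq^\# i$, which is the assertion.

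The main obstacle I anticipate is the normal-form bookkeeping in the second step: the conclusion $H(t\cdot\elide[\omega])=0$ is precisely what makes~(\ref{reset}) usable to bound $H(t\cdot\reset[\omega])$, and it hinges on reading the quantifier in (NF-\ref{NF3}) correctly (a single witnessing $\omega$ with negative elide-count would already cap $H(t\cdot\fire)$). Secondary care is needed to identify the correct interface range $\omega\confeq c$ and to match the two distinct counting inequations~(\ref{pi-j}) and~(\ref{pi}); the repeated ``nonpositive summands adding to zero'' arguments are routine but must be aimed at the right index sets.
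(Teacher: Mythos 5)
Your proof is correct and takes essentially the same route as the paper's: both arguments force $H(t\cdot\reset[\omega])=0$ for all interfaces $\omega\confeq c$ from the idle-place inequation ((\ref{pi-j}) resp.\ (\ref{pi}), which the paper invokes in its preprocessed form (\ref{pi-execute})) together with (\ref{T-negative}), use (NF-\ref{NF1})/(NF-\ref{NF3}) to control the elide-counts, and then squeeze $H(\fetched[\omega]{j})$ and finally $H(\fetch)$ to zero via (\ref{reset}) and (\ref{fetch}). The only deviations are cosmetic: you permute the steps (establishing $H(t\cdot\elide[\omega])=0$ up front, where the paper only needs $\geq 0$ after fixing the reset-counts) and you re-derive the reduction of (\ref{pi}) that the paper already recorded as (\ref{pi-execute}).
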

  \begin{proofclaim} Suppose that $H(t\cdot\fire)>0$, for $t=\ini[c]$ or \plat{$t=\trans[b]{c}$}.
  Then (\ref{pi-j}) resp.\ (\ref{pi-execute}) together with (\ref{T-negative}) implies that
  $H(t\cdot\reset[\omega])=0$ for each $\omega$ with $t\in\UI_\omega$.
  In order words, $H(t\cdot\reset)=0$ for each $i\confeq c$, so in particular for each $i\in\postcond{p}$.
  Furthermore, $H(t\cdot\elide)\geq 0$, by requirement (NF-\ref{NF3}) of normal forms.
  With (\ref{reset}), this yields $\sum_{j\geq^\#i}H(\fetched{j})\geq 0$, and
  (\ref{T-negative}) implies $H(\fetched{j})= 0$ for each $j\geq^\#i$.
  Now (\ref{fetch},\,\ref{T-negative}) gives $H(\fetch)= 0$ for each $j\geq^\#i\in\postcond{p}$.
  \end{proofclaim}
\noindent
  We proceed to verify the requirements (\ref{markingST})--(\ref{concurrent}) of \refthm{3ST}.

  \begin{enumerate}[(1)]
  \item[(\ref{markingST})] To show that $M_{M',U}\in\nat^S$, it suffices to apply it to the preplaces of
    transitions in $H_{M',U}+U$:\vspace{-1.5ex}
    $$\qquad\qquad\begin{array}{@{}l@{~=~}l@{}l@{}}
    M_{M',U}(p) & 0 & \mbox{for all }p\in S'\;;\\
    M_{M',U}(p_j) & \left\{\begin{array}{@{}ll@{}}
     (M'+\!\precond{U'})(p)-F'(p,j) & \mbox{if } (M'+\!\precond{U'})[j\rangle  \\
     (M'+\!\precond{U'})(p)           & \mbox{otherwise}
     \end{array}\right. & \mbox{for }p\inp S'\!,~j\inp\postcond{p};\\
    M_{M',U}(\pi_j) & \left\{\begin{array}{@{}l@{\quad}l@{}}
     \phantom{-}0 & \mbox{if } (M'+\!\precond{U'})[j\rangle  \\
     \phantom{-}1 & \mbox{otherwise}
     \end{array}\right. & \mbox{for }j\in T';\\
    M_{M',U}(\Pre^j_k) & \hspace{-1.6pt}\left\{\begin{array}{@{}l@{\quad}l@{}}
     \phantom{-}1 & \mbox{if } (M'+\!\precond{U'})[j\rangle \wedge \exec[j]{k}\notin U \\
     -1 & \mbox{if } \neg(M'+\!\precond{U'})[j\rangle \wedge \exec[j]{k}\in U \\
     \phantom{-}0 & \mbox{otherwise}
     \end{array}\right. & \mbox{for }j\leqc k\in T';\\
    M_{M',U}(\pi_{h\#j}) &  \left\{\begin{array}{@{}l@{\quad}l@{}}
     \phantom{-}0 & \mbox{if } \exists\exec[g]{h}\in U \vee (M'+\!\precond{U'})[j\rangle\\
     \phantom{-}1 & \mbox{otherwise}
     \end{array}\right. & \mbox{for }h<^\# j\in T'\\
    M_{M',U}(\transin{j}) &  \left\{\begin{array}{@{}l@{\quad}l@{}}
     \phantom{-}1 & \mbox{if } (M'+\!\precond{U'})[j\rangle \wedge \exists\exec[g]{h}\in U \\
     \phantom{-}0 & \mbox{otherwise}
     \end{array}\right. & \mbox{for }h<^\#j\in T';\\
    M_{M',U}(\transout{j}) & \left\{\begin{array}{@{}l@{\quad}l@{}}
     \phantom{-}1 & \makebox[0pt][l]{if $(M'+\!\precond{U'})[j\rangle \wedge \nexists\exec[g]{h}\in U 
                                                   \wedge \nexists\exec{j}\in U$} \\
     -1 & \makebox[0pt][l]{if $\big(\neg(M'+\!\precond{U'})[j\rangle \vee \exists\exec[g]{h}\in U\big) 
                                                   \wedge \exists\exec{j}\in U$} \\
     \phantom{-}0 & \mbox{otherwise}
     \end{array}\right. & \begin{array}{@{}l@{}}\mbox{}\\\mbox{}\\\mbox{for }h<^\#j\in T'.\end{array}\\
    \end{array}$$
    For all these places $s$ we indeed have that $M_{M',U}(s)\geq 0$,
    for the circumstances yielding the two exceptions above cannot occur:
    \begin{iteMize}{$\bullet$}
    \item Suppose \plat{$\exec[j]{k}\in U$} with $j\leqc k\in T'$. Then $j\in U'$, so
      $\precond{j} \leq M'+\!\precond{U'}$ and $(M'+\!\precond{U'})[j\rangle$.
      Consequently, $M_{M',U}(\Pre^j_k) \neq -1$ for all $j\leqc k \in T'$.
    \item Suppose $\exec{j}\in U$ with $i\leqc j\in T'$. Then $\precond{\exec{j}}\leq
      \precond{U}$, so $(M+\!\precond{U})[\exec{j}\rangle$.
      \refcl{D}(\ref{Hexecj}) with $M+\!\precond{U}$ and $M'+\!\precond{U'}$ in the
      r\^oles of $M$ and $M'$ yields $(M'+\precond{U'})[j\rangle$.

      If moreover $\exec[g]{h}\inp U$ with $g\mathbin{\leqc} h \mathbin{<^\#}\! j$, then
      $\{g\}\mathord+\{i\}\leq U'$, so $\precond\{g\}\mathord+\!\precond\{i\}\leq
      M'\mathord+\!\precond{U'}$ and
      $(M'+\precond{U'})[\{g\}\mathord+\{i\}\rangle$. In particular, $g\concurrent i$, and since $N'$
      is a \hyperlink{scn}{structural conflict net}, $\precond{g}\cap\precond{i}=\emptyset$.
      By \refcl{D}(\ref{Hexecj})---as above---$(M'\mathord+\!\precond{U'})[h\rangle$, so
      $\precond{g}\cup\precond{h}\cup \precond{j}\cup\precond{i}\leq
        M'\mathord+\!\precond{U'} \inp [M'_0\rangle_{N'}$.
      Moreover, since $g \leqc h <^\# j \geq^\# i$, we have
      $\precond{g}\cap\precond{h}\neq\emptyset$,
      $\precond{h}\cap\precond{i}\neq\emptyset$ and $\precond{i}\cap\precond{j}\neq\emptyset$.
      Now in case also $\precond{h}\cap\precond{i}\neq\emptyset$, the transitions $g$, $h$
      and $i$ constitute a \hyperlink{M}{fully reachable pure $\structuralM$};
      otherwise $h\concurrent i$ and  $h$, $j$
      and $i$ constitute a \hyperlink{M}{fully reachable pure $\structuralM$}.
      Either way, we obtain a contradiction.
      Consequently, $M_{M',U}(\transout{j}) \neq -1$ for all $h<^\# j \in T'$.
    \end{iteMize}
  \item[(\ref{matchST})] Suppose $M'\goesto[a]$; say $M'[i\rangle$ with $\ell'(i)=a$.
    Let $j$ be the largest transition in $T'$ w.r.t.\ the well-ordering $<$ on $T$
    such that $i\leqc j$ and $(M'+\!\precond{U'})[j\rangle$.
    It suffices to show that \plat{$M_{M',U} [\exec{j}\rangle$}, \ie that
    $M_{M',U}(\Pre^i_j)\mathord=1$, $M_{M',U}(\transout{j})\mathord=1$ for all $h\mathbin{<^\#}\!j$,
    and $M_{M',U}(\pi_{j\#l})\mathord=1$ for all $l\mathbin{>^\#}\!j$.
 
    If \plat{$\exec{j}\in U$} we would have $i\in U'$
    and hence $(M'+\!\precond{U'})[2\cdot\{i\}\rangle$.
    Since $N'$ is a \hyperlink{finitary}{finitary} \hyperlink{scn}{structural conflict net}, this is impossible.
    Therefore \plat{$\exec{j}\not\in U$} and, using the calculations from (a) above,
    $M_{M',U}(\Pre^i_j)=1$.
   
    Let $h<^\#j$. To establish that $M_{M',U}(\transout{j})=1$ we need to show that
    there is no $k\leqc j$ with \plat{$\exec[k]{j}\in U$} and no $g\leqc h$ with
    \plat{$\exec[g]{h}\in U$}. First suppose \plat{$\exec[k]{j}\in U$} for some $k\leqc j$.
    Then $k\in U'$ and hence $(M'+\!\precond{U'})[\{i\}\mathord+\{k\}\rangle$.
    This implies $i \smile k$, and, as $N'$ is a structural conflict net, $\precond{i}\cap \precond{k}=\emptyset$.	
    Hence the transitions $i$, $j$ and $k$ are all different, with $\precond{i}\cap \precond{j}\neq\emptyset$ and
    $\precond{j}\cap \precond{k}\neq\emptyset$ but $\precond{i}\cap \precond{k}=\emptyset$.
    Moreover, the reachable marking $M'+\!\precond{U'}$
    enables all three of them. Hence $N'$ contains a \hyperlink{M}{fully reachable pure $\structuralM$},
    which contradicts the assumptions of \refthm{correctness}.

    Next suppose \plat{$\exec[g]{h}\in U$} for some $g\leqc h$.
    Then $(M+\!\precond{U})[\exec[g]{h}\rangle$, so $(M'+\!\precond{U'})[h\rangle$
    by \refcl{D}(\ref{Hexecj}). Moreover, $g \in U'$, so $(M'+\!\precond{U'})[\{i\}\mathord+\{g\}\rangle$.
    This implies $g \smile i$, and $\precond{g}\cap \precond{i}=\emptyset$.
    Moreover, $\precond{g}\cap \precond{h}\neq\emptyset$, $\precond{h}\cap \precond{j}\neq\emptyset$ and
    $\precond{j}\cap \precond{i}\neq\emptyset$, while the reachable marking $M'+\!\precond{U'}$
    enables all these transitions. Depending on whether $\precond{h}\cap \precond{i}=\emptyset$,
    either $h$, $j$ and $i$, or $g$, $h$ and $i$ constitute a 
    \hyperlink{M}{fully reachable pure $\structuralM$},
    contradicting the assumptions of \refthm{correctness}.

    Let $l>^\#j$. To establish that \plat{$M_{M',U}(\pi_{j\#l})=1$} we need to show that
    there is no $k\leqc j$ with \plat{$\exec[k]{j}\in U$}---already done above---and that
    $\neg(M'+\!\precond{U'})[l\rangle$.
    Suppose $(M'+\!\precond{U'})[l\rangle$.
    Considering that $j$ was the largest transition with $i\leqc j$ and
    $(M'+\!\precond{U'})[j\rangle$, we cannot have $i<^\# l$.
    Hence the transitions $i$, $j$ and $l$ are all different, with $\precond{i}\cap \precond{j}\neq\emptyset$ and
    $\precond{j}\cap \precond{l}\neq\emptyset$ but $\precond{i}\cap \precond{l}=\emptyset$.
    Moreover, the reachable marking $M'+\!\precond{U'}$
    enables all three of them. Hence $N'$ contains a \hyperlink{M}{fully reachable pure $\structuralM$},
    which contradicts the assumptions of \refthm{correctness}.
  \item [(\ref{upperboundST})]
  We have to show that $H(t)\leq H_{M',U}(t)$ for each $t\in T$.
  \begin{enumerate}[$\bullet$]
  \item[$\bullet$]
    In case $t\in T_-$ this follows from (\ref{T-negative}) and \plat{$H_{M',U}\in\nat^{T_+}\!\!$}.
  \item[$\bullet$]
    In case $t=\exec{j}$ it follows since $\ell(H)\equiv\emptyset$.
  \item[$\bullet$]
    In case $t=\dist$ it follows from (\ref{p}) and (\ref{T-negative}).
  \item[$\bullet$]
    Next let $t=\ini[c]\cdot\fire$ for some $c\in T'$.
    In case $H(\ini[c]\cdot\fire)\leq 0$ surely we have $H(\ini[c]\cdot\fire)\leq H_{M',U}(\ini[c]\cdot\fire)$.
    So without limitation of generality we may assume that $H(\ini[c]\cdot\fire)>0$.
    By (\ref{pi-j},\,\ref{T-negative}) we have $H(\ini[c]\!\cdot\fire)=1$.
    Using (\ref{p_j}), \refcl{C}, (\ref{T-negative}) and (\ref{p}) we obtain, for all $p\in\precond{c}$,
    $$F'(p,c)\cdot H(\ini[c]\cdot\fire) \leq H(\dist) \leq (M'+\!\precond{U'})(p).$$
    Hence $c$ is enabled under  $M'+\!\precond{U'}$, which implies $H_{M',U}(\ini[c]\cdot\fire)=1$.
  \item[$\bullet$]
    Let $t\mathbin=\trans[b]{c}\cdot\fire$ for some $b\mathbin{<^\#}\! c\inp T'\!$.
    As above, we may assume $H(\trans[b]{c}\!\cdot\fire)\mathbin>0$.
    By (\ref{pi-execute},\,\ref{T-negative}) we have $H(\trans[b]{c}\!\cdot\fire)=1$.
    Using (\ref{T-negative}) and that $H(\exec[g]{b})=0$ for all $g\leqc b$, it follows
    that $(M+\!\precond{U})(\pi_{b\#c})=0$. Hence $\neg(M+\!\precond{U})[\exec[g]{b}\rangle$ for
    all $g\leqc b$, and thus $\nexists \exec[g]{b} \in U$.
    For all $p\in\precond{c}$ we derive
    $$\qquad\qquad\quad\begin{array}{@{}r@{~\leq~}l@{}r}
      \multicolumn{2}{@{}l@{}}{F'(p,c)\cdot H(\trans[b]{c}\cdot\fire)}\\
    \mbox{} & F'(p,c)\cdot\big(H(\trans[b]{c}\cdot\fire)-H(\trans[b]{c}\cdot\undone)\big)&
      (\ref{T-negative})\\
    & F'(p,c)\cdot\big(H(\ini[c]\cdot\fire)-H(\ini[c]\cdot\und[\mbox{$\transin[b]{c}$}])\big)&
      (\ref{transin})\\
    & F'(p,c)\cdot\big(H(\ini[c]\cdot\fire)-H(\ini[c]\cdot\undone)\big)&
      (\ref{undocount})\\
    \multicolumn{1}{r@{~=~}}{\mbox{}}
    & \displaystyle
      \mbox{[the same as above]}
      + \hspace{-.5em}\sum_{j\geq^\# i\in \postcond{p}}\hspace{-.5em}
      F'(p,i) \cdot H(\fetch) &
      (\mbox{\refcl{C}})\\[-10pt]
    & H(\dist) & (\ref{p_j}) \\
    & \displaystyle
      (M'+\!\precond{U'})(p)+ \hspace{-.5em}\sum_{\{i\in T'\mid
        p\in\postcond{i}\}}\hspace{-.5em} H(\comp{j}) & (\ref{p})\\[-10pt]
    & (M'+\!\precond{U'})(p) & (\ref{T-negative}).
    \end{array}\hspace{-2.5em}$$
    Hence $(M'+\!\precond{U'})[c\rangle$, and thus $H_{M',U}(\trans[b]{c})=1$.
  \end{enumerate}
  \item[(\ref{T-ST})]
    If $u\notin T_-$, yet $H(u)\neq 0$, then $u$ is either $\dist$, $\ini\cdot\fire$
    or $\trans{j}\cdot\fire$ for suitable $p\in S'$ or $h,j\in T'$.
    For $u=\dist$ the requirement follows from \refcl{D}(\ref{dist-positive-H});
    otherwise Property (NF-\ref{NF2}), together with (\ref{undocount}), guarantees that
    $H(u)\geq 0$.
  \item[(\ref{disjoint preplacesST})]
    If $H(t)\mathbin>0$ and $H(u)\mathbin<0$, then $t\inp T_+$ and $u\inp T_-$.
    The only candidates for $\precond{t}\cap\precond{u}\neq\emptyset$ are
    \begin{iteMize}{$\bullet$}
    \item \plat{$p_c \in \precond{(\ini[c]\cdot\fire)} \cap \precond{(\fetch)}$}
      for $p\in S'$, $c,i\in\postcond{p}$ and $j\geq^\# i$,
    \item $\transin[b]{c} \in \precond{(\trans[b]{c}\cdot\fire)}
                    \cap \precond{(\ini[c]\cdot\und[{\transin[b]{c}}])}$ for $b\leqc c\in T'$.
    \end{iteMize}
    We investigate these possibilities one by one.
    \begin{iteMize}{$\bullet$}
    \item $H(\ini[c]\cdot\fire)>0 \wedge H(\fetch)<0$ cannot occur by \refcl{C}.
    \item Suppose $H(\trans[b]{c}\cdot\fire)>0$.
      By (\ref{pi-execute},\,\ref{T-negative}) we have $H(\trans[b]{c}\!\cdot\fire)=1$.
      Through the derivation above, in the
      proof of requirement (c), using (\ref{T-negative},\,\ref{transin},\,\ref{undocount}),
      \refcl{C} and (\ref{p_j}), we obtain $H(\dist)\geq F'(p,c)$ for all $p\in\precond{c}$.
      Now \refcl{D}(\ref{dist-final}) yields $H(\comp{j})=0$ for all $i\confeq c$. By (\ref{reset}) and
      (\ref{T-negative}) we obtain $H(\ini[c]\!\cdot\reset)\mathbin=0$ for each such $i$.
      Hence {$\sum_{i\confeqscript c} H(\ini[c]\!\cdot\reset)\mathbin=0$}, and thus
      $H(\ini[c]\cdot\und[{\transin[b]{c}}])=0$ by (\ref{undocount},\,\ref{T-negative}).
    \end{iteMize}
  \item[(\ref{disjoint preplaces 2ST})]
    If $H(u)<0$ and $(M+\!\precond{U})[t\rangle$ with $\ell(t)\neq\tau$,
    then $t=\exec{j}$ for some $i\leqc j\in T'$ and $u\inp T_-$.
    The only candidates for $\precond{t}\cap\precond{u}\neq\emptyset$ are
    \begin{iteMize}{$\bullet$}
    \item $\Pre^i_j \in \precond{(\exec{j})}
                    \cap \precond{(\ini\cdot\und[\Pre^i_j])}$ and
    \item $\transout{j} \in \precond{(\exec{j})}
                    \cap \precond{(\trans{j}\cdot\und[\transout{j}])}$ for $h<^\# j$.
    \end{iteMize}
    We investigate these possibilities one by one.
    \begin{iteMize}{$\bullet$}
    \item Suppose $(M+\!\precond{U})[\exec{j}\rangle$.
      By \refcl{D}(\ref{Hexec}), \plat{$H(\comp[k]{j})\geq 0$} for each $k\confeq i$. By (\ref{reset}) and
      (\ref{T-negative}) we obtain $H(\ini[i]\!\cdot\reset[k])\mathbin=0$ for each such $k$.
      Hence \hspace{-.6pt}\plat{$\sum_{k\confeqscript i} H(\ini[i]\!\cdot\reset[k])\mathbin=0$}, and thus
      $H(\ini[i]\cdot\und[\Pre^i_j])=0$ by (\ref{undocount},\,\ref{T-negative}).
    \item Suppose $(M+\!\precond{U})[\exec{j}\rangle$ and $h<^\#j$.
      By \refcl{D}(\ref{Hexec}), \plat{$H(\comp[k]{j})\geq 0$} for each $k\confeq j$. By (\ref{reset}) and
      (\ref{T-negative}) \plat{$H(\trans{j}\!\cdot\reset[k])\mathbin=0$} for each such $k$.
      So {$\sum_{k\confeqscript j} H(\trans{j}\!\cdot\reset[k])\mathbin=0$}, and
      $H(\trans{j}\cdot\und[\transout{j}])=0$ by (\ref{undocount},\,\ref{T-negative}).
    \end{iteMize}
  \item[(\ref{concurrent})]
    Suppose $(M+\!\precond{U})[\{t\}\mathord+\{u\}\rangle_N$, and $i,k\in T'$
    with $\ell'(i)=\ell(t)$ and $\ell'(k)=\ell(u)$.
    Since the net $N'$ is \hyperlink{plain}{plain}, $t$ and $u$ must have the form $\exec{j}$ and $\exec[k]{j}$
    for some $j>^\#i$ and $l>^\#k$. \refcl{concurrency} yields $\neg(i\confeq k)$ and hence
    $\precond{i}\cap\precond{k}=\emptyset$.
    \qed
  \end{enumerate}
\end{enumerate}
\end{proofNobox}
\noindent
Thus, we have established that the conflict replicating implementation $\impl{N'}$ of a finitary plain
structural conflict net $N'$ without a fully reachable pure $\structuralM$ is branching
ST-bisimilar with explicit divergence to $N'$. It remains to be shown that $\impl{N'}$ is
essentially distributed.

\begin{lem}\label{lem-S-invariant}
Let $N$ be the conflict replicating implementation of a finitary net\\ $N'=(S',T',F',M'_0,\ell')$;
let $j,l\in T'\!$, with $l\mathbin{>^\#} j$.
Then no two transitions from the set $$\{\exec{j}\mid i\leqc j\}
  \cup\{\trans[j]{l}\cdot\fire\} \cup \{\trans[j]{l}\cdot\und[\mbox{$\transout[j]{l}$}]\}
  \cup\{\exec[k]{l}\mid k\leqc l\}$$ can fire concurrently.
\end{lem}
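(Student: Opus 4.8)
The plan is to prove the statement by a small place‑invariant argument built on the two places $\pi_{j\#l}$ and $\transout[j]{l}$. Concretely, I would show (I) that each of the four listed transition types consumes, in total, exactly one token from the pair $\{\pi_{j\#l},\transout[j]{l}\}$, and (II) that every reachable marking $M\in[M_0\rangle_N$ satisfies $M(\pi_{j\#l})+M(\transout[j]{l})\le 1$. These two facts close the lemma at once: were a step $\{t_1\}+\{t_2\}$ with $t_1,t_2$ among the listed transitions (allowing $t_1=t_2$, which covers self‑concurrency) enabled in a reachable $M$, then $\precond{t_1}+\precond{t_2}\le M$ would force at least two tokens to sit on these two places combined, contradicting (II).

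For (I) I would simply read off \reftab{conflictrepl}. The place $\pi_{j\#l}$ is a preplace of $\exec{j}$ for every $i\leqc j$ and of $\trans[j]{l}\cdot\fire$; the place $\transout[j]{l}$ is a genuinely consumed preplace of $\trans[j]{l}\cdot\und[\transout[j]{l}]$ and a read‑arc preplace (pre‑ and postplace simultaneously) of $\exec[k]{l}$ for every $k\leqc l$. Crucially, the transitions of the first group never touch $\transout[j]{l}$ — their $\transout$‑preplaces are the $\transout[h]{j}$ with $h<^\#j$ — and those of the second group never touch $\pi_{j\#l}$ — the $\pi$‑preplaces of $\exec[k]{l}$ are the $\pi_{l\#m}$ with $m>^\#l$. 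Hence each listed transition has precisely one of $\pi_{j\#l},\transout[j]{l}$ in its preset, with multiplicity one.

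Point (II) is the crux. Fixing a reachable $M$, I invoke \refcl{reachable}(2) to obtain $M'\in[M'_0\rangle_{N'}$ and $G\fin\Int^T$ with $M=M'+(M_0-M'_0)+\marking{G}$, so that \refcl{G-properties}, and in particular inequality (\ref{pi-execute}), applies. Both places lie outside $S'$, with $\pi_{j\#l}$ initially marked and $\transout[j]{l}$ not. Reading the pre‑ and posttransitions from \reftab{conflictrepl} (the reset‑pretransitions of $\pi_{j\#l}$ being exactly the $\trans[j]{l}\cdot\reset[\omega]$, and the $\exec[k]{l}$ read‑arcs on $\transout[j]{l}$ cancelling) the marking equation yields the exact counts $M(\pi_{j\#l})=1-T^j_l-\sum_{i\leqc j}E^i_j$ and $M(\transout[j]{l})=G(\trans[j]{l}\cdot\fire)-G(\trans[j]{l}\cdot\und[\transout[j]{l}])$, with $T^j_l,E^i_j$ as in (\ref{pi-execute}). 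By (\ref{undocount}) applied to $t=\trans[j]{l}$ (whose unique far place is $\transout[j]{l}$), $\sum_\omega G(\trans[j]{l}\cdot\reset[\omega])\le G(\trans[j]{l}\cdot\undone)\le G(\trans[j]{l}\cdot\und[\transout[j]{l}])$, so $M(\transout[j]{l})\le T^j_l$. Adding the two counts and using $E^i_j\ge 0$ from (\ref{pi-execute}) gives $M(\pi_{j\#l})+M(\transout[j]{l})\le 1-\sum_{i\leqc j}E^i_j\le 1$, which is (II).

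The one mildly delicate point is the bound $M(\transout[j]{l})\le T^j_l$: this is where the reversal bookkeeping of (\ref{undocount}) is genuinely used, relating the number of completed far‑undos of $\trans[j]{l}$ to the number of its resets. Everything else is a direct reading of the flow relation from \reftab{conflictrepl} together with the nonnegativity facts $T^j_l,E^i_j\ge 0$ already recorded in (\ref{pi-execute}). In particular, no fresh inductive invariant is required: the whole argument is an arithmetic consequence of \refcl{reachable} and \refcl{G-properties}.
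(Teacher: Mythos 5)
Your proof is correct, but it takes a genuinely different route from the paper's. The paper disposes of the lemma with a purely structural, one-token S-invariant: picking for each $i\leqc j$ an arbitrary preplace $q_i$ of $i$, the place set $\{\txf{fetch}^{q_i,i}_{i,j}\txf{-in},\ \txf{fetch}^{q_i,i}_{i,j}\txf{-out} \mid i\leqc j\} \cup \{\pi_{j\#l},\ \transout[j]{l},\ \took(\transout[j]{l},\trans[j]{l}),\ \rho(\trans[j]{l})\}$ holds exactly one token in every reachable marking (one token initially, on $\pi_{j\#l}$, and every transition of $N$ has as many preplaces as postplaces in the set, with multiplicities), and each transition listed in the lemma has a preplace in this set. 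You instead shrink the invariant to the two places $\pi_{j\#l}$ and $\transout[j]{l}$ and prove only the inequality $M(\pi_{j\#l})+M(\transout[j]{l})\le 1$, deriving it arithmetically from \refcl{reachable}(2) and \refcl{G-properties}: your marking equations $M(\pi_{j\#l})=1-T^j_l-\sum_{i\leqc j}E^i_j$ and $M(\transout[j]{l})=G(\trans[j]{l}\cdot\fire)-G(\trans[j]{l}\cdot\und[\mbox{$\transout[j]{l}$}])$ are correct, as is the key bound $M(\transout[j]{l})\le T^j_l$ via (\ref{undocount}) and the use of $E^i_j\ge 0$ from (\ref{pi-execute}). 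Your accounting of presets is also right, including the point that the self-loop between $\transout[j]{l}$ and $\exec[k]{l}$ still contributes to the preset of a step, so two concurrent firings demand two tokens on the pair. In effect you trade the paper's exact token-conservation argument (which needs the four auxiliary places $\txf{fetch}$-$\txf{in}/\txf{out}$, $\took$, $\rho$ to balance every transition) for an inequality that the reachability analysis of Section 6 already supplies.

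One caveat on scope. The lemma is stated for the conflict replicating implementation of an arbitrary \emph{finitary} net $N'$, and the paper's invariant argument indeed uses nothing beyond the construction itself. Your argument, by contrast, invokes \refcl{reachable}(2), which is proved under the standing hypotheses of \refthm{correctness}: $N'$ plain, a structural conflict net, and without a fully reachable \visible pure \structuralM\ (its proof uses these, e.g.\ the absence of self-concurrency in part (G) and the absence of fully reachable pure {\structuralM}s in parts (J) and (K)). So what you establish is the lemma restricted to such $N'$. This is harmless for the lemma's actual uses (\refthm{cri-distributed} and \reflem{essentially distributed} operate under exactly those hypotheses), but it does not prove the statement in its stated generality, which the paper's S-invariant obtains for free; it also makes the lemma appear to depend on the heavy machinery of Claims \ref{cl-G-properties}--\ref{cl-reachable} when a local structural argument suffices.
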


\begin{proof}
  For each \plat{$i\mathbin{\leqc} j$} pick an arbitrary preplace $q_i$ of $i$.
  The set $$
  \{\txf{fetch}^{q_i,i}_{i,j}\txf{-in},~\txf{fetch}^{q_i,i}_{i,j}\txf{-out}\mid i\leqc j\}
  \cup \{\pi_{j\#l},~\transout[j]{l},~\took(\transout[j]{l},\trans[j]{l}),~\rho(\trans[j]{l}\}$$
  is an \emph{S-invariant}: there is always exactly one token in this set. This is the case because
  there is exactly one token initially (on $\pi_{j\#l}$) and
  each transition from $N$ has as many (with multiplicities) preplaces as postplaces in this set.
  The transitions from
  $$\{\exec{j}\mid i\leqc j\}
  \cup\{\trans[j]{l}\cdot\fire\} \cup\{\trans[j]{l}\cdot\und[\mbox{$\transout[j]{l}$}]\}\vspace{-2pt}
  \cup\{\exec[k]{l}\mid k\leqc l\}$$
  each have a preplace in this set.
  Hence no two of them can fire concurrently.
\end{proof}

\begin{lem}\label{lem-essentially distributed}
Let $N$ be the conflict replicating implementation $\impl{N'}$ of a finitary plain
structural conflict net $N'=(S',T',F',M'_0,\ell')$ without a fully reachable pure $\structuralM$.
Then for any $i\leqc j \confeq c\in T'$ and \hyperlink{far}{$f\in (\ini[c])^{\,\it far}$},
the transitions \plat{$\exec{j}$} and $\ini[c]\cdot\und[f]$ cannot fire concurrently.
\end{lem}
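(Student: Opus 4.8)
The plan is to argue by contradiction. Suppose that for some reachable marking $M\in[M_0\rangle_N$ we have $M[\{\exec{j}\}\mathord+\{\ini[c]\cdot\und[f]\}\rangle$, where $i\leqc j\confeq c$ and $f\in(\ini[c])^{\it far}$. By \refcl{extra} I fix $M'\in[M'_0\rangle_{N'}$ and $G\fin\Int^T$ satisfying \refitem{r0}--\refitem{rVeryLast}, so that $M=M'+(M_0-M'_0)+\marking{G}$, and I use the place-wise (in)equalities of \refcl{G-properties} throughout. The target is to produce four transitions $i,j,c,u\in T'$, all enabled under the \emph{single} reachable marking $M'$, with $\precond{i}\cap\precond{c}\neq\varnothing$, $\precond{c}\cap\precond{u}\neq\varnothing$ and $\precond{i}\cap\precond{u}=\varnothing$; this is a fully reachable \visible pure $\structuralM$ of $N'$, which the hypothesis forbids. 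Note the argument will be uniform in $f$: I only use that $\take(f,\ini[c])$ is a preplace of $\ini[c]\cdot\und[f]$.

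The crux is to exploit that $\ini[c]\cdot\und[f]$ is enabled, i.e.\ $M(\take(f,\ini[c]))\geq 1$. Reading off the marking of $\take(f,\ini[c])$ together with the places $\rho(\ini[c])$ and $\Fired(\ini[c])$ and the chain \refitem{undocount} gives $G(\ini[c]\cdot\fire)-G(\ini[c]\cdot\undone)\geq 1$, whence $M'[c\rangle$ through \refitem{p_j}, \refitem{p} and \refitem{r3}. The same bound forces the \emph{net open-undo count} $\sum_{\omega}\big(G(\ini[c]\cdot\undo[\omega])-G(\ini[c]\cdot\reset[\omega])\big)$ (summed over the interfaces $\omega$ with $c\confeq\omega$) to be at least $1$, so some interface $u\confeq c$ satisfies $G(\ini[c]\cdot\undo[u])-G(\ini[c]\cdot\reset[u])\geq 1$. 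Combining the marking equations for $\undo[u](\ini[c])$, $\keep[u](\ini[c])$ (i.e.\ \refitem{interfacecount}) and $\ack[u](\ini[c])$ (the last using $G(\comp{j})=0$ from \refitem{r2}) then yields $\sum_{m\geq^\# u}G(\exec[u]{m})\geq 1$, so $G(\exec[u]{m})>0$ for some $m\geq^\# u$; by \refitem{rExec}, \refitem{r4} and \refitem{r3} this gives $M'[u\rangle$. I expect this paragraph to be the main obstacle: the individual counts $G(\ini[c]\cdot\elide[u])$ and $G(\ini[c]\cdot\undone)$ may be negative, so one must track the open-undo quantity rather than any single count in order to certify that an execution $\exec[u]{m}$ is genuinely in progress.

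Next I turn to the enabled transition $\exec{j}$. Condition \refitem{rVeryLast} gives $M'[j\rangle$, while the preplace $\Pre^i_j$ of $\exec{j}$, via \refitem{pre}, \refitem{p_j}, \refitem{p} and \refitem{r3}, gives $M'[i\rangle$ exactly as in the derivation of $M'[c\rangle$. Finally, from $M[\exec{j}\rangle$ with $j\geq^\# i$ and $G(\exec[u]{m})>0$ with $m\geq^\# u$, condition \refitem{rExec4} of \refcl{extra} forbids the chain $j\geq^\# i\confeq u\leq^\# m$, so $\neg(i\confeq u)$; since $i,u\in T'$ this means $i\neq u$ and $\precond{i}\cap\precond{u}=\varnothing$.

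It remains to assemble the $\structuralM$. From $i\leqc j\confeq c$ together with $M'[i\rangle$, $M'[j\rangle$, $M'[c\rangle$: if $i=j$ or $j=c$ then $i\confeq c$ is immediate, and otherwise, were $\neg(i\confeq c)$, the triple $i,j,c$ would already be a fully reachable \visible pure $\structuralM$ of $N'$; hence $i\confeq c$ in all cases. Now $i\confeq c$, $c\confeq u$ and $\neg(i\confeq u)$ force $i$, $c$, $u$ to be three \emph{distinct} transitions ($i\neq u$ is given, and $i=c$ or $c=u$ would contradict $c\confeq u$ resp.\ $i\confeq c$), with $\precond{i}\cap\precond{c}\neq\varnothing$, $\precond{c}\cap\precond{u}\neq\varnothing$ and $\precond{i}\cap\precond{u}=\varnothing$, all enabled under the reachable marking $M'$. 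Thus $i,c,u$ constitute a fully reachable \visible pure $\structuralM$ of $N'$, contradicting the hypothesis. Therefore no such $M$ exists, so $\exec{j}$ and $\ini[c]\cdot\und[f]$ cannot fire concurrently. Apart from the token-counting of the second paragraph, every step is a direct instantiation of the $\structuralM$-elimination already carried out inside \refcl{reachable} and \refcl{extra}.
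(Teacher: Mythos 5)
Your proof is correct and takes essentially the same route as the paper's: both exploit the enabledness of $\ini[c]\cdot\und[f]$ together with the counting inequalities of \refcl{G-properties} (in particular (\ref{undocount})) to find an interface $u\confeq c$ with more undos than resets, deduce via (\ref{r2}), (\ref{reset}) and (\ref{undo}) that $G(\exec[u]{m})>0$ for some $m\geq^\#u$, and then exhibit a fully reachable \visible pure \structuralM{} enabled under $M'$, contradicting the hypothesis. The only difference is in the final assembly: the paper forms the \structuralM{} from the triple $(j,c,k)$---getting $\precond{j}\cap\precond{c}\neq\emptyset$ directly from the hypothesis $j\confeq c$, $M'[j\rangle$ from (\ref{rVeryLast}), and $\neg(j\confeq k)$ from (\ref{rExec5})---whereas your triple $(i,c,u)$, justified through (\ref{rExec4}), costs you the two extra (but valid) derivations of $M'[i\rangle$ and of $i\confeq c$.
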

\begin{proof}
Suppose these transitions can fire concurrently, say from the marking $M\in[M_0\rangle_N$.
By \refcl{extra}, there are $M'\in[M'_0\rangle_{N'}$ and $G\fin\Int^T$ such that
(\ref{r1})--(\ref{rVeryLast}) hold. Let $t:=\ini[c]$, $G_1:=G+\{t\cdot\und[f]\}$ and
$M_1\mathbin{:=}M+\marking{t\mathord\cdot\und[f]}$.
Then (\ref{undocount}), applied to the triples $(M,M',G)$ and $(M_1,M',G_1)$, yields
\[
  \sum_{\makebox[1em][l]{$\scriptstyle\{\omega\mid t\in \UI_\omega\}$}} G(t\cdot\reset[\omega])
  \leq G(t\cdot\und) < G_1(t\cdot\und)
  \leq \sum_{\makebox[1em]{$\scriptstyle\{\omega\mid t\in \UI_\omega\}$}} G_1(t\cdot\undo[\omega])
  = \sum_{\makebox[1em]{$\scriptstyle\{\omega\mid t\in \UI_\omega\}$}} G(t\cdot\undo[\omega]).
\]
Hence, there is an $\omega$ with $t\in \UI_\omega$ and $G(t\cdot\reset[\omega])< G(t\cdot\undo[\omega])$.
This $\omega$ must have the form $k\in T'$ with $k\confeq c$. We now obtain
  $$\begin{array}[b]{r@{~\leq~}ll}
  \multicolumn{1}{r@{~=~}}{0}
  & G(\comp[k]{l})
  & \mbox{(by (\ref{r2}))} \\
  & G(t\cdot\elide[k])+G(t\cdot\reset[k])
  & \mbox{(by (\ref{reset}))} \\
  \multicolumn{1}{r@{~<~}}{} & G(t\cdot\elide[k])+G(t\cdot\undo[k]) \\
  & \sum_{l\geq^\#k}G(\exec[k]{l})
  & \mbox{(by (\ref{undo}))}.
  \end{array}$$
Hence, there is an $l\geq^\# k \confeq c$ with $G(\exec[k]{l})>0$.
By (\ref{rExec5}) we obtain $\neg(j\confeq k)$, so $\precond{j}\cap\precond{k}=\emptyset$.
Additionally, we have $\precond{j}\cap\precond{c}\neq\emptyset$ and
$\precond{c}\cap\precond{k}\neq\emptyset$.
By (\ref{rVeryLast}) we obtain $M'[j\rangle$, and
by (\ref{r3}) and (\ref{r4}) $M'[k\rangle$.
Furthermore, by (\ref{undocount}), $G(t\cdot\und)<G_1(t\cdot\und)\leq G_1(t\cdot\fire) =
G(t\cdot\fire)$, so, for all $p\inp\precond{c}$,
  $$\begin{array}[b]{r@{~\leq~}ll}
  F'(p,c)
  & F'(p,c)\cdot \big(G(t\cdot\fire)-G(t\cdot\und)\big)\\
  & F'(p,c)\cdot \big(G(t\cdot\fire)-G(t\cdot\undone)\big)
  & \mbox{(by (\ref{undocount}))} \\
  & G(\dist) - \sum_{j\geq^\# i\in \postcond{p}} F'(p,i) \cdot G(\fetch)
  & \mbox{(by (\ref{p_j}))} \\
  & G(\dist)
  & \mbox{(by (\ref{rFp}) and (\ref{fetch}))} \\
  & M'(p)
  & \mbox{(by (\ref{r3}).} \\
  \end{array}$$
It follows that $M'[c\rangle$.
Thus $N'$ contains a \hyperlink{M}{fully reachable pure $\structuralM$},
which contradicts the assumptions of \reflem{essentially distributed}.
\end{proof}

\begin{thm}\label{thm-cri-distributed}
  Let $N$ be the conflict replicating implementation $\impl{N'}$ of a finitary plain
  structural conflict net $N'$ without a fully reachable pure $\structuralM$.
  Then $N$ is essentially distributed.
\end{thm}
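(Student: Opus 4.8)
The plan is to take the distribution $D$ of $N=\impl{N'}$ that is already displayed by the location boundaries in \reffig{conflictrepl} and specified in \refsec{implementation}, and to verify that it witnesses essential distributedness. Condition~(1) of \refdf{distributed} holds automatically: it is guaranteed by the generic recipe translating a distribution of a net with reversible transitions to its macro expansion (as noted where that recipe was introduced, the located expansion always satisfies~(1)). Hence the whole task reduces to establishing Condition~($2'$) of \refdf{externally distributed}, namely that whenever $t\concurrent u$ and $\ell(t)\neq\tau$ then $t\not\equiv_D u$.

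The only transitions of $N$ with a visible label are the $\exec{j}$ for $i\leqc j\in T'$, and each such transition resides at the location $\txf{execute}_j$. So the first step is to read off, from $D$, the complete list of transitions sitting at $\txf{execute}_j$, i.e.\ those co-located with $\exec{j}$. Inspecting where each kind of transition is placed, these are: the execute transitions $\exec[k]{j}$ ($k\leqc j$); the firings $\trans[j]{l}\cdot\fire$ ($l>^\# j$); the far-reversal $\trans[h]{j}\cdot\und[f]$ with $f=\transout[h]{j}$ ($h<^\# j$); and the far-reversals $\ini[c]\cdot\und[f]$ with $f=\Pre^c_j$ ($c\leqc j$) or $f=\transin[j]{c}$ ($j<^\# c$). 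Every remaining transition of the expansion is internal and is placed at a place-, initialise-, fetched-, finalise- or $\txf{undo}$-location, none of which carries a visible transition; such transitions are therefore irrelevant for~($2'$). Thus it suffices to show that $\exec{j}$ can fire concurrently neither with itself nor with any transition on the above list.

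Each case is then dispatched to a lemma prepared for exactly this purpose. Self-concurrency of $\exec{j}$, and concurrency of $\exec{j}$ with a distinct $\exec[k]{j}$, are excluded by \reflem{S-invariant}: applied with its parameters instantiated to $(j,l)$ for some $l>^\# j$, or to $(h,j)$ for some $h<^\# j$, it places all of $\{\exec{j}\mid i\leqc j\}$ inside a single S-invariant carrying exactly one token, so no two of them—and no single one twice—can occur in one step; and if neither a conflicting successor nor predecessor of $j$ exists, then $\{i\mid i\leqc j\}=\{j\}$, so $\exec[j]{j}$ is the only execute transition at $\txf{execute}_j$, and its self-concurrency is ruled out by the $(i,j)=(k,l)$ case of \refcl{concurrency}. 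Concurrency of $\exec{j}$ with $\trans[j]{l}\cdot\fire$ ($l>^\# j$) and with the far-reversal of $\trans[h]{j}$ at $\transout[h]{j}$ ($h<^\# j$) is excluded by \reflem{S-invariant} with parameters $(j,l)$ and $(h,j)$ respectively, since in each instance that transition lies in the S-invariant set together with $\exec{j}$. Finally, concurrency of $\exec{j}$ with $\ini[c]\cdot\und[f]$, where $f=\Pre^c_j$ or $f=\transin[j]{c}$, is excluded by \reflem{essentially distributed}, whose hypothesis $i\leqc j\confeq c$ is met in both situations ($c\leqc j$ gives $j\confeq c$, and $j<^\# c$ gives $j\confeq c$) and whose far place $f\in(\ini[c])^{\,\it far}$ is exactly $\Pre^c_j$ or $\transin[j]{c}$.

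The main obstacle is the bookkeeping in the second step: correctly reading off from the translated distribution precisely which transitions land on $\txf{execute}_j$, so that the list is genuinely exhaustive and every entry matches the hypotheses of one of the three results. The delicate point is that a far-reversal $t\cdot\und[f]$ is placed at the location of its far postplace $f$ rather than at the location of $t$, which is why the reversals of $\ini[c]$ and $\trans[h]{j}$ can end up co-located with $\exec{j}$ even though $\ini[c]$ and $\trans[h]{j}$ themselves fire elsewhere; overlooking these would leave gaps in ($2'$). Once the list is pinned down, the verification is a direct case analysis appealing to \reflem{S-invariant}, \reflem{essentially distributed} and \refcl{concurrency}, and together with Condition~(1) this shows that $N$ is essentially distributed.
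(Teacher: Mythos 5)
Your proposal is correct and takes essentially the same approach as the paper's own proof: you verify Condition ($2'$) for the distribution displayed in \reffig{conflictrepl} (the paper uses the canonical distribution, which changes nothing here), you extract exactly the same exhaustive list of transitions co-located with $\exec{j}$---including the far-reversals $\trans[h]{j}\cdot\und[\mbox{$\transout[h]{j}$}]$, $\ini[g]\cdot\und[\Pre^g_j]$ and $\ini[l]\cdot\und[\mbox{$\transin[j]{l}$}]$ placed at the location of their far place---and you dispatch the same cases to \reflem{S-invariant} and \reflem{essentially distributed}. Your additional appeal to \refcl{concurrency} to exclude self-concurrency of $\exec[j]{j}$ when $j$ has no conflicting transition (so that \reflem{S-invariant} cannot be instantiated) is a careful treatment of a corner case the paper's proof leaves implicit.
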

\begin{proof}
  We take the canonical distribution $D$ of $N$, in which $\equiv_D$ is the
  equivalence relation on places and transitions generated by Condition (1) of \refdf{distributed}.
  We need to show that this distribution satisfies Condition ($2'$) of \refdf{externally distributed}.
  A given transition $t$ with $\ell(t)\neq\tau$ must have the form \plat{$\exec{j}$} for some $i\leqc j\in T'$.
  By following the flow relation of $N$ one finds the places and transitions that, under
  the canonical distribution, are co-located with \plat{$\exec{j}$}:
  $$\begin{array}{@{}l@{}}
  \pi_{j\#l} \rightarrow \trans[j]{l}\cdot\fire \leftarrow \transin[j]{l} \rightarrow
  \ini[l]\cdot\und[\mbox{$\transin[j]{l}$}] \leftarrow \take(\transin[j]{l},\ini[l]) \\
  ~~\downarrow\\
  \hspace*{-1em}\exec{j} \\
  ~~\uparrow\\
  \transout{j}  \rightarrow \trans{j}\cdot\und[\transout{j}] \leftarrow \take(\transout{j},\trans{j}) \\
  ~~\downarrow\\
  \exec[g]{j} \\
  ~~\uparrow\\
  \Pre^g_j  \rightarrow \ini[g]\cdot\und[\Pre^g_j] \leftarrow \take(\Pre^g_j,\ini[g])
  \end{array}$$
  for all $l\mathbin{>^\#} j$, $h\mathbin{<^\#} j$ and $g\leqc j$.
  We need to show that none of these transitions can happen concurrently with \plat{$\exec{j}$}.
  For transitions $\trans[j]{l}\cdot\fire$ and $\exec[g]{j}$ this follows directly from \reflem{S-invariant}.
  For \plat{$\trans{j}\cdot\und[\transout{j}]$} this also follows from \reflem{S-invariant}, in
  which $j$, $k$ and $l$ play the r\^ole of the current $h$, $i$ and $j$.
  For the transitions \plat{$\ini[l]\cdot\und[\mbox{$\transin[j]{l}$}]$}
  and \plat{$\ini[g]\cdot\und[\Pre^g_j]$} this has been established in \reflem{essentially distributed}.
\end{proof}
\noindent
Our main result follows by combining Theorems~\ref{thm-correctness},
\ref{thm-cri-distributed} and~\ref{thm-bothdistributedequal}:

\begin{thm}\label{thm-fullmgttrulysync}
  Let $N$ be a finitary plain structural conflict net
  without a fully reachable \visible pure \structuralM.
  Then $N$ is distributable up to $\approx^\Delta_{bSTb}$.
    \qed
\end{thm}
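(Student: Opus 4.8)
The plan is to obtain \refthm{fullmgttrulysync} as an immediate corollary of the three major results that precede it, so the real work is not a fresh argument but the assembly of the correctness proof (\refthm{correctness}), the distributedness proof (\refthm{cri-distributed}), and the abstract-to-concrete bridge (\refthm{bothdistributedequal}). Concretely, I would start from a finitary plain structural conflict net $N'$ without a fully reachable \visible pure \structuralM, and consider its conflict replicating implementation $\impl{N'}$ as constructed in \refsec{implementation}. The two pillars are: (i) $\impl{N'} \approx^\Delta_{bSTb} N'$, which is exactly \refthm{correctness}, and (ii) $\impl{N'}$ is essentially distributed, which is exactly \refthm{cri-distributed}. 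Both of these require precisely the hypotheses imposed on $N'$ (finitary, plain, structural conflict net, no fully reachable pure \structuralM), so the statement of the theorem is tailored to make them applicable.

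Given these two facts, I would invoke \refthm{bothdistributedequal}, which asserts that for any essentially distributed net there is an LSGA net $N''$ that is branching ST-bisimilar with explicit divergence to it. Applying this to $\impl{N'}$ yields an LSGA net $N''$ with $N'' \approx^\Delta_{bSTb} \impl{N'}$. Since $\approx^\Delta_{bSTb}$ is an equivalence relation (it is defined via bisimulations, whose composition and inverse are again bisimulations), transitivity gives $N'' \approx^\Delta_{bSTb} N'$. By \refcor{LSGA distributed} every LSGA net is distributed, so $N''$ is a distributed net branching ST-bisimilar with explicit divergence to $N'$. By \refdf{distributable} this is precisely the assertion that $N'$ is distributable up to $\approx^\Delta_{bSTb}$.

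I would phrase the proof tersely as a chaining of these three theorems, since the intended reading (confirmed by the paragraph immediately preceding the theorem in the excerpt) is that \refthm{fullmgttrulysync} ``follows by combining Theorems~\ref{thm-correctness}, \ref{thm-cri-distributed} and~\ref{thm-bothdistributedequal}.'' The one point worth spelling out is the transitivity step linking $N''$, $\impl{N'}$ and $N'$, together with the observation that the target net $N''$ produced by \refthm{bothdistributedequal} is an LSGA net and hence distributed. There is essentially no hard calculation at this level; all the genuine difficulty has been discharged in the preceding sections.

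The main obstacle, therefore, is entirely upstream: it lies in \refthm{correctness}, whose proof occupies \refsec{correctness} and rests on the general verification method of \refthm{3ST} together with the long sequence of structural claims about $\impl{N'}$ (Claims~\ref{cl-G-properties}--\ref{cl-concurrency} and the normal-form analysis). At the level of the final theorem the only thing to watch is that the equivalence used throughout ($\approx^\Delta_{bSTb}$) is genuinely transitive and that the net supplied by \refthm{bothdistributedequal} qualifies as distributed via \refcor{LSGA distributed}; both are routine. Thus my write-up would be a short three-line citation-style proof rather than an extended argument.
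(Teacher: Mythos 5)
Your proposal is correct and matches the paper's own proof exactly: the paper derives \refthm{fullmgttrulysync} by combining Theorems~\ref{thm-correctness}, \ref{thm-cri-distributed} and~\ref{thm-bothdistributedequal}, just as you do, with the same implicit appeal to transitivity of $\approx^\Delta_{bSTb}$ and to the fact that the resulting LSGA net is distributed (\refcor{LSGA distributed}). Your terse citation-style write-up is precisely the intended argument.
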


\begin{cor}\label{cor-fullmeqtrulysync}
  Let $N$ be a finitary  plain structural conflict net.
  Then $N$ is distributable iff it has no fully reachable
  \visible pure~\structuralM.
    \qed
\end{cor}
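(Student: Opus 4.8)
The plan is to obtain the corollary by combining the two halves of the characterisation already in place, using only the fact that branching ST-bisimilarity with explicit divergence refines step failures equivalence (\refpr{step failures ST}). The single auxiliary observation I would record first is that distributability is monotone in the fineness of the chosen equivalence: if $\approx_1$ is finer than $\approx_2$ (that is, $\mathord{\approx_1}\subseteq\mathord{\approx_2}$ as relations on nets), then any net distributable up to $\approx_1$ is also distributable up to $\approx_2$, since a distributed net $N_0$ witnessing $N_0 \approx_1 N$ equally witnesses $N_0 \approx_2 N$.

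For the ``if'' direction, suppose the finitary plain structural conflict net $N$ has no fully reachable \visible pure \structuralM. Then \refthm{fullmgttrulysync} supplies a distributed net equivalent to $N$ up to $\approx^\Delta_{bSTb}$, so $N$ is distributable up to $\approx^\Delta_{bSTb}$. As $\approx^\Delta_{bSTb}$ is the finest equivalence in the range under consideration, the monotonicity observation yields distributability up to every coarser equivalence in the interval, in particular up to $\approx_\mathscr{F}$.

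For the ``only if'' direction I would argue contrapositively. If $N$ has a fully reachable \visible pure \structuralM, then \refthm{trulysyngltfullm} shows that $N$ is not distributable up to step failures equivalence $\approx_\mathscr{F}$. Since $\approx_\mathscr{F}$ is the coarsest equivalence in the range, monotonicity (read backwards) shows that $N$ is not distributable up to any finer equivalence, in particular not up to $\approx^\Delta_{bSTb}$. Together the two directions establish, for every equivalence $\approx$ between $\approx_\mathscr{F}$ and $\approx^\Delta_{bSTb}$ --- and hence for the unqualified notion of distributability employed in the statement --- that $N$ is distributable if and only if it has no fully reachable \visible pure \structuralM.

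I do not anticipate any real difficulty in this final step: all of the substantive content lives in \refthm{trulysyngltfullm} for the negative direction and in Theorems~\ref{thm-correctness}, \ref{thm-cri-distributed} and~\ref{thm-bothdistributedequal} (which jointly feed \refthm{fullmgttrulysync}) for the positive direction. The only place calling for care is keeping the two equivalence inclusions pointing the right way --- distributability transfers from finer to coarser equivalences, so the positive result must be proved for the finest equivalence $\approx^\Delta_{bSTb}$ and the negative result for the coarsest one $\approx_\mathscr{F}$, exactly as the two feeding theorems are stated.
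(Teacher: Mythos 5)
Your proposal is correct and matches the paper's own (implicit) argument exactly: the paper likewise obtains the corollary by combining \refthm{fullmgttrulysync} for the positive direction with \refthm{trulysyngltfullm} for the negative one, transferring between $\approx^\Delta_{bSTb}$ and $\approx_\mathscr{F}$ via \refpr{step failures ST} in precisely the two directions you describe. Your explicit monotonicity observation is just a spelled-out version of the paper's remark that the negative result, proved for the coarsest equivalence, carries over to all finer ones, while the positive result, proved for the finest, carries over to all coarser ones.
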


\section{Conclusion}

In this paper, we have given a precise characterisation of
distributable Petri nets in terms of a semi-structural property. Moreover, we
have shown that our notion of distributability corresponds to an
intuitive notion of a distributed system by establishing that any
distributable net may be implemented as a network of asynchronously
communicating components.

In order to formalise what qualifies as a valid implementation, we needed a suitable
equivalence relation. We have chosen step failures equivalence for showing the
impossibility part of our characterisation, since it is one of the simplest and least
discriminating semantic equivalences imaginable that abstracts from internal actions but
preserves branching time, concurrency and divergence to some small degree. For the
positive part, stating that all other nets are implementable, we have introduced a
combination of several well known rather discriminating equivalences, namely a divergence
sensitive version of branching bisimulation adapted to ST-semantics.  Hence our
characterisation is rather robust against the chosen equivalence; it holds in fact for all
equivalences between these two notions.  However, ST-equivalence (and our version of it)
preserves the causal structure between action occurrences only as far as it can be expressed in terms
of the possibility of durational actions to overlap in time. Hence a natural question is
whether we could have chosen an even stronger causality sensitive equivalence for our
implementability result, respecting e.g.\ pomset equivalence or history preserving
bisimulation.  Our conflict replicating implementation does not fully preserve the causal
behaviour of nets; we are convinced that we have chosen the strongest possible equivalence
for which our implementation works.  It is an open problem to find a class of nets that
can be implemented distributedly while preserving divergence, branching time and causality
in full.
Another line of research is to investigate which Petri nets can be
implemented as distributed nets when relaxing the requirement of
preserving the branching structure.
We conjecture that there exists a notion of equivalence that captures
some branching time aspects, but not as strongly as step failures
equivalence, under which all Petri nets become distributable.
However, also in this case it is problematic, in fact even impossible in our setting, to
preserve the causal structure, as has been shown in \cite{EPTCS64.9}.
A similar impossibility result has been obtained in the world of the $\pi$-calculus in
\cite{EPTCS64.7}.

In this paper we have sought a characterisation of distributability only for \emph{plain} nets,
in which all transitions have a different label and none are internal.
Naturally, any distributed implementation that applies to plain nets having a semi-structural
property---in particular the one contributed here---also applies to non-plain nets having the same
semi-structural property. Namely to implement a non-plain net $N$, note that $N$ can be written as
$\rho(N')$, where $N'$ is a plain net and $\rho$ a relabelling function. A correct implementation of $N$
is now obtained as $\rho(\impl{N'})$, where $\impl{N'}$ is the distributed implementation of $N'$.
Yet, it appears unlikely that there is a semi-structural characterisation that captures
\emph{all} non-plain distributable nets: for any non-trivial semi-structural
property there probably are nets that do not have that property, but are semantically equivalent to
nets that do. This may happen for instance when some essential transitions that violate the property
are labelled $\tau$ and can be abstracted away. Thus, we do not expect that a natural
characterisation of distributability for non-plain nets exists---where ``natural'' excludes
characterisations that just say, in other words, ``being equivalent to a distributed net''.

Our work shows that the main problem in creating distributed implementations of systems 
arises from the interplay between choice and synchronous communication. This issue
has already been investigated
in the context of distributed algorithms. Rabin and Lehmann observed in
\cite{rabin94advantageoffreechoice} that there is no fully symmetric distributed solution
to the dining philosophers problem. In \cite{bouge88symmetricleader} Luc Boug\'e
considers the problem of implementing symmetric leader election in the
sublanguages of CSP obtained by allowing different forms of
communication, combining input and output guards in guarded choice in
different ways. He finds that the possibility of implementing leader
election depends heavily on the structure of the communication
graphs. Truly symmetric schemes are only possible in CSP with
arbitrary input and output guards in choices.

Synchronous interaction is a basic concept in many languages for
system specification and design, e.g.\ in statechart-based approaches
and in process calculi. For process calculi, language hierarchies have
been established which exhibit the expressive power of different forms
of synchronous and asynchronous interaction.  In
\cite{boer91embedding} Frank de Boer and Catuscia Palamidessi consider
various dialects of CSP with differing degrees of asynchrony.  Similar
work is done for the $\pi$-calculus in \cite{palamidessi97comparing}
by Catuscia Palamidessi, in \cite{nestmann00what} by Uwe Nestmann and
in \cite{G:FoSSaCS06} by Daniele Gorla.  A rich hierarchy of
asynchronous $\pi$-calculi has been mapped out in these papers.
Similar to the findings of Boug\'e,
mixed-choice, i.e.\ the ability to combine input and output guards in a
single choice, plays a central r\^ ole in the implementation of
synchronous behaviour.

In \cite{selinger97firstorder}, Peter Selinger considers labelled
transition systems whose visible actions are partitioned into input and
output actions. He defines asynchronous implementations of such a
system by composing it with in- and output queues, and then
characterises the systems that are behaviourally equivalent to their
asynchronous implementations. The main difference with our approach is
that we focus on asynchrony within a system, whereas Selinger focuses
on the asynchronous nature of the communications of a system with the
outside world.

Dirk Taubner has in \cite{taubner88zurverteiltenimpl} given various
protocols by which to implement arbitrary Petri nets in the OCCAM programming
language. Although this programming language offers synchronous communication
he makes no substantial use of that feature in the protocols, thereby
effectively providing an asynchronous implementation of Petri nets. He does not
indicate a specific equivalence relation, but is effectively using
linear-time equivalences to compare implementations to the specification.

Also in hardware design it is an intriguing quest to use interaction
mechanisms which do not rely on a global clock, in order to gain
performance. Here the simulation of synchrony by asynchrony can be a
crucial issue, see for instance \cite{lamport78ordering} and
\cite{lamport02arbitration}.

The idea of modelling asynchronously communicating sequential components by sequential
Petri nets interacting though buffer places has already been considered in
\cite{reisig82buffersync}. There Wolfgang Reisig introduces a class of systems, represented
as Petri nets, where the relative speeds of different components are guaranteed to be
irrelevant.  His class is a strict subset of our LSGA nets, requiring additionally,
amongst others, that all choices in sequential components are free, \ie do not depend upon
the existence of buffer tokens, and that places are output buffers of only one component.
Another quite similar approach was taken in \cite{EHH10}, where
transition labels are classified as being either input or output.
There, asynchrony is introduced by adding new buffer places during
net composition.
This framework does not allow multiple senders for a single receiver.

\begin{figure}[tb]
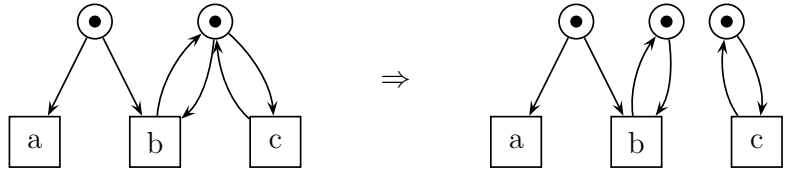

  \begin{center}
    \begin{petrinet}(14,3.4)
      \P (2,3):p;
      \P (4,3):q;

      \t (1,1):a:a;
      \t (3,1):b:b;
      \t (5,1):c:c;
      
      \a p->a; \a p->b;
      \A b->q; \A q->b; \A c->q; \A q->c;

      \rput(7,2){\Large $\Rightarrow$}

      \P (10,3):implp;
      \P (11.5,3):implqb;
      \P (12.5,3):implqc;
      
      \t (9,1):impla:a;
      \t (11,1):implb:b;
      \t (13,1):implc:c;

      \a implp->impla; \a implp->implb;
      \A implqb->implb; \A implb->implqb;
      \A implqc->implc; \A implc->implqc;
    \end{petrinet}
  \end{center}
\vspace{-1.5em}
  \caption{A specification and its Hopkins-implementation which added concurrency.}
  \label{fig-hopkins-added-concurrency}
\end{figure}

Other notions of distributed and distributable Petri nets are proposed in
\cite{hopkins91distnets,BCD02,BD11}. In these works, given a distribution of the
transitions of a net, the net is distributable iff it can be implemented by a net that is
distributed w.r.t.\ that distribution. The requirement that concurrent transitions may not
be co-located is absent; given the fixed distribution, there is no need for such a
requirement. These papers differ from each other, and from ours, in what counts as a valid
implementation.
Hopkins \cite{hopkins91distnets} uses an interleaving equivalence to compare an
implementation to the original net, and while allowing a range of implementations, he
does require them to inherit some of the structure of the original net.
The net classes he describes in his paper are incomparable with our class of
distributable nets. One direction of this inequality depends on his
choice of interleaving semantics, which allows the implementation in
\reffig{hopkins-added-concurrency}. The step failures equivalence we use
does not tolerate the added concurrency and the depicted net is not distributable in our sense.
The other direction of the inequality stems from the fact that we allow
implementations which do not share structure with the specification but only
emulate its behaviour. That way, the net in \reffig{distr-not-hopkins} can be
implemented in our approach as depicted.

\begin{figure}
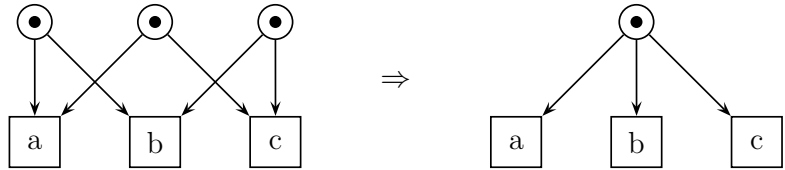

  \begin{center}
    \begin{petrinet}(14,3.4)
      \P (1,3):p;
      \P (3,3):q;
      \P (5,3):r;

      \t (1,1):a:a;
      \t (3,1):b:b;
      \t (5,1):c:c;

      \a p->a; \a p->b;
      \a q->a; \a q->c;
      \a r->b; \a r->c;

      \rput(7,2){\Large $\Rightarrow$}

      \P (11,3):pimpl;

      \t (9,1):aimpl:a;
      \t (11,1):bimpl:b;
      \t (13,1):cimpl:c;

      \a pimpl->aimpl; \a pimpl->bimpl; \a pimpl->cimpl;
    \end{petrinet}
  \end{center}
\vspace{-1.5em}
  \caption{A distributable net which is not considered distributable in
    \cite{hopkins91distnets}, and its implementation.}
  \label{fig-distr-not-hopkins}
\end{figure}

A more abstract approach to the same underlying problem of correctly executing
an arbitrary Petri net as a distributed system has been taken in
\cite{katoen13taming}. The authors provide a modified net semantics and an
algorithm to split the net into agents which can locally decide most choices
and resort to a global scheduler in case multiple agents must be coordinated.
While such an approach looses branching time equivalence between a net and its
implementation, it provides a clear separation of concerns between 
executing the net and solving the distributed coordination problems.

In \cite{glabbeek08syncasyncinteractionmfcs} we have obtained a characterisation similar to
Corollary~\ref{cor-fullmeqtrulysync}, but for a much more restricted notion of distributed
implementation (\emph{plain distributability}), disallowing nontrivial transition
labellings in distributed implementations.  We also proved that fully reachable pure
\structuralM s are not implementable in a distributed way, even when using transition
labels (\refthm{trulysyngltfullm}). However, we were not able to show that this upper
bound on the class of distributable systems was tight.  Our current work implies the
validity of Conjecture~1 of \cite{glabbeek08syncasyncinteractionmfcs}.
While in \cite{glabbeek08syncasyncinteractionmfcs} we considered only one-safe
place/transition systems, the present paper employs a more general class of
place/transition systems, namely structural conflict nets. This enables us to give a
concrete characterisation of distributed nets as systems of sequential components
interacting via non-safe buffer places.

On the level of applications, we expect our results to be useful for language
design. We would like to make a thorough comparison of our results to those
on communication patterns in process algebras, versions of the $\pi$-calculus and
I/O-automata \cite{lynch96}. Using a Petri net semantics of a suitable system description
language, we could compare our class of distributed nets to the class of nets expressible
in the language, especially when restricting the allowed communication patterns in the
ways considered in~\cite{boer91embedding,bouge88symmetricleader} or in \cite{lynch96}.
A first step in that direction is \cite{ESOP13}.

\subsection*{Acknowledgment}
The authors gratefully thank the referees of this paper for their very 
thorough examination and helpful suggestions.

\bibliographystyle{eptcsalpha}

\end{document}